\newcommand{\ZZ}{\mathbb{Z}}
\newcommand{\RR}{\mathbb{R}}
\newcommand{\PP}{\mathbb{P}}
\newcommand{\EE}{\mathbb{E}}
\DeclareMathOperator*{\Var}{Var}
\DeclareMathOperator*{\Cov}{Cov}
\DeclareMathOperator*{\Covu}{Cov_\pi}
\DeclareMathOperator*{\Varu}{Var_\pi}
\DeclareMathOperator*{\Covnu}{Cov_\nu}
\DeclareMathOperator*{\Varnu}{Var_\nu}
\DeclareMathOperator*{\Vollie}{Vol}
\DeclareMathOperator*{\Bern}{Bernoulli}
\newtheorem{assumption}{Assumption}
\newtheorem{drift}{Drift Condition}
\newtheorem{minorisation}{Minorisation Condition}
\newtheorem{theorem}{Theorem}[section]
\newtheorem{corollary}[theorem]{Corollary}
\newtheorem{lemma}[theorem]{Lemma}
\newtheorem{proposition}[theorem]{Proposition}
\theoremstyle{remark}
\newcommand{\forceindent}{\leavevmode{\parindent=1em\indent}}
\newenvironment{remark}
{\pushQED{\qed}\remarkx}
{\popQED\endremarkx}
\renewcommand{\i}{\mathrm i}
\newcommand{\E}{\mathbb E}
\renewcommand{\P}{\mathbb P}
\newcommand{\R}{\mathbb R}
\newcommand{\floor}[1]{\lfloor #1 \rfloor}
\newcommand{\ceil}[1]{\lceil #1 \rceil}
\newcommand{\undersim}[1]{\mathrel{\mathpalette\@undersim{#1}}}
\newcommand{\@undersim}[2]{%
  \vcenter{%
    \ialign{%
      ##\cr
      $\m@th#1#2$\cr
      \noalign{\nointerlineskip\kern.2ex}
      $\m@th#1\sim$\cr
      \noalign{\kern-.4ex}
    }%
  }%
}
\newcommand{\lsim}{\undersim{<}}
\newcommand{\bigO}{\mathcal{O}}
\title{Gaussian Approximation and Output Analysis for High-Dimensional MCMC}
\author[1]{Ardjen Pengel}
\author[2]{Jun Yang}
\author[3]{Zhou Zhou}
\affil[1]{\small Department of Pure Mathematics and Mathematical Statistics,  University of Cambridge}
\affil[2]{\small Department of Mathematical Sciences, University of Copenhagen}
\affil[3]{\small Department of Statistical Sciences, University of Toronto}
\affil[ ]{\small\textit{E-mail:} \url{alp98@cam.ac.uk}, \url{jy@math.ku.dk}, \url{zhou@utstat.toronto.edu}}
\begin{document}
\date{} 
\maketitle

\setlength\parindent{15pt}
\setcitestyle{semicolon}

\begin{abstract} 
\noindent
The widespread use of Markov Chain Monte Carlo (MCMC) methods for high-dimensional applications has motivated research into the scalability of these algorithms with respect to the dimension of the problem.  Despite this, numerous problems concerning output analysis in high-dimensional settings have remained unaddressed.  We present novel quantitative Gaussian approximation results for a broad range of MCMC algorithms. Notably, we analyse the dependency of the obtained approximation errors on the dimension of both the target distribution and the feature space.
We demonstrate how these Gaussian approximations can be applied in output analysis. This includes determining the simulation effort required to guarantee Markov chain central limit theorems and consistent  estimation of the variance and effective sample size in high-dimensional settings.
We give quantitative convergence bounds for termination criteria and show that the termination time of a wide class of MCMC algorithms scales polynomially in dimension while ensuring a desired level of precision. Our results offer guidance to practitioners for obtaining appropriate standard errors and deciding the minimum simulation effort of MCMC algorithms in both multivariate and high-dimensional settings.
\end{abstract}
\noindent
\textbf{Keywords: Markov Chain Monte Carlo, high-dimensional CLT, output analysis}

\setlength\parindent{0pt}
\section{Introduction}

Markov Chain Monte Carlo (MCMC) methods are widely 
applied in various high-dimensional settings, such as those encountered in computational Bayesian statistics and machine learning, see for example \cite{brooks2011handbook, gawlikowski2023survey,springenberg2016bayesian, welling2011bayesian}. MCMC methods are generally acknowledged to be the most versatile
algorithms for simulating a probability distribution of interest. 
%bengio2013generalized
%Moreover, MCMC methods also have applications in frequentist inference, as demonstrated in  
%\cite{geyer1991markov}.
%
%
%
We consider the problem of
sampling from a high-dimensional probability distribution $\pi$ defined on $E\subseteq \mathbb{R}^N$.
Typically, the objective is to compute some collection of features of this distribution, that can usually be expressed as expectations with respect to $\pi$, in other words, we are interested in $\pi(f)=\int f(x) \pi(dx)$, for some appropriately integrable function $f: E\rightarrow E'$ with $E'\subseteq \mathbb{R}^{
{d}}$.  We will refer to $E$ and $E'$ as the state space and feature space respectively.\\

The fundamental idea behind MCMC is to construct a Markov chain $X=(X_t)_{t\in \mathbb{N}}$ such that its stationary distribution is given by the distribution of interest. There are two practical questions that need to be addressed for every application of MCMC. Firstly, when is it reasonable to assume that the sampling algorithm is exploring the state space according to its stationary distribution? And when is it justified to terminate the simulation? Additionally, in high-dimensional settings, the scale of the problem introduces further challenges to these key inquiries. In \cite{qin2021limitations} and \cite{rajaratnam2015mcmc} it is shown that many results regarding the convergence of MCMC samplers, that do not explicitly take the impact of dimensionality into consideration are inapplicable to high-dimensional scenarios. This highlighted the importance of the so-called convergence complexity of MCMC, which entails understanding how the convergence properties scale as the dimension of the problem grows.
While there has been a significant amount of research done to provide dimension-dependent performance guarantees for MCMC algorithms, the focus has been primarily on the first question, see for example \cite{altmeyer2022polynomial,bou2023mixing,  dalalyan2017theoretical,durmus2017nonasymptotic,durmus2019high,hairer2014spectral,nickl2022polynomial, qin2019convergence, qin2022wasserstein,roberts2001optimal,yang2023complexity}.
The issue of when to terminate the simulation has not been examined as thoroughly. As noted by \cite{gong2016}, many output analysis tools used for addressing the termination question, such as visual inspection of trace plots and classical convergence diagnostics, are only appropriate for problems of moderate dimension. This paper seeks to fill that gap by giving a rigorous exploration of the convergence complexity of MCMC output analysis and providing theoretical guarantees for termination criteria in high-dimensional settings.  \\
 %Determining when to terminate Markov chain Monte Carlo (MCMC) simulations in high-dimensional settings remains an unaddressed problem.
  %Therefore it is often preferred to choose the length of the simulation according to a sequential stopping procedure.
%
%
%
%

In \cite{glynn}, asymptotic validity of several sequential stopping rules is established under the assumption of a functional central limit theorem (FCLT) for the simulation process and a functional weak law of large numbers (FWLLN) for an estimator of the asymptotic covariance.
Consider the rescaled partial-sum process of our Markov chain $X$  given by  
$S^{(T)}:=(S^{(T)}_t)_{t\in[0,1]}$ where 
\begin{equation}
S^{(T)}_t:=\frac{1}{\sqrt{T}}\sum_{t=1}^{\floor{Tt}}(f(X_t)-\pi(f)), \ t\in [0,1],
\end{equation}
 and let $Z:=(Z_t)_{t\in[0,1]}$ be defined as $Z_t:=\Sigma_f^{1/2}W_t$, where $\Sigma_f^{1/2}$ denotes the square root of the time-average covariance matrix of $X$ and $W=(W_t)_{t\in[0,1]}$ is a standard $d$--dimensional Brownian motion, and let $D[0,1]$ denote the Skorokhod space, which consists of all $\R^d$-valued c\`adl\`ag functions 
 %(right continuous with left limits) 
 with domain $[0,1]$, and let $\mathscr{D}$
 denote the Borel $\sigma$--algebra generated by the Skorokhod topology. The functional central limit theorem states that for every continuity set $A\in \mathscr{D}$ of $Z$,  we have that 
  \begin{equation}
 \label{intro_fclt1}
     \abs{\P\left(S^{(T)} \in A \right)-\P\left(Z \in A \right)}=o(1)  \ \textrm{as}\ T\rightarrow \infty.
     \end{equation}
In order to generalise the results of \cite{glynn} and obtain asymptotic validity of termination criteria in high-dimensional settings, we require quantitative Gaussian approximations of our Markov chain $X$. These Gaussian approximation results quantify the rate at which the trajectories of the partial sum process can be approximated by the appropriately scaled trajectories of a Gaussian process and are thus a refinement of the functional central limit theorem, as shown in for example \cite[Theorem 1.E]{philippSIP}.
%
%
% 
%
%
%
%These results characterise the error involved in approximating the partial sums of an MCMC algorithm using a Gaussian process with the same mean and covariance structure. 
%
We say that a weak Gaussian approximation holds for $X=(X_t)_{t\in \mathbb{N}}$ if the process can be defined on a probability space, together with a Brownian motion $W$, such that 
\begin{align}
\label{weak_GA}
\lim_{T\rightarrow \infty} \mathbb{P} \left(\frac{1}{\Psi_T} \abs{\sum_{t=0}^Tf(X_t)- T\pi(f) - \Sigma_f^{1/2} W_T} \leqslant K\bar{\psi}_N \psi_d  \right)=1,   
\end{align}
where $|\cdot|$ denotes the Euclidean norm, $K$ denotes a dimension-independent almost surely finite constant, and $\bar{\psi}_N, \psi_d$, and $\Psi_{T}$ denote the dependence of the approximation error on the dimension of the state space, the dimension of the feature space, and the sampling time respectively. Similarly, we say that a strong Gaussian approximation holds for $X=(X_t)_{t\in \mathbb{N}}$ if
\begin{align}
\mathbb{P}\left(\limsup_{T\rightarrow \infty}  \frac{1}{\Psi_T} \abs{\sum_{t=0}^Tf(X_t)- T\pi(f) - \Sigma_f^{1/2} W_T}\leqslant K\bar{\psi}_N \psi_d  \right)=1.    
\end{align}
We use the customary notation for Gaussian approximations, i.e.,  
\begin{equation}
\label{GA_introduction}
\abs{\sum_{t=0}^Tf(X_t)-T\pi(f) - \Sigma_f^{1/2} W_T}=  \left\{
            \begin{array}{lll}                                          
                  \bigO_{P}\left(\bar{\psi}_N \psi_d \Psi_{T}\right)  \\
                                   \textcolor{white}{.}   \\ 
                 \bigO_{a.s.}\left(\bar{\psi}_N \psi_d \Psi_{T}\right) 
                 
                \end{array},
              \right.  
\end{equation} 

where $\bigO_{P}$ and $\bigO_{a.s.}$ denote the weak and strong approximation respectively. These Gaussian approximation results are closely related to the convergence rate of the FCLT, as shown in  \citet[Theorem 1.16 and Theorem 1.17]{weighted_approx}. These approximation results are powerful tools used to obtain numerous results in both probability and statistics as seen in, e.g.,  \citet{applications_sip} and \citet{shorack_empirical}. Gaussian approximation results also play a central role in the analysis of estimators of the asymptotic variance and MCMC output analysis, see for example \cite{damerdji1991,damerdji1994,flegal_bm,  honestjones, jones_fixed,multivariate_consistency,multivariate_output}.
 In the one-dimensional case, Gaussian approximation results for MCMC were obtained by \cite{ flegal_bm, sip_regenerative,jones_fixed} and \cite{merlevede2015}. These results were extended to multivariate and continuous-time settings by \cite{banerjee2022,li2024multivariate} and \cite{pengel2024strong} respectively. For a more extensive overview of Gaussian approximation results obtained in settings with a dependence structure, we refer to \cite{lu2022}. This paper contributes to the field by introducing novel explicit and dimension-dependent bounds for the  Gaussian approximation of a wide class of MCMC samplers. Furthermore, we are the first to obtain the optimal approximation rate for MCMC samplers in a multivariate setting. These approximation results in turn enable us to obtain new theoretical guarantees regarding MCMC output analysis in both multivariate and high-dimensional settings. \\

%
%
%%%
%
%
%
We follow the framework of \cite{qin2019convergence,yang2023complexity,zhou2022dimension} where a drift-and-minorisation approach is used to obtain quantitative convergence
bounds in high-dimensional settings. A drift condition describes how fast the Markov chain moves through the state space, while the minorisation condition controls how fast the Markov chain forgets its past.  Our analysis considers both geometric and polynomial drift conditions, which characterise the varying speeds at which the Markov chain can move towards subsets of the state space.
We also consider both the one-step and multi-step minorisation case, which covers all widely used MCMC algorithms, see \cite[Proposition 5.4.5]{meyn_tweedie_2012} and \cite{orey1971limit}. We will further impose some regularity conditions on the asymptotic covariance matrix $\Sigma_f$ and mainly consider component-wise moment conditions of the form
$\sup_{i\in \{1,\cdots,d\}}\pi(\abs{f_i}^{p+\epsilon})< \infty$ for given $p>2$ and $\epsilon>0$.
Table \ref{table:table_approx_1} summarises the results of Theorems \ref{theorem_approx_discrete_one} and \ref{theorem_approx_discrete_multi} which describe the dependence of the Gaussian approximation rate on the simulation time.
%\textcolor{red}{... summarises the dependence of the approximation rate on the simulation time}
%
\begin{table}[ht]
\centering
\begin{center}
\begin{tabular}{|c|c|c|}
    \hline
    \textbf{} & \textbf{one-step minorisation} & \textbf{multi-step minorisation} \\
    \hline
    \rule{0pt}{16pt}
        geometric drift & $\displaystyle T^{1/p}\log(T)$ & $\displaystyle T^{1/4+1/4(p-1)}\log(T)$\\
   \hline
       \rule{0pt}{16pt}
    polynomial drift & $ \displaystyle T^{1/p_0}\log(T)$ & $\displaystyle T^{1/4+1/4(p_0-1)}\log(T)$  \\
   \hline 
\end{tabular}
\caption{Gaussian approximation rate $\Psi_T$}
\label{table:table_approx_1}
\end{center}
\end{table}

\vspace{-\baselineskip}
Here $p_0$ is specified in Equation \eqref{p_0definition1} of Theorem \ref{theorem_approx_discrete_one} and depends explicitly on both the moment condition and the degree of polynomial drift. Our Gaussian approximation results cover a larger class of polynomially ergodic Markov chains than aforementioned works and are the first to quantify the influence of the polynomial drift on the approximation rate.  Furthermore, we note that besides the result of \cite{merlevede2015}, we are the first to obtain the optimal rate in a Markov chain setting. The proof of the Gaussian approximation for the one-step minorisation case builds upon the approach of \cite{merlevede2015}.  \cite{merlevede2015} also utilises the regenerative structure obtained from the one-step minorisation condition and subsequently applies the Gaussian approximation of \cite{zaitsev1998multidimensional}. However, that result is not applicable to our setting since it requires the existence of the moment-generating function. Furthermore, the result of \cite{merlevede2015} assumes geometric ergodicity and bounded one-dimensional functions $f$. Currently, under the moment condition $\pi(\abs{f}^p)$, the best-known Gaussian approximation error for MCMC algorithms is $\max \left\{T^{1/p },T^{1/4}\right\}\log(T)$, see the results of  \cite{sip_regenerative,flegal_bm,jones_fixed} and \cite{banerjee2022,li2024multivariate} for the one-dimensional and multivariate setting respectively.
Under the same moment conditions, we now obtain the optimal Koml\'os--Major--Tusn\'ady approximation rate, up to a logarithmic factor, for the one-step minorisation case. For the multi-step minorisation case, we do not recover the optimal rate; however, to the best of our knowledge, our rate is the best rate currently available for the class of processes considered.\\

Additionally, our Gaussian approximation results provide explicit bounds for $\bar{\psi}_N$ and $\psi_d$, the dependence of the dimension of the state space and feature space on the approximation rate. For the weak Gaussian approximation, we obtain $\psi_d=\sqrt{d}\pi(\abs{f}^p)^{1/p}$.
Since $\pi(\abs{f}^p)^{1/p}$ is of order $\bigO(\sqrt{d})$, our results in general imply a linear dependence in the dimension of the feature space.
However, this can be  improved under growth conditions on the moments of $\abs{f}$ under $\pi$.
This is reasonable in settings with sparsity and regularisation. In \cite{zhai2018high} it is shown that for i.i.d. random vectors and bounded functions $f$  
the convergence rate of the CLT is lower bounded by $\sqrt{d}\norm{f}_\infty$, where $\norm{\cdot}_\infty$ denotes the usual supremum norm. Our obtained results obtain the same dependence on the dimension of the feature space as the currently best-known weak Gaussian approximation results for independent random vectors satisfying only $p$ moments given in \cite{eldan2020clt,mies2023sequential}. The dependence of the dimension of the state space is expressed in terms of the parameters of the drift and minorisation conditions. Let $q$ denote the speed of polynomial drift, given stability conditions on the drift and minorisation, our obtained weak Gaussian approximation results guarantee a CLT for Markov processes, provided that the dimension of the feature space grows as given in Table \ref{table:table_dim_growth} for any $\bar{\varepsilon}>0$.
\begin{table}[ht]
\begin{center}
\begin{tabular}{|c|c|c|}
    \hline
    \textbf{} & \textbf{one-step minorisation} & \textbf{multi-step minorisation} \\
    \hline
    \rule{0pt}{16pt}
        geometric drift & $o\Big(T^{1/2-\bar{\varepsilon}}\Big)$ & $o\Big(T^{1/4-\bar{\varepsilon}}\Big)$\\
   \hline
       \rule{0pt}{16pt}
    polynomial drift & $\displaystyle o\left(T^{\frac{q-2}{2q}-\bar{\varepsilon}}\right)$ & $\displaystyle o\left(T^{\frac{q-2}{4(q-1)}-\bar{\varepsilon}}\right) $ \\
   \hline 
\end{tabular}
\caption{Growth rate $d$ such that CLT holds for sufficiently large $p$}
\label{table:table_dim_growth}
\end{center}
\end{table}

%\vspace{-\baselineskip}
As the speed of polynomial drift tends to the geometric case, which corresponds to $q\rightarrow \infty$, we show that the corresponding approximation errors in Table \ref{table:table_approx_1} coincide and consequently also the growth conditions of the dimension given in Table \ref{table:table_dim_growth}.\\

In order to construct valid confidence ellipsoids for our features of interest,
we require the weak Gaussian approximation to hold. However, the usual approach in statistics
where the dimension is allowed to grow with the sample size, which in this context corresponds
to the simulation time, is not appropriate. In the MCMC setting, the dimension of the target
distribution and the set of features of interest are known prior to the simulation. Consequently,
the critical question becomes how large the simulation time needs to be to ensure that the Gaussian approximations remain reliable in the setting where the dimension of the problem grows.
In Table \ref{table:table_sim_growth_intro} below we see how the simulation time is required to grow with the dimension of the feature space in order to guarantee a CLT. For two sequences $(a_n)$ and $(b_n)$ we write $a_n=\Omega(b_n)$ if $\limsup_{n\rightarrow \infty} \frac{a_n}{b_n}>0.$  \\

\begin{table}[ht]
\begin{center}
\begin{tabular}{|c|c|c|}
    \hline
    \textbf{} & \textbf{one-step minorisation} & \textbf{multi-step minorisation} \\
    \hline
    \rule{0pt}{18pt}
        geometric drift & $\displaystyle \Omega\left(d^{2+\bar{\varepsilon}}\right)$ & $\displaystyle \Omega\left(d^{4+\bar{\varepsilon}}\right) $ \\
   \hline
       \rule{0pt}{18pt}
    polynomial drift & $\displaystyle \Omega\left(d^{\frac{2q}{q-2}+\bar{\varepsilon}}\right)$ & $\displaystyle \Omega\left(d^{\frac{4(q-1)}{q-2}+\bar{\varepsilon}}\right) $ \\
   \hline 
\end{tabular}
\caption{Simulation time $T$ such that CLT holds given sufficiently large $p$}
\label{table:table_sim_growth_intro}
\end{center}
\end{table}
\vspace{-\baselineskip}

 For applications in Bayesian statistics, our results provide a direct link between the statistical model complexity and the computational complexity of the MCMC algorithm. Any theoretical guarantees for the posterior distribution, such as large-sample concentration results and prior induced regularisation, have direct implications on the required running times for sampling algorithms. We note that under additional assumptions on the target distribution, the growth rates presented in Table \ref{table:table_approx_1} and Table \ref{table:table_dim_growth} can be greatly improved, see Remark \ref{remark_growth} for a more extensive discussion on this result.  \\

By taking the dimensionality into account, multiple insights that do not arise in the finite-dimensional setting are revealed. Current results on MCMC output analysis often hold for every initial distribution. However, in the high-dimensional setting, a good initialisation of the Markov chain is crucial, as improper initial states can lead to exponentially increasing bounds. Additionally, also the minorisation volume plays a vital role, since contrary to the low-dimensional setting, a naive minorisation lower bound can cause to the Gaussian approximation error rate to increase exponentially in dimension. For the mixing time of high-dimensional MCMC algorithms, similar observations have been made, see for example \cite{qin2019convergence,rajaratnam2015mcmc,yang2023complexity}.
 Furthermore, also the isometry of the asymptotic covariance matrix plays a larger role, since for isotropic target distributions we are able to give better convergence guarantees. \\

Through our obtained Gaussian approximation results, we are able to extend various results concerning output analysis to a wider array of applications. Firstly, we consider the estimation of the asymptotic covariance matrix. Many key aspects of output analysis for MCMC depend on the uncertainty quantification of our sampling algorithm. Estimating the Monte Carlo standard error is essential for ensuring the credibility of our simulation results and for computing many convergence diagnostics.  In \cite{multivariate_consistency} and \cite{multivariate_output} the consistency of the spectral variance and batch means estimator are proven respectively, under the assumption of a  Gaussian approximation with an implicit rate. Consequently, our results are immediately applicable and can be used to adapt the tuning parameters of the considered variance estimation methods to high-dimensional settings. In accordance with empirical findings, we observe noteworthy differences between the simulation requirements for the polynomial and geometric drift cases, as well as between low and high-dimensional scenarios. These findings are summarised in Table \ref{table:batch_size_multi_dim} and Table \ref{table:batch_size_high_dim} respectively.\\

Termination criteria are usually defined as stopping times that follow a confidence set for the features of interest over the simulation trajectory. They allow termination of the simulation when a specified precision requirement is met. An alternative stopping criterion is to terminate the simulation when the effective sample size is sufficiently large. It is well-known that stopping according to the effective sample size is equivalent to termination rules based on confidence sets, see for example \cite{multivariate_output}. Our results enable us to study the convergence complexity of a broad range of termination criteria, and show that the termination rules introduced in \cite{glynn} and \cite{multivariate_output} can be applied to high-dimensional settings. These results also provide us with insights regarding the choice of termination rule. The analysis of these termination rules relies on the FCLT and consistency of variance estimators. Our results allow us to take the error of the Gaussian approximation and the convergence rate of the variance estimator into account, and thus generalise the analysis of aforementioned results. Finally, we obtain novel requirements for the minimum simulation threshold that guarantee the validity of variance estimation and termination criteria in high-dimensional settings. We give conditions that guarantee that the termination time of an MCMC algorithm scales polynomially in dimension while ensuring a desired level of precision. \\

%
%
%
%
%
%
% 
%
%
%table:table_precision_termination
%
%
%\vspace{-0.25cm}
%
%
%
%
%
%

This article is organised as follows. In Section \ref{section:2}, we review some preliminary results regarding MCMC. In Section \ref{section:3}, we give our obtained Gaussian approximation results. In Section \ref{section:4}, we apply our results to MCMC output analysis and provide guarantees for the termination time. In Section \ref{section:5},  we present the proofs of our results. \\

\textbf{Notation:} For $a,b \in \RR$, we denote  $a \wedge b =\min(a,b)$, $a \vee b =\max(a,b)$, and $a\lesssim b$ to denote an inequality up to a universal constant. For $x > 0$, we denote
$\log^{*}(x) = \log(x) \vee 1$. For a vector $x \in \RR^d$, we denote the Euclidean norm with $\abs{x}$. For a matrix $A$, we denote the Frobenius norm and the spectral norm with $\abs{A}$ and $\abs{A}_*$ respectively. Furthermore, the trace of matrix $A$ is denoted as $\tr(A)$ and let $\sigma_d(A)$ and $\sigma_1(A)$, denote the largest and smallest eigenvalue respectively.

\section{Preliminaries MCMC}
\label{section:2}
 We consider a Markov chain $X=(X_t)_{t\in \mathbb{N}}$ on $(E,\mathcal{E})$ where a $E\subseteq \mathbb{R}^N$ denotes the state-space and $\mathcal{E}$ the corresponding Borel $\sigma$-algebra, with $m$-step transition kernel $P^m(x,\cdot)$, defined by
\[
P^m(x,B)=\mathbb{P}(X_m \in B|X_0=x), \ \  m \in \mathbb{N},  x \in E, B\in \mathcal{E}.
\]
We say that $\pi$ is the stationary distribution of the Markov chain if
\[
\int_E P(x,B)\pi(dx)=\pi(B) \ \ \ \forall B\in \mathcal{E}.
\]

Drift and minorisation conditions are widely used for obtaining quantitative bounds for the mixing time of Markov chains. We follow the framework of \cite{qin2019convergence,yang2023complexity,zhou2022dimension} where a  drift-and-minorisation approach is used to obtain quantitative convergence
bounds in high-dimensional settings. The drift condition describes how fast the Markov chain moves towards subsets of the state space, while the minorisation condition controls how fast the Markov chain forgets its past. More specifically, the function $V$ in Drift Condition \ref{exp_drift_discrete_one} describes how fast the chain in expectation will move towards the set $C$ given that the chain is currently in state $x$ and how long the chain is expected to stay in this set $C$. An appropriate drift function should have low values in high-probability regions of the state space. Note that 
Drift Condition \ref{exp_drift_discrete_one} implies that while the chain is not in $C$, the value of the drift function will decrease geometrically. 
We say that the Markov chain satisfies a geometric drift condition if Drift Condition \ref{exp_drift_discrete_one} holds.
\begin{drift} 
\label{exp_drift_discrete_one}
Let there exist a function $V:E\rightarrow\R^+$,  constants $\lambda \in (0,1)$ and $0<b, \upsilon_C< \infty$  such that $\upsilon_C=\sup_{x\in C}V(x)$ and 
    \[P V(x)= \int_E V(y)P(x,dy) \leq \lambda V(x)  +b \mathbbm{1}_{C}(x),\]
    for some set $C \in \mathcal{E}$.
\end{drift}

In many applications, we can only guarantee that the drift function decays at a polynomial rate while the process is not in $C$. This corresponds to the following polynomial drift condition.
\begin{drift} 
\label{poly_drift_discrete_one}
Let there exist a function $V:E\rightarrow\R^+$,   constants $0<c,b,\upsilon_C< \infty$, such that $\upsilon_C=\sup_{x\in C}V(x)$ and $\eta \in (0,1)$ such that
\[
PV(x)\leq V(x)-cV(x)^\eta+b\mathbbm{1}_{C}(x),
\] 
for some set $C \in \mathcal{E}$ with $\pi(C)>0.$
\end{drift}

It will also be useful to consider the geometric and polynomial drift condition for the $m_0$-skeleton of the Markov chain.

\begin{drift} 
\label{exp_drift_discrete_skeleton}
Let there exist a function $V:E\rightarrow\R^+$,  constants $\lambda \in (0,1)$ and $0<b, \upsilon_C< \infty$  such that $\upsilon_C=\sup_{x\in C}V(x)$ and 
    \[P^{m_0} V(x)= \int_E V(y)P^{m_0}(x,dy) \leq \lambda V(x)  +b \mathbbm{1}_{C}(x).\]
\end{drift}

\begin{drift}
\label{poly_drift_discrete_skeleton}
Let there exist a function $V:E\rightarrow\R^+$,   constants $0<c,b,\upsilon_C< \infty$, such that $\upsilon_C=\sup_{x\in C}V(x)$ and $\eta \in (0,1)$ such that
\[
P^{m_0}V(x)\leq V(x)-cV(x)^\eta+b\mathbbm{1}_{C}(x).
\] 
\end{drift}
Note that we can obtain a drift condition for the $m_0$--skeleton by iterating the one-step drift condition. However, in order for the parameters of the drift condition to remain tractable, we give some sufficient conditions. 
 Propositions \ref{exp_condition_superset} and \ref{exp_drift_discrete_subset} and Propositions \ref{poly_condition_superset} and \ref{poly_drift_discrete_subset} give conditions such that the one-step drift condition implies the desired drift condition for the skeleton chain in case of the geometric and polynomial drift condition respectively.\\

We say that an associated local $m_0$-step minorisation condition holds for the Markov chain if the following holds.
\begin{minorisation} 
Let $\nu$ be some probability measure defined on $C$ such that
\label{minorisation}
\begin{equation*}
P^{m_0}(x,\cdot)\geq \alpha \mathbbm{1}_C(x) \nu (\cdot),
\label{minorization_skeleton}
\end{equation*}
the minorisation volume $\alpha \in (0,1]$, $m_0 \in \mathbbm{N}$ and small set $C$ with $\pi(C)>0$.
\end{minorisation}

It is known that a multi-step minorisation condition holds for all widely used MCMC algorithms, see for example \cite[Proposition 5.4.5]{meyn_tweedie_2012}. Often it can even be shown that the Markov chain satisfies a one-step minorisation condition. For a more detailed explanation regarding the interpretation of the drift and minorisation conditions and how they relate to one another, we refer to Section \ref{section:5} and the given references. In the framework of \cite{qin2019convergence,yang2023complexity,zhou2022dimension}, it is demonstrated that for the drift and minorisation conditions to behave well as the dimension of the state-space increases, it is useful to consider a family of drift functions. Let $\pi_N$ denote the stationary distribution of the Markov chain and let $N$ be the dimension of the corresponding state-space. We say that a family of non-negative functions, $\{V_N(\cdot)\}_{N\in \mathbb{N}}$, is a family of drift functions if for each $N$ they all satisfy a certain class of drift conditions. 
We say that a family of geometric drift conditions is stable if $\lambda:=\limsup_{N\rightarrow \infty} \lambda_N<1$ and $b:=\limsup_{N\rightarrow \infty} b_N<\infty$. Similarly, we say that a family of polynomial drift conditions is stable if $\eta:=\liminf_{N\rightarrow \infty} \eta_N>0$, $c:=\liminf_{N\rightarrow \infty} c_N>0$ and $b:=\limsup_{N\rightarrow \infty} b_N<\infty$. Furthermore, a minorisation condition is stable provided that $\alpha:=\liminf_{N\rightarrow \infty} \alpha_N>0$.
In high-dimensional settings, 
 the drift function can exhibit undesirable concentration behaviour, causing the small set $C$ to grow too fast, which in turn leads the minorisation volume $\alpha$ to tend to zero at an uncontrolled rate. Consequently, as shown in for example  \cite{rajaratnam2015mcmc}, many existing convergence bounds deteriorate as the dimension of the state space grows large.  In the remainder of the paper, we will often simply write $\pi$  and $V$ instead of  $\pi_N$ and $V_N$. However, it is important to keep in mind that when we study the behaviour of the chain as the state-space grows, $N\rightarrow \infty$, we are considering a family of Markov chains $\{X_N\}_{N\in \mathbb{N}}$ with a corresponding family of target distributions and drift and minorisation conditions. Furthermore, throughout the paper, we will consider all drift and minorisation conditions to be stable.

%For polynomial ergodic we have \cite{andrieu2015quantitative} and \citet{douc2007computable} for explicit bounds. \\

\section{Gaussian approximation for MCMC samplers}

\label{section:3}
We consider the following component-wise moment conditions on our features:

\begin{assumption}
\label{assumption_moment_condition}
    Let $f: E\rightarrow \mathbb{R}^d$ be a measurable function such that either of the following moment conditions holds
    \begin{enumerate}
        \item  $\sup_{i\in \{1,\cdots,d\}}\pi(\abs{f_i}^{p+\varepsilon})< \infty$ holds for given $p>2$ and some $\varepsilon \in (0,1/p] $,
        \item $\sup_{i\in \{1,\cdots,d\}}\pi( e^{tf_i})< \infty$ holds  for  $t$ in some neighbourhood of 0.
        \item $\sup_{i\in \{1,\cdots,d\}}\norm{f_i}_\infty< \infty$.
    \end{enumerate}

\end{assumption}

Most of our results will rely on Assumption \ref{assumption_moment_condition}.1, which is the most reasonable to assume in practice. Furthermore, we will impose the following regularity conditions on the covariance matrix of our MCMC sampler.

\begin{assumption}
\label{assumption_covariance}
Suppose that the smallest eigenvalues of  $$\Sigma_{f_T}=\Covu\left(\frac{1}{\sqrt{T}}\sum_{t=1}^T\{f(X_t)-\pi(f)\}\right)$$ for sufficiently large $T$ and $\Sigma_f:=\lim_{T\rightarrow \infty }\Sigma_{f_T}$ are larger than some constant $\sigma_0>0$. Furthermore, suppose that 
\[
\sup_{i\in \{1,\cdots,d\}}\sup_{j\in \{1,\cdots,d\}}\abs{\Sigma_{f_{ij}}} < \infty,
\]
where $\Sigma_{f_{ij}}$ denotes the $(i,j)$--th entry of the matrix $\Sigma_f$ for $1\leq i,j \leq d.$
\end{assumption}

Assumption \ref{assumption_covariance} guarantees that the asymptotic covariance matrix is well defined, and that both the empirical and asymptotic covariance matrix are non-singular. Firstly, we state our obtained weak Gaussian approximation results for the one-step minorisation case under both a geometric and polynomial drift condition.

\begin{theorem}
\label{theorem_approx_discrete_one}
Let $(X_t)_{t \in \mathbb{N}}$ be an irreducible aperiodic stationary Markov chain, assume that a one-step minorisation condition holds and that drift condition \ref{exp_drift_discrete_one} is satisfied. Then for all functions $f$  for which Assumptions \ref{assumption_moment_condition} and \ref{assumption_covariance} hold we can, on an enriched probability space, define a process that is equal in law to $X$ and a standard $d$-dimensional Brownian motion $W$  such that for any $\theta_0>0$ we have that
\begin{equation}
\label{main_sip_discrete_one_exp}
\abs{\sum_{t=1}^T f(X_t)-T\pi(f)-\Sigma_f^{1/2}W_T} =   \left\{
            \begin{array}{lll}                                          
                 \displaystyle \bigO_{P} \left(     d^{} \bar{\psi}_{N}^{1/p}  T^{1/p}\log(T)\right)  \\
                                   \textcolor{white}{.}   \\ 
              \displaystyle   \bigO_{a.s.} \left(    d^{25/4+\theta_0}\log^* (d)\bar{\psi}_{N}^{2/p} \left(\frac{\sigma_d}{\sigma_0}\right)^{1/2}T^{1/p}\log(T)\right)
                 
                \end{array},   
              \right.   
\end{equation}

where

\begin{equation}
\label{C_one_step_exp}
    \bar{\psi}_{N}:=\psi(\alpha,\lambda,b)=   \alpha^{-1}\left(\frac{b}{\alpha(1-\lambda)} \right)^{1+\varepsilon/p} \left(\frac{p}{\ln(1/\lambda)e}\right)^p \sup_{i\in \{1,\dots,d\}}\pi(\abs{f_i}^{p+\varepsilon})
\end{equation}

If we assume that drift condition \ref{poly_drift_discrete_one} holds instead of \ref{exp_drift_discrete_one}, then we have that 

\begin{equation}
\label{main_sip_discrete_one_poly}
\abs{\sum_{t=1}^T f(X_t)-T\pi(f)-\Sigma_f^{1/2}W_T} =   \left\{
            \begin{array}{lll}                                          
                 \displaystyle \bigO_{P} \left(     d^{} \tilde{\psi}_{N}^{1/p_0}  T^{1/p_0}\log(T)\right)  \\
                                   \textcolor{white}{.}   \\ 
              \displaystyle   \bigO_{a.s.} \left(    d^{25/4+\theta_0}\log^* (d)\tilde{\psi}_{N}^{2/p} \left(\frac{\sigma_d}{\sigma_0}\right)^{1/2}T^{1/p_0}\log(T)\right)
                 
                \end{array},   
              \right.   
\end{equation}

where 

         \begin{equation}
         \label{C_one_step_poly}
    \tilde{\psi}_N:=\tilde{\psi}_N(\alpha,b,c,\upsilon_C)=                    
                  \alpha^{-1}\left(1+\frac{b }{c \alpha}+\frac{\upsilon_c-c+b}{1-\alpha} \right)^{1+\varepsilon/p_0} \sup_{i\in \{1,\dots,d\}}\pi(\abs{f_i}^{p+\varepsilon})        
         \end{equation}
 and
      \begin{equation}
  \label{p_0definition1}
    p_0=   \left\{
                \begin{array}{ll}
                  \frac{pq(\eta)}{p+q(\eta)+\varepsilon},   & \  \textrm{if } \frac{2p}{3p-2} < \eta \leq p(p+\varepsilon)/(p(p+\varepsilon)+\varepsilon) ,\\
                 
                  p,   & \  \textrm{if } \eta > p(p+\varepsilon)/(p(p+\varepsilon)+\varepsilon) ,\\

                  q(\eta)-\bar{\epsilon},   & \  \textrm{if } \eta> 1/2 \  \textrm{  and A\ref{assumption_moment_condition}.2 holds},

                \end{array}
              \right.
      \end{equation}
with $q(\eta)=\eta/(1-\eta)$  and the         
  entries of $\Sigma_f$ are given by 
  \begin{equation}
  \label{covvie}
 \Sigma_f= \Varu(f(X_0))+ \sum_{k=1}^\infty \Covu(f(X_0),f(X_k))+  \sum_{k=1}^\infty \Covu(f(X_k),f(X_0)).
  \end{equation}
\end{theorem}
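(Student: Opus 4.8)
The plan is to establish the Gaussian approximation by exploiting the regenerative structure induced by the one-step minorisation condition, following the strategy of \cite{merlevede2015} but adapting it to handle only polynomial moments rather than an existing moment-generating function. First I would use the one-step minorisation condition $P(x,\cdot)\geq \alpha \mathbbm{1}_C(x)\nu(\cdot)$ together with the split-chain construction of Nummelin to decompose the trajectory of $X$ into i.i.d.\ (or one-dependent) excursions between regeneration times $0=\tau_0<\tau_1<\tau_2<\cdots$. The partial sum $\sum_{t=1}^T f(X_t)-T\pi(f)$ is then, up to boundary terms, a random sum of i.i.d.\ block sums $\xi_j=\sum_{t=\tau_{j-1}+1}^{\tau_j}(f(X_t)-\pi(f))$, and the drift condition controls the moments of both the block lengths $\tau_j-\tau_{j-1}$ and the block sums $\xi_j$: specifically Drift Condition \ref{exp_drift_discrete_one} gives geometric tail control of excursion lengths, which combined with Assumption \ref{assumption_moment_condition}.1 yields $\E|\xi_j|^{p+\varepsilon}<\infty$ with the explicit constant appearing in $\bar{\psi}_N$ in \eqref{C_one_step_exp} (the factor $(p/(\ln(1/\lambda)e))^p$ comes from optimising the bound on the $p$-th moment of a geometrically-distributed length against the drift).

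Next I would apply a quantitative multivariate strong approximation theorem for sums of independent (but not identically distributed after the split, or i.i.d.\ up to the first block) random vectors with only $p+\varepsilon$ finite moments — this is where \cite{merlevede2015} used Zaitsev's result requiring exponential moments, and the key substitution is to instead invoke a KMT-type / Sakhanenko-type approximation valid under polynomial moment assumptions, tracking the dependence of the error on $d$ (giving the $\sqrt d\,\pi(|f|^p)^{1/p}$ and the $d^{25/4+\theta_0}\log^*(d)$ factors) and on the eigenvalue ratio $\sigma_d/\sigma_0$ coming from renormalising by $\Sigma_f^{1/2}$. I would then transfer the approximation from the random-index (number of regenerations up to time $T$) setting back to the deterministic time scale $T$: since the number of regenerations $N_T$ concentrates around $T/\E[\tau_1-\tau_0]$ with fluctuations $O(\sqrt T)$, and block sums have $p+\varepsilon$ moments, the resulting time-change error and the truncation of the first and last incomplete blocks contribute at most $O_P(T^{1/p}\log T)$ (respectively $O_{a.s.}$), matching the stated rate $\Psi_T=T^{1/p}\log(T)$. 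Finally, the identity \eqref{covvie} for $\Sigma_f$ follows from standard stationary-sequence covariance summation together with Assumption \ref{assumption_covariance} guaranteeing absolute convergence of the series.

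For the polynomial drift case, the argument is structurally the same but the excursion lengths now only have polynomial tails: Drift Condition \ref{poly_drift_discrete_one}, via the subgeometric drift theory (e.g.\ the polynomial Lyapunov–Foster arguments as in the cited framework), gives $\E[(\tau_j-\tau_{j-1})^{q(\eta)}]<\infty$ with $q(\eta)=\eta/(1-\eta)$, so the block sums $\xi_j$ inherit only $\min\{p,q(\eta)+\cdots\}$-type moments; optimising the interplay between the moment budget $p+\varepsilon$ on $f$ and the tail budget $q(\eta)$ on excursion lengths (a Hölder/Young-inequality split) produces the effective exponent $p_0$ in \eqref{p_0definition1}, and hence the rate $T^{1/p_0}\log(T)$. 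The explicit constant $\tilde\psi_N$ in \eqref{C_one_step_poly} comes from bounding the relevant moment of the excursion using the polynomial drift parameters $b,c,\upsilon_C$ in place of the geometric-series constant $b/(\alpha(1-\lambda))$.

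The main obstacle I anticipate is obtaining the multivariate strong approximation for independent summands under \emph{only} $p+\varepsilon$ moments while keeping the dimension dependence explicit and sharp: the classical KMT construction is one-dimensional, Zaitsev's multivariate version needs exponential moments, and the polynomial-moment multivariate results (Sakhanenko, Einmahl, and more recent work) must be carefully invoked and their error terms made explicit in both $d$ and the spectral gap $\sigma_d/\sigma_0$; getting the clean exponent $d^{25/4+\theta_0}$ in the almost-sure bound, and reconciling it with the weaker $d$-linear dependence in the weak bound, is the delicate accounting step. A secondary difficulty is ensuring that all drift-derived moment constants remain \emph{stable} (uniformly bounded) as $N\to\infty$ under the stability assumptions on $(\lambda,b,\alpha)$ — i.e.\ that the naive iteration of the one-step conditions does not blow up the constants, which is exactly what the auxiliary Propositions on skeleton drift conditions are there to guarantee.
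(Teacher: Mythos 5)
Your outline correctly identifies the regenerative decomposition, the drift-to-moment machinery (Lemmas~\ref{moments_regen_exp} and~\ref{moments_regen_poly}), the need for a multivariate Gaussian approximation under only $p$-th moments (which the paper builds in Proposition~\ref{GA_iid} via truncation and the bounded-vector results of Zaitsev and Eldan--Mikulincer--Zhai), and the H\"older trade-off between the moment budget on $f$ and the tail exponent $q(\eta)$ that produces $p_0$. The constants $\bar\psi_N$ and $\tilde\psi_N$ do indeed come from making Lemma~\ref{moment_bound_exp}/\ref{moment_bound_poly} explicit. All of that is consistent with the paper.

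However, there is a genuine gap in the step where you pass from the regeneration-indexed approximation to deterministic time $T$. You write that ``the number of regenerations $N_T$ concentrates around $T/\E[\tau_1-\tau_0]$ with fluctuations $O(\sqrt T)$, \dots\ the resulting time-change error \dots\ contribute at most $O_P(T^{1/p}\log T)$.'' This would be false if executed naively: replacing the random index by its mean and bounding the Brownian increment over a window of random length $O(\sqrt T)$ gives an error of order $T^{1/4}\sqrt{\log T}$ (this is essentially Cs\"org\H o--R\'ev\'esz increment bounds), and this $T^{1/4}$ barrier is exactly what previous MCMC strong approximations (Vats--Flegal--Jones, Banerjee--Vats, Li et al.) hit, and exactly what your Theorem~\ref{theorem_approx_discrete_multi} rate $T^{1/4+1/4(p-1)}$ reflects in the multi-step case. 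The whole point of Theorem~\ref{theorem_approx_discrete_one} is that the one-step/i.i.d.\ case can break this barrier and reach $T^{1/p}$.

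The paper achieves this by following Merlev\`ede--Rio's device: one introduces the decorrelated block sums $\tilde\xi_k=\xi_k-\beta(\varrho_k-\mu_\varrho)$ with $\beta=\Cov_\nu(\xi_1,\varrho_1)/\sigma_\varrho^2$ so that $(\tilde\xi_k,\varrho_k)$ has block-diagonal asymptotic covariance, applies the $(d+1)$-dimensional Gaussian approximation (Proposition~\ref{GA_iid}) jointly to $(\tilde\xi_k,\varrho_k)$, and then \emph{does not} undo the random time change by concentration. Instead it uses the KMT Poisson-process construction from the $\varrho$-component Brownian motion plus Lemma~\ref{mer_lemma} (Merlev\`ede--Rio's Poisson--Brownian bridge) to relate $B_1$ evaluated at the Poisson time-change to a fresh Brownian motion evaluated at deterministic integers, at a cost of only $O_{a.s.}(\sqrt d\,\log T)$. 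The independence between the $\tilde\xi$-Brownian motion and the $\varrho$-Brownian motion is essential here and is precisely what the decorrelation $\beta$-subtraction buys; without it the cross-terms reintroduce the $T^{1/4}$-sized error. Your proposal, as written, has no substitute for this step, and the phrase ``time-change error \dots\ $O_P(T^{1/p}\log T)$'' asserts the conclusion without a mechanism to achieve it. You should either incorporate the $\beta$-decorrelation plus the Poisson/KMT embedding explicitly, or you will only be able to prove the weaker $\max\{T^{1/p},T^{1/4}\}$ rate.

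A minor secondary point: the handling of the first and last incomplete blocks is not entirely routine in the high-dimensional setting; the paper runs a Borel--Cantelli argument using the explicit moment bound~\eqref{moment_norm_exp_p} to show these contribute $O_{a.s.}(\alpha^{-1/p}d^{1/2}C_\xi^{1/p}n^{1/p})$, and this is where the $\alpha^{-1}$-dependence of $\bar\psi_N$ ultimately enters the boundary terms; simply asserting they are negligible obscures where the dimension dependence of those terms comes from.
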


In Theorem \ref{theorem_approx_discrete_multi}, we formulate our obtained Gaussian approximation results for the multi-step minorisation case under both a geometric and polynomial drift condition.

\begin{theorem}
\label{theorem_approx_discrete_multi}
Let $(X_t)_{t \in \mathbb{N}}$ be an irreducible aperiodic stationary Markov chain, assume that an $m_0$-step minorisation condition holds and that drift condition \ref{exp_drift_discrete_skeleton} is satisfied. Then for all functions $f$  for which Assumptions \ref{assumption_moment_condition}.1 and \ref{assumption_covariance} hold we can, on an enriched probability space, define a process that is equal in law to $X$ and a standard $d$-dimensional Brownian motion $W$ such that for any $\theta_0>0$ we have that
\begin{equation}
\label{main_sip_discrete_multi_exp}
\abs{\sum_{t=1}^T f(X_t)-T\pi(f)-\Sigma_f^{1/2}W_T} =   \left\{
            \begin{array}{lll}                      \displaystyle \bigO_{P}\left(  d^{} \bar{\psi}_{N}^{2/p} T^{\frac{1}{4}+\frac{1}{4(p-1)}} \right)    \\
                                   \textcolor{white}{.}   \\ 
              \displaystyle  \bigO_{a.s.}\left(  d^{25/4+\theta_0}\log^* (d)\bar{\psi}_{N}^{2/p} \left(\frac{\sigma_d}{\sigma_0}\right)^{1/2}T^{\frac{1}{4}+\frac{1}{4(p-1)}} \right)  
                \end{array},
              \right.   
         \end{equation}

where  
         \[
  \bar{\psi}_N:=\psi(\alpha,\lambda,b,m_0)=  \alpha^{-1}\left(\frac{b m_0 }{\alpha(1-\lambda)}\right)^{\varepsilon/p}\sup_{i\in \{1,\dots,d\}}\pi(\abs{f_i}^{p_0+\varepsilon})  
  \]
  
  %\left\{
               % \begin{array}{ll}

                  %&   \textrm{under A\ref{assumption_moment_condition}.1, } \\
                   %\alpha^{-1} \ln \left(\frac{b m_0^q}{\alpha(1-\lambda)} \right) \,    &\textrm{under A\ref{assumption_moment_condition}.2}.
                 
               % \end{array}
             % \right.
            %  \]

If we assume that drift condition \ref{poly_drift_discrete_skeleton} holds instead of \ref{exp_drift_discrete_skeleton}, then we have that 

\begin{equation}
\label{main_sip_discrete_multi_poly}
\abs{\sum_{t=1}^T f(X_t)-T\pi(f)-\Sigma_f^{1/2}W_T} =       \left\{
            \begin{array}{lll}                              %            
                 \displaystyle         \bigO_{P}\left( d^{} \tilde{\psi}_{N}^{2/p} \left(\frac{\sigma_d}{\sigma_0}\right) T^{\frac{1}{4}+\frac{1}{4(p_0-1)}} \right),   \\
                                   \textcolor{white}{.}   \\ 
              \displaystyle         \bigO_{a.s.}\left( d^{25/4+\theta_0}\log^* (d)\tilde{\psi}_{N}^{2/p} \left(\frac{\sigma_d}{\sigma_0}\right)^{1/2} T^{\frac{1}{4}+\frac{1}{4(p_0-1)}} \right),  
                 
                \end{array},
              \right.   
         \end{equation}

where  
         \[
    \tilde{\psi}_N:=\tilde{\psi}(\alpha,b,c,\upsilon_c,m_0)=               
            \alpha^{-1}m_0^{q(\eta)/{p_0^2}}\left(1+\frac{b }{c \alpha}+\frac{\upsilon_c-c+b}{1-\alpha} \right)^{({p-p_0+\varepsilon})/{p}  } \sup_{i\in \{1,\dots,d\}}\pi(\abs{f_i}^{p_0+\varepsilon})    
              \]
 
       \[
    p_0=   \left\{
                \begin{array}{ll}
                  \frac{pq(\eta)}{p+q(\eta)+\varepsilon},   & \  \textrm{if } \frac{2p}{3p-2} < \eta \leq p(p+\varepsilon)/(p(p+\varepsilon)+\varepsilon) ,\\
                 
                  p,   & \  \textrm{if } \eta > p(p+\varepsilon)/(p(p+\varepsilon)+\varepsilon) ,\\

                  q(\eta)-\bar{\epsilon},   & \  \textrm{if } \eta> 1/2 \  \textrm{  and A\ref{assumption_moment_condition}.2 holds},

                \end{array}
              \right.
    \]  
            with $q(\eta)=\eta/(1-\eta)$  and the         
  entries of $\Sigma_f$ are given by 
 \begin{equation}
  \label{covvie2}
 \Sigma_f= \Varu(f(X_0))+ \sum_{k=1}^\infty \Covu(f(X_0),f(X_k))+  \sum_{k=1}^\infty \Covu(f(X_k),f(X_0)).
  \end{equation}
\end{theorem}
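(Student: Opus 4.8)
The plan is to reduce the multi-step minorisation case to the one-step case already handled in Theorem \ref{theorem_approx_discrete_one}, and then to handle the extra error terms that arise from passing between the chain and its $m_0$-skeleton. First I would pass to the $m_0$-skeleton $Y_k := X_{km_0}$. By the $m_0$-step minorisation condition (Minorisation Condition \ref{minorisation}), the skeleton $Y$ satisfies a \emph{one-step} minorisation condition with volume $\alpha$ on the small set $C$, and by Drift Condition \ref{exp_drift_discrete_skeleton} (resp.\ \ref{poly_drift_discrete_skeleton}) it satisfies a one-step geometric (resp.\ polynomial) drift condition. Hence Theorem \ref{theorem_approx_discrete_one} applies to the skeleton chain $Y$ with the feature function $g := \sum_{j=0}^{m_0-1} f\circ P^j$-type block sums — more precisely I would apply it to the ``blocked'' function $\tilde f(x_0,\dots,x_{m_0-1}) = \sum_{j=0}^{m_0-1}f(x_j)$ viewed on the chain of $m_0$-blocks, which is itself Markov, satisfies the same drift-minorisation structure, and for which $\pi(|\tilde f_i|^{p+\varepsilon}) \lesssim m_0^{p+\varepsilon-1}\sup_i\pi(|f_i|^{p+\varepsilon})$ by the triangle inequality and Jensen. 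This produces a Brownian motion $W$ on the enriched space with
\begin{equation*}
\Bigl|\sum_{k=1}^{\lfloor T/m_0\rfloor}\tilde f(\text{block}_k) - \lfloor T/m_0\rfloor\,\pi(\tilde f) - \Sigma_{\tilde f}^{1/2}W_{\lfloor T/m_0\rfloor}\Bigr|
\end{equation*}
bounded by the one-step rate, with the $m_0$-dependence tracked through the blocked moment bound, which is the source of the factor $m_0^{\varepsilon/p}$ (resp.\ $m_0^{q(\eta)/p_0^2}$) appearing in $\bar\psi_N$ (resp.\ $\tilde\psi_N$).

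The second step is to control the discrepancy between the partial sum of $f$ over the original chain up to time $T$ and the partial sum of $\tilde f$ over the skeleton up to $\lfloor T/m_0\rfloor$. The difference consists of (i) the incomplete final block, of length $< m_0$, whose contribution is $\bigO_{a.s.}(m_0\,\max_{t\le T}|f(X_t)|) = \bigO_{a.s.}(m_0 T^{1/(p+\varepsilon)})$ by stationarity and a Borel--Cantelli / maximal-inequality argument under the moment condition; and (ii) the time-rescaling of the Brownian motion, i.e.\ replacing $\Sigma_{\tilde f}^{1/2}W_{\lfloor T/m_0\rfloor}$ by $\Sigma_f^{1/2}W_T$. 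For (ii) I would invoke the identity $\Sigma_{\tilde f} = m_0\,\Sigma_f$ (the asymptotic variance of block sums is $m_0$ times that of the original chain, which follows from \eqref{covvie2} by a telescoping rearrangement of the covariance series), so that $\Sigma_{\tilde f}^{1/2}W_{\lfloor T/m_0\rfloor}$ and $\Sigma_f^{1/2}W'_T$ have the same law for an appropriate standard Brownian motion $W'$; the Lévy modulus of continuity of Brownian motion then bounds the cost of aligning the time scales and the rounding $\lfloor T/m_0\rfloor m_0 \ne T$ by $\bigO_{a.s.}(\sqrt{m_0\log T}\cdot\|\Sigma_f^{1/2}\|_*)$, which is absorbed. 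The polynomial-drift sub-case is identical once $p$ is replaced by $p_0$ as given in \eqref{p_0definition1}, since the skeleton inherits the polynomial drift with the same $\eta$, and $p_0$ is defined precisely so that the one-step polynomial-drift rate $T^{1/p_0}\log T$ is what Theorem \ref{theorem_approx_discrete_one} delivers.

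Finally I would rewrite the resulting rate. Applying the one-step \emph{geometric}-drift bound \eqref{main_sip_discrete_one_exp} to the blocked chain gives an error of order $d\,(\text{blocked }\bar\psi_N)^{1/p}T^{1/p}\log T$ for the weak statement; but here the exponent on $\bar\psi_N$ in the multi-step theorem is $2/p$ rather than $1/p$ — this reflects that the blocked minorisation/drift constants enter quadratically once one optimises the split into independent regeneration blocks versus the tail, so in writing up I would be careful to carry the constants from Theorem \ref{theorem_approx_discrete_one} through the blocking substitution and re-collect them into the stated $\bar\psi_N(\alpha,\lambda,b,m_0)$ and $\tilde\psi_N(\alpha,b,c,\upsilon_c,m_0)$. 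The strong ($\bigO_{a.s.}$) statements follow the same route, using the strong version of Theorem \ref{theorem_approx_discrete_one} and the almost-sure bounds on the incomplete block and Brownian modulus. The main obstacle I anticipate is bookkeeping: keeping the dependence on $m_0$, on $d$, and on the spectral ratio $\sigma_d/\sigma_0$ explicit and consistent through the blocking reduction — in particular verifying that the blocked moment bound contributes exactly the $m_0^{\varepsilon/p}$ (resp.\ $m_0^{q(\eta)/p_0^2}$) factor and no worse, and that the stability of the drift–minorisation family is preserved under blocking so that all constants remain dimension-independent in the limit $N\to\infty$.
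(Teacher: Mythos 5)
Your reduction to the one-step case does not work, and the warning sign is already visible in your own write-up: you observe that applying Theorem~\ref{theorem_approx_discrete_one} to the blocked chain would deliver the rate $T^{1/p}\log T$, but the theorem you are trying to prove has the strictly worse rate $T^{1/4+1/(4(p-1))}$ (worse for every $p>2$). If your reduction were sound you would be proving a stronger statement than the paper claims, and that gap is never reconciled — you redirect attention to the exponent $2/p$ on $\bar\psi_N$, but the $T$-exponent mismatch is the real problem and it is not a bookkeeping issue.

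The reason the reduction fails is that the $m_0$-tuple chain $Z_k:=(X_{km_0},\dots,X_{(k+1)m_0-1})$ does \emph{not} inherit the property that its regeneration cycles are independent, which is what Theorem~\ref{theorem_approx_discrete_one} crucially uses. The $m_0$-step minorisation regenerates the \emph{skeleton} value $X_{(k+1)m_0}\sim\nu$, but the intermediate states $X_{km_0+1},\dots,X_{(k+1)m_0-1}$, which belong to the block $Z_k$, are a bridge between $X_{km_0}$ and $X_{(k+1)m_0}$ and hence depend on both endpoints (see the explicit joint law~\eqref{m_points_law}). Consequently $Z_k$ and $Z_{k+1}$ are correlated through their shared dependence on $X_{(k+1)m_0}$: the cycles of the blocked function $\tilde f$ around the regeneration epochs are \emph{one-dependent}, not i.i.d. This is exactly the content of Proposition~\ref{semi_reg_discrete}, and it is precisely why the paper cannot (and does not) reuse Theorem~\ref{theorem_approx_discrete_one}. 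Instead the paper proves a dedicated Gaussian approximation theorem for one-dependent sequences (Theorem~\ref{GA_one_dep}), via a big-block/small-block decomposition into independent blocks, which is where the penalty rate $n^{1/4+1/(4(p-1))}$ originates, and then feeds in the cycle-moment bounds from Lemmas~\ref{moments_regen_exp} and~\ref{moments_regen_poly}. Your plan would need to be replaced by this one-dependent-blocks argument; the skeleton-and-reblocking idea cannot bypass it.

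As a secondary remark, a correct write-up should produce the $m_0$-dependence in $\bar\psi_N$ and $\tilde\psi_N$ from the explicit moment bounds on the semi-regeneration times $R_1=m_0(1+\bar\tau_{C\times\{1\}})$ (Lemmas~\ref{moment_bound_exp}, \ref{moment_bound_poly}, \ref{moments_regen_exp}, \ref{moments_regen_poly}), not from a Jensen-type estimate $\pi(|\tilde f_i|^{p+\varepsilon})\lesssim m_0^{p+\varepsilon-1}\sup_i\pi(|f_i|^{p+\varepsilon})$ on the blocked function, which would give the wrong power of $m_0$ and would also not match the way $\varepsilon/p$ appears in the stated constants.
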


\begin{remark}
\label{remark_growth}Firstly, our obtained weak Gaussian approximation results given in Theorems \ref{theorem_approx_discrete_one} and \ref{theorem_approx_discrete_multi} guarantee a central limit theorem for Markov chains, provided that the dimension of the feature space grows at most as described in Table \ref{table:table_dim_growth_true} below for any $\bar{\varepsilon}>0$.\\

\begin{table}[ht]
\begin{center}
\begin{tabular}{|c|c|c|}
    \hline
    \textbf{} & \textbf{one-step minorisation} & \textbf{multi-step minorisation} \\
    \hline
    \rule{0pt}{18pt}
        geometric drift & $\displaystyle o\left(T^{\frac{p-2}{2p}-\bar{\varepsilon}}\right)$ & $\displaystyle o\left(T^{\frac{p-2}{4(p-1)}-\bar{\varepsilon}}\right) $ \\
   \hline
       \rule{0pt}{18pt}
    polynomial drift & $\displaystyle o\left(T^{\frac{p_0-2}{2p_0}-\bar{\varepsilon}}\right)$ & $\displaystyle o\left(T^{\frac{p_0-2}{4(p_0-1)}-\bar{\varepsilon}}\right) $ \\
   \hline 
\end{tabular}
\caption{Growth rate $d$ such that CLT holds}
\label{table:table_dim_growth_true}
\end{center}
\end{table}

%\vspace{-\baselineskip}
It is important to note that the growth rates given in Table \ref{table:table_dim_growth_true} are the general case. However, in most applications of MCMC, there are more structural properties present such that the growth rate of the dimension can be greatly improved. From the proofs of Theorems \ref{theorem_approx_discrete_one} and \ref{theorem_approx_discrete_multi} it follows that the dependence of the dimension of the feature space is equal to \[ \sqrt{d} \pi(\abs{f}^{p_0})^{1/{p_0}}.\] In general, we will have that $\pi(\abs{f}^{p_0})^{1/{p_0}}\leq \sqrt{d} \sup_{i}\pi(\abs{f_i}^{p_0})^{1/{p_0}}$, which gives us $\psi_d=d$.
However, in many applications of MCMC this term can be smaller. For example, in Bayesian statistics the prior is often used to induce sparsity in the posterior or to provide regularisation, which will reduce $\pi(\abs{f}^p)$ and, in the most favourable case, allow growth rates of $d=o(T)$ and $d=o(\sqrt{T})$ for the one-step and multi-step minorisation respectively. Similarly, the dimension dependence of the strong approximation can be improved for specific settings. Moreover, under Assumption \ref{assumption_moment_condition}.3, it can be shown that the rate of Theorem \ref{theorem_approx_discrete_one} improves to $\bigO_P\left(d \log^2(\tilde{\psi}_{N})\log^2(T)\right) $ with $\tilde{\psi}_{N}=\alpha^{-1}b/ (\alpha \lambda(1-\lambda))$ for the weak approximation case.\\

Finally, in order for a central limit theorem to hold in the high-dimensional setting, we see that the simulation time of our MCMC algorithm should scale with the dimension of the state-space and feature space as detailed in Table \ref{table:table_sim_growth_true} below, for any $\bar{\varepsilon}>0$.

\begin{table}[ht]
\begin{center}
\begin{tabular}{|c|c|c|}
    \hline
    \textbf{} & \textbf{one-step minorisation} & \textbf{multi-step minorisation} \\
    \hline
    \rule{0pt}{18pt}
        geometric drift & $\displaystyle \Omega\left(d^{\frac{2p}{p-2}+\bar{\varepsilon}}\  \bar{\psi}_N^{4/(p-2)+\bar{\varepsilon}}\right)$ & $\displaystyle \Omega\left(d^{\frac{4(p-1)}{p-2}+\bar{\varepsilon}} \ \bar{\psi}_N^{8(p-1)/(p^2-2p)+\bar{\varepsilon}}\right) $ \\
   \hline
       \rule{0pt}{18pt}
    polynomial drift & $\displaystyle \Omega\left(d^{\frac{2p_0}{p_0-2}+\bar{\varepsilon}}\ \tilde{\psi}_N^{4/(p_0-2)+\bar{\varepsilon}}\right)$ & $\displaystyle \Omega\left(d^{\frac{4(p_0-1)}{p_0-2}+\bar{\varepsilon}} \ \tilde{\psi}_N^{8(p_0-1)/(p_0^2-2p_0)+\bar{\varepsilon}}\right) $ \\
   \hline 
\end{tabular}
\caption{Simulation time $T$ such that CLT holds}
\label{table:table_sim_growth_true}
\end{center}
\end{table}
\iffalse
\begin{table}[ht]
\begin{center}
\begin{tabular}{|c|c|c|}
    \hline
    \textbf{} & \textbf{one-step minorisation} & \textbf{multi-step minorisation} \\
    \hline
    \rule{0pt}{18pt}
        geometric drift & $\displaystyle \Omega\left(d^{\frac{3p}{p-2}+\bar{\varepsilon}}\  C_N^{4/(p-2)+\bar{\varepsilon}}\right)$ & $\displaystyle \Omega\left(d^{\frac{6(p-1)}{p-2}+\bar{\varepsilon}} \ C_N^{8(p-1)/(p^2-2p)+\bar{\varepsilon}}\right) $ \\
   \hline
       \rule{0pt}{18pt}
    polynomial drift & $\displaystyle \Omega\left(d^{\frac{3p_0}{p_0-2}+\bar{\varepsilon}}\ C_N^{4/(p_0-2)+\bar{\varepsilon}}\right)$ & $\displaystyle \Omega\left(d^{\frac{6(p_0-1)}{p_0-2}+\bar{\varepsilon}} \ C_N^{8(p_0-1)/(p_0^2-2p_0)+\bar{\varepsilon}}\right) $ \\
   \hline 
\end{tabular}
\caption{Simulation time $T$ such that CLT holds}
\label{table:table_sim_growth_true}
\end{center}
\end{table}
\fi
\vspace{-\baselineskip}
Hence we see that for large $p$ the simulation time of our sampling algorithm should scale as $\psi_d^2$ and $\psi_d^4$ in the one-step and multistep minorisation case respectively.  We note that the simulation requirements for consistent estimation of the asymptotic variance and obtaining precision guarantees, are given by the results in Section \ref{section:4}. 

%We note that the convergence rate is also greatly affected by $\trace(\Sigma_f)$, the effective dimension and the isometry of $\Sigma_f$.
% For example, in Bayesian statistics the prior is often used to induce sparsity in the posterior or to provide regularisation,

\end{remark}

\begin{remark}
\label{discussion_GA}
 Theorem \ref{theorem_approx_discrete_one} gives the first Gaussian approximation results that attain the optimal Koml\'os--Major--Tusn\'ady approximation rate, up to a logarithmic factor, for MCMC samplers in the multivariate setting. We note that these results are applicable to the high-dimensional settings considered in \cite{qin2019convergence,yang2023complexity} and \cite{zhou2022dimension}, where an asymptotically stable one-step minorisation condition and family of exponential drift conditions are shown to hold. 
 Moreover, we note that Theorem \ref{theorem_approx_discrete_multi} can also be proven for continuous-time processes under a petite set condition and similar drift conditions.
Furthermore, we note that all obtained weak approximation results can also be formulated as high probability deviation bounds.\\
\forceindent  
%For the multi-step minorisation case, we do not recover the optimal rate; however, it is the best rate currently available for the class of processes considered. 
In \cite{gouezel2010almost} a strong Gaussian approximation is obtained with rate $o_{a.s.}(T^\theta)$ for any $\theta > {\frac{1}{4}+\frac{1}{4(p-1)}}$, under assumptions on the dependence decay of the process through its spectral properties. These conditions on the characteristic function could be challenging to verify in MCMC settings. While the obtained approximation error $\Psi_T$ with respect to the sampling time is independent of dimension, contrary to previously obtained results mentioned in the paper, the dimensionality of both the state-space and the feature-space can still influence the overall approximation error. \cite{lu2022} gives a strong Gaussian approximation with rate $o_{a.s.}\left(T^{1/3+2/3(3p-2)}\right)$ for Hilbert space 
valued stochastic processes whose dependence is controlled through exponentially decaying $\beta$--coefficients. The results of \cite{lu2022} are therefore applicable for Markov chains satisfying a geometric drift condition. Applying their Gaussian approximation result to $\mathbb{R}^d$ results in the approximation rate of \cite{gouezel2010almost} with a multiplicative factor $d^8 \log (d)$  introduced to the approximation error. However, since their result requires exponential decay of $\beta$--mixing coefficients, they are not applicable to the polynomial drift condition case.\\
% \forceindent Finally, we note that there has been quite some work on extending dimension-dependent Berry-Esseen type results to settings with dependence, see for example \cite{chang2024central} and the given references. However, we have not taken this approach since for most applications in MCMC output analysis we require the FCLT and its refinements.
 %
 %ch as those given in \cite{chang2024central} and the references therein also 
 %
%
%
%
%However, this upper bound on the tail probability of the approximation magnitude is only meaningful for sufficiently large deviations
\end{remark}

\begin{remark}
Contrary to previously obtained Gaussian approximation results for MCMC, see \cite{ flegal_bm,banerjee2022,li2024multivariate,merlevede2015,pengel2024strong,multivariate_consistency}, in the high-dimensional setting, Gaussian approximation cannot be expected to hold for arbitrary initial distributions. It is well known that regularity conditions on the initial state are required in high-dimensional settings, since for example a cold initialisation can easily lead to mixing time bounds that are exponentially increasing in dimension, as demonstrated by \cite{bandeira2022free}.
 The harmonic function argument of \cite[ Proposition 17.1.6]{meyn_tweedie_2012} that is often used to generalise these types of results to an arbitrary initial distribution, is not applicable to the high-dimensional setting.  Despite this, stationarity is not required for any of our obtained results to hold. By the Comparison theorem, \cite[Theorem 14.2.2]{meyn_tweedie_2012}, it can be shown that our result can be formulated for any initial condition $x$, as detailed in Remark \ref{initial_remark_proof}. However, the drift at the initial value, i.e., $V(x)$ would enter the obtained bounds, see  \eqref{exp_initial} and \eqref{poly_initial} for the geometric and polynomial drift case respectively. Therefore some regularity conditions, which would be similar in nature to a warm start condition, would be required to ensure that the initialisation does not dominate our obtained dimension-dependent bounds.
\end{remark}

\begin{remark}
In our approach, we control the decay of the dependence of our process through both the drift and minorisation conditions.
In alignment with expectations, we see that if a polynomial drift condition holds the dimension of the state-space introduces a larger penalty in our approximation rate when compared to the case where a geometric drift condition holds. Note that for higher rates of polynomial drift, the approximation rate $T^{1/p_0}$ tends to the rate with a geometric drift. 

\forceindent 
%Furthermore, we note that our results also highlight the importance of esthablishign stable drift conditions, since these are required in order to guarantee the existence of moments
We note that for our results to be applicable in high-dimensional settings, we require the drift conditions to be asymptotically stable. Additionally, we see that in contrast to the low-dimensional case where the minorisation volume is negligible, see for example \cite[Remark 2]{flegal_bm}, in the high-dimensional case it plays a vital role since an improper minorisation lower bound can cause to the Gaussian approximation error rate to increase exponentially in dimension. Hence approaches like the ones given in \cite{qin2019convergence,yang2023complexity} are critical to ensure asymptotically stable drift and minorisation conditions.

\forceindent  Furthermore, we see that the spectral condition number, which is defined as the quotient of the largest and the smallest eigenvalue, of the asymptotic covariance matrix appears in our strong approximation errors.  It is known that better dimension dependence in Gaussian approximations can be obtained for isotropic targets, see for example \cite{fathi2019stein}. From the specification of the asymptotic covariance matrix given in \eqref{covvie}, we see that if the distribution of the features under the target measure is isotropic, then the spectral condition number of the first term $\Var_\pi(f(X_0))$ will be smaller. Furthermore, if the auto-covariance of the process decays faster then the second and third terms of \eqref{covvie} will be smaller. By \cite[Theorem 1.1]{Rio} and \cite[Theorem F.3.3]{douc2018markov}, we see that we can control the decay of the auto-covariance through the drift conditions. Consequently, for the geometric drift case, we can guarantee the condition number of the asymptotic covariance matrix to be smaller than the polynomial case. 
\end{remark}

%%%
%
%
%
%
\section{High-dimensional MCMC Output Analysis}
\label{section:4}
 In order to assess the accuracy of our sampling method, we need to estimate the asymptotic variance appearing in the central limit theorem. In the high-dimensional setting this gives us additional requirements for the simulation time of our algorithms. Estimation of the asymptotic auto-covariance matrix $\Sigma_f$ plays a central role in MCMC output analysis, specifically for computing many convergence diagnostics, and implementing termination criteria.
Through our obtained Gaussian approximation results, we are able to extend the results on variance estimation and termination criteria for MCMC algorithms. We illustrate the applications of our obtained Gaussian approximation results to the batch means method considered in \cite{multivariate_output}. We use our results to adapt the tuning parameters of the considered variance estimation methods to take slower convergence rates into account due to polynomial drift conditions or high dimensionality. Finally, our quantitative convergence bounds for termination criteria allow us to analyse the influence of the ergodicity and dimensionality on the appropriate running time of our MCMC algorithms.

\subsection{Estimation of the Asymptotic Variance}
We first consider the multivariate batch estimator since it enjoys computational advantages over other variance estimation methods in MCMC settings. The batch means method divides the simulation output into $k_T$ batches of length $\ell_T$ such that $k_T=\floor{T/\ell_T}$. The batch means estimator is then given by
\begin{equation}
\label{multi_bm_est}
\hat{\Sigma}^{BM}_T=\frac{\ell_T}{k_T-1}\sum_{i=1}^{k_T}\left(\bar{Z}_i(\ell_T)-\frac{1}{k_T}\sum_{i=1}^{k_T}\bar{Z}_i(\ell_T)\right)\left(\bar{Z}_i(\ell_T)-\frac{1}{k_T}\sum_{i=1}^{k_T}\bar{Z}_i(\ell_T)\right)^T,
\end{equation}
\noindent
where $\bar{Z}_i(\ell_T)$ denotes the sample average of each obtained batch, i.e.,
\begin{equation}
\label{batch_def}
\bar{Z}_i(\ell_T):=\frac{1}{\ell_T}\sum_{s=(i-1)\ell_T}^{i\ell_T}\hspace{-0.2cm}f(X_s), \quad i=1,\dots,k_T.
\end{equation}

%Given the strong invariance principle of Theorem \ref{Multi_SIP}, the results of \cite{multivariate_output} for the multivariate batch means estimator immediately carry over.
We impose the following conditions on the batch size.
\begin{assumption}
\label{assumption_truncation}
Let $\ell_T$ be an integer sequence such that $\ell_T \to \infty$ and $n/\ell_T \to \infty$ as $n \to \infty$ where $\ell_T$ and $n/\ell_T$ are non-decreasing. Moreover, assume that there exists a constant $c \geq 2$ such that $\sum_T (\ell_T/T)^c < \infty$, $(\ell_T/T) \log (T)=o(1)$, $\ell_T^{-1} \log T = o(1)$, and $T > 2\ell_T$.
\end{assumption}
Applying our Gaussian approximations to the results on the batch means estimator of \cite{multivariate_output} gives us the following theorem.
\begin{theorem}
\label{multi_bm_theorem}
Suppose that $f:\mathbb{R}^N\rightarrow\mathbb{R}^d$, with $\sup_{i  \in \{1,\cdots,d\}}\pi
(\abs{f_i}^{p+\varepsilon})< \infty$ for some $p\geq4$ and let $X$ satisfies a weak Gaussian approximation with approximation error $\bar{\psi}_N\psi_d\Psi_T \log(T)$ with $\psi_d=d^a$ for some $a>0$. Assume that Assumption \ref{assumption_truncation} holds, and that  
\begin{equation}
\label{as_multi_bm}\frac{\bar{\psi}_N d\psi_d\Psi_T\log(T)}{\ell_{T,d}^{1/2}} =o(1)\ \textrm{and }
    \frac{\bar{\psi}^2_N d\psi^2_d\Psi^2_T\log(T)}{T}=o(1),
\end{equation}

then we have that $\abs{\hat{\Sigma}^{BM}_T-\Sigma_f}\rightarrow 0$ with probability 1 as $T\rightarrow \infty$. Moreover, if we assume that 
  \begin{equation}
      \Psi_T=   \left\{
                \begin{array}{ll}
                      \Psi_T^{(1)}:= T^{1/p_0} \log(T),   \\
                   \Psi_T^{(2)}:= T^{1/4+1/4(p_0-1)} \log(T) ,             
                \end{array}
              \right.
  \end{equation}

 for some $p_0>2$ and we choose the simulation time 
 
     \begin{equation}
      T=   \left\{
                \begin{array}{lll}
                      \Omega \left(\left({\bar{\psi}_N {d}\psi_d }\right)^{\frac{2p_0}{(p_0-2)}(1+\bar{\delta})} \right), &   \textrm{under rate }\Psi_T^{(1)} \,  \\
                    \vspace{-0.2cm}  \\
                   \Omega \left( \left(\psi_N d^{1/4} \psi_d \right)^{\frac{p_0-1}{p_0-2}4(1+\bar{\delta})} \right), \,    &\textrm{under rate } \Psi_T^{(2)},                 
                \end{array}
              \right.
        \end{equation}
 
for any $\bar{\delta} > 1/(1+a)$ then the choice of batch size $\ell_T \asymp d^{-(p_0-2)/(2p_0(1+\bar{\delta}))}\floor{T^{\alpha}}$ with
  \begin{equation}
  \label{optimal_bs}
      \alpha=  \left\{
                \begin{array}{ll}
                       \frac{1}{2}+\frac{p_0-2}{2p_0(1+\bar{\delta})}+ \frac{1}{p_0}, &   \textrm{under rate }\Psi_T^{(1)} \,  \\
                    {\frac{3}{4}+ \frac{1}{4(p_0-1)}+\frac{(p_0-2)}{4(p_0-1)(1+\bar{\delta})}}, \,    &\textrm{under rate } \Psi_T^{(2)},                 
                \end{array}
              \right.
        \end{equation}

optimises the given convergence rate for $T \rightarrow \infty$.
\end{theorem}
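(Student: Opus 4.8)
The plan is to split the statement into (i) strong consistency of $\hat\Sigma^{BM}_T$ under a generic weak Gaussian approximation together with the side conditions \eqref{as_multi_bm}, and (ii) the verification that the prescribed choices of $T$, $\ell_T$, $\alpha$ and $\bar\delta$ make those side conditions hold and balance the resulting error. For (i) I would invoke the batch-means consistency analysis of \cite{multivariate_output}, but re-run it with the \emph{explicit} rate $\gamma_T:=\bar\psi_N\psi_d\Psi_T\log(T)$ in place of the implicit Gaussian-approximation rate used there, so that every dimension factor stays visible.

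Concretely, on the enriched space supplied by Theorems~\ref{theorem_approx_discrete_one}--\ref{theorem_approx_discrete_multi} write $\sum_{t=1}^{s}f(X_t)-s\pi(f)=\Sigma_f^{1/2}W_s+R_s$ with $\sup_{s\le T}|R_s|=\bigO_{a.s.}(\gamma_T)$, and insert this into \eqref{batch_def}--\eqref{multi_bm_est}: each centred batch mean becomes $\bar Z_i(\ell_T)-\pi(f)=\ell_T^{-1}\Sigma_f^{1/2}(W_{i\ell_T}-W_{(i-1)\ell_T})+\ell_T^{-1}(R_{i\ell_T}-R_{(i-1)\ell_T})$, so that $\hat\Sigma^{BM}_T$ splits into the batch-means estimator built from the Brownian path, a cross term, a pure-remainder term, and the contributions of the grand-mean recentering. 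For the Brownian part the rescaled increments $\ell_T^{-1/2}(W_{i\ell_T}-W_{(i-1)\ell_T})$, $i=1,\dots,k_T$, are i.i.d.\ $\mathcal N(0,I_d)$, so Assumption~\ref{assumption_truncation} (in particular $\ell_T\to\infty$, $k_T\to\infty$ and $\sum_T(\ell_T/T)^c<\infty$), the strong law of large numbers and a Borel--Cantelli argument give a.s.\ convergence of this part to $\Sigma_f$ in Frobenius norm. The remaining terms are estimated in Frobenius norm by, respectively, a multiple of $d\,\gamma_T\ell_T^{-1/2}$ (cross term, via Cauchy--Schwarz and $\tr\Sigma_f\lesssim d$) and a multiple of $d\,\gamma_T^2 T^{-1}$ (pure-remainder and recentering terms, using $\sup_{s\le T}|R_s|\le\gamma_T$); these are precisely the quantities required to be $o(1)$ in \eqref{as_multi_bm}, whence $|\hat\Sigma^{BM}_T-\Sigma_f|\to 0$ a.s. The whole point is to keep the dimension factors explicit: the proof of \cite{multivariate_output} absorbs them into constants, so one must reopen each step and carry them through.

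For (ii) I would substitute the two admissible rates $\Psi_T^{(1)}=T^{1/p_0}\log(T)$ and $\Psi_T^{(2)}=T^{1/4+1/(4(p_0-1))}\log(T)$, the prescribed simulation time $T=\Omega\big((\bar\psi_N d\psi_d)^{2p_0(1+\bar\delta)/(p_0-2)}\big)$ (and its $\Psi_T^{(2)}$ analogue), and the batch size $\ell_T\asymp d^{-(p_0-2)/(2p_0(1+\bar\delta))}\lfloor T^{\alpha}\rfloor$ with $\alpha$ as in \eqref{optimal_bs}, into Assumption~\ref{assumption_truncation} and \eqref{as_multi_bm}, reducing each to an inequality between powers of $\bar\psi_N d^{1+a}$ (where $\psi_d=d^{a}$). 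A short exponent computation shows that the second condition of \eqref{as_multi_bm} and all clauses of Assumption~\ref{assumption_truncation} hold for every $\bar\delta>0$ once $T$ is at the stated rate (choosing the constant $c$ large enough for the summability clause), whereas the first condition of \eqref{as_multi_bm}, after cancellation, reduces to $\bar\delta(1+\bar\delta)(1+a)>(p_0-2)/(2p_0)$, for which $\bar\delta>1/(1+a)$ is more than sufficient. Optimality of $\alpha$ is then obtained by viewing the post-coupling estimation error as a sum of competing terms --- the sampling fluctuation $\asymp(\ell_T/T)^{1/2}$ of the Brownian batch means, the Gaussian-approximation terms $d\,\gamma_T\ell_T^{-1/2}$ and $d\,\gamma_T^2 T^{-1}$, and (for the genuine chain rather than the Brownian surrogate) an autocovariance-tail bias controlled by the geometric or polynomial drift --- and minimising over $\ell_T$ of the prescribed form; the minimiser is attained on the boundary of the region cut out by \eqref{as_multi_bm}, and the exponent $\alpha$ of \eqref{optimal_bs} is exactly that boundary value, with $\bar\delta$ playing the role of the slack that makes the boundary attainable.

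The main obstacle is bookkeeping rather than ideas: re-deriving the estimates of \cite{multivariate_output} with all Frobenius-norm dimension factors made explicit --- the strong-law/Borel--Cantelli control of the Brownian batch-means estimator, the Cauchy--Schwarz bound on the cross term, the bound on the remainder term, and checking that the grand-mean recentering contributes nothing worse --- and then handling the optimisation of $\alpha$ as a genuinely \emph{constrained} problem: the admissible range of $\ell_T$ depends on $T$ and $d$ through \eqref{as_multi_bm}, so one optimises on the boundary of the feasible set and must simultaneously reconcile the chosen $\alpha$ with every clause of Assumption~\ref{assumption_truncation}.
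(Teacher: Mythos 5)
Your plan is in substance the paper's own argument: decompose $\sum_{t\le s}f(X_t)-s\pi(f)=\Sigma_f^{1/2}W_s+R_s$ via the coupling, insert it into the batch-means quadratic form, control the Brownian part by a strong law, bound the cross and pure-remainder pieces in Frobenius norm, and balance the competing error terms to read off $\alpha$. The paper, however, short-circuits step (i): it simply cites \cite[Theorem 2]{multivariate_output} for the entry-wise bound
\[
\abs{\widehat{\Sigma}^{BM}_{T,ij}-\Sigma_{f,ij}}
=\bigO_{a.s.}\!\left(\left(\frac{\ell_T}{T}\right)^{1/2}\right)
+\bigO_{a.s.}\!\left(\frac{\bar{\psi}_N\psi_d\Psi_T\log T}{\ell_T^{1/2}}\right)
+\bigO_{a.s.}\!\left(\frac{\bar{\psi}^2_N\psi^2_d\Psi^2_T\log T}{T}\right),
\]
and gets the extra factor $d$ in the Frobenius norm merely by summing over the $d^2$ entries. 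Your concern that the dimension dependence gets "absorbed into constants" in \cite{multivariate_output} is misplaced: that reference gives an entry-wise (effectively scalar) statement, so no re-derivation with explicit constants is required; the dimension enters only at the summation step. Re-running the analysis from scratch would give the same inequality but costs far more work.

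Two smaller points are worth flagging. First, the "autocovariance-tail bias controlled by the geometric or polynomial drift" you include among the competing terms does not belong in this framework: in the coupling-based analysis the bias of $\hat\Sigma^{BM}_T$ is already subsumed in the Gaussian-approximation error (upper bounded by $\bigO(\Psi_T\ell_T^{-1/2})$), and the paper explicitly contrasts this with the MSE-based batch-means literature (\cite{chien1997large,goldsman1990,song1995optimal}) whose $\bigO(\ell_T^{-1})$ autocovariance bias leads to the different $T^{1/3}$ batch size. Mixing the two frameworks here would risk deriving a different (and incorrect) exponent $\alpha$. Second, the paper treats the choice of $\alpha$ as an unconstrained balancing problem --- equate the terms $d(\ell_T/T)^{1/2}$ and $\bar\psi_N d\psi_d\Psi_T\log(T)\ell_T^{-1/2}$ to obtain $\ell_T\asymp\bar\psi_N\psi_d\Psi_T T^{1/2}$ --- and then checks that \eqref{as_multi_bm} and the convergence of the balanced rate hold under $\bar\delta>1/(1+a)$. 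Your framing as a constrained optimisation on the boundary of the feasible region cut out by \eqref{as_multi_bm} is not how the paper argues, and it is an unnecessary complication: once $\ell_T$ balances the two terms and $\Psi_T=o(T^{1/2})$, condition \eqref{as_multi_bm} is automatically satisfied strictly. Your exponent computation, and the observation that $\bar\delta>1/(1+a)$ suffices, do match the paper's.
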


\begin{remark}
While the result of Theorem \ref{multi_bm_theorem} is formulated in a high-dimensional setting, note that in an application where we can assume the influence of the dimension to be negligible, we obtain as an immediate consequence the following the choice of bath size from our strong approximation results.

\begin{table}[ht]
\begin{center}
\begin{tabular}{|c|c|c|}
    \hline
    \textbf{} & \textbf{one-step minorisation} & \textbf{multi-step minorisation} \\
    \hline
    \rule{0pt}{16pt}
        exponential drift & $ \displaystyle T^{\frac{1}{2}+ \frac{1}{p_0}}$ & $ \displaystyle T^{\frac{3}{4}+ \frac{1}{4(p_0-1)}}$\\
   \hline
   \rule{0pt}{16pt}
     polynomial drift & $ \displaystyle T^{\frac{1}{2}+ \frac{1}{p_0}}$ & $ \displaystyle T^{\frac{3}{4}+ \frac{1}{4(p-1)}}$ \\
   \hline

\end{tabular}
\caption{Batch size $\ell_T$ multivariate setting}
\label{table:batch_size_multi_dim}
\end{center}
\end{table}
\end{remark}
%
%
%
% \cite{flegal_bm}
It has empirically been observed that for many practical problems, where slower convergence rates to stationarity are expected, larger batch sizes and truncation windows are required when applying the batch means and spectral variance methods for MCMC simulation output. Consistency of the batch means estimator requires that each batch gives an accurate representation of the dependence structure of the process. Naturally, in situations with slower convergence rates, a larger batch size will be required. An immediate consequence of slower mixing, is the slower decay of the autocovariance function. Hence also spectral variance estimators will require larger truncation points for consistent estimation of the asymptotic variance. But these corrections for either polynomial convergence rates to stationarity or the dimension of the problem have been done in heuristic ways. In Table \ref{table:batch_size_high_dim}, we give the appropriate batch sizes for the batch means estimator, which guarantee consistency in the polynomial drift as well as in the high-dimensional setting. 
\begin{table}[ht]
\begin{center}
\begin{tabular}{|c|c|c|}
    \hline
    \textbf{} & \textbf{one-step minorisation} & \textbf{multi-step minorisation} \\
    \hline
       \rule{0pt}{16pt}
        exponential drift & $\displaystyle T^{\frac{1}{2}+\frac{p-2}{2p(1+\bar{\delta})}+ \frac{1}{p}}$ & $ \displaystyle T^{\frac{3}{4}+ \frac{1}{4(p-1)}+\frac{(p-2)}{4(p-1)(1+\bar{\delta})}}$\\
   \hline
     \rule{0pt}{16pt}
     polynomial drift & $\displaystyle T^{\frac{1}{2}+\frac{p_0-2}{2p_0(1+\bar{\delta})}+ \frac{1}{p_0}}$ & $ \displaystyle T^{\frac{3}{4}+ \frac{1}{4(p_0-1)}+\frac{(p_0-2)}{4(p_0-1)(1+\bar{\delta})}}$ \\
   \hline
\end{tabular}
\caption{Batch size $\ell_T$ high-dimensional setting}
\label{table:batch_size_high_dim}
\end{center}
\end{table}

Note that our obtained results summarised in Table \ref{table:batch_size_high_dim} differ from the well-known results of \cite{chien1997large,goldsman1990}, and \cite{song1995optimal} who obtain an optimal batch size of order ${T}^{1/3}.$ This is due to the fact that the aforementioned results choose the batch size such that the mean squared error of the batch means estimator is minimised. For the analysis of termination criteria we require strong convergence of the corresponding variance estimator. Hence we have chosen the batch size that optimises the rate of strong convergence. We note that for the one-step minorisation case, Theorem \ref{multi_bm_theorem} also allows the MSE optimal batch size. However, this will result in a weaker bound on the termination time. In \cite{damerdji1995} it is shown that the Gaussian approximation results can also be used to obtain mean square convergence of the batch means estimator. However, this approach yields slower convergence rates than directly considering the MSE. The results of \cite{chien1997large,goldsman1990}, and \cite{song1995optimal} give a bias of order $\bigO(\ell_T^{-1})$ for the batch means estimator, whereas Theorem \ref{multi_bm_theorem} implies an upper bound for the bias of order $\bigO(\Psi_T\ell_T^{-1/2})$. Both approaches give the same bound on the variance of the batch means estimator. Our results rely on the fact that in applications of MCMC the run-length of the algorithm can always the chosen such that $T$ is of greater magnitude than the dimension of the problem. However, applying techniques from high-dimensional covariance matrix estimation might lead to less restrictive conditions on the simulation time. We leave it as a topic of future research to improve  the strong convergence rate of estimators of the MCMC asymptotic variance.\\

Furthermore, consistent estimation of the asymptotic variance is required for computing the effective sample size (ESS) of an MCMC algorithm. The ESS is a measure of the efficiency of an MCMC algorithm, incorporating
the difference in magnitude between the MCMC standard error and the in-
dependent Monte Carlo error. Following \cite{multivariate_output}, the ESS in the multivariate setting can be defined as
\begin{equation}
    \label{ess2}    \textrm{ESS}=T \left(\frac{\lvert\hat{\Gamma}_f\rvert}{\lvert\hat{\Sigma}_f\rvert}\right)^{1/d},
\end{equation}

\noindent where $\hat{\Sigma}_f$ denotes an estimator of $\Sigma_f$ and $\hat{\Gamma}_f$ denotes the sample covariance of the target, namely,
\begin{equation}
   \hat{\Gamma}_f= \frac{1}{T}\sum_{t=1}^T\left(f(X_t)-\hat{\pi}_T(f)\right)(f(X_t)-\hat{\pi}_T(f))^{\top}.
\end{equation}

From Theorem \ref{multi_bm_theorem} we
immediately obtain the following result for consistency of the ESS in the high-dimensional setting.

\begin{corollary}
\label{ESS_corr}
    Suppose the conditions of Theorem \ref{multi_bm_theorem} hold and that we estimate the ESS using the batch-means estimator. Then
    \begin{equation*}
\frac{\bar{\psi}_N d^{3/2}\psi_d({\sigma_d} / {\sigma_0})\Psi_T\log(T)}{\ell_{T,d}^{1/2}} =o(1)\ \textrm{and }
    \frac{\bar{\psi}^2_N d^{3/2}\psi^2_d({\sigma_d} / {\sigma_0})\Psi^2_T\log(T)}{T}=o(1),
\end{equation*}

then we have that 
    \[
    \abs{\left(\frac{\lvert\hat{\Gamma}_f\rvert}{\lvert\hat{\Sigma}^{BM}_T\rvert}\right)^{1/d}- \left(\frac{\lvert{\Gamma}_f\rvert}{\lvert{\Sigma}_f\rvert}\right)^{1/d}  } \rightarrow 0
    \]
     with probability 1 as $T\rightarrow \infty$.
\end{corollary}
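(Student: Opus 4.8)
The plan is to combine three ingredients. First, consistency of the batch–means estimator, $\abs{\hat\Sigma^{BM}_T-\Sigma_f}\to 0$ a.s., which is Theorem \ref{multi_bm_theorem}: note that the two $o(1)$ hypotheses of the corollary are strictly stronger than \eqref{as_multi_bm}, since they carry an extra factor $d^{1/2}(\sigma_d/\sigma_0)\ge 1$, so Theorem \ref{multi_bm_theorem} applies and the extra strength is what will be consumed below. Second, a separate consistency statement for the lag--zero sample covariance, $\abs{\hat\Gamma_f-\Gamma_f}\to 0$ a.s., where $\Gamma_f=\Varu(f(X_0))$. Third, a quantitative perturbation estimate for the matrix functional $\Phi(A,B)=(\det A/\det B)^{1/d}$ that defines the ESS ratio (here $\lvert\cdot\rvert$ on matrices is read as the determinant, as in \cite{multivariate_output}; if one instead reads it as the Frobenius norm the argument only simplifies). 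Writing $R_T=\Phi(\hat\Gamma_f,\hat\Sigma^{BM}_T)$ and $R=\Phi(\Gamma_f,\Sigma_f)$, the target $\abs{R_T-R}\to 0$ a.s.\ reduces to bounding $\abs{\log R_T-\log R}$ together with the a priori bound $R_T,R=\bigO(1)$; the latter follows from Assumption \ref{assumption_covariance} and the moment hypothesis, since the geometric mean of the eigenvalues of $\Gamma_f$ is at most $\tr(\Gamma_f)/d=\bigO(1)$ while that of $\Sigma_f$ is at least $\sigma_0$, and symmetrically for the lower bound.

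For the second ingredient, I would decompose
\[
\hat\Gamma_f-\Gamma_f=\Big(\tfrac1T\textstyle\sum_{t=1}^T g(X_t)-\pi(g)\Big)-\Big(\hat\pi_T(f)\hat\pi_T(f)^\top-\pi(f)\pi(f)^\top\Big),\qquad g(x)=\mathrm{vec}\big(f(x)f(x)^\top\big).
\]
The hypothesis $p\ge 4$ gives componentwise $\pi(\abs{g_{ij}}^{p'+\varepsilon'})<\infty$ for some $p'>2$ (because $\abs{f_if_j}^{p'}\le\tfrac12(\abs{f_i}^{2p'}+\abs{f_j}^{2p'})$), and Assumption \ref{assumption_covariance} together with the drift condition ensures the asymptotic covariance $\Sigma_g$ of the lifted process is finite entrywise; hence the weak Gaussian approximation of Theorem \ref{theorem_approx_discrete_one} (resp.\ \ref{theorem_approx_discrete_multi}) applies to $g$, giving $\abs{\tfrac1T\sum_t g(X_t)-\pi(g)}\lesssim T^{-1}\big(\abs{\Sigma_g^{1/2}W_T}+\bar\psi_N\, d\,\Psi_T\log T\big)$ with $\abs{\Sigma_g^{1/2}W_T}=\bigO_P(\sqrt{\tr(\Sigma_g)\,T})$ and $\tr(\Sigma_g)\lesssim d^2\sigma_d$. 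The mean--correction term is at most $\abs{\hat\pi_T(f)-\pi(f)}\big(\abs{\hat\pi_T(f)}+\abs{\pi(f)}\big)$ with $\abs{\pi(f)}\lesssim\sqrt d$ and $\abs{\hat\pi_T(f)-\pi(f)}\lesssim T^{-1}\big(\abs{\Sigma_f^{1/2}W_T}+\bar\psi_N\psi_d\Psi_T\log T\big)$, again from Theorems \ref{theorem_approx_discrete_one}--\ref{theorem_approx_discrete_multi}. Collecting these contributions and upgrading to an almost sure statement via a Borel--Cantelli argument along the subsequence supplied by Assumption \ref{assumption_truncation} (or via the strong form of the approximation), one checks that $\abs{\hat\Gamma_f-\Gamma_f}\to 0$ a.s.\ holds precisely under the second $o(1)$ condition of the corollary; in particular $\abs{\hat\Gamma_f-\Gamma_f}_*\to 0$, so eventually $\hat\Gamma_f$ is positive definite and $\log\det\hat\Gamma_f$ is well defined.

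For the third ingredient, for positive–definite $A,B$ with $\sigma_1(B)\ge\sigma_0$ write $\tfrac1d(\log\det A-\log\det B)=\tfrac1d\sum_{i=1}^d\log\mu_i$, where $\mu_1,\dots,\mu_d$ are the eigenvalues of $B^{-1/2}AB^{-1/2}$. Since $\max_i\abs{\mu_i-1}\le\abs{B^{-1/2}(A-B)B^{-1/2}}_*\le\sigma_0^{-1}\abs{A-B}_*$, once this quantity is below $1/2$ one has $\abs{\log\mu_i}\le 2\abs{\mu_i-1}$, and Cauchy--Schwarz over $i$ yields
\[
\frac1d\abs{\log\det A-\log\det B}\le\frac{2}{\sqrt d}\,\abs{B^{-1/2}(A-B)B^{-1/2}}\le\frac{2}{\sigma_0\sqrt d}\,\abs{A-B}.
\]
Applying this with $(A,B)=(\hat\Gamma_f,\Gamma_f)$ and $(A,B)=(\hat\Sigma^{BM}_T,\Sigma_f)$ — using $\sigma_1(\Sigma_f)\ge\sigma_0$ from Assumption \ref{assumption_covariance}, and a corresponding lower bound $\sigma_1(\Gamma_f)\ge\sigma_0$, which may be taken as part of the covariance regularity or deduced from it — and then the triangle inequality, I get
\[
\abs{\log R_T-\log R}\le\frac{2}{\sigma_0\sqrt d}\Big(\abs{\hat\Gamma_f-\Gamma_f}+\abs{\hat\Sigma^{BM}_T-\Sigma_f}\Big),
\]
which tends to $0$ a.s.\ by the first two ingredients (the first $o(1)$ condition of the corollary supplying $\abs{\hat\Sigma^{BM}_T-\Sigma_f}\to0$ through Theorem \ref{multi_bm_theorem}); the extra $d^{1/2}(\sigma_d/\sigma_0)$ in the corollary's hypotheses relative to \eqref{as_multi_bm} is exactly what absorbs the $d^2\sigma_d$ under the square root of $\tr(\Sigma_g)$, the $\sqrt d$ from $\abs{\pi(f)}$, and the $\sigma_0^{-1}$ Lipschitz factor. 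Since $R_T,R$ are a.s.\ bounded and bounded away from $0$, and $\log$ is Lipschitz on compact subsets of $(0,\infty)$, $\abs{R_T-R}\le C\abs{\log R_T-\log R}\to0$ a.s.

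The main obstacle is the second ingredient: proving $\abs{\hat\Gamma_f-\Gamma_f}\to 0$ with the explicit dimension dependence the corollary demands, since this forces one to transport the Gaussian approximation from $f$ to the quadratic functional $f f^\top$ — which lives in dimension of order $d^2$ — and to track carefully how $\tr(\Sigma_g)$, the constant $\bar\psi_N$, and the mean–correction term scale, so that the resulting requirement is \eqref{as_multi_bm} inflated by exactly $d^{1/2}(\sigma_d/\sigma_0)$ and not by a larger power. A subsidiary point that should not be overlooked is that $\log\det$ is only controlled once $\hat\Gamma_f$ is uniformly well conditioned; this is why Frobenius consistency of $\hat\Gamma_f$ (which implies spectral consistency, hence eventual positive definiteness with eigenvalues near those of $\Gamma_f$) and an eigenvalue lower bound on $\Gamma_f$ are both needed.
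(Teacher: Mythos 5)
Your proposal is correct and uses the same overall decomposition as the paper (consistency of $\hat\Sigma^{BM}_T$, consistency of $\hat\Gamma_f$, and a perturbation bound for $(\det A/\det B)^{1/d}$), but the paper's own proof is far terser: it simply invokes ``the same argument as the first part of Theorem~\ref{FVSR_optimal_rate}'', whose Lipschitz estimate for $(\det\hat\Sigma_T)^{1/d}$ proceeds via Jacobi's formula \eqref{det_derivative} on the domain $\sigma_0^{1/2d}I_d\preccurlyeq\Sigma\preccurlyeq\sigma_d^{1/2d}I_d$ combined with the Ando--van Hemmen inequality (Lemma~\ref{Ando_inequality}), arriving at a bound of order $(\sigma_d/\sigma_0)\sqrt{d}\,\abs{\hat\Sigma_T-\Sigma_f}$, and then immediately cites Theorem~\ref{multi_bm_theorem}. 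Your third ingredient replaces this with a $\log\det$ eigenvalue-perturbation argument: bounding $\tfrac1d\abs{\log\det A-\log\det B}$ through the spectrum of $B^{-1/2}AB^{-1/2}$ and Cauchy--Schwarz gives a Lipschitz constant of order $\sigma_0^{-1}d^{-1/2}$, which is sharper in $d$ by a factor $d\,\sigma_d$ than the paper's direct route (both suffice, since the corollary's hypotheses are inflated by $d^{1/2}\sigma_d/\sigma_0$ relative to \eqref{as_multi_bm}). Your second ingredient is also more explicit than the paper's: the paper writes the error solely in terms of $\abs{\hat\Sigma_T-\Sigma_f}$, implicitly subsuming the $\hat\Gamma_f$ contribution, whereas you spell out the route through the lifted functional $g=\mathrm{vec}(ff^\top)$ (using $p\ge 4$ and Theorems~\ref{theorem_approx_discrete_one}--\ref{theorem_approx_discrete_multi}) to control $\abs{\hat\Gamma_f-\Gamma_f}$; this is the genuinely substantive gap in the paper's argument that you correctly identify as the main obstacle. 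You also rightly flag that a lower eigenvalue bound $\sigma_1(\Gamma_f)\ge\sigma_0$ is used but not literally stated in Assumption~\ref{assumption_covariance}, which only bounds the eigenvalues of $\Sigma_{f_T}$ for large $T$ and of the limit $\Sigma_f$; this should be read as part of the standing regularity on the covariance structure.
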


We see that for consistent estimation of the ESS we require more stringent conditions on the convergence rate of $\hat{\Sigma}_T$.

\subsection{ MCMC Termination Criteria}
 Sequential termination rules are the standard practice for determining the appropriate running time of an MCMC algorithm. The Fixed Volume Stopping Rule (FVSR) allows termination of the simulation when the volume of a confidence region for the parameter of interest is below some predetermined tolerance level. Firstly, note that due to the obtained weak Gaussian approximation results, we can construct a confidence interval for $\pi(f)$, namely,
\begin{equation}
\label{CI}
C(T)=\left\{x \in \mathbb{R}^d: T(\hat{\pi}_T(f)-x)^\top \hat{\Sigma}^{-1}_T(\hat{\pi}_T(f)-x)< q_\alpha \right\},
\end{equation}
where $\hat{\pi}_T(f)$ denotes the empirical average of $f$ over the simulation output and $\hat{\Sigma}_T$ denotes some estimator of the asymptotic covariance matrix, which is evaluated using simulation output until time $T$, and  $q_\alpha$ denotes the $(1-\alpha)$--quantile of the $\chi^2$ distribution with $d$ degrees of freedom.  Given some user-specified tolerance level $\varepsilon$, the FVSR defines the time of termination $T(\varepsilon)$ for our simulation experiment as 
\begin{equation}
\label{FVSR}
    T(\varepsilon)=\inf\{t> 0: \Vollie(C(t))^{1/d}+ \Lambda(t)\leq \varepsilon \}.
\end{equation}
Here $\Vollie(\cdot)$ denotes the standard volume element and $\Lambda(t)$ is some positive sequence tending to zero. The role of $\Lambda(t)$ is to prevent early termination due to an inaccurate estimate of the covariance matrix or unreliability of the CLT from an insufficient sample size. A common choice is $\Lambda(t)=\mathbbm{1}_{\{t<T^*\}}+o(t^{-1/2})$, for some appropriate threshold $T^*$.\\

All termination criteria, see for example \cite{glynn, jones_fixed, multivariate_output} and \cite{gong2016}, make use of some sort of minimum simulation threshold in order to prevent early termination due to an inaccurate estimate of the covariance matrix or unreliability of the FCLT due to the insufficient sample size. It is often mentioned that this minimum simulation effort should take the complexity of the problem into account. However, the choice of this simulation threshold has always been done in heuristic ways. Our approach for determining the simulation threshold guarantees that the approximation rate between the estimated confidence ellipsoid and its limiting quadratic form is of a smaller asymptotic magnitude than the desired precision level $\varepsilon$. This results in both the error of the Gaussian approximation and the covariance matrix estimation procedure being of a smaller magnitude than the desired precision level. In the following theorem, we generalise \cite[Theorem 1]{glynn} by giving quantitative convergence bounds for the FVSR.

\begin{theorem}
\label{FVSR_optimal_rate}
Suppose that $X$ satisfies the following strong Gaussian approximation
\begin{equation*}
\abs{\sum_{t=1}^T f(X_t)-T\pi(f)-\Sigma_f^{1/2}W_T} = \bigO_{a.s.} \left( \bar{\psi}_N \psi_d\Psi_T\right),
\end{equation*}

with approximation error $\Psi_T=T^{1/{p_0}}\log (T)$ for some $p_0>4$ and $\psi_d=d^a$ for some $a>0$.
%Moreover, suppose also that $X$ satisfies a weak Gaussian approximation with error rate $\bar{\psi}^\star_N \psi^\star_d$. 
Let $T_1(\varepsilon)$ be given by
\begin{equation}
     T_1(\varepsilon)=\inf\{t> 0: \Vollie(C(t))^{1/d}+\varepsilon \Lambda(t) < \varepsilon \},
\end{equation}
with $C(t)$ the confidence ellipsoid given in \eqref{CI} and $\Lambda(t)=\mathbbm{1}_{\{t<T^* (\varepsilon,d,N)\}}+t^{-1}$, with
\begin{equation}
 \label{T_star_conditions_thrm}
 T^* (\varepsilon,d,N) = 
                       \left({\bar{\psi}_N \left(\frac{\tr(\Sigma_f)}{\sigma_0}\right)^{2}}d^{3}\psi_d \right)^{\frac{2p_0}{(p_0-2)}(1+\bar{\delta}_1)} \left(\frac{1}{\varepsilon}\right)^{\frac{4p_0}{(p_0-2)}(1+\bar{\delta}_2)}\vee e^{\frac{10p_0}{p_0-2}}
 \end{equation}

for any $\bar{\delta}_1 > 3/(3+a)$ and $\bar{\delta}_2 > 0 $. Let $\widehat{\Sigma}_T$ in \eqref{CI} denote the batch means estimator defined in \eqref{multi_bm_est}, with batch size $\ell_T$ set as
\begin{align}
\label{optimal_batch_size}
\ell_T=   
 \bar{\psi}_N \psi_d  T^{1/2+1/p_0}.
 \end{align}
Suppose that Assumptions \ref{assumption_moment_condition} and \ref{assumption_covariance} hold.
Then we have as $\varepsilon \downarrow 0$ the following:
\begin{enumerate}
    \item The asymptotic behaviour of the termination time $T_1(\varepsilon)$ is characterised by 
\begin{equation}
\frac{\varepsilon^2T_1(\varepsilon)}{{c_{\alpha,d}^{2/d}\det(\Sigma_f)}^{1/d}} = 1+ o_{a.s.}\left( \log^2(\bar{\psi}_N d^3 \psi_d )\bar{\psi}_N^{-\bar{\delta}_1/2}d^{-1/2} \varepsilon\right),
\end{equation}
where $c_{\alpha,d}$ denotes the product of $q_\alpha^{d/2}$ and the volume of a standard $d$-dimensional hypersphere.
\item Asymptotic validity of the resulting confidence set
\begin{equation}
\P_\pi\left(C(T_1(\varepsilon) ) \ni \pi(f)\right)  \xrightarrow[]{}1-\alpha.
\end{equation}
\end{enumerate}
\end{theorem}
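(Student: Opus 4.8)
The plan is to follow the structure of \cite[Theorem 1]{glynn} for the asymptotic behaviour of the FVSR, but to track all error terms explicitly so that the dimension dependence and the precision level $\varepsilon$ enter the minimum simulation threshold $T^*(\varepsilon,d,N)$ quantitatively. First I would compute the volume of the confidence ellipsoid: since $C(t)$ is the ellipsoid $\{x: t(\hat\pi_t(f)-x)^\top\hat\Sigma_t^{-1}(\hat\pi_t(f)-x)<q_\alpha\}$, its volume is $\Vollie(C(t)) = (q_\alpha/t)^{d/2}\,\omega_d\,\det(\hat\Sigma_t)^{1/2}$, where $\omega_d$ is the volume of the unit $d$-ball, so that $\Vollie(C(t))^{1/d} = c_{\alpha,d}^{1/d}\,t^{-1/2}\,\det(\hat\Sigma_t)^{1/(2d)}$ with $c_{\alpha,d}=q_\alpha^{d/2}\omega_d$. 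Substituting into the definition of $T_1(\varepsilon)$ and using that on $\{t\geq T^*\}$ we have $\Lambda(t)=t^{-1}$, the termination condition becomes, up to the $\varepsilon t^{-1}$ correction which is lower-order, $c_{\alpha,d}^{1/d} t^{-1/2}\det(\hat\Sigma_t)^{1/(2d)} \approx \varepsilon$, i.e. $t \approx \varepsilon^{-2} c_{\alpha,d}^{2/d}\det(\hat\Sigma_t)^{1/d}$.

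Next I would feed in the consistency of the batch means estimator. The batch size in \eqref{optimal_batch_size} is exactly the one prescribed by Theorem \ref{multi_bm_theorem} (with $\psi_d=d^a$, $\Psi_T=T^{1/p_0}\log T$, and $\bar\delta$ chosen appropriately), so by that theorem $\abs{\hat\Sigma^{BM}_T-\Sigma_f}\to 0$ a.s.; in fact, tracking the rate through the proof of Theorem \ref{multi_bm_theorem} gives a quantitative bound of the form $\abs{\hat\Sigma^{BM}_T-\Sigma_f} = \bigO_{a.s.}(\bar\psi_N\psi_d\Psi_T\,\ell_T^{-1/2}) + \bigO_{a.s.}(\text{variance term})$, which under the stated choice of $\ell_T$ and the lower bound $T\geq T^*(\varepsilon,d,N)$ is $o_{a.s.}(\bar\psi_N^{-\bar\delta_1/2}d^{-1/2}\varepsilon)$ — this is where the exponents $\tfrac{2p_0}{p_0-2}(1+\bar\delta_1)$ and $\tfrac{4p_0}{p_0-2}(1+\bar\delta_2)$ in $T^*$ come from, by solving for the threshold at which the estimation error drops below the target. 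Then, by continuity of $\det(\cdot)^{1/d}$ and the eigenvalue bounds from Assumption \ref{assumption_covariance} (which keep $\det(\Sigma_f)^{1/d}$ bounded away from $0$ and $\infty$, via $\sigma_0$ and $\tr(\Sigma_f)$), one gets $\det(\hat\Sigma_t)^{1/d} = \det(\Sigma_f)^{1/d}(1+o_{a.s.}(\cdot))$ with the same rate; the $\log^2$ factor in the conclusion is picked up from the $\log(T)$ in $\Psi_T$ after substituting $T\asymp\varepsilon^{-2}$, i.e. $\log T \asymp \log(1/\varepsilon)$ plus the dimensional logs $\log(\bar\psi_N d^3\psi_d)$. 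Solving the fixed-point relation $t=\varepsilon^{-2}c_{\alpha,d}^{2/d}\det(\hat\Sigma_t)^{1/d}$ by a standard sandwich/continuity argument (the map is eventually monotone and the perturbation is $o(1)$) yields part 1.

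For part 2, asymptotic validity of the confidence set, I would invoke the weak Gaussian approximation (which holds under the same hypotheses, being weaker than the strong one assumed) to get $T^{1/2}(\hat\pi_T(f)-\pi(f)) \Rightarrow N(0,\Sigma_f)$ in the high-dimensional sense of \eqref{weak_GA}, combined with the consistency $\hat\Sigma_T\to\Sigma_f$ just established, so that $T(\hat\pi_T(f)-\pi(f))^\top\hat\Sigma_T^{-1}(\hat\pi_T(f)-\pi(f))\Rightarrow\chi^2_d$. One then needs to check that this continues to hold when $T$ is replaced by the random time $T_1(\varepsilon)$: since $T_1(\varepsilon)\to\infty$ a.s. as $\varepsilon\downarrow0$ and, by part 1, $\varepsilon^2 T_1(\varepsilon)$ converges to a deterministic constant, a random-time-change / Anscombe-type argument (as in \cite{glynn}) transfers the CLT to the stopped chain, giving $\P_\pi(C(T_1(\varepsilon))\ni\pi(f))\to 1-\alpha$.

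\textbf{Main obstacle.} The delicate part is not the ellipsoid-volume bookkeeping but propagating a \emph{quantitative} a.s. rate for $\abs{\hat\Sigma^{BM}_T-\Sigma_f}$ through the batch means analysis of \cite{multivariate_output}/Theorem \ref{multi_bm_theorem} and matching it precisely against the threshold $T^*(\varepsilon,d,N)$, keeping all of $\bar\psi_N$, $d^3$, $\psi_d$, $\tr(\Sigma_f)/\sigma_0$, and the free exponents $\bar\delta_1,\bar\delta_2$ consistent; a secondary difficulty is justifying the random-time substitution $T\mapsto T_1(\varepsilon)$ uniformly in the dimension, since the usual Anscombe argument must be checked to interact correctly with the $d$-dependent normalisation in the high-dimensional CLT.
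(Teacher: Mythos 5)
Your proposal is correct and follows essentially the same route as the paper's proof: compute the ellipsoid volume explicitly, feed the quantitative a.s.\ rate from Theorem~\ref{multi_bm_theorem} into $\det(\widehat{\Sigma}_T)^{1/2d}$ via eigenvalue/Lipschitz control, solve for the threshold $T^*$ that drives this error below $\varepsilon$, and then use the Glynn sandwich argument for part~1 and a Glynn-type random-time transfer for part~2. The only small difference is that for part~2 the paper never invokes weak convergence to $\chi^2_d$ plus an Anscombe-type argument; instead it keeps everything quantitative, bounding $\abs{T(\hat\pi_T(f)-\pi(f))^\top\widehat{\Sigma}_T^{-1}(\hat\pi_T(f)-\pi(f))-T^{-1}\langle W_T,W_T\rangle}$ directly by $o_{a.s.}(\cdot)$ (using the strong Gaussian approximation, Weyl's perturbation theorem for the eigenvalues of $\widehat{\Sigma}_T$, Gershgorin for $\sigma_d\lesssim d^2$, and Ando--van Hemmen for matrix powers), so that the random-time substitution is painless and the $d$-dependent uniformity you flag as a "secondary difficulty" never arises.
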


By choosing an appropriate simulation threshold $T^*$, the results of Theorem \ref{FVSR_optimal_rate} can also be guaranteed hold for the approximation rate obtained for the multi-step minorisation case.

\begin{theorem} 
\label{FVSR_one_dep_sip}
Suppose that $X$ satisfies the following strong Gaussian approximation
\begin{equation*}
\abs{\sum_{t=1}^T f(X_t)-T\pi(f)-\Sigma_f^{1/2}W_T} = \bigO_{a.s.} \left( \bar{\psi}_N \psi_d\Psi_T\right),
\end{equation*}

with approximation error $\Psi_T=T^{1/{4p_0}+1/4(p_0-1)}\log (T)$ for some $p_0>2$ and $\psi_d=d^a$ for some $a>0$.  Let $T_1(\varepsilon)$ be defined in \eqref{FVSR} with $\Lambda(t)=\mathbbm{1}_{\{t<T^* (\varepsilon,d,N)\}}+t^{-1}$, where
\begin{equation}
 \label{T_star_conditions_thrm}
 T^* (\varepsilon,d,N) = 
                       \left({\bar{\psi}_N \left(\frac{\tr(\Sigma_f)}{\sigma_0}\right)^{2}}d^{3}\psi_d \right)^{\frac{4(p_0-1)}{(p_0-2)}(1+\bar{\delta}_1)} \left(\frac{1}{\varepsilon}\right)^{\frac{8(p_0-1)}{(p_0-2)}(1+\bar{\delta}_2)}\vee e^{\frac{16(p_0-1)}{(p_0-2)}}
 \end{equation}

for any $\bar{\delta}_1 > 3/(3+a)$ and $\bar{\delta}_2 > 0 $. Let $\widehat{\Sigma}_T$ in \eqref{CI} denote the batch means estimator defined in \eqref{multi_bm_est}, with batch size $\ell_T$ set as
\begin{align}
\label{optimal_batch_size}
\ell_T=   
 \bar{\psi}_N \psi_d T^{\frac{3}{4}+\frac{1}{4(p-1)} }\log^{\bar{\delta}_3}(T),
 \end{align}
 for any $\bar{\delta_3} >0$. Suppose that Assumptions \ref{assumption_moment_condition} and \ref{assumption_covariance} hold.
Then we have as $\varepsilon \downarrow 0$ that the conclusions of Theorem \ref{FVSR_optimal_rate} regarding the asymptotic behaviour of the termination time and the asymptotic validity of the resulting confidence set hold.
\end{theorem}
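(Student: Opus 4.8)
The plan is to reduce Theorem \ref{FVSR_one_dep_sip} to Theorem \ref{FVSR_optimal_rate} by observing that the only place the exponent $1/p_0$ of the approximation rate $\Psi_T = T^{1/p_0}\log(T)$ enters the proof of Theorem \ref{FVSR_optimal_rate} is through the two quantitative requirements: (i) that the Gaussian approximation error divided by $\sqrt{T}$ vanishes fast enough once $t \geq T^*(\varepsilon,d,N)$, and (ii) that the batch means estimator $\widehat{\Sigma}_T$ converges strongly at a rate controlled by $\Psi_T \ell_T^{-1/2}$ and $\Psi_T^2 T^{-1}$ as in Theorem \ref{multi_bm_theorem} and Assumption \ref{assumption_truncation}. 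For the multi-step minorisation case the relevant exponent is $\gamma := \tfrac14 + \tfrac{1}{4(p_0-1)}$ (up to the harmless $\log(T)$ factor), so I would first isolate an abstract version of the argument of Theorem \ref{FVSR_optimal_rate} in which $\Psi_T = T^{\gamma}\log(T)$ for a generic $\gamma \in (0,1/2)$, and then substitute the two concrete values of $\gamma$.

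First I would verify that the chosen threshold $T^*(\varepsilon,d,N)$ in \eqref{T_star_conditions_thrm} makes the normalized Gaussian approximation error negligible against $\varepsilon$. Concretely, the volume of the confidence ellipsoid $C(t)$ behaves like $c_{\alpha,d}^{1/d} \det(\widehat{\Sigma}_t)^{1/(2d)} t^{-1/2}$, and the strong Gaussian approximation together with the law of the iterated logarithm for Brownian motion gives $\hat{\pi}_t(f) - \pi(f) = O_{a.s.}(\Sigma_f^{1/2} W_t/t) + O_{a.s.}(\bar{\psi}_N \psi_d \Psi_t / t)$. The term $\Vollie(C(t))^{1/d}$ is thus, up to lower-order fluctuations, $c_{\alpha,d}^{1/d}\det(\Sigma_f)^{1/(2d)} t^{-1/2}(1 + o_{a.s.}(1))$, and solving $\Vollie(C(t))^{1/d} \approx \varepsilon$ yields $T_1(\varepsilon) \approx c_{\alpha,d}^{2/d}\det(\Sigma_f)^{1/d}\varepsilon^{-2}$. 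The role of the indicator $\mathbbm{1}_{\{t < T^*\}}$ in $\Lambda(t)$ is to force $T_1(\varepsilon) \geq T^*(\varepsilon,d,N)$; one then checks that at $t = T^*$, the approximation-error contribution $\bar{\psi}_N \psi_d \Psi_t / \sqrt{t} = \bar{\psi}_N \psi_d t^{\gamma - 1/2}\log(t)$ is $o(\varepsilon)$, which is exactly what the exponents $\tfrac{4(p_0-1)}{p_0-2}$ and $\tfrac{8(p_0-1)}{p_0-2}$ in \eqref{T_star_conditions_thrm} are engineered to guarantee — this is the routine but bookkeeping-heavy computation, directly parallel to the $\tfrac{2p_0}{p_0-2}$, $\tfrac{4p_0}{p_0-2}$ exponents in Theorem \ref{FVSR_optimal_rate}, with $1/p_0$ replaced by $\gamma$ and $(1/2 - 1/p_0)^{-1} = \tfrac{2p_0}{p_0-2}$ replaced by $(1/2 - \gamma)^{-1} = \tfrac{4(p_0-1)}{p_0-2}$.

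Next I would handle the variance estimator. The batch size in \eqref{optimal_batch_size} is $\ell_T = \bar{\psi}_N \psi_d T^{3/4 + 1/4(p-1)}\log^{\bar{\delta}_3}(T)$, which is the $\Psi_T^{(2)}$-optimal choice from Theorem \ref{multi_bm_theorem} up to the extra $\log^{\bar{\delta}_3}(T)$ factor inserted to absorb the $\log(T)$ in $\Psi_T$ and to ensure the summability condition $\sum_T(\ell_T/T)^c < \infty$ of Assumption \ref{assumption_truncation}. I would check that with this $\ell_T$, both conditions in \eqref{as_multi_bm} hold once $T \gtrsim T^*(\varepsilon,d,N)$ — again a direct substitution — so that $\abs{\widehat{\Sigma}^{BM}_T - \Sigma_f} \to 0$ a.s.; this strong consistency feeds back into $\det(\widehat{\Sigma}_t)^{1/(2d)} \to \det(\Sigma_f)^{1/(2d)}$ and closes the self-referential estimate for $T_1(\varepsilon)$, and also yields $\P_\pi(C(T_1(\varepsilon)) \ni \pi(f)) \to 1 - \alpha$ by combining the weak Gaussian approximation (the $\bigO_{a.s.}$ error implies the $\bigO_P$ error) with Slutsky's theorem applied to the studentized quadratic form. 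The main obstacle, as in Theorem \ref{FVSR_optimal_rate}, is the self-referential nature of $T_1(\varepsilon)$: the termination time is defined through a confidence ellipsoid whose volume depends on $\widehat{\Sigma}_{T_1(\varepsilon)}$, which in turn is only consistent if $T_1(\varepsilon)$ is large enough — one must run a bootstrapping argument, first using the crude lower bound $T_1(\varepsilon) \geq T^*(\varepsilon,d,N)$ coming from $\Lambda$ to get consistency of $\widehat{\Sigma}$, then feeding that back to pin down the sharp asymptotics $\varepsilon^2 T_1(\varepsilon) \to c_{\alpha,d}^{2/d}\det(\Sigma_f)^{1/d}$ with the stated error term. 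Since all of these steps are structurally identical to the proof of Theorem \ref{FVSR_optimal_rate} with $p_0 \mapsto$ an effective exponent $p_0' := (1/2 - \gamma)^{-1} + 2 = \tfrac{4(p_0-1)}{p_0-2}\cdot\tfrac{p_0-2}{2(p_0-1)} \cdot \ldots$ — more transparently, with $1/p_0$ replaced everywhere by $\gamma = \tfrac14 + \tfrac1{4(p_0-1)}$ — I would simply remark that the argument of Theorem \ref{FVSR_optimal_rate} applies verbatim with this substitution and the modified threshold, and that the exponents in \eqref{T_star_conditions_thrm} and the batch size \eqref{optimal_batch_size} have been chosen precisely so that the two smallness conditions driving that proof remain valid.
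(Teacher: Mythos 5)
Your proposal is correct and takes essentially the same approach as the paper, whose own proof simply states that the argument of Theorem \ref{FVSR_optimal_rate} applies verbatim after substituting the multi-step rate, and records the resulting bound on the studentized quadratic form. You just make explicit what that substitution amounts to — replacing $1/p_0$ by $\gamma = \tfrac14 + \tfrac{1}{4(p_0-1)}$ throughout, so that the threshold exponent $\tfrac{2p_0}{p_0-2} = (1/2-1/p_0)^{-1}$ becomes $(1/2-\gamma)^{-1} = \tfrac{4(p_0-1)}{p_0-2}$ and the batch size becomes $\Psi_T T^{1/2}$ with the extra $\log^{\bar\delta_3}(T)$ for summability — which is a harmless elaboration, not a different route.
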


The quantitative convergence bounds obtained in Theorems \ref{FVSR_optimal_rate} and \ref{FVSR_one_dep_sip} offer guidelines for the implementation of the FVSR in a wide array of settings.

\begin{remark}
While the result of Theorem \ref{FVSR_optimal_rate} is formulated in a high-dimensional setting, note that in an application where we can assume the influence of the dimension to be negligible, we obtain as an immediate consequence the following appropriate minimum simulation thresholds. We note that these simulation thresholds are also required for the reliability of the ESS.
\end{remark}
%
%
%
%\vspace{4\baselineskip}
\begin{table}[ht]
\begin{center}
\begin{tabular}{|c|c|c|}
    \hline
    %\rule{0pt}{20pt}
    \textbf{} & \textbf{one-step minorisation} & \textbf{multi-step minorisation} \\
    \hline
    \rule{0pt}{22pt}
    %\vspace{4pt}
        geometric drift & $ \displaystyle \left(\frac{1}{\varepsilon}\right)^{\frac{4p}{(p-2)}(1+\bar{\delta}_2)}\vee e^{\frac{10p}{p-2}}$  & $\displaystyle \left(\frac{1}{\varepsilon}\right)^{\frac{8(p-1)}{(p-2)}(1+\bar{\delta}_2)}\vee e^{\frac{16(p-1)}{(p-2)}}$ \\   [12pt]  
        %\rule{0pt}{0.001pt}    & & \\[0.0005pt]
   \hline
   \rule{0pt}{22pt}
    polynomial drift & $ \displaystyle \left(\frac{1}{\varepsilon}\right)^{\frac{4p_0}{(p_0-2)}(1+\bar{\delta}_2)}\vee e^{\frac{10p_0}{p_0-2}}$ & $\displaystyle \left(\frac{1}{\varepsilon}\right)^{\frac{8(p_0-1)}{(p_0-2)}(1+\bar{\delta}_2)}\vee e^{\frac{16(p_0-1)}{(p_0-2)}}$
     \\[12pt]
   \hline
\end{tabular}
\caption{Dependence of minimum simulation threshold $T^*$ on precision $\varepsilon$}
\label{table:table_precision_termination}
\end{center}
\end{table}

 Our results also enable us to study the convergence complexity of the FVSR and guarantee its validity in high-dimensional settings. Moreover, it can be guaranteed that the termination time scales polynomially in dimension while ensuring a desired level of precision. In the high-dimensional setting, we need to impose an additional multiplicative factor on the simulation threshold $T^*$. These factors are detailed in Table \ref{table:min_seq_dim}.\\

\begin{table}[ht]
\begin{center}
\begin{tabular}{|c|c|c|}
    \hline
    \textbf{} & \textbf{one-step minorisation} & \textbf{multi-step minorisation} \\
    \hline
    \rule{0pt}{26pt}
        geometric drift & $ \displaystyle \left({\bar{\psi}_N \left(\frac{\tr(\Sigma_f)}{\sigma_0}\right)^{2}}d^{3}\psi_d \right)^{\frac{2p}{(p-2)}(1+\bar{\delta}_1)} $  & $\displaystyle \left({\bar{\psi}_N \left(\frac{\tr(\Sigma_f)}{\sigma_0}\right)^{2}}d^{3}\psi_d \right)^{\frac{4(p-1)}{(p-2)}(1+\bar{\delta}_1)} $\\[12pt]
   \hline
   \rule{0pt}{26pt}
    polynomial drift & $ \displaystyle \left({\bar{\psi}_N \left(\frac{\tr(\Sigma_f)}{\sigma_0}\right)^{2}}d^{3}\psi_d \right)^{\frac{2p_0}{(p_0-2)}(1+\bar{\delta}_1)} $ & $\displaystyle \left({\bar{\psi}_N \left(\frac{\tr(\Sigma_f)}{\sigma_0}\right)^{2}}d^{3}\psi_d \right)^{\frac{4(p_0-1)}{(p_0-2)}(1+\bar{\delta}_1)} $ \\[12pt]
   \hline

\end{tabular}
\caption{Dimension dependence of the minimum simulation threshold $T^*$}
\label{table:min_seq_dim}
\end{center}
\end{table}

A widely used method for determining the run-length of the simulation is to terminate the simulation when the ESS reaches a desired threshold. We note that stopping according to the effective sample size is equivalent to termination rules based on confidence sets, see for example \cite{multivariate_output}. Moreover, the results of \cite{glynn} show that only consistent estimation of the variance or the ESS is insufficient to justify their use
as termination rules. Hence terminating according to the ESS would require similar minimum simulation requirements as those stated Theorem \ref{FVSR_optimal_rate}, Tables \ref{table:table_precision_termination} and \ref{table:min_seq_dim}.

%Also, we do not want to check too early in
%the simulation since we will run the risk of premature
%termination due to a poor estimate of the standard error.\\
%%

%
%
%
%%
%
%
%

\section{Proofs of Main Results}
%Firstly, we prove the Preliminary results for Gaussian approximation

\label{section:5}
\subsection{Preliminary results on drift and moment conditions}
In this section, we discuss the preliminary results that are required for our Gaussian approximation results. More specifically, we show how they follow from the assumed drift and minorisation conditions. We briefly review the splitting procedure of Harris chains based on \citet{asmussen}, \citet[Chapter 17.3]{meyn_tweedie_2012},  and \citet{sigman_review} and we discuss the implications of the assumed drift and minorisation conditions.
Let $X$ be an irreducible, aperiodic, positive Harris recurrent Markov chain taking values in Polish state space. From \citet[Proposition 5.4.5]{meyn_tweedie_2012} we know that $X$ satisfies the following minorisation condition 
\begin{equation}
P^{m_0}(x,dy)\geq \alpha \mathbbm{1}_C(x) \nu (dy),
\label{minorization_skeleton}
\end{equation}
for some  $\alpha \in (0,1)$, $m_0 \in \mathbb{N}$, measurable set $C$ with $\pi(C)>0$, and probability measure $\nu$ that is equivalent to $\pi|_{C}$. Note that from (\ref{minorization_skeleton}) it follows that the transition kernel of the so-called $m_0$-skeleton chain, defined as $(X_{km_0})_{k \in \mathbb{N}} $, can be interpreted as a mixture of two transition kernels, namely

\begin{equation}
\label{skeleton_mixture}
P^{m_0}(x,dy)= s(x) \nu(dy) + (1-s(x)) R(x,dy),
\end{equation}
where $s(x)= \alpha \mathbbm{1}_C(x)$ and the so-called residual kernel $R(x,dy)$ is defined as 
\begin{equation}
\label{residualkernel}
    R(x,dy)=\frac{P^{m_0}(x,dy)- s(x) \nu(dy)}{1-s(x)}.
\end{equation}

Given that the skeleton chain has hit $C$, with probability $\alpha$ the chain will move independently of its past according to the small measure $\nu$ and with probability $(1-\alpha)$ it will move according to the residual kernel $R$. Since the $m_0$-skeleton chain is also positive Harris recurrent, it will hit $C$ infinitely often. By a Borel--Cantelli argument it follows that the chain will transition according to $\nu$ infinitely often. Let $X'$ denote the split chain of the $m_0$-skeleton of $X$, i.e. for $n\in \ZZ_+$ we define $X'_n:=(X_{nm_0},\delta_n),$ where $\delta_n$ is a Bernoulli random variable that describes the distribution of the next point of the skeleton chain. The split chain has state space $\mathscr{X}\times \{0,1\}$  and has transition kernel
\begin{equation}
\label{split_transition}
    P'((x,\delta), (dy,d\delta'))=  \left\{
                \begin{array}{ll}

                P(x,dy) s(y)^{\delta'} (1-s(y))^{1-\delta'} ,  &  x\notin C,   
                  \bigskip\\
                  
                  \nu(dy) s(y)^{\delta'} (1-s(y))^{1-\delta'}\ \ \ \    ,   & \  x\in C;\ \delta=1,
                  \bigskip\\
                 
                  R(x,dy) s(y)^{\delta'} (1-s(y))^{1-\delta'},   & \ x\in C;\  \delta=0.
                 % \bigskip\\
                  
                  %U(x,dy)  & \mbox{for}\  x \notin C 
                \end{array}
              \right.
\end{equation}

Hence if $\delta_n=1$, the next point of the skeleton chain has law $\nu$ and otherwise its law is described by the residual kernel. Let $R_k$ denote the $k$-th time that the $m_0$-skeleton chain moves according to $\nu$. The randomised stopping times $(R_k)_k$ serve as regeneration epochs for the skeleton chain, whereas they will be semi-regeneration epochs for the process $X$. We say that a process is semi-regenerative if there exists (by enlarging the probability space if necessary) a sequence of independent and identically distributed random variables $(\rho_k)$ that define a renewal process $(R_n)$ with $R_n=\sum_{k=1}^n \rho_k$  such that for each $n\geq 0$ the post-$R_n$  process

$$\{ (X_{R_n+k})_{k \geq 0}, R_{n+1}, R_{n+2}, \cdots \}$$

is independent of $R_0,\cdots,R_n$ and its distribution does not depend on $n.$ Note that this implies that the process can be split into identically distributed cycles, where the lengths of the cycle are described by a renewal process. The classically regenerative definition would also impose the cycles to be independent. If the chain $X$ satisfies a one-step minorization condition, i.e., (\ref{minorization_skeleton}) holds with $m_0=1$, that the process inherits a classical regenerative structure, whereas for the general case where $m_0>1$  the process has a semi-regenerative structure with one-dependent cycles. In order to see this, we first show how the state space can be enlarged to support the semi-regeneration times of the process. Let $(\delta_n)$ again denote a sequence of Bernoulli random variables which will describe the distribution of the $m_0$-skeleton points. Let $
(\mathcal{F}_t^X)_t,(\mathcal{F}_t^\delta)_t$ denote the natural filtration of the process and the auxiliary Bernoulli variables respectively. Consider the joint law of the chain in blocks of size $m_0$; 
\begin{flalign}
    &\PP\left( \delta_n=1, X_{nm_0+1} \in dx_1, \cdots, X_{(n+1)m_0-1} \in dx_{m_0-1},,X_{(n+1)m_0} \in dy \mid \mathcal{F}^X_{nm_0}, \mathcal{F}^\delta_{n-1}; X_{nm_0} \right) \nonumber \\
    &=\PP\left( \delta_n=1, X_{nm_0+1} \in dx_1, \cdots, X_{(n+1)m_0-1} \in dx_{m_0-1},X_{(n+1)m_0} \in dy \mid  X_{nm_0}\right) \nonumber \\
      &=\alpha r(X_{nm_0},y)P(X_{nm_0},dx_1)\cdots P(x_{m_0-1},dy),
      \label{m_points_law}
\end{flalign}
where all equalities hold almost surely and $r$ denotes the Radon-Nykodym derivative
\begin{equation}
\label{regeneration_acceptance_probability}
r(x,y)=\mathbbm{1}_C(x) \frac{\nu(dy)}{P^{m_0}(x,dy)}.
\end{equation}
Note that we also have 
\begin{flalign}
&\mathbb{P}(\delta_n=1,X_{(n+1)m_0}\in dy \mid \mathcal{F}^X_{nm_0}, \mathcal{F}^\delta_{n-1}; X_{nm_0})
\nonumber\\
&=\int_{x_1, \cdots, x_{m_0-1}} \hspace{-0.6cm}
\PP\left( \delta_n=1, X_{nm_0+1} \in dx_1, \cdots, X_{(n+1)m_0-1} \in dx_{m_0-1},X_{(n+1)m_0} \in dy \mid  X_{nm_0} \right)
\nonumber\\
&=\int_{x_1, \cdots, x_{m_0-1}} \hspace{-0.6cm} \alpha r(X_{nm_0},y)P(X_{nm_0},dx_1)\cdots P(x_{m_0-1},dy)
\nonumber\\
&=\alpha \mathbbm{1}_C(X_{nm_0}) \frac{\nu(dy)}{P^{m_0}(X_{nm_0},dy)}P^{m_0}(X_{nm_0},dy)
\nonumber\\
&=\alpha \mathbbm{1}_C(X_{nm_0}) {\nu(dy)}, \nonumber
\end{flalign}
where the third equality follows from the Chapman--Kolmogorov equations.

It easily follows that we also have 
\begin{flalign}
&\mathbb{P}(\delta_n=1 \mid \mathcal{F}^X_{nm_0}, \mathcal{F}^\delta_{n-1}; X_{nm_0})= \alpha \mathbbm{1}_C(X_{nm_0})
\label{law_delta}\\
&\mathbb{P}(X_{(n+1)m_0 } \in dy \mid \mathcal{F}^X_{nm_0}, \mathcal{F}^\delta_{n-1}; X_{nm_0}, \delta_n=1)= \nu(dy)
\label{law_regeneration}
\end{flalign}

From (\ref{law_delta}) and (\ref{law_regeneration}) we see that given $\delta_n=1$ we have that 
$$\{X_k, \delta_i: k \leq nm_0, i\leq n \}\ \textrm{is independent of } \{X_k,\delta_i : k \geq (n+1)m_0, i \geq n+1\}.$$

Furthermore, we also have that the process $\{X_k,\delta_i : k \geq (n+1)m_0, i \geq n+1\}$ is equal in distribution to $\{(X_k,\delta_k) : k \geq 0\}$ with initial distribution 
$$\PP(X_0 \in dx_0 , \delta_0 \in d\delta)=  \nu(dx_0) \Bern(s(x))$$

Hence we see that the process $X$ can be embedded in a richer process, which admits a recurrent atom $A:=C \times \{1\}$ in the sense of the following proposition. 

\begin{proposition}
\label{semi_reg_discrete}
Let $(S_n,R_n)$ be a sequence of stopping times defined as $S_0=R_0:=0$ and 
$$S_{n+1}:=\inf \{ km_0 > R_n: (X_{km_0}, \delta_k) \in C \times \{1\} \} \ \ \textrm{and} \ \ R_{n+1}:= S_{n+1}+m_0.$$
Then $X_{R_n}$ is independent of $R_n$ and $\mathcal{F}_{R_{n-1}}$ for all $n\geq 1$, $(X_{R_n}, \delta_{R_n})_{n\geq 1}$ is an i.i.d sequence with $$(X_{R_n}, \delta_{R_n})\sim \nu(dx)\Bern(s(x))\ \ \textrm{for all}\ n\geq 1,$$
the process is semi-regenerative, the cycles lengths $\rho_k$ are independent and identically distributed, and the cycles $$\{X_k: R_{n-1} \leq k < R_n\}$$
are identically distributed and one-dependent for all $n\in \mathbb{N}$. If $m_0=1$, then the cycles are in fact independent.
\end{proposition}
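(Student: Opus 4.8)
The plan is to verify the claims of Proposition~\ref{semi_reg_discrete} directly from the splitting construction recalled in \eqref{split_transition}--\eqref{law_regeneration}, treating the one-step case $m_0=1$ and the general case $m_0>1$ in parallel. First I would observe that, by definition, $R_n = S_n + m_0$ where $S_n$ is the $n$-th visit of the skeleton-split chain $(X_{km_0},\delta_k)_k$ to the atom $A := C\times\{1\}$ strictly after $R_{n-1}$; since the $m_0$-skeleton chain is positive Harris recurrent and hits $C$ infinitely often, a Borel--Cantelli argument (already sketched in the excerpt) shows $\{\delta_k=1\}\cap\{X_{km_0}\in C\}$ occurs infinitely often a.s., so each $S_n$, and hence each $R_n$, is a.s. finite. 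This makes $(S_n,R_n)$ well defined.

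Next I would establish the conditional-independence/i.i.d. statement. Conditionally on $S_{n+1}=km_0$ (an event measurable with respect to $\mathcal{F}^X_{km_0}\vee\mathcal{F}^\delta_{k}$) and on $\delta_k=1$, equation \eqref{law_regeneration} gives that $X_{km_0+m_0}=X_{R_{n+1}}$ has law $\nu$ and, by \eqref{split_transition}, $\delta_{k+1}\sim\Bern(s(X_{R_{n+1}}))$, independently of the pre-$S_{n+1}$ data; crucially the right-hand sides do not depend on $k$, so $X_{R_n}$ is independent of $(R_n,\mathcal F_{R_{n-1}})$ and $(X_{R_n},\delta_{R_n})_{n\ge1}$ is i.i.d.\ with the stated law $\nu(dx)\Bern(s(x))$. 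The increments $\rho_k=R_k-R_{k-1}$ are then i.i.d.\ because, given the regeneration at $R_{k-1}$, the law of the post-$R_{k-1}$ skeleton process (hence of the waiting time to the next hit of $A$) depends only on the regeneration distribution $\nu(dx)\Bern(s(x))$, which is the same for every $k$; this also yields semi-regeneration in the sense defined just above the proposition, since the whole post-$R_n$ block $\{(X_{R_n+j})_{j\ge0},R_{n+1},R_{n+2},\dots\}$ is a measurable function of data that is conditionally independent of $\mathcal F_{R_n}$ with a law free of $n$.

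Then I would address the one-dependence of the cycles $\{X_k : R_{n-1}\le k<R_n\}$. The point is that the regeneration at time $R_n=S_n+m_0$ only forces the skeleton point $X_{S_n+m_0}=X_{R_n}$ to have law $\nu$ and to be independent of the past; the $m_0-1$ intermediate states $X_{S_n+1},\dots,X_{S_n+m_0-1}$ that ``bridge'' the regeneration are governed by \eqref{m_points_law} and depend on $X_{S_n}$, which belongs to the $n$-th cycle. Hence cycle $n$ and cycle $n+1$ share the randomness of this bridge, so consecutive cycles are dependent, but cycles $n$ and $n+2$ are separated by a full regeneration block and are independent; this is exactly one-dependence. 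Identical distribution of the cycles follows from the i.i.d.\ property of the regeneration variables $(X_{R_n},\delta_{R_n})$ together with the time-homogeneity of $P$. Finally, when $m_0=1$ there are no intermediate bridging states, $R_n=S_n+1$ and $X_{R_n}\sim\nu$ with the post-$R_n$ process depending on the past only through $X_{R_n}$, so the cycles become genuinely independent, recovering the classical regenerative structure.

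I expect the main obstacle to be the bookkeeping in the one-dependence argument: one has to be careful about what information is ``shared'' across the regeneration epoch, namely that the joint law \eqref{m_points_law} couples the tail of cycle $n$ (through $X_{S_n}$ and the acceptance weight $r(X_{S_n},\cdot)$) with the value $X_{R_n}$ that opens cycle $n+1$, while simultaneously verifying that nothing couples cycle $n$ with cycle $n+2$. Making this precise requires writing the law of two consecutive cycles as an explicit product over the transition kernels and the splitting weights and checking the conditional-independence structure factor by factor; the rest of the proposition is a fairly direct consequence of the identities \eqref{law_delta}--\eqref{law_regeneration} already derived in the excerpt.
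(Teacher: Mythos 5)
Your proof takes a genuinely different route from the paper's: the paper's proof of Proposition~\ref{semi_reg_discrete} consists entirely of citations to \citet[Theorem 4.2]{sigman_review} and to \citet{asmussen}, whereas you derive the claims directly from the splitting identities \eqref{m_points_law}--\eqref{law_regeneration} that the paper sets up in the surrounding text. This is a reasonable choice that makes visible \emph{why} one-dependence (rather than full independence) appears as soon as $m_0>1$, something the citation-only proof obscures. Your treatment of the well-definedness of $(S_n,R_n)$, the i.i.d.\ property of $(X_{R_n},\delta_{R_n})_{n\ge 1}$ via \eqref{law_delta}--\eqref{law_regeneration}, and the collapse to full independence when $m_0=1$ all match the standard argument and are correct.

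Two places in the one-dependence and semi-regeneration discussion are imprecise and would need to be tightened before the sketch could stand as a proof. First, the phrase that cycle $n$ and cycle $n+1$ ``share the randomness of this bridge'' is not quite right: the bridge states $X_{S_n+1},\dots,X_{S_n+m_0-1}$ lie entirely inside cycle $n$, since $S_n+m_0-1=R_n-1$; what creates the dependence across the $R_n$ boundary is that \eqref{m_points_law} couples this tail of cycle $n$ with $X_{R_n}$ (the head of cycle $n+1$) through the acceptance density $r(X_{S_n},\cdot)$. Second, the claim that the post-$R_n$ block is ``conditionally independent of $\mathcal F_{R_n}$'' cannot hold as stated, since the block is a function of $X_{R_n}\in\mathcal F_{R_n}$. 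What you in fact need, and have all the ingredients to prove, are two conditional-independence statements: (i) by \eqref{law_regeneration}, $X_{R_n}\sim\nu$ is independent of $\mathcal F_{S_n}$, which contains $R_0,\dots,R_n$ and cycle~$n$; and (ii) by the strong Markov property at the stopping time $R_n$, the post-$R_n$ block is conditionally independent of $\mathcal F_{R_n-1}$ given $X_{R_n}$. Combining (i) and (ii) gives the semi-regeneration property as defined above the proposition, and applying the same pair at $R_{n+1}$ (noting cycle $n$ is $\mathcal F_{S_{n+1}}$-measurable) yields independence of cycle $n$ from cycle $n+2$, which is the substance of one-dependence. You correctly anticipate that this bookkeeping is the main remaining work; spelling it out as above closes it.
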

\begin{proof}
\citet[Theorem 4.2]{sigman_review} and \citet[Theorem ]{asmussen}.
\end{proof}
The stopping times $\{S_n\}_n$ thus denote the hitting times of the recurrent atom $A$ and $\{R_n\}_n$ denote the implied regeneration epochs of the chain. As a direct consequence, of the semi-regenerative structure and the fact that $R_n$ forms a renewal process, we obtain the following characterisation of the stationary measure. Moreover, introduce the following stopping time for the $m_0$--skeleton:
\begin{equation}
\label{hitting_time_skeleton}
\bar{\tau}_{C\times \{1\}}= \inf \{k: (X_{km_0}, \delta_k) \in C \times \{1\}\}    
\end{equation}

It is well-known that the drift inequalities are closely related to the moments of hitting times of the process. For processes satisfying a geometric drift condition, the moment bounds for the hitting time $\bar{\tau}_{C\times \{1\}}$ given in \cite{baxendale2005renewal} are to the best of our knowledge the tightest bounds that are currently available.

\begin{lemma}[\protect{\cite[Proposition 4.4]{baxendale2005renewal}  }]
\label{moment_bound_exp} Let $(X_t)_{t \in \mathbb{N}}$ be an irreducible aperiodic Markov chain, assume that an $m_0$--step minorisation condition and drift condition \ref{exp_drift_discrete_skeleton} are satisfied. Then, for $1< r\leq \lambda^{-1}$,
\begin{align}
\label{exp_moment_bound_geom}
\mathbb{E}_x[ r^{ \bar{\tau}_{{C} \times \{1\} }}] \leq \frac{\alpha G(r,x)}{1-(1-\alpha)r^{a}}  \ \textrm{and} \ \ 
\E_\nu [ r^{ \bar{\tau}_{{C} \times \{1\} }}] \leq \frac{\pi(V) }{(1-(1-\alpha)r^{a})\alpha \pi(C) } ,
\end{align} 
where
$$a=1+\left(\log \frac{\lambda \upsilon_V+b-\alpha}{1-\alpha}\right)/(\log(\lambda^{-1})),$$
and
\begin{equation}
    G(r,x)\leq   \left\{
                \begin{array}{ll}
                  V(x),   & \mbox{for}\  x \in C\\
                 
                  r(\lambda \upsilon_C+b),   & \mbox{for}\  x \notin C.
                 % \bigskip\\
                  
                  %U(x,dy)  & \mbox{for}\  x \notin C 
                \end{array}
              \right.
\end{equation}
Moreover, we have that
\iffalse
\begin{align}
\label{first_moment_bound_geom}
\mathbb{E}_x[ \bar{\tau}_{C\times \{1\}}] \leq [\log(1/\lambda)]^{-1}(\log V(x)+ b/\lambda), \  
\mathbb{E}_\nu[\bar{\tau}_{C\times \{1\}}] \leq \frac{\log \pi(V)+ b/\lambda}{\alpha \log(1/\lambda) \pi(C)},    
\end{align} 
and 
\fi
\begin{align}
\label{V_bound_geom}
    \pi(V) \leq \frac{b}{1-\lambda}\pi(C).
\end{align}
\end{lemma}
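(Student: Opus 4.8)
The lemma is quoted as \cite[Proposition 4.4]{baxendale2005renewal}, so the paper needs only to cite it; here is the route I would take to prove it from Drift Condition \ref{exp_drift_discrete_skeleton} and the $m_0$-step minorisation. Throughout I would work on the $m_0$-skeleton $\tilde X_k:=X_{km_0}$ (writing $\tilde P:=P^{m_0}$) and its split chain from Proposition \ref{semi_reg_discrete}, assuming without loss of generality that $V\geq 1$.

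\emph{Drift supermartingale.} Off $C$ the drift reads $\tilde PV\leq\lambda V$, so for $1<r\leq\lambda^{-1}$ the stopped process $r^{k\wedge\bar\sigma_C}V(\tilde X_{k\wedge\bar\sigma_C})$, with $\bar\sigma_C$ the skeleton hitting time of $C$, is a non-negative supermartingale. Optional stopping and Fatou's lemma (using positive recurrence, so $\bar\sigma_C<\infty$ a.s.) give $\mathbb{E}_x[r^{\bar\sigma_C}V(\tilde X_{\bar\sigma_C})]\leq V(x)$ for $x\notin C$, and, after conditioning on the first step and using $\tilde PV\leq\lambda\upsilon_C+b$ on $C$, the corresponding bound $r(\lambda\upsilon_C+b)$ on $C$; since $V\geq 1$ these also bound $\mathbb{E}_x[r^{\bar\sigma_C}]$ and are exactly the function $G(r,x)$ of the statement.

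\emph{Per-excursion moment and the exponent $a$.} I would decompose $\bar\tau_{C\times\{1\}}$ along successive skeleton visits to $C$: each visit carries an independent $\mathrm{Bern}(\alpha)$ trial for $\{\delta=1\}$, and after a failed trial from $y\in C$ the skeleton jumps to $Y\sim R(y,\cdot)$ and runs back to $C$ in $\bar\sigma_C$ more steps; call this inter-visit increment $\xi$. The key claim is $\mathbb{E}[r^{\xi}]\leq r^{a}$ uniformly over $y\in C$, with $a=1+\log W/\log(\lambda^{-1})$ and $W:=\tfrac{\lambda\upsilon_C+b-\alpha}{1-\alpha}$. Indeed $\mathbb{E}[V(Y)\mid y]=RV(y)=\tfrac{\tilde PV(y)-\alpha\nu(V)}{1-\alpha}\leq W$ (since $\nu(V)\geq 1$), which with the supermartingale bound gives $\mathbb{E}[r^{\xi}]\leq rW$ for every $1<r\leq\lambda^{-1}$, hence $\mathbb{E}[r^{\xi}]\leq\lambda^{-1}W=\lambda^{-a}$ at $r=\lambda^{-1}$. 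Since $\mathbb{E}[1^{\xi}]=1$ and $s\mapsto\log\mathbb{E}[e^{s\xi}]$ is convex, the chord inequality on $[0,\log(\lambda^{-1})]$ in the variable $s=\log r$ upgrades this to $\mathbb{E}[r^{\xi}]\leq r^{a}$ for all $1<r\leq\lambda^{-1}$; here $W\geq 1$ (so $a\geq 1$) follows from $\pi(V)\geq\pi(C)$ and the moreover-part.

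\emph{Geometric series and stationary bounds.} Let $\Phi:=\sup_{y\in C}\mathbb{E}_y[r^{\bar\tau_{C\times\{1\}}}]$ for the chain started in $C$ just before a trial. Conditioning on the trial outcome and on where the next excursion re-enters $C$, and using the uniform bound above, $\Phi\leq\alpha+(1-\alpha)r^{a}\Phi$; running this recursion on $\bar\tau_{C\times\{1\}}\wedge n$ and letting $n\to\infty$ by monotone convergence gives $\Phi\leq\alpha/(1-(1-\alpha)r^{a})$ whenever $(1-\alpha)r^{a}<1$. For a general start $x$, reaching $C$ first contributes $\mathbb{E}_x[r^{\bar\sigma_C}]\leq G(r,x)$, so $\mathbb{E}_x[r^{\bar\tau_{C\times\{1\}}}]\leq\alpha G(r,x)/(1-(1-\alpha)r^{a})$; starting from $\nu$ replaces $G(r,x)$ by $\nu(V)$, and integrating $P^{m_0}(x,\cdot)\geq\alpha\mathbbm{1}_C(x)\nu(\cdot)$ against $\pi$ gives $\pi\geq\alpha\pi(C)\nu$, hence $\nu(V)\leq\pi(V)/(\alpha\pi(C))$ — the two displayed inequalities. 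Finally, integrating Drift Condition \ref{exp_drift_discrete_skeleton} against $\pi$ gives $\pi(V)=\pi(\tilde PV)\leq\lambda\pi(V)+b\pi(C)$, i.e. $\pi(V)\leq b\pi(C)/(1-\lambda)$ ($\pi(V)<\infty$ by positive recurrence). The delicate point is the uniform inter-excursion estimate: getting the clean exponent $a$ needs the residual-kernel drift bound, the optional-stopping/Fatou step off $C$, and the log-convexity interpolation between $r=1$ and $r=\lambda^{-1}$; the rest is bookkeeping around the semi-regenerative decomposition of Proposition \ref{semi_reg_discrete}.
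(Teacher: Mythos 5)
The paper's proof of Lemma \ref{moment_bound_exp} is a citation: it attributes \eqref{exp_moment_bound_geom} to \cite[Proposition 4.4]{baxendale2005renewal} together with Lemma \ref{pi_nu_bound}, and \eqref{V_bound_geom} to \cite[Proposition 4.3]{meyn1994computable}. You instead reconstruct the argument from the drift and minorisation conditions, and the reconstruction is sound; it is also, as far as I can tell, essentially the route Baxendale's own proof takes. The ingredients you isolate are the ones that matter: the supermartingale $r^{k\wedge\bar{\sigma}_C}V(\tilde{X}_{k\wedge\bar{\sigma}_C})$ giving $\mathbb{E}_x[r^{\bar{\sigma}_C}]\leq V(x)$ off $C$ via optional stopping and Fatou; the residual-kernel drift estimate $RV\leq W:=(\lambda\upsilon_C+b-\alpha)/(1-\alpha)$ on $C$ (using $\nu(V)\geq 1$), which yields the uniform inter-visit bound $\mathbb{E}_y[r^{\xi}]\leq rW$ over $y\in C$; log-convexity of $s\mapsto\log\mathbb{E}[e^{s\xi}]$ to interpolate the endpoint identity $\lambda^{-1}W=\lambda^{-a}$ down to $\mathbb{E}_y[r^{\xi}]\leq r^{a}$ for all $1<r\leq\lambda^{-1}$ --- without this step one would only get the weaker $1-(1-\alpha)rW$ in the geometric-series denominator; and the geometric series over successive Bernoulli trials with success probability $\alpha$. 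The $\nu(V)\leq\pi(V)/(\alpha\pi(C))$ step is precisely what Lemma \ref{pi_nu_bound} supplies, and $\pi(V)\leq b\pi(C)/(1-\lambda)$ follows from $\pi$-invariance exactly as you say. One small flag: your derivation assigns the bound $V(x)$ to $x\notin C$ and the uniform bound $r(\lambda\upsilon_C+b)$ to $x\in C$, whereas the $G(r,x)$ display in the lemma statement has the two cases transposed; your assignment is the natural one (the cost of reaching $C$ from outside must grow with $V(x)$), so the lemma's display appears to carry a transcription slip, as does the stray $\upsilon_V$ where $\upsilon_C$ is meant in the formula for $a$.
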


\begin{proof}
The claim \eqref{exp_moment_bound_geom} is a combination of 
\cite[Proposition 4.4]{baxendale2005renewal} and  Lemma \ref{pi_nu_bound}. The claim \eqref{V_bound_geom} is given in in \cite[Proposition 4.3 ]{meyn1994computable}.

\iffalse
The first claim in \eqref{first_moment_bound_geom} and \eqref{V_bound_geom} are formulated in \cite[Proposition 4.3 ]{meyn1994computable}, and the second claim in \eqref{first_moment_bound_geom} follows by Jensen's inequality and Lemma \ref{pi_nu_bound}.
\fi
\end{proof}

For chains satisfying a polynomial drift condition, we formulate the following result which gives us bounds on the moments of the semi-regeneration times. 

\begin{lemma}
\label{moment_bound_poly}
Let $(X_t)_{t \in \mathbb{N}}$ be an irreducible aperiodic Markov chain, assume that an $m_0$--step minorisation condition and drift condition
\ref{poly_drift_discrete_skeleton} are satisfied. Then, provided that $\eta>1/2$, we can assume that drift condition
\ref{poly_drift_discrete_skeleton} holds for some drift function $V$ with $\pi(V)\leq \pi(C)b/c$ such that 
\begin{align}
\E_x[\bar{\tau}_{C\times \{1\}}^{q}] \leq   {V}(x)+ (1-\alpha)^{-1} \left( \upsilon_C- {c} +{b}\right)\mathbbm{1}_C(x) 
\end{align} 
and 
\begin{align}
    \E_\nu[\bar{\tau}_{C\times \{1\}}^{q}]\leq  \frac{\pi(V)}{\alpha \pi(C)}+\frac{\upsilon_C-c+b}{1-\alpha},
\end{align}
with $q=\frac{\eta}{1-\eta}$. Moreover,
\begin{align}
\label{first_moment_bound_poly}
\mathbb{E}_x[  {\tau_C}] \leq \frac{1}{(1-\eta) c} \left(V^{1-\eta}(x)+(b^\eta+b_0)\mathbbm{1}_C(x)\right) \ \textrm{and} \ \ 
\mathbb{E}_\nu[  \tau_C] \leq  \frac{\pi(V)^{1-\eta}}{(1-\eta)\alpha c \pi(C)}+ \frac{ (b^\eta+b_0)}{(1-\eta)\alpha c}.    
\end{align} 
\end{lemma}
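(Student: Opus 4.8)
The plan is to translate the polynomial drift condition \ref{poly_drift_discrete_skeleton} into the claimed moment bounds in three stages: a normalisation of the drift function; the first-moment bound on $\tau_C$ (the ``moreover'' part); and finally the $q$-th moment bound on $\bar{\tau}_{C\times\{1\}}$, which is the substantial step. For the normalisation, integrating the skeleton drift inequality $P^{m_0}V\le V-cV^\eta+b\mathbbm{1}_C$ against $\pi$ and using $\pi P^{m_0}=\pi$ gives $c\,\pi(V^\eta)\le b\,\pi(C)$. Replacing $V$ by a constant multiple of $V^\eta$ and applying Jensen to the concave map $v\mapsto v^\eta$ shows that the new function again satisfies a polynomial drift condition, now with exponent $2-1/\eta$, which is positive precisely because $\eta>1/2$; this is where that hypothesis enters. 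After this relabelling we may assume $V\ge 1$ and $\pi(V)\le\pi(C)b/c$, the crude bound incurred on $C$ producing the additive constant $b_0$. This is the polynomial analogue of the reduction behind \cite[Proposition 4.3]{meyn1994computable}.

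For the first-moment bound, put $W:=\tfrac{1}{1-\eta}V^{1-\eta}$. Combining Jensen for $v\mapsto v^{1-\eta}$ with the tangent-line estimate $(a-u)^{1-\eta}\le a^{1-\eta}-(1-\eta)a^{-\eta}u$ valid for $0\le u\le a$ gives $P^{m_0}W\le W-c$ off $C$, and a crude bound on $C$ gives $P^{m_0}W\le W-c+b'\mathbbm{1}_C$ with $b'$ of the order recorded in the statement. The comparison theorem \cite[Theorem 14.2.2]{meyn_tweedie_2012} applied to the skeleton chain then yields $c\,\E_x[\tau_C]\le W(x)+b'\mathbbm{1}_C(x)$, which is exactly the claimed bound on $\E_x[\tau_C]$. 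Integrating this against $\nu$, using $\nu(g)\le\pi(g)/(\alpha\pi(C))$ for nonnegative $g$ (which follows from stationarity and the minorisation, cf.\ Lemma \ref{pi_nu_bound}) together with Jensen to pass from $\pi(V^{1-\eta})$ to $\pi(V)^{1-\eta}$, gives the bound on $\E_\nu[\tau_C]$.

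For the $q$-th moment I first bound the skeleton return time to $C$. For $\beta\in(0,1]$ the same concavity estimate gives $P^{m_0}(V^\beta)\le V^\beta-\beta c\,V^{\beta-(1-\eta)}+b_\beta\mathbbm{1}_C$ off $C$; writing $\beta_j:=1-j(1-\eta)$ this becomes the ladder $P^{m_0}(V^{\beta_j})\le V^{\beta_j}-c\beta_j V^{\beta_{j+1}}+b_{\beta_j}\mathbbm{1}_C$ for $j=0,\dots,n$, where $n:=\lfloor\eta/(1-\eta)\rfloor$ is the last index with $\beta_{j+1}\ge 0$ and, at the final rung, $V^{\beta_{n+1}}\ge 1$ because $V\ge 1$. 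Feeding this ladder into an iterated comparison-theorem argument in the spirit of the Tuominen--Tweedie criterion for subgeometric moments --- equivalently, running the comparison theorem against the test weight $k\mapsto(k+1)^{q-1}$ with $q=\eta/(1-\eta)$ and using $\sum_{k=0}^{\tau_C-1}(k+1)^{q-1}\asymp\tau_C^{q}$ --- produces $\E_z[\tau_C^{q}]\lesssim V(z)$ for all $z$, with the additive constant on $C$ controlled by $\upsilon_C$. Finally I transfer to $\bar{\tau}_{C\times\{1\}}$ using Proposition \ref{semi_reg_discrete}: starting from $x$, the split chain reaches the atom at its first visit to $C$ that carries a ``success'' coin, and since the coins are i.i.d.\ Bernoulli$(\alpha)$ and independent of the path, $\bar{\tau}_{C\times\{1\}}$ decomposes as a Geometric$(\alpha)$ number of successive inter-$C$-visit gaps. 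Applying $(\sum_{i\le m}T_i)^q\le(m+1)^{q-1}\sum_{i\le m}T_i^q$, noting that one residual-kernel step out of $C$ brings $\E[V]$ below $(\upsilon_C+b)/(1-\alpha)$, and summing the resulting geometric series gives $\E_x[\bar{\tau}_{C\times\{1\}}^{q}]\le V(x)+(1-\alpha)^{-1}(\upsilon_C-c+b)\mathbbm{1}_C(x)$; integrating against $\nu$ and using the $\nu$--$\pi$ comparison as above gives the $\E_\nu$ bound. This transfer is the polynomial counterpart of Baxendale's computation underlying Lemma \ref{moment_bound_exp}.

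The genuinely delicate part is this last stage: keeping every constant in the drift-function ladder and in the split-chain transfer explicit and dimension-independent, and correctly handling the fact that the successive inter-$C$-visit gaps are neither independent nor identically distributed --- resolved by observing that conditioning on the coins leaves the path law unchanged and that each failed cycle refreshes the drift value to a uniformly bounded quantity. The normalisation step is also a genuine, if smaller, subtlety, since passing to a power of $V$ preserves a polynomial drift only when $\eta>1/2$.
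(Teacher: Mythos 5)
Your first two stages line up with the paper: the normalisation is carried out (in the paper's proof) via \cite[Lemma 3.5]{jarner2002polynomial}, which is precisely the ``Jensen on a power of $V$'' step you describe and explains both the appearance of $b_0$ and the need for $\eta>1/2$; and your treatment of $\E_x[\tau_C]$ via the comparison theorem applied to $V^{1-\eta}$ is exactly the paper's argument.

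The $q$-th moment is where you genuinely diverge, and where I think there is a gap. The paper does \emph{not} first bound $\E_z[\tau_C^q]$ on the skeleton and then transfer through a Geometric$(\alpha)$ number of excursions. Instead it extends the drift function to the split state space by $V(x,\delta):=V(x)$, verifies a single polynomial drift inequality for the split kernel $\hat P^{m_0}$ -- using $R(x,\cdot)\le(1-\alpha)^{-1}P^{m_0}(x,\cdot)$ on $C\times\{0\}$ and $\int_C V\,d\nu\le \upsilon_C$ on $C\times\{1\}$ -- and then applies \cite[Proposition 2.2]{douc2004practical} once to that drift. The $(1-\alpha)^{-1}(\upsilon_C-c+b)$ term is thus absorbed into the drift constant \emph{before} any moment bound is invoked, and the coefficient in front of $V(x)$ remains exactly $1$.

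Your transfer step, by contrast, estimates
\[
\E_x\bigl[\bar\tau_{C\times\{1\}}^{\,q}\bigr]\le \E\Bigl[(M+1)^{q-1}\textstyle\sum_{i\le M}T_i^q\Bigr]
\]
with $M\sim\mathrm{Geom}(\alpha)$ independent of the path. Evaluating this necessarily produces moments of $M$ of order up to $q$: you obtain something of the form $\E[M^{q-1}]\,\E_x[T_1^q]+\E[M^q]\cdot\text{(const.)}$, and $\E[M^{q-1}],\E[M^q]$ are strictly larger than $1$ and scale roughly like $\alpha^{-q}$, not $(1-\alpha)^{-1}$. So the ``sum the geometric series'' step cannot yield the stated bound $V(x)+(1-\alpha)^{-1}(\upsilon_C-c+b)\mathbbm{1}_C(x)$ with unit coefficient on $V(x)$ and a single power of $(1-\alpha)^{-1}$; at best it gives a bound with an extra multiplicative factor depending on $q$ and $\alpha$. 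This is also visible in your own notation: at the ladder stage you write $\E_z[\tau_C^q]\lesssim V(z)$, conceding an unspecified universal constant, but the final display is asserted with an exact $\le$. Since the whole point of this lemma in the paper is to track constants explicitly (so that $\bar\psi_N$ can be controlled as $N$ grows), losing a factor $\alpha^{-q}$ in the transfer is not a cosmetic issue.

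To repair your route you would either need to push the drift inequality through the split chain as the paper does (at which point you are reproducing the paper's argument), or to replace the crude $(m+1)^{q-1}$ estimate by a sharper geometric-compounding bound that isolates the $(1-\alpha)^{-1}$. As written, the transfer step does not establish the claimed constant.
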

\begin{proof}
Firstly,  by the Comparison theorem, \cite[Theorem 14.2.2]{meyn_tweedie_2012}, we have that $\pi(V^\eta)\leq  {b \pi(C)}/c.$
 From  \cite[Lemma 3.5]{jarner2002polynomial} we see that $V^\eta$ is also a Lyapunov function satisfying drift condition 
 \[
PV^\eta \leq V^\eta - c \eta  V^{2\eta -1}+ (b^\eta+b_0)\mathbbm{1}_C,
\]
for some $b_0>0$. Hence provided that $\eta>1/2$, the first assertion follows. In order to use the hitting time bounds implied by a polynomial drift condition, we must first show that the split-chain of the $m_0$--skeleton also satisfies a polynomial drift condition. Define a Lyapunov function on the extended state space as follows: $ {V}(x,0)= {V}(x,1)= {V}(x)$.  Since the transition kernel of the split chain of the $m_0$--skeleton is given by \eqref{split_transition}, we have that for $x \notin C$, the Lyapunov condition is already satisfied. For $x\in C$ and $\delta=1$ it immediately follows that

\begin{align*}
    \hat{P}^{m_0} {V}((x,1))&=\int_C  {V}(y) \nu(dy)\\
    & \leq    {V}(x) -  {c}  {V}^{ {\eta}}(x) +  {c}\upsilon_C^{ {\eta}}+ \upsilon_C
\end{align*}
For $x\in C$ and $\delta=0$ we have that
    \begin{align*}
     \hat{P}^{m_0} {V}((x,0))&=\int_E  {V}(y) R(x,dy)\\
    & \leq (1-\alpha)^{-1} \int_E  {V}(y) P^{m_0}(x,dy)\\
        & \leq (1-\alpha)^{-1} \left(   {V}(x)-  {c} {V}^{\eta}(x)+ {b}\right)\\
         & \leq  {V}(x) -  {c} {V}(x)^\eta+  {c}\upsilon_C^{ {\eta}} + (1-\alpha)^{-1} \left( \upsilon_C-  {c} + {b}\right)
\end{align*}
    By \cite[Proposition 2.2]{douc2004practical}
    we have that 
\[
 \E_x[\bar{\tau}_C^{q}]\leq  {V}(x)+ (1-\alpha)^{-1} \left( \upsilon_C- {c} +{b}\right)\mathbbm{1}_C(x)
\]
By Lemma \ref{pi_nu_bound} it follows that
\begin{align*}
 \E_\nu[\bar{\tau}_C^{q}]&=\frac{1}{\alpha \pi(C)} \E_\pi[\bar{\tau}_C^{q}]\\
 &=\frac{1}{\alpha \pi(C)} \left(  \pi(V)+  (1-\alpha)^{-1} \left( \upsilon_C-  {c} + {b}\right)\pi(C)\right)
    \end{align*}

 In order to show \eqref{first_moment_bound_poly}, we note that by  \cite[Lemma 3.5]{jarner2002polynomial}] $V^{1-\eta}$ is also a Lyapunov function satisfying drift condition 
 \[
PV^{1-\eta} \leq V^{1-\eta} - c(1-\eta)   + (b^\eta+b_0)\mathbbm{1}_C.
\]
By the Comparison theorem, \cite[Theorem 14.2.2]{meyn_tweedie_2012}, we have that the first claim follows. The second part again follows from Lemma \ref{pi_nu_bound} and Jensen's inequality. 
\end{proof}

%Note that there are some results that give better bounds for polynomial chains under additional regularity condition.Q
Note that the bounds obtained in Lemma  \ref{moment_bound_poly} are quite general and therefore for specific situations tighter bounds could be obtained, see for example 
\cite{andrieu2015quantitative} obtain quantitative bounds for subgeometric markov chains under the assumptions of a lower bound on the Lyapunov function outside of $C$ and a minorisation condition for all skeleton chains of the process. They also assume a lower bound on V.
\cite{andrieu2015quantitative} show that if we assume a lower bound on the Lyapunov function: $\inf_{x \notin C}V(x)\geq b(1-\varepsilon)^{-1}$, for some $\varepsilon>0$ and that a one-step minorisation condition holds for the original chain and all its skeletons, then the bounds presented in Lemma \ref{moment_bound_poly} can be improved. \\

 Note that we can obtain a drift condition for the $m_0$--skeleton by iterating the one-step drift condition, namely,

    \[
    P^{m_0}V(x) \leq \lambda^{m_0}V(x)+b\sum_{i=0}^{m_0-1} P^i \mathbbm{1}_C(x) \leq \lambda V(x)+b m_0\mathbbm{1}_{C(m_0)}(x),
    \]
    where $C(m_0)$ is a small set for the skeleton chain and $C \subseteq C(m_0)$. Under additional regularity conditions, we can obtain a drift condition for the skeleton chain towards the set $C$.

 %For example, From \cite[Proposition 4.3]{douc2018markov}, we see that a geometric drift condition implies that for all initial values $\delta_x$, we have that $\E_x[\lambda^{-\tau_C}]\leq V(x)+b\lambda^{-1}$. 
 
\label{exp_condition_superset}
\label{poly_condition_superset}

\begin{proposition}
\label{exp_condition_superset}Let $(X_t)_{t \in \mathbb{N}}$ be an irreducible aperiodic Markov chain, assume that an $m_0$--step minorisation condition holds and that drift condition \ref{exp_drift_discrete_skeleton} holds for some $C_0 \supsetneq C$    such that   either there exists some $\bar{\alpha}>0$ such that 
\begin{align}
    \label{contains_condition_geom}
P^{m_0}(x,C)\geq \bar{\alpha}\mathbbm{1}_{C_0 \backslash C}(x) \ \ \textrm{or} \ 
 \inf_{x \in  C_0 \backslash C }V(x)> b(1-\lambda)^{-1}.
 \end{align}
 Then there exists a function $\hat{V}:E\rightarrow\R^+$ and   $\hat{\lambda} \in (0,1)$ and $\hat{b}>0$   such that
\[
P^{m_0}\hat{V}(x)\leq \hat{\lambda}\hat{V}(x) +\hat{b}\mathbbm{1}_C(x).
\] 

%in the first and second case of assumptions \ref{contains_condition_geom} respectively.
\end{proposition}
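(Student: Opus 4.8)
The plan is to construct the new drift function $\hat V$ by adding a bounded correction term to $V$ that captures the one-step probability of reaching $C$ from the ``collar'' $C_0 \setminus C$, so that the excess drift coming from visiting $C_0 \setminus C$ (which is controlled only by $\lambda$, not forced into $C$) is absorbed. Concretely, under the first alternative in \eqref{contains_condition_geom}, I would try $\hat V(x) = V(x) + M\,\mathbbm{1}_{C_0\setminus C}(x)$ for a suitable constant $M>0$; under the second alternative, the collar already has large $V$-values and one can instead work directly with $V$ (or with $\hat V = V\vee(b(1-\lambda)^{-1})$, i.e.\ truncating from below), since the defining inequality $\inf_{x\in C_0\setminus C}V(x) > b(1-\lambda)^{-1}$ means the ``$+b$'' term from the iterated drift is already dominated by a fraction of $\lambda V(x)$ on that set.

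The key steps, in order, are as follows. First, iterate the one-step geometric Drift Condition \ref{exp_drift_discrete_one} as in the displayed computation preceding the proposition to get $P^{m_0}V(x)\le \lambda V(x) + b m_0\,\mathbbm{1}_{C(m_0)}(x)$ for the skeleton small set $C(m_0)\supseteq C$; more importantly, re-derive the sharper bound $P^{m_0}V(x) \le \lambda^{m_0}V(x) + b\sum_{i=0}^{m_0-1}\lambda^i\,\mathbbm{1}_{C}(X_i)$-type estimate, and note that the ``bad'' contribution only arises while the chain sits in $C_0$. Second, split the state space into $C$, $C_0\setminus C$, and $E\setminus C_0$ and verify the desired inequality $P^{m_0}\hat V(x)\le \hat\lambda\hat V(x) + \hat b\,\mathbbm{1}_C(x)$ on each piece: on $E\setminus C_0$ nothing changes (the correction vanishes and the iterated drift already gives geometric decay with no additive term); on $C$ the additive term $\hat b$ simply needs to be taken large enough ($\hat b \ge \lambda\upsilon_C + bm_0 + M$ or so); the real work is on $C_0\setminus C$, where $P^{m_0}\hat V(x) = P^{m_0}V(x) + M\,P^{m_0}(x, C_0\setminus C)$ and we must show this is $\le \hat\lambda(V(x)+M)$. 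Third, under the first alternative, use $P^{m_0}(x,C)\ge\bar\alpha$ to conclude $P^{m_0}(x,C_0\setminus C)\le 1-\bar\alpha$, so the correction term contributes at most $M(1-\bar\alpha)$, a strict contraction of the added mass; combined with $P^{m_0}V(x)\le \lambda V(x)+ b m_0$ and choosing $M$ large relative to $bm_0$ and $\hat\lambda\in(\lambda\vee(1-\bar\alpha),1)$, the inequality closes. Under the second alternative, use $V(x) > b(1-\lambda)^{-1}$ on the collar to write $bm_0 \le$ (a suitable multiple of) $(1-\lambda)V(x)$ and absorb it, so that $P^{m_0}V(x) \le \hat\lambda V(x)$ there with no additive term.

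The main obstacle I anticipate is the bookkeeping of constants so that a single pair $(\hat\lambda,\hat b)\in(0,1)\times(0,\infty)$ works simultaneously on all three regions — in particular making sure that enlarging $M$ (needed to dominate $bm_0$ on the collar in the first case) does not spoil the $C$-estimate, and that $\hat\lambda$ can be chosen strictly below $1$ in both alternatives; this is purely a matter of ordering the inequalities carefully. A secondary subtlety is that in the second alternative the iterated bound produces the additive term $bm_0\,\mathbbm{1}_{C(m_0)}$ on the larger skeleton small set $C(m_0)$, not just on $C_0$, so one must either assume (or arrange, by enlarging $C_0$) that $C(m_0)\subseteq C_0$, or handle the thin region $C(m_0)\setminus C_0$ separately using that $V$ is bounded there; I would flag this and reduce to the case $C(m_0)\subseteq C_0$ without loss of generality. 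Everything else is routine.
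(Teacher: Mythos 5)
Your treatment of the second alternative coincides with the paper's: keep $\hat V=V$ unchanged, set $z_1:=\inf_{x\in C_0\setminus C}V(x)>b(1-\lambda)^{-1}$, absorb the additive term into the multiplicative one on the collar via $\lambda V(x)+b\le(\lambda+b/z_1)V(x)=\hat\lambda V(x)$ with $\hat\lambda<1$, and note that the drift to $C_0$ already gives what is needed on $C$ (with $\hat b=b$) and on $E\setminus C_0$ (with no additive term). Two small remarks: the proposition assumes Drift Condition~\ref{exp_drift_discrete_skeleton} for the $m_0$-skeleton with small set $C_0$ directly, so there is no need to iterate the one-step drift and the $bm_0$ you carry around is spurious; and the alternative $\hat V=V\vee b(1-\lambda)^{-1}$ is not harmless, since a pointwise maximum does not automatically preserve a drift inequality.

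The first alternative contains a genuine gap. With $\hat V=V+M\,\mathbbm{1}_{C_0\setminus C}$ your assertion that ``on $E\setminus C_0$ nothing changes because the correction vanishes'' is incorrect: the correction vanishes from $\hat V(x)$ for $x\notin C_0$ but not from $P^{m_0}\hat V(x)$. Explicitly, for $x\notin C_0$ one has $P^{m_0}\hat V(x)=P^{m_0}V(x)+M\,P^{m_0}(x,C_0\setminus C)\le\lambda V(x)+M\,P^{m_0}(x,C_0\setminus C)$, which must be bounded by $\hat\lambda\hat V(x)=\hat\lambda V(x)$; the hypotheses give no control over $P^{m_0}(x,C_0\setminus C)$ for $x\notin C_0$, so if the chain can re-enter the collar from states where $V$ is small, the inequality fails for every fixed $(M,\hat\lambda)$. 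The paper instead appeals to Meyn and Tweedie's Theorem~6.1, whose construction perturbs on all of $C^c$, namely $\hat V=V+B\,\mathbbm{1}_{C^c}$, and whose geometric drift condition (V4) normalises $V\ge 1$. Under that construction the $E\setminus C_0$ estimate becomes $\lambda V(x)+B\le\hat\lambda(V(x)+B)$, i.e.\ $(\hat\lambda-\lambda)\inf_{x\notin C_0}V(x)\ge(1-\hat\lambda)B$, which together with the collar requirement $B\ge b/(\hat\lambda-(1-\bar\alpha))$ can always be met by taking $\hat\lambda$ close to~$1$; your collar-only indicator, by contrast, gives $\lambda V(x)+M\le\hat\lambda V(x)$ on $E\setminus C_0$, i.e.\ $M\le(\hat\lambda-\lambda)\inf_{x\notin C_0}V(x)$, which even under $V\ge 1$ conflicts with the lower bound on $M$ forced by the collar unless $b$ is already small. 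To repair the proposal, either switch to $\hat V=V+B\,\mathbbm{1}_{C^c}$ with $V$ normalised to be at least $1$ as in Meyn and Tweedie, or add an explicit lower bound on $V$ (or on $1-P^{m_0}(\cdot,C_0\setminus C)$) off $C_0$.
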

\begin{proof}
Let $z_1:= \inf_{x \in  C_0 \backslash C }V(x)> b(1-\lambda)^{-1}$.   Then for $x \in C_0 \backslash C$ we have that
\begin{align*}
 PV(x)&\leq  V(x)-cV(x)^\eta+b\\
 &\leq \hat{\lambda}V(x),
\end{align*}
provided that we choose $\hat{\lambda}=\frac{b+z_1\lambda}{z_1}$. Note that since $\lambda< \hat{\lambda}<1$, it follows immediately that for $x\in C$ we have $PV\leq \lambda V+b \leq \hat{\lambda}V+b $. Hence the desired claim immediately follows. Under the assumption $P^{m_0}(x,C)\geq \bar{\alpha}\mathbbm{1}_{C_0 \backslash C}(x)$, the claim follows completely analogously to the proof of \cite[Theorem 6.1]{meyn1994computable}. 
\end{proof}

\begin{proposition}
\label{poly_condition_superset}Let $(X_t)_{t \in \mathbb{N}}$ be an irreducible aperiodic Markov chain, assume that an $m_0$--step minorisation condition holds and that drift condition \ref{poly_drift_discrete_skeleton} holds for some $C_0 \supsetneq C$    such that   either there exists some $\bar{\alpha}>0$ such that 
\begin{align}
    \label{contains_condition_poly}
P^{m_0}(x,C)\geq \bar{\alpha}\mathbbm{1}_{C_0 \backslash C}(x) \ \ \textrm{or} \ 
 \inf_{x \in  C_0 \backslash C }V(x)> (1+b/c)^{1/\eta}.
 \end{align}
 Then there exists a function $\hat{V}:E\rightarrow\R^+$ and   $\hat{\eta} \in (0,\eta)$ and $\hat{c}>0$   such that
\[
P^{m_0}\hat{V}(x)\leq   \hat{V}(x) -\hat{c}\hat{V}^{\hat{\eta}}(x) +\hat{b}\mathbbm{1}_C(x).
\] 

%in the first and second case of assumptions \ref{contains_condition_geom} respectively.
\end{proposition}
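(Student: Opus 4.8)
The plan is to follow the proof of Proposition~\ref{exp_condition_superset} (the geometric analogue) and replace the geometric bookkeeping by its polynomial counterpart. Under either alternative in \eqref{contains_condition_poly} I would take $\hat{V}:=V$ (using, as is standard in this setting, that $V\geq 1$) and exhibit $\hat\eta\in(0,\eta)$, $\hat c>0$ and $\hat b>0$ for which $P^{m_0}V(x)\leq V(x)-\hat c V(x)^{\hat\eta}+\hat b\mathbbm{1}_C(x)$ for all $x\in E$. Recall that by hypothesis $P^{m_0}V(x)\leq V(x)-cV(x)^{\eta}+b\mathbbm{1}_{C_0}(x)$, so the only two things to arrange are: to absorb the excess drift contributed by the annulus $C_0\setminus C$ into either a genuine polynomial decrease or the $\mathbbm{1}_C$ term, and to lower the exponent from $\eta$ to a strictly smaller $\hat\eta$ uniformly over $E$.

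Consider first the alternative $z_1:=\inf_{x\in C_0\setminus C}V(x)>(1+b/c)^{1/\eta}$, so that $cz_1^{\eta}-b>c>0$. Fix any $\hat\eta\in(0,\eta)$ and note that $\phi(v):=cv^{\eta-\hat\eta}-bv^{-\hat\eta}$ is strictly increasing on $(0,\infty)$, its derivative being a sum of two positive terms, so $\phi(v)\geq\phi(z_1)=z_1^{-\hat\eta}(cz_1^{\eta}-b)>0$ for all $v\geq z_1$. Set $\hat c:=c\wedge\phi(z_1)>0$ and $\hat b:=b$. For $x\in C_0\setminus C$ we have $V(x)\geq z_1$, hence $\hat c V(x)^{\hat\eta}\leq\phi(V(x))V(x)^{\hat\eta}=cV(x)^{\eta}-b$ and therefore $P^{m_0}V(x)\leq V(x)-cV(x)^{\eta}+b\leq V(x)-\hat c V(x)^{\hat\eta}$; for $x\in E\setminus C_0$ we have $P^{m_0}V(x)\leq V(x)-cV(x)^{\eta}\leq V(x)-\hat c V(x)^{\hat\eta}$ since $V(x)\geq 1$, $\hat\eta<\eta$ and $\hat c\leq c$; and for $x\in C$ the same inequality $\hat c V(x)^{\hat\eta}\leq cV(x)^{\eta}$ gives $P^{m_0}V(x)\leq V(x)-\hat c V(x)^{\hat\eta}+\hat b$. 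Combining the three regions yields the claim with $\hat V=V$.

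Under the alternative $P^{m_0}(x,C)\geq\bar\alpha\mathbbm{1}_{C_0\setminus C}(x)$ for some $\bar\alpha>0$, I would argue completely analogously to the proof of \cite[Theorem 6.1]{meyn1994computable} invoked for the geometric case in Proposition~\ref{exp_condition_superset}, adapted to the polynomial rate along the lines of \cite[Lemma 3.5]{jarner2002polynomial} and \cite[Proposition 2.2]{douc2004practical}: the uniform lower bound $\bar\alpha$ on the probability of returning to $C$ within $m_0$ steps from the annulus allows one to absorb the excess drift on $C_0\setminus C$ by a bounded modification of $V$, at the cost of enlarging $\hat b$; the exponent is then downgraded to some $\hat\eta<\eta$ exactly as in the previous paragraph.

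I expect the case $\inf_{x\in C_0\setminus C}V(x)>(1+b/c)^{1/\eta}$ to be essentially routine. The main obstacle is the case $P^{m_0}(x,C)\geq\bar\alpha\mathbbm{1}_{C_0\setminus C}(x)$: the bounded modification of $V$ has to be chosen so that it does not create extra drift at points of $E\setminus C_0$ from which the chain can enter the annulus $C_0\setminus C$ with high probability, and it is precisely in order to control this that the return-probability hypothesis and the polynomial analogue of the Meyn--Tweedie construction are needed.
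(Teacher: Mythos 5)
Your argument for the second alternative is correct and matches the paper's proof in substance, differing only in parameterisation: the paper keeps $\hat c = c$ and shows the drift holds with the explicit threshold exponent $\hat\eta = \ln(z_1^\eta - b/c)/\ln(z_1)$, whereas you fix an arbitrary $\hat\eta \in (0,\eta)$ and shrink $\hat c$ to $c\wedge\phi(z_1)$ with $\phi(v)=cv^{\eta-\hat\eta}-bv^{-\hat\eta}$; both reduce to the same inequality $\hat c V^{\hat\eta}+b\le cV^{\eta}$ on the annulus. You are also right that $V\ge 1$ is needed (implicitly used by the paper) to pass $\hat c V^{\hat\eta}\le cV^{\eta}$ to the regions $C$ and $E\setminus C_0$, and your explicit region-by-region check is slightly more thorough than what appears in print. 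For the first alternative the paper likewise gives only a pointer (to its own Proposition \ref{poly_drift_discrete_subset}, whose construction $\hat V = V + B\mathbbm{1}_{C^c}$ is the polynomial analogue of the Meyn--Tweedie argument you cite); one small caveat is that the concern you raise about ``extra drift at points of $E\setminus C_0$'' is not really a separate obstacle — it is absorbed by the same exponent downgrade $\hat\eta<\eta$ used on the annulus, exactly as in Proposition \ref{poly_drift_discrete_subset}.
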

\begin{proof}
Under the assumption $P^{m_0}(x,C)\geq \bar{\alpha}\mathbbm{1}_{C_0 \backslash C}(x)$, the claim follows completely analogously to the proof of Proposition \ref{poly_drift_discrete_subset}. For the second case, we will take $\hat{c}=c$ and $\hat{V}=V$. Let  $z_1:= \inf_{x \in  C_0 \backslash C }V(x)> (1+b/c)^{1/\eta}$ and $\hat{\eta}=\ln(z_1^\eta-b/c)/ \ln(z_1).$ Then for $x \in C_0 \backslash C$ we have that
\begin{align*}
 PV(x)&\leq V(x)- cV^\eta(x)+b\\
 &\leq V(x) -cV^{\hat{\eta}(x)},
\end{align*}
since we have that $\hat{\eta}<\eta$ and therefore
\[
\frac{c}{V^{\eta-\hat{\eta}}(x)}+\frac{b}{V^\eta(x)} \leq \frac{c}{z_1^{\eta-\hat{\eta}}}+ \frac{b}{z_1^\eta} \leq c,
\]
which is equivalent to $cV^{\hat{\eta}}(x)+b \leq cV^\eta(x)$.
\end{proof}
Hence, Propositions \ref{exp_condition_superset} and \ref{poly_condition_superset} give conditions such that the one-step drift condition implies the desired drift condition for the skeleton chain. 
Furthermore,  we note that in order for a drift condition towards $C$ to hold, it is sufficient to show that it holds on some appropriate subset of $C$. This might be a useful property for $T$-chains, where every closed set is petite and hence a small set. 

\begin{proposition} [\protect{\cite[Theorem 6.1]{meyn1994computable} }]
\label{exp_drift_discrete_subset}Let $(X_t)_{t \in \mathbb{N}}$ be an irreducible aperiodic Markov chain, assume that an $m_0$--step minorisation condition holds and that drift condition \ref{exp_drift_discrete_skeleton} holds for some $C_0 \subsetneq C$ with $\pi(C_0)>0$. Then there exists a function $\hat{V}:E\rightarrow\R^+$ with $V\leq \hat{V}\leq V+b/\alpha \nu(C_0)$ such that
\[
P^{m_0}\hat{V}(x)\leq \hat{\lambda}\hat{V}(x) +\hat{b}\mathbbm{1}_C(x),
\] 
with 
\[
\hat{\lambda}= \frac{\lambda \alpha \nu(C_0)+b}{\alpha \nu(C_0)+ b } \ \textrm{and}\ \ \hat{b}=b+b/ \alpha \nu(C_0).
\]
    
\end{proposition}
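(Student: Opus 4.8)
The plan is to produce $\hat V$ as a single additive perturbation of $V$ and then verify the asserted inequality by a short two-case computation, following the proof of \cite[Theorem 6.1]{meyn1994computable}. First I would record that $\nu(C_0)>0$: the minorisation measure $\nu$ is equivalent to $\pi|_C$, and $C_0\subseteq C$ has $\pi(C_0)>0$, so $\beta:=\alpha\nu(C_0)\in(0,1]$ and $b/\beta$ is finite. Then set
\[
\hat V := V+\frac{b}{\alpha\nu(C_0)},\qquad\text{so that}\quad V\le \hat V\le V+\frac{b}{\alpha\nu(C_0)},
\]
and note $P^{m_0}\hat V=P^{m_0}V+b/\beta$ because $P^{m_0}$ fixes constant functions. (One could instead take the slightly more refined $\hat V$ of \cite{meyn1994computable}, built from the probability that the $m_0$-skeleton started in $C$ reaches $C_0$; it is again squeezed between $V$ and $V+b/(\alpha\nu(C_0))$, and the computation below is unaffected.)

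Next I would split on whether $x\in C$. For $x\in C$, the assumed drift toward $C_0$ gives $P^{m_0}V(x)\le \lambda V(x)+b$, hence
\[
P^{m_0}\hat V(x)\le \lambda V(x)+b+\frac b\beta\le \hat\lambda V(x)+\hat b\le \hat\lambda\hat V(x)+\hat b,
\]
with $\hat b=b+b/\beta$, the middle step using $\lambda\le\hat\lambda$ and the last using $V\le\hat V$. For $x\notin C$ one has $x\notin C_0$ as well, so $P^{m_0}V(x)\le\lambda V(x)$ and therefore $P^{m_0}\hat V(x)\le\lambda V(x)+b/\beta$; with the stated choice $\hat\lambda=(\lambda\beta+b)/(\beta+b)$ one checks directly that $\lambda V(x)+b/\beta\le\hat\lambda(V(x)+b/\beta)$ reduces to $V(x)\ge 1$, which is available since, as in the references, the drift function may be taken to satisfy $V\ge 1$. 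Finally $\lambda<1$ forces $\hat\lambda<1$, so $\hat V$ meets the claimed inequality with the stated constants, and $\hat V\ge V\ge0$.

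I do not expect a serious obstacle: the statement is essentially \cite[Theorem 6.1]{meyn1994computable} transcribed to the $m_0$-skeleton, and the transcription only requires working with $P^{m_0}$ throughout. The one point that needs care is the value of $\hat\lambda$: $(\lambda\beta+b)/(\beta+b)$ is precisely the constant for which the estimate off $C$ collapses to the normalisation $V\ge 1$, and this is where the minorisation genuinely enters, since $\beta=\alpha\nu(C_0)$ lower-bounds the probability that the skeleton moves from $C$ into $C_0$ and hence dictates by how much $V$ must be inflated. The remaining work is bookkeeping of constants; the polynomial analogue, Proposition~\ref{poly_drift_discrete_subset}, follows by the same device applied to the $V^{\eta}$-type Lyapunov functions of \cite{jarner2002polynomial}.
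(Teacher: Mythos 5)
Your constant-shift construction does establish the stated inequality with exactly the claimed $\hat\lambda$ and $\hat b$, and you correctly trace the whole argument down to the single condition $V\ge 1$. The paper itself gives no proof here --- it imports the result wholesale from Meyn and Tweedie --- so there is no ``paper's route'' to compare against directly; but the construction in Meyn--Tweedie is not a constant shift. There $\hat V(x)$ is built from $V(x)$ plus a quantity keyed to the probability, under the minorisation, that the skeleton started from $C$ actually lands in $C_0$, so that $\hat V$ genuinely varies and lies strictly between $V$ and $V+b/(\alpha\nu(C_0))$ at most states. Your $\hat V=V+b/(\alpha\nu(C_0))$ sits at the extreme top of the stated range and, as you note yourself, never uses the minorisation inequality in the verification --- the constant $\beta=\alpha\nu(C_0)$ appears only through the definition of the shift. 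What the minorisation buys in the Meyn--Tweedie construction is exactly the bound $\hat V\le V+b/\beta$, which is an output of the argument rather than an input, and which is what guarantees the modified drift function remains controlled by the original one.

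The one substantive point you should flag more carefully is the normalisation $V\ge 1$. Drift Condition~\ref{exp_drift_discrete_skeleton} as stated only requires $V:E\to\R^+$, and the usual trick of replacing $V$ by $V+1$ does not preserve the drift toward $C_0$ with the same $\lambda$, since $P(V+1)\le\lambda(V+1)+(1-\lambda)+b\mathbbm{1}_{C_0}$ leaves a constant $(1-\lambda)$ that is not supported on $C_0$. In the Meyn--Tweedie framework $V\ge 1$ is a standing hypothesis, so the citation carries it implicitly, and your proof is consistent with that; but it is worth saying explicitly that your two-case verification for $x\notin C$ reduces to $V(x)\ge 1$ and would fail if $V$ were allowed to take values in $(0,1)$ off $C$. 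Provided that is granted, your proof is a correct and in fact more elementary route to the same constants.
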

    
 We can easily prove an analogous lemma for chains satisfying a polynomial drift condition.

\begin{proposition}
\label{poly_drift_discrete_subset} Let $(X_t)_{t \in \mathbb{N}}$ be an irreducible aperiodic Markov chain, assume that an $m_0$--step minorisation condition holds and drift condition \ref{poly_drift_discrete_skeleton} 
is satisfied for some $C_0 \subseteq C$ with $\pi(C_0)>0$,  then there exist a function $\hat{V}:E\rightarrow\R^+$ with $V\leq \hat{V}\leq V+(1 \wedge b)/\alpha \nu(C)$ such that
\[
P^{m_0}\hat{V}(x)\leq \hat{V}(x)-\hat{c}V(x)^{\hat{\eta}}+\hat{b}\mathbbm{1}_C(x),
\] 
with 
\[
\hat{c} \in \left(\frac{c}{(1 \wedge b)/\alpha \nu(C)^\eta+1},\frac{c}{2}\right),
\]
\[
\hat{b}=b+\frac{(1 \wedge b)}{\alpha \nu(C)}(1-\alpha \nu(C)) \  \textrm{and} \   \hat{\eta}=\frac{ \ln(\frac{c-\hat{c}}{\hat{c}})}{ \ln((1 \wedge b)/\alpha \nu(C))}
\]
\end{proposition}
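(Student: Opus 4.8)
\emph{Proof strategy.} The plan is to follow the argument of the geometric analogue Proposition~\ref{exp_drift_discrete_subset} (that is, \cite[Theorem~6.1]{meyn1994computable}) together with that of the superset version Proposition~\ref{poly_condition_superset}, replacing $V$ by a bounded perturbation $\hat V := V + g$. Here $g$ takes values in $[0,\beta]$ and is supported on $C\setminus C_0$, where $\beta := (1\wedge b)/(\alpha\nu(C_0))$ is the constant appearing in the stated bound on $\hat V - V$; it is finite and positive because $\nu$ is equivalent to $\pi|_C$ and $\pi(C_0)>0$, which is exactly why the hypothesis $\pi(C_0)>0$ is imposed. The purpose of $g$ is to turn a drift towards the possibly-too-small set $C_0$ into a drift towards the minorisation set $C$: by the $m_0$-step minorisation $P^{m_0}(x,\cdot)\geq\alpha\mathbbm{1}_C(x)\nu(\cdot)$, from any $x\in C$ the skeleton reaches $C_0$ within one $m_0$-block with probability at least $\alpha\nu(C_0)$, so $g$ has strictly negative $m_0$-step drift on $C\setminus C_0$, which compensates for dropping the $b\mathbbm{1}_{C_0}$ term once the exceptional set is enlarged to $C$.

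The steps, in order, are: (i) use the minorisation to bound the perturbation increment $P^{m_0}g(x)-g(x)$, obtaining $\leq-(1\wedge b)$ for $x\in C\setminus C_0$, $\leq\beta(1-\alpha\nu(C_0))$ for $x\in C_0$, and $\leq\beta$ for $x\in E\setminus C$; (ii) add this to the given polynomial drift $P^{m_0}V\leq V-cV^\eta+b\mathbbm{1}_{C_0}$ and verify
\[
P^{m_0}\hat V(x)\leq\hat V(x)-\hat c\,V(x)^{\hat\eta}+\hat b\,\mathbbm{1}_C(x)
\]
separately on $C_0$, on $C\setminus C_0$ and on $E\setminus C$; (iii) read off the constants. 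On $C_0$ the inequality is tight and fixes $\hat b=b+\beta(1-\alpha\nu(C_0))$; on $C\setminus C_0$ there is room to spare, since the perturbation increment is negative and the $\hat b\,\mathbbm{1}_C$ term is available; on $E\setminus C$ the requirement reduces to $\hat c\,V^{\hat\eta}+\beta\leq cV^\eta$, which is met by lowering the exponent to $\hat\eta<\eta$ and choosing $\hat c$ so that $\hat c(\beta^{\hat\eta}+1)=c$, exactly as the threshold $z_1$ is used in the proof of Proposition~\ref{poly_condition_superset}. One then checks that the admissible interval $\hat c\in\bigl(c/(\beta^\eta+1),\,c/2\bigr)$ is non-empty (which needs $\beta>1$) and that the resulting $\hat\eta=\ln((c-\hat c)/\hat c)/\ln\beta$ lies in $(0,\eta)$; the bound $V\leq\hat V\leq V+\beta$ is immediate from the construction.

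The main obstacle is the region $E\setminus C$. Enlarging the exceptional set there removes the useful $b\,\mathbbm{1}$ term while the perturbation still adds a positive constant $\leq\beta$, so this constant must be absorbed into a necessarily weakened polynomial drift; this is what forces $\hat\eta<\eta$ and couples $\hat c$ and $\hat\eta$ through the identity above, and — just as in Proposition~\ref{poly_condition_superset} — making the estimate uniform over $E\setminus C$ relies on $V$ being bounded below off the small set (which can be arranged through the choice of $C$, or via the usual normalisation $V\geq1$ for polynomial drift). The remaining regions require only routine algebra, of the same flavour as the estimates already carried out in the proofs of Proposition~\ref{poly_condition_superset} and Lemma~\ref{moment_bound_poly}.
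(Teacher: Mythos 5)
Your construction perturbs $V$ on $C\setminus C_0$, whereas the paper's proof uses $\hat V := V + B\,\mathbbm{1}_{C^c}$ with $B = (1\wedge b)/(\alpha\nu(C))$, i.e.\ it perturbs on the \emph{complement of the minorisation set}. This is not a cosmetic difference, and it is precisely what causes the gap you flag on $E\setminus C$. With your $g=\beta\,\mathbbm{1}_{C\setminus C_0}$, for $x\notin C$ you have $\hat V(x)=V(x)$ but
\[
P^{m_0}\hat V(x)-\hat V(x)=P^{m_0}V(x)-V(x)+\beta\,P^{m_0}(x,C\setminus C_0)\le -c\,V(x)^{\eta}+\beta,
\]
so the requirement becomes $\hat c\,V^{\hat\eta}+\beta\le c\,V^{\eta}$, i.e.\ (using $V\ge 1$ and $\hat\eta<\eta$) $\beta\le (c-\hat c)\,V^{\eta}$. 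This is a genuine lower bound on $V$ off $C$ of size $(\beta/(c-\hat c))^{1/\eta}$, which is not among the hypotheses; with $\hat c\in\bigl(c/(\beta^\eta+1),c/2\bigr)$ it forces roughly $\beta\lesssim c$, and $\beta$ can be arbitrarily large when $\nu(C_0)$ is small. Your proposed fix, choosing $\hat c$ so that $\hat c(\beta^{\hat\eta}+1)=c$, does not address this: that identity controls $(V+\beta)^{\hat\eta}$ via subadditivity, which is exactly what one needs in the paper's construction where $\hat V = V+B$ off $C$, but it has no bearing on $\hat c\,V^{\hat\eta}+\beta\le c\,V^\eta$ when $\hat V=V$ there.

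In the paper's construction the problem disappears by design: for $x\notin C$,
\[
P^{m_0}\hat V(x)-\hat V(x)=P^{m_0}V(x)-V(x)+B\bigl(P^{m_0}(x,C^c)-1\bigr)\le P^{m_0}V(x)-V(x)\le -c\,V(x)^{\eta},
\]
so the $B$-contribution is automatically nonpositive and the subadditivity/threshold argument is only needed to pass from $c\,V^{\eta}$ to $\hat c\,(V+B)^{\hat\eta}$, with no additional lower bound on $V$. You should replace $g=\beta\,\mathbbm{1}_{C\setminus C_0}$ by $B\,\mathbbm{1}_{C^c}$ and rerun your three cases (there the minorisation gives $P^{m_0}(x,C)\ge\alpha\nu(C)$ for $x\in C$, producing exactly the stated $\hat b = b + B(1-\alpha\nu(C))$). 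A secondary inconsistency: your $\beta$ uses $\nu(C_0)$, which yields a weaker envelope $\hat V\le V+(1\wedge b)/(\alpha\nu(C_0))$ and a different $\hat b$ than the ones claimed in the proposition, which are written in terms of $\nu(C)$.
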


\begin{proof}
    For some $B$ to be determined at a later point in the proof, let
    \[
    \hat{V}(x)=   \left\{
                \begin{array}{ll}
                  V(x),   & \  x \in C\\
                 
                  V(x)+B,   & \  x \notin C.
                 
                \end{array}
              \right.
    \]
    Now we see that
    \begin{align*}
    P^{m_0}\hat{V}(x)&= \int_E \hat{V}(y) P^{m_0}(x,dy) = \int_{C} V(y)P^{m_0}(x,dy)+ \int_{{C}^c} (V(y)+B)P^{m_0}(x,dy)\\
    &=P^{m_0}V(x)+BP^{m_0}(x,C^c)\end{align*}

For $x\notin C$ we have that
\begin{align*}
P^{m_0}\hat{V}(x) &\leq V(x)-cV(x)^\eta+B 
\end{align*}
Hence if $B>1$, $\hat{c} \in \left(\frac{c}{B^\eta+1},\frac{c}{2}\right),$ and  $\hat{\eta}\in \left(0,  \ln(\frac{c-\hat{c}}{\hat{c}})/ \ln(B)\right],$
we see that $\hat{\eta}<\eta$ and thus
\[
V^{\hat{\eta}-\eta}+\frac{B^{\hat{\eta}}}{V^{\eta}} \leq 1+ B^{\hat{\eta}}\leq \frac{c}{\hat{c}}.
\]
Therefore it also follows that 
\[
 (V+B)^{\hat{\eta}}\leq V^{\hat{\eta}}+B^{\hat{\eta}}\leq \frac{c}{\hat{c}}V^\eta.
\]
Note that  $1+B^{\hat{\eta}}\leq c/\hat{c}$ gives $\hat{\eta}$ and the restriction $0<\hat{\eta}< \eta $ gives the restriction on $\hat{c}$. It follows that
\[
   P^{m_0}\hat{V}(x)  \leq \hat{V}(x)-\hat{c}\hat{V}(x)^{\hat{\eta}},
\]
as desired. For $x\in C^c \cap C'$, we have that $P^{m_0}(x,C)\geq \alpha \nu(C)$ and hence
\begin{align*}
    P^{m_0}\hat{V}(x)&=P^{m_0}V(x)+B(1-P^{m_0}(x,C))\\
    &\leq V(x)-cV(x)^\eta+b+B-\alpha \nu(C) B\\
    & \leq  \hat{V}(x) -cV(x)^\eta\\
    & \leq \hat{V}(x)-\hat{c}\hat{V}(x)^{\hat{\eta}},
\end{align*}
given that we choose $B=(1 \wedge b)/\alpha \nu(C)$. For $x \in C$ we see that
\begin{align*}
    P^{m_0}\hat{V}(x)&\leq V(x)-cV(x)^\eta+b+B(1-\alpha \nu(C))\\
    &\leq \hat{V}(x)-\hat{c}\hat{V}(x)^{\hat{\eta}}+b+B(1-\alpha \nu(C))
\end{align*}
\end{proof}

In \cite{jones_fixed} and \cite{remarks_fixed} it is shown that geometric ergodicity and moment conditions with respect to the stationary measure are sufficient to guarantee moment conditions of functionals over regenerative cycles. These results easily carry over to the continuous-time and multivariate setting as seen in  \citet{pengel2024strong} and \citet{banerjee2022} respectively. We generalise these results using the explicit bounds given in Lemma \ref{moment_bound_exp} and \ref{moment_bound_poly}. These results are of independent interest, see for example \cite{bertail2018new}, who assume that explicit bounds for moments of hitting times and blocks are given. Lemma \ref{moments_regen_exp} and \ref{moments_regen_poly} could for example be used to obtain dimension-dependent Bernstein inequalities, see \cite{bertail2018new}. Note that we have formulated the moment bounds in Lemmas \ref{moments_regen_exp} and \ref{moments_regen_poly}as inequalities up to a universal constant, however, from the proof it is immediate that all these constants can be given explicitly. However, since they play no role in the considered asymptotics in either dimension or simulation time, we for the sake of clarity and brevity shall omit them.

\begin{lemma}
\label{moments_regen_exp}
Let $(X_t)_{t \in \mathbb{N}}$ be an irreducible aperiodic Markov chain, assume that an $m_0$-step minorisation condition holds and that drift condition \ref{exp_drift_discrete_skeleton} is satisfied. 
Then for any $t$ with $\abs{t}\leq \ln(1/\lambda) /m_0$ we have that \begin{equation}
\label{regen_moment_cond_exp}
\E_\nu [e^{t R_1}]  \leq \frac{b}{\alpha \lambda (1-\lambda)}
\end{equation}
 Moreover, under Assumption A\ref{assumption_moment_condition}.1,  we have that 
    \begin{equation}
    \label{moment_coordinate_exp_p}
\sup_{i\in \{1,\dots,d\}}\EE_\nu \left[\left(\sum_{t=0}^{R_1} \abs{f_i(X_t)}\right)^{p} \right]  \   \lsim
 \alpha^{-1} \left( \frac{b}{\alpha \lambda (1-\lambda)}\right)^{\varepsilon/p}  \sup_{i\in \{1,\dots,d\}} \pi(\abs{f_i}^{p+\varepsilon}),          \end{equation}
and
     \begin{equation}
     \label{moment_norm_exp_p}
\EE_\nu \left[\abs{\sum_{t=0}^{R_1}f(X_t)}^{p} \right] \lsim \alpha^{-1}  d^{p/2}\left(\E_\nu e^{tR_1}\right)^{\varepsilon/p}\sup_{i\in \{1,\dots,d\}} \pi(\abs{f_i}^{p+\varepsilon}), 
%\max_{i \in \{1,\dots,d\}} \EE_\nu \left[\left(\sum_{t=0}^{R_1}\abs{f_i(X_t)}\right)^{p} \right] 
    \end{equation}
 If we can additionally assume that Assumption A\ref{assumption_moment_condition}.3 holds, we also have that 
   \begin{equation}
    \label{moment_coordinate_exp_exp}
\sup_{i\in \{1,\dots,d\}}\EE_\nu \left[\exp\left(\sum_{t=0}^{R_1} \abs{f_i(X_t)}\right) \right]  \   \lsim    \E_\nu [e^{tR_1}] .        \end{equation}

\end{lemma}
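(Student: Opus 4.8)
The plan is to reduce all four assertions to two ingredients --- an exponential moment bound for the semi-regeneration epoch $R_1$, and the stationary-measure representation of sums over a regeneration cycle --- and then assemble them. For \eqref{regen_moment_cond_exp}, the first step is to express $R_1$ in terms of the split $m_0$-skeleton of Proposition \ref{semi_reg_discrete}: up to the factor $m_0$ and a bounded additive correction (handled by a first-step decomposition), $R_1$ is the return time of that skeleton to the atom $A=C\times\{1\}$, so it suffices to bound $\E_\nu[r^{\bar\tau_{C\times\{1\}}}]$ for $r=e^{m_0 t}$. The hypothesis $\abs{t}\le\ln(1/\lambda)/m_0$ is precisely $1\le r\le\lambda^{-1}$, which is the range in which Lemma \ref{moment_bound_exp} applies and gives $\E_\nu[r^{\bar\tau_{C\times\{1\}}}]\lesssim\pi(V)/(\alpha\pi(C))$ uniformly over that range. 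Combining this with $\pi(V)\le b\pi(C)/(1-\lambda)$ from \eqref{V_bound_geom} --- the $\pi(C)$ cancels --- and absorbing the factors $r\le\lambda^{-1}$ from the passage between $R_1$ and the skeleton return time yields $\E_\nu[e^{tR_1}]\lesssim b/(\alpha\lambda(1-\lambda))$. A useful by-product is that every polynomial moment of $R_1$ is then controlled by this exponential one, $\E_\nu[R_1^k]\lesssim\E_\nu[e^{tR_1}]$ with a $k$-dependent constant, which is what feeds the other parts.

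For \eqref{moment_coordinate_exp_p} I would run the classical two-layer regenerative argument while tracking the drift constants. Write $g=\abs{f_i}$ and $N=R_1+1$. The power-mean (Jensen) inequality gives $\big(\sum_{t=0}^{R_1}g(X_t)\big)^p\le N^{p-1}\sum_{t=0}^{R_1}g(X_t)^p$. Hölder's inequality with conjugate exponents $(p+\varepsilon)/\varepsilon$ and $(p+\varepsilon)/p$ then peels off the $N$-factor, a second power-mean step raises $g^p$ to $g^{p+\varepsilon}$ at the cost of one more power of $N$, and the cycle-sum identity $\E_\nu\big[\sum_{t=0}^{R_1-1}h(X_t)\big]=\pi(h)/(\alpha\pi(C))$ for $h\ge0$ --- the same representation of $\pi$ that underlies the proof of Lemma \ref{moment_bound_exp} --- converts the remaining $g^{p+\varepsilon}$-sum into $\pi(\abs{f_i}^{p+\varepsilon})/(\alpha\pi(C))$. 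All the $N$-moments generated along the way are bounded by part (1); the exponent bookkeeping shows that they contribute exactly the factor $(\E_\nu[e^{tR_1}])^{\varepsilon/p}$, while the surviving $\alpha^{-1}$ comes from the cycle-sum representation. Taking the supremum over $i$ gives the claim.

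The remaining two bounds follow quickly. For \eqref{moment_norm_exp_p} I would not re-run the cycle analysis: since $p\ge2$, the finite-dimensional power-mean inequality gives $\abs{\sum_{t=0}^{R_1}f(X_t)}^p=\big(\sum_{i=1}^d(\sum_{t=0}^{R_1}f_i(X_t))^2\big)^{p/2}\le d^{p/2-1}\sum_{i=1}^d\abs{\sum_{t=0}^{R_1}f_i(X_t)}^p\le d^{p/2-1}\sum_{i=1}^d\big(\sum_{t=0}^{R_1}\abs{f_i(X_t)}\big)^p$, so taking $\E_\nu$ and applying \eqref{moment_coordinate_exp_p} (one extra factor $d$ from the sum over $i$) produces exactly the factor $d^{p/2}$. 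For \eqref{moment_coordinate_exp_exp}, under Assumption \ref{assumption_moment_condition}.3 one bounds $\sum_{t=0}^{R_1}\abs{f_i(X_t)}\le\norm{f_i}_\infty(R_1+1)$ and applies part (1) with the corresponding $t$ (using that $\sup_i\norm{f_i}_\infty$ is small relative to $\ln(1/\lambda)/m_0$, or rescaling $f$), giving $\E_\nu[\exp(\sum_{t=0}^{R_1}\abs{f_i(X_t)})]\lesssim\E_\nu[e^{tR_1}]$.

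I expect the main obstacle to be the exponent bookkeeping in \eqref{moment_coordinate_exp_p}: the Hölder and power-mean steps must be arranged so that (a) $f_i$ is never raised above the power $p+\varepsilon$ for which a moment is assumed, and (b) the accumulated powers of $N=R_1+1$ collapse to precisely $(\E_\nu[e^{tR_1}])^{\varepsilon/p}$ rather than a larger power; getting the $\alpha$- and $\lambda$-dependence to come out exactly as stated likewise requires Lemma \ref{moment_bound_exp} together with \eqref{V_bound_geom} rather than a cruder hitting-time estimate. The secondary technical point is the translation between $R_1$ and the $m_0$-skeleton return time in part (1), in particular handling the $m_0$-scaling and the possibly degenerate regeneration at time $0$ under $\P_\nu$.
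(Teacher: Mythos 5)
Your treatment of \eqref{regen_moment_cond_exp}, \eqref{moment_norm_exp_p}, and \eqref{moment_coordinate_exp_exp} matches the paper's: the skeleton representation $R_1 = m_0 + m_0\bar\tau_{C\times\{1\}}$ with Lemma \ref{moment_bound_exp} and \eqref{V_bound_geom} for the exponential moment; the $\ell_2$--$\ell_p$ norm comparison for \eqref{moment_norm_exp_p}, which gives $d^{p/2}$ on top of the coordinate-wise bound; and the trivial domination $\sum_t|f_i(X_t)|\le\norm{f_i}_\infty(R_1+1)$ for \eqref{moment_coordinate_exp_exp}.

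The gap is in \eqref{moment_coordinate_exp_p}. Your two-layer power-mean/Hölder route runs aground at the second power-mean step: you arrive at $\E_\nu\bigl[(R_1+1)^{\varepsilon/p}\sum_{t=0}^{R_1}g(X_t)^{p+\varepsilon}\bigr]$, a single expectation in which $(R_1+1)^{\varepsilon/p}$ remains coupled to the cycle sum. The cycle-sum (Kac) identity only evaluates $\E_\nu[\sum_{t<R_1}h(X_t)]$ for fixed $h$; it does not apply to $\E_\nu\bigl[(R_1+1)^{\varepsilon/p}\sum_{t<R_1}h(X_t)\bigr]$. Any further Hölder to decouple them raises the inner sum to a power $>1$, which is exactly what you were trying to avoid, so the recursion does not close. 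You correctly flag "exponent bookkeeping" as the obstacle, but there is no arrangement of Hölder exponents within this scheme that both keeps $f_i$ at power $\le p+\varepsilon$ and yields a linear functional of the cycle. The paper sidesteps this by a different decomposition: first change the initial distribution $\nu\to\pi$ via Lemma \ref{pi_nu_bound}, then write $\sum_{t=0}^{R_1}|f_i(X_t)|=\sum_{t\ge 0}|f_i(X_t)|\mathbbm{1}\{R_1\ge t\}$, take the $L^p(\pi)$ norm, apply Minkowski's inequality term by term, then Hölder with exponents $p+\varepsilon$ and $(p+\varepsilon)/\varepsilon$ on $|f_i(X_t)|$ and $\mathbbm{1}\{R_1\ge t\}$, which produces $\pi(|f_i|^{p+\varepsilon})^{p/(p+\varepsilon)}\bigl(\sum_k\P_\pi(R_1>k)^{\varepsilon/(p(p+\varepsilon))}\bigr)^p$; the tail series is then summed via Markov's inequality and the exponential moment from \eqref{regen_moment_cond_exp} (this is the cited \cite[Lemma 2]{remarks_fixed}). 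That indicator-and-Minkowski decomposition keeps all $R_1$-dependence in the (deterministic) tail probabilities and never couples a power of $R_1$ to the cycle sum inside a single expectation, which is the step your scheme cannot reproduce.
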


\begin{lemma}
\label{moments_regen_poly}
Let $(X_t)_{t \in \mathbb{N}}$ be an irreducible aperiodic Markov chain, assume that an $m_0$--step minorisation condition holds and that drift condition  \ref{poly_drift_discrete_skeleton} is satisfied for some Lyapunov function. 
Then  \begin{equation}
\label{regen_moment_cond_poly}
\EE_\nu[R_1^{q}] \lsim 2^{q-1}m_0^q \left(1+\frac{b}{c \alpha}+\frac{\upsilon_c-c+b}{1-\alpha}\right).
\end{equation}
 
where $
    q(\eta)=\frac{\eta}{1-\eta}
    $. Moreover,  for all $f: E\rightarrow \mathbb{R}^d$ such that Assumption A\ref{assumption_moment_condition}.1 holds,  we have that 
    \begin{equation}
    \label{moment_coordinate_poly}
\sup_{i\in \{1,\dots,d\}}\EE_\nu \left[\left(\sum_{t=0}^{R_1} \abs{f_i(X_t)}\right)^{p_0} \right]  \   \lsim  \alpha^{-1} \left(\E_\nu R_1^q\right)^{\varepsilon/p_0} \sup_{i\in \{1,\dots,d\}} \pi(\abs{f_i}^{p_0+\varepsilon}),      
    \end{equation}
   
where  

        \begin{equation}
        \label{p_0definition}
    p_0=   \left\{
                \begin{array}{ll}
                  \frac{pq(\eta)}{p+q(\eta)+\varepsilon},   & \  \textrm{if } \frac{p}{2p-1} < \eta \leq p(p+\varepsilon)/(p(p+\varepsilon)+\varepsilon) ,\\
                 
                  p,   & \  \textrm{if } \eta > p(p+\varepsilon)/(p(p+\varepsilon)+\varepsilon) ,\\

                  q(\eta)-\bar{\epsilon},   & \  \textrm{if } \eta> 1/2 \  \textrm{  and A\ref{assumption_moment_condition}.2 holds},

                \end{array}
              \right.
    \end{equation} 
    for any fixed $\bar{\epsilon} \in \left(0,\min \{\frac{1}{2},\frac{2\eta-1}{1-\eta}\}\right)$.
    %and 
      %\begin{equation}
    % \label{moment_norm2}
   % \E_{\nu} \left[ \left(\sum_{k=1}^{R_1} \abs{  f(X_{i})} \right)^{p} \right]\ \leq C_3 d^{p/2+\delta p\varepsilon/ p(p+\varepsilon)} .
  % \end{equation}

%Moreover, if $f$ is bounded, we have that
\end{lemma}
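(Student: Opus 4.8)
The plan is to proceed in three stages: first establish the renewal-time moment bound \eqref{regen_moment_cond_poly}, then bootstrap it to the block moments of $|f_i|$ in \eqref{moment_coordinate_poly} via a truncation/H\"older argument, and finally verify the case analysis defining $p_0$ in \eqref{p_0definition}. For the first stage, recall that $R_1 = S_1 + m_0 = m_0 \bar\tau_{C\times\{1\}} + m_0$ where $\bar\tau_{C\times\{1\}}$ is the hitting time of the atom for the $m_0$--skeleton defined in \eqref{hitting_time_skeleton}. By Lemma \ref{moment_bound_poly} (applied to the skeleton chain, whose split chain satisfies a polynomial drift with exponent $\eta$), we have $\E_\nu[\bar\tau_{C\times\{1\}}^{q}] \leq \pi(V)/(\alpha\pi(C)) + (\upsilon_C - c + b)/(1-\alpha)$, and using the bound $\pi(V) \leq \pi(C) b/c$ from the same lemma gives $\E_\nu[\bar\tau_{C\times\{1\}}^q] \leq b/(c\alpha) + (\upsilon_C - c + b)/(1-\alpha)$. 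Then $R_1^q = m_0^q(\bar\tau_{C\times\{1\}}+1)^q \leq 2^{q-1}m_0^q(\bar\tau_{C\times\{1\}}^q + 1)$ by convexity of $x\mapsto x^q$ for $q\geq 1$, which yields \eqref{regen_moment_cond_poly} directly.

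For the block moment bound, I would follow the regenerative-block technique of \cite{jones_fixed} and \cite{remarks_fixed}, adapted to the polynomial case. Write $N_1 := \sum_{t=0}^{R_1}|f_i(X_t)|$. The key identity is that by stationarity and the cycle decomposition (Proposition \ref{semi_reg_discrete}), $\E_\nu[N_1^{p_0}]$ can be controlled by relating it back to $\pi(|f_i|^{p_0+\varepsilon})$ and the moments of $R_1$. Concretely, I would split $N_1^{p_0}$ using H\"older's inequality across the random number of summands: for a fixed exponent $s>1$ and its conjugate, bound $N_1^{p_0} \leq R_1^{p_0/s^*}\sum_{t=0}^{R_1}|f_i(X_t)|^{p_0 s / s^*}$ appropriately, or more cleanly use the approach where one writes $N_1 \leq R_1^{1-1/r}\big(\sum_{t=0}^{R_1}|f_i(X_t)|^r\big)^{1/r}$ for suitable $r$, raises to the $p_0$, and applies H\"older with exponents tuned so that one factor becomes $\E_\nu[R_1^q]$ (degree $q$ available from the first part) and the other becomes a sum of $|f_i|^{p+\varepsilon}$ terms whose $\E_\nu$ over a cycle equals $\alpha^{-1}\pi(C)^{-1}\cdot\alpha\pi(C)\,\pi(|f_i|^{p+\varepsilon}) = \pi(|f_i|^{p+\varepsilon})$ up to the constant, using Lemma \ref{pi_nu_bound}. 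Matching the exponents in this H\"older step — requiring $r = p+\varepsilon$ available from Assumption \ref{assumption_moment_condition}.1 and the leftover power of $R_1$ to be at most $q=\eta/(1-\eta)$ — is exactly what forces the constraint $\frac{p}{2p-1} < \eta$ and produces the formula $p_0 = pq/(p+q+\varepsilon)$ in the first branch of \eqref{p_0definition}; when $\eta$ is large enough that $q$ exceeds the threshold, no loss is incurred and $p_0 = p$ (second branch); and under the exponential moment Assumption \ref{assumption_moment_condition}.2 one can afford $p_0$ arbitrarily close to $q$ (third branch), with the small $\bar\epsilon$ slack absorbing the H\"older loss.

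The main obstacle I anticipate is getting the H\"older bookkeeping in the block-moment step tight enough to recover precisely the stated $p_0$, rather than a suboptimal exponent. The delicate point is that the number of terms $R_1$ in the inner sum is itself random and correlated with the $|f_i(X_t)|$ values, so a naive application of H\"older wastes moments; the trick (as in \cite{remarks_fixed}) is to interleave the stationarity bound $\E_\nu[\sum_{t=0}^{R_1}g(X_t)] \lesssim \alpha^{-1}\pi(g)$ — valid for nonnegative $g$ by the regenerative representation of $\pi$ — with exactly one application of H\"older on the cycle, choosing the split so that the ``extra'' factor of a power of $R_1$ lands at total degree $q$. A secondary technical nuisance is confirming that the split chain of the $m_0$--skeleton genuinely inherits a polynomial drift condition with the same $\eta$ (needed to invoke Lemma \ref{moment_bound_poly} and \cite[Proposition 2.2]{douc2004practical}); this was essentially done inside the proof of Lemma \ref{moment_bound_poly}, so I would just cite it. Everything else — the convexity inequalities, the constant tracking (which the statement allows us to suppress via $\lesssim$) — is routine.
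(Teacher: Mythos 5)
Your plan matches the paper's proof in structure. The first part --- writing $R_1 = m_0(\bar\tau_{C\times\{1\}}+1)$, applying $(a+b)^q\le 2^{q-1}(a^q+b^q)$, then invoking Lemma \ref{moment_bound_poly} together with $\pi(V)\le \pi(C)b/c$ --- is exactly what the paper does for \eqref{regen_moment_cond_poly}, and your observation that the split chain of the skeleton inherits the polynomial drift is already handled inside the proof of Lemma \ref{moment_bound_poly}, so citing it suffices. For the block-moment bound \eqref{moment_coordinate_poly}, your overall route (use Lemma \ref{pi_nu_bound} to pass from $\E_\nu$ to $\E_\pi$, then a H\"older argument producing $\pi(|f_i|^{p_0+\varepsilon})$ times a residual controlled by $\E_\nu[R_1^q]$, with the three branches of $p_0$ coming from a convergence constraint) is the paper's route, and your explanation of where each branch of \eqref{p_0definition} comes from is correct. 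The one place you genuinely diverge is the H\"older arrangement. You propose to pull $R_1^{1-1/r}$ out of the finite sum and then apply H\"older again on the expectation; this leaves a product of correlated random variables and needs two layers of exponent tuning, which is the bookkeeping difficulty you anticipate. The paper, following \cite[Lemma 2]{remarks_fixed}, avoids this by writing $\sum_{t=0}^{R_1}|f_i(X_t)| = \sum_{t\ge 0}|f_i(X_t)|\mathbbm{1}\{R_1\ge t\}$, applying Minkowski's inequality to move the $L^{p_0}(\pi)$-norm inside the infinite sum, and then applying H\"older on each term to decouple $|f_i(X_t)|$ (controlled in $L^{p_0+\varepsilon}(\pi)$ by stationarity) from the indicator (controlled by $\P_\pi(R_1>t)$). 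Markov's inequality together with \eqref{regen_moment_cond_poly} then converts this to the series $\sum_k k^{-(q/p_0)(1-p_0/(p+\varepsilon))}$, whose convergence requirement is precisely what yields the case analysis for $p_0$. Both routes are recognisable block-moment techniques, but the Minkowski-first formulation eliminates the second H\"older and gives the stated exponents directly; you should adopt it if your split proves hard to tune.
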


\subsubsection{Proof of Lemma \ref{moments_regen_exp}}

\begin{proof}
Firstly, we show that $R_1$ admits an exponential moment in some neighborhood of zero. Note that from Proposition \ref{semi_reg_discrete} we see that $R_1=m_0+m_0 \bar{\tau}_{C \times \{1\}}$. Hence for any $t$ with $\abs{t}< \frac{\ln(1/\lambda)}{m_0}$ we have that
\begin{align*}
\E_\nu [e^{t R_1}] &= \E_\nu [e^{t (m_0+m_0 \bar{\tau}_{C \times \{1\}}))}] \\
&\leq \E_\nu [e^{ \frac{\ln(1/\lambda)}{m_0} (m_0+m_0 \bar{\tau}_{C \times \{1\}}))}] \leq \frac{1}{\lambda}\E_\nu \left[ \left(\frac{1}{\lambda}\right)^{\bar{\tau}_{C \times \{1\}}} \right]
  \end{align*}
By an application of Lemma \ref{moment_bound_exp} we obtain that
\[
\E_\nu [e^{t R_1}]  \leq \frac{b}{\alpha \lambda (1-\lambda)}
\]

Note that $$\sum_{t=0}^{R_1}{f(X_t)} = \left(\sum_{t=0}^{R_1}{f_1(X_t)}, \dots, \sum_{t=0}^{R_1}{f_d(X_t)}\right)^T,$$
with $f_i: E\rightarrow \R$ for $i=1,\dots,d.$  By Lemma \ref{pi_nu_bound}  and a coordinate-wise application of \citet[Lemma 2]{remarks_fixed}, we see for all $\i=1,\dots,d$ we have that
\begin{align*}
\EE_\nu\left[\left ({\sum_{t=0}^{R_1} \abs{f_i(X_t)}} \right)^{p} \right] & \leq (\alpha \pi(C))^{-1}\EE_\pi\left[\left ({\sum_{t=0}^{R_1} \abs{f_i(X_t)}} \right)^{p} \right]  \\
& \leq (\alpha \pi(C))^{-1} \pi(\abs{f_i}^{p+\varepsilon})^{\frac{p}{p+\varepsilon}} \left( \sum_{k=1}^\infty \P_\pi \left( R_1>k \right)^{\frac{\varepsilon}{p(p+\varepsilon)}}\right)^{p} \\
& \hspace{-4.3cm}\textrm{By an application of Markov's inequality we have that}\\
& \leq (\alpha \pi(C))^{-1} \pi(\abs{f_i}^{p+\varepsilon})^{\frac{p}{p_0+\varepsilon}} \left( \sum_{k=1}^\infty e^{-\frac{t\varepsilon}{p(p+\varepsilon)}k}\right)^{p} \left(\E_\pi e^{tR_1} \right)^{\varepsilon/({p} +\varepsilon)} \\
%
%&\leq  \pi(\abs{f_i}^{p+\varepsilon})^{\frac{p}{p+\varepsilon}} \left(\frac{q-p(p+\varepsilon)}{q-p(p+\varepsilon)-1}\right)^p \left(\E_\nu R_1^q\right)^{p\varepsilon/p(p+\varepsilon)} \\
& \leq (\alpha \pi(C))^{-1}   \pi(\abs{f_i}^{p+\varepsilon})^{\frac{p}{p+\varepsilon}} \left(e^{t\varepsilon/p(p+\varepsilon)}-1 \right)^{-p} \left( \frac{b}{\alpha \lambda (1-\lambda)}\right)^{\varepsilon/p} .
\end{align*}

The second claim follows directly from the equivalence of norms, 
 \begin{align*}
 \E_\nu\left[ \abs{\sum_{t=0}^{R_1}{f(X_t)}}^p\right]&= \E_\nu\left[\left(\sum_{i=1}^d \left(\sum_{t=0}^{R_1}\abs{f_i(X_t)}\right)^2\right)^{p/2}\right]\\
 &\le  \E_\nu\left[\left(d^{1/2-1/p}\left(\sum_{i=1}^d \abs{\sum_{t=0}^{R_1}\abs{f_i(X_t)}}^p\right)^{1/p}\right)^p\right]\\
 &=d^{p/2-1}\E_\nu\left[\sum_{i=1}^d \abs{\sum_{t=0}^{R_1}\abs{f_i(X_t)}}^p\right] =d^{p/2-1} \left[\sum_{i=1}^d \E_\nu \abs{\sum_{t=0}^{R_1}\abs{f_i(X_t)}}^p\right] \\
  &\le d^{p/2} \sup_{i\in \{1,\dots,d\}} \EE_\nu\abs{\sum_{t=0}^{R_1}\abs{f_i(X_t)}}^{p},
 \end{align*}
which gives \eqref{moment_norm}.

 \end{proof}

\subsubsection{Proof of Lemma \ref{moments_regen_poly}}

\begin{proof}
Firstly, we show that $R_1$ has an exponential moment.
From Proposition \ref{semi_reg_discrete} we see that 
\[
\E_\nu R_1^q = \E_\nu [(m_0+m_0 \bar{\tau}_{C \times \{1\}})^q] \leq 2^{q-1} (m_0^q+m_0^q \E_\nu\bar{\tau}_{C \times \{1\}}^q),
\]

where the first  equality follows since $S_1=m_0\bar{\tau}_{C\times \{1\}}.$
The first claim \eqref{regen_moment_cond_poly} now follows for drift conditions \ref{exp_drift_discrete_skeleton} and \ref{poly_drift_discrete_skeleton}  by an application of Lemma \ref{moment_bound_exp} and Lemma \ref{moment_bound_poly}  respectively. 
Note that $$\sum_{t=0}^{R_1}{f(X_t)} = \left(\sum_{t=0}^{R_1}{f_1(X_t)}, \dots, \sum_{t=0}^{R_1}{f_d(X_t)}\right)^T,$$
with $f_i: E\rightarrow \R$ for $i=1,\dots,d.$ By  Lemma \ref{pi_nu_bound}  we see that for any $p_0 \leq p$ and for all $i=1,\dots,d$ we have that
\begin{align*}
\EE_\nu\left[\left ({\sum_{t=0}^{R_1} \abs{f_i(X_t)}} \right)^{p_0} \right] & \leq (\alpha \pi(C))^{-1}\EE_\pi\left[\left ({\sum_{t=0}^{R_1} \abs{f_i(X_t)}} \right)^{p_0} \right] 
\end{align*}

Following the argument of \citet[Lemma 2]{remarks_fixed} for every coordinate, we see that
\begin{align*}
\EE_\pi\left[\left ({\sum_{t=0}^{R_1} \abs{f_i(X_t)}} \right)^{p_0} \right]& \leq  \pi(\abs{f_i}^{p_0+\varepsilon})^{\frac{p_0}{p+\varepsilon}} \left( \sum_{k=1}^\infty \P_\pi \left( R_1>k \right)^{(1-\frac{p_0}{p+\varepsilon})}\right)^{p_0} \\
& \leq  \pi(\abs{f_i}^{p_0+\varepsilon})^{\frac{p_0}{p_0+\varepsilon}} \left( \sum_{k=1}^\infty k^{-\frac{q(\eta)}{p_0}\left(1-\frac{p_0}{p+\varepsilon}\right)}\right)^{p_0} \left(\E_\nu R_1^{q(\eta)}\right)^{\left(1-\frac{p_0}{p+\varepsilon}\right) }.
%&\leq  \pi(\abs{f_i}^{p+\varepsilon})^{\frac{p}{p+\varepsilon}} \left(\frac{q-p(p+\varepsilon)}{q-p(p+\varepsilon)-1}\right)^p \left(\E_\nu R_1^q\right)^{p\varepsilon/p(p+\varepsilon)} \\
%& \leq     \pi(\abs{f_i}^{p_0+\varepsilon})^{\frac{p_0}{p_0+\varepsilon}} \left(\sum_{k=1}^\infty k^{-q(\eta)\left(1-\frac{p_0}{p+\varepsilon}\right)}\right)^{p_0} \left(\E_\nu R_1^q\right)^{\varepsilon/p_0} .
\end{align*}

Note that in each of the following cases the series $\sum_{k=1}^\infty k^{-\frac{q(\eta)}{p_0}\left(1-\frac{p_0}{p+\varepsilon}\right)}$ converges. Firstly, consider $\frac{p}{2p-1} < \eta \leq p(p+\varepsilon)/(p(p+\varepsilon)+\varepsilon) $ we have that $q(\eta)\leq p(p+\varepsilon)/\varepsilon$ and hence for a given value of $p$, the largest value $p_0$ such that the series converges is  $p_0=\frac{pq(\eta)}{p+q(\eta)+\varepsilon}$.
Note that impose $\eta > \frac{p}{2p-1} $ in order to have $p_0>1$.
We also see that if $\eta > p(p+\varepsilon)/(p(p+\varepsilon)+\varepsilon)$, then $q(\eta)>p(p+\varepsilon)/\varepsilon$ and we can take $p_0=p$. Finally, we see that under Assumption \ref{assumption_moment_condition}.2, we can take $p>q(q-\bar{\epsilon})/\bar{\epsilon}$, since $f$ has moments of all orders with respect to $\pi$, and hence for any $\bar{\epsilon}>0$ such that $q(\eta)-\bar{\epsilon}>1$ we can take $p_0=q(\eta)-\bar{\epsilon}$. The remaining claims follow completely analogously to Lemma \ref{moments_regen_exp}.

 \end{proof} 

\begin{lemma}(Marcinkiewicz–-Zygmund inequality; \protect{\cite[Lemma 1.4.13.]{de2012decoupling}})
\label{MZ_inequality}
 Suppose $z_1,\dots,z_d$ are one-dimensional independent zero-mean random variables with finite $p$-th moment, then there exist constants $c_p$ and $C_p$ such that
 $$
 c_p\E\left[\left(\sum_i z_i^2\right)^{p/2}\right]\le \E\left[\abs{\sum_i z_i}^p\right]\le C_p\E\left[\left(\sum_i z_i^2\right)^{p/2}\right]
 $$
 \end{lemma}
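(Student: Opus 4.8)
The plan is to derive this classical inequality from two standard ingredients: a symmetrization argument and Khintchine's inequality applied conditionally. First I would introduce i.i.d.\ Rademacher variables $(\varepsilon_i)_{i=1}^d$ independent of $(z_i)_{i=1}^d$, together with an independent copy $(z_i')_{i=1}^d$ of $(z_i)_{i=1}^d$. Because each $z_i-z_i'$ is symmetric and the coordinates are independent, the random vector $(\varepsilon_i(z_i-z_i'))_i$ has the same law as $(z_i-z_i')_i$; this exchange-of-signs identity, together with $\E z_i'=0$ and Jensen's inequality, is the engine of the symmetrization step. (One may equivalently view the statement as the special case of the Burkholder--Davis--Gundy inequality applied to the martingale $S_n=\sum_{i\le n}z_i$, whose quadratic variation is exactly $\sum_i z_i^2$, but I will not pursue that route here.)

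Step one is to show that $\E\abs{\sum_i z_i}^p$ and $\E\abs{\sum_i\varepsilon_i z_i}^p$ are comparable up to a factor $2^p$ (for $p\ge 1$; for $0<p<1$ a factor $2$ suffices, using $\abs{a+b}^p\le\abs a^p+\abs b^p$). In one direction, conditioning on $(z_i)$ and using $\E z_i'=0$ with Jensen gives $\E\abs{\sum_i z_i}^p=\E\abs{\E_{z'}\big[\sum_i(z_i-z_i')\big]}^p\le\E\abs{\sum_i(z_i-z_i')}^p=\E\abs{\sum_i\varepsilon_i(z_i-z_i')}^p\le 2^p\,\E\abs{\sum_i\varepsilon_i z_i}^p$, where the last bound is the triangle inequality in $L^p$ together with $(\varepsilon_i z_i')_i\overset{d}{=}(\varepsilon_i z_i)_i$. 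For the reverse direction, applying the same Jensen step to $\sum_i\varepsilon_i z_i=\E_{z'}\big[\sum_i\varepsilon_i(z_i-z_i')\big]$ yields $\E\abs{\sum_i\varepsilon_i z_i}^p\le\E\abs{\sum_i(z_i-z_i')}^p\le 2^p\,\E\abs{\sum_i z_i}^p$.

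Step two is to apply Khintchine's inequality conditionally on $(z_i)$: for fixed coefficients, $\sum_i\varepsilon_i z_i$ is a first-order Rademacher sum, so there are deterministic constants $0<A_p\le B_p<\infty$ with $A_p\big(\sum_i z_i^2\big)^{p/2}\le\E_\varepsilon\abs{\sum_i\varepsilon_i z_i}^p\le B_p\big(\sum_i z_i^2\big)^{p/2}$. Taking the expectation over $(z_i)$, using Tonelli to identify $\E_z\E_\varepsilon=\E$, and combining with the two symmetrization bounds gives the assertion with $c_p=2^{-p}A_p$ and $C_p=2^pB_p$.

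The only genuinely non-elementary ingredient is Khintchine's inequality itself, and that is where the main care is needed. For $p\ge 2$ the upper bound is immediate (e.g.\ from hypercontractivity of Rademacher chaos or a direct moment-generating-function estimate) and the lower bound follows from Jensen via $\E\abs S^p\ge(\E S^2)^{p/2}$; since the application in this paper always has $p>2$, this case is enough and the whole argument stays elementary. For $0<p<2$ the Khintchine lower bound must instead be recovered from the upper bound by Hölder interpolation between $L^p$ and a higher $L^q$ norm, and making that interpolation explicit is the delicate part. For brevity one may, of course, simply invoke \cite[Lemma 1.4.13]{de2012decoupling}, as the statement does.
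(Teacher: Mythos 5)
The paper offers no proof of this lemma; it is cited wholesale from de la Pe\~{n}a and Gin\'{e} \cite[Lemma 1.4.13]{de2012decoupling}. Your symmetrization-plus-conditional-Khintchine derivation is the standard textbook proof (and essentially what that reference does), and it is correct: the sign-exchange identity $(\varepsilon_i(z_i-z_i'))_i \overset{d}{=} (z_i-z_i')_i$ is justified by the symmetry of $z_i-z_i'$ and coordinatewise independence, the two Jensen steps use only $\E z_i'=0$ and the convexity of $|\cdot|^p$ for $p\ge 1$, and the triangle inequality in $L^p$ gives the $2^p$ factors. Conditioning on $(z_i)$ and applying Khintchine then closes the argument with $c_p=2^{-p}A_p$ and $C_p=2^{p}B_p$. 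You are also right that the only nontrivial ingredient is Khintchine itself, that for $p\ge 2$ its lower bound is an immediate consequence of Jensen (so the whole proof is elementary in the regime $p>2$ actually used in this paper), and that the case $0<p<2$ requires the H\"{o}lder-interpolation trick to recover the lower constant. The one caveat worth flagging is that your symmetrization bounds as written require $p\ge 1$ (Jensen needs convexity of $|\cdot|^p$, and Minkowski needs $p\ge 1$); you note this, and since every invocation of the lemma in the paper has $p>2$, this restriction is harmless.
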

By the Marcinkiewicz–-Zygmund inequality, we can describe the growth of the moments of sums in terms of the bounds given in Lemmata \ref{moments_regen_exp} and \ref{moments_regen_poly}.
 \begin{lemma}
 \label{order_norm_power}Let $\{\xi_{k}\}_{k \in \mathbb{N}}$ be $d$-dimensional one-dependent zero-mean random vectors with $$\sup_{k \in \mathbb{N}} \ \sup_{i \in \{1,\cdots,d\}} \E[ \abs{\xi_{k,i}}^p] \leq C_{\xi},$$   then we have that 
 \[
 \E\left[\left|\sum_{k=0}^{\lfloor m/2\rfloor} \xi_{2k}\right|^p\right]\lsim C_{\xi}\   d^{p/2}m^{p/2}.
 \]
 \end{lemma}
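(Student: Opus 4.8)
The plan is to first reduce to an independent sum. Recall that one-dependence means that for every $n$ the families $(\xi_k)_{k\le n}$ and $(\xi_k)_{k\ge n+2}$ are independent; since any two distinct even indices differ by at least $2$, a short induction on the number of summands (upgrading independence of blocks to joint independence) shows that the even-indexed subsequence $\{\xi_{2k}\}_{0\le k\le \lfloor m/2\rfloor}$ is a family of \emph{mutually} independent, zero-mean, $d$-dimensional random vectors, each still satisfying $\sup_i\E[|\xi_{2k,i}|^p]\le C_\xi$. Write $n:=\lfloor m/2\rfloor+1\asymp m$ for the number of these summands, $S:=\sum_{k=0}^{\lfloor m/2\rfloor}\xi_{2k}$, and $S_i=\sum_k\xi_{2k,i}$ for its coordinates.

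Next I would pass from the Euclidean norm to coordinatewise $L^p$ bounds. Since $p\ge 2$, the power-mean (Jensen) inequality gives $(\sum_{i=1}^d a_i^2)^{p/2}\le d^{p/2-1}\sum_{i=1}^d|a_i|^p$, hence
\[
\E\bigl[|S|^p\bigr]=\E\Bigl[\Bigl(\textstyle\sum_{i=1}^d S_i^2\Bigr)^{p/2}\Bigr]\le d^{p/2-1}\sum_{i=1}^d\E\bigl[|S_i|^p\bigr].
\]
For a fixed coordinate $i$, the summands $\xi_{2k,i}$ are independent, mean zero, and one-dimensional, so the Marcinkiewicz--Zygmund inequality (Lemma \ref{MZ_inequality}) yields $\E[|S_i|^p]\le C_p\,\E\bigl[(\sum_k\xi_{2k,i}^2)^{p/2}\bigr]$; applying Jensen once more (again using $p/2\ge 1$), $(\sum_k\xi_{2k,i}^2)^{p/2}\le n^{p/2-1}\sum_k|\xi_{2k,i}|^p$, so after taking expectations $\E[|S_i|^p]\le C_p\,n^{p/2-1}\cdot n\,C_\xi=C_p\,C_\xi\,n^{p/2}$. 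Summing over the $d$ coordinates and using $n\lesssim m$ gives $\E[|S|^p]\le C_p\,C_\xi\,d^{p/2}n^{p/2}\lesssim C_\xi\,d^{p/2}m^{p/2}$, with the constant depending only on the fixed exponent $p$.

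The only step that genuinely invokes a hypothesis — and hence the only place where care is needed — is the passage to the independent subsequence, which is exactly where one-dependence (as opposed to a weaker mixing property) is used; everything afterwards is a mechanical combination of Marcinkiewicz--Zygmund with two convexity inequalities. Note also that no even/odd splitting is required here because the statement concerns only the subsampled sum $\sum_k\xi_{2k}$; for the full partial sum $\sum_{k\le m}\xi_k$ one would additionally split into even and odd blocks and apply the triangle inequality in $L^p$.
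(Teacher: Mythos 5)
Your proof is correct and follows essentially the same route as the paper's: reduce to the mutually independent even-indexed subsequence, pass from the Euclidean norm to coordinatewise $L^p$ bounds via $(\sum_i a_i^2)^{p/2}\le d^{p/2-1}\sum_i|a_i|^p$, apply Marcinkiewicz--Zygmund (Lemma \ref{MZ_inequality}) to each one-dimensional coordinate sum, and then use the same convexity bound once more in the index $k$. The only cosmetic difference is that you spell out why one-dependence upgrades to mutual independence of $\{\xi_{2k}\}$, a step the paper leaves implicit.
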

 \begin{proof}
 Let $Z:=(z_1,\dots,z_d)^T:=\sum_{k=0}^{\lfloor m/2\rfloor} \xi_{2k}$ which is a $d$-dimensional random vector. Note that
 \[
 \abs{Z}:=\sqrt{\sum_{k=1}^d z_i^2}
 \]
By the equivalence of norms we see that
 \begin{align*}
 \E\left[ \abs{Z}^p\right]&=\E\left[\left(\sum_{i=1}^d z_i^2\right)^{p/2}\right]=\E\left[\left(\left(\sum_{i=1}^d z_i^2\right)^{1/2}\right)^{p}\right]\\
 &\le  \E\left[\left(d^{1/2-1/p}\left(\sum_{i=1}^d |z_i|^p\right)^{1/p}\right)^p\right]=d^{p/2-1}\E\left[\sum_{i=1}^d |z_i|^p\right]\le d^{p/2}\sup_{\{ i \in 1,\cdots,d\}}\E[|z_i|^p]
 \end{align*}
 Now note that $z_i=\sum_{k=0}^{\lfloor m/2\rfloor} \xi_{2k,i}$ is a sum of one-dimensional independent random variables. From an application of the Marcinkiewicz–-Zygmund inequality, see Lemma \ref{MZ_inequality}, 
 and the equivalence of norms
 we obtain
 \begin{align*}
  \E[\abs{z_i}^p]&=\E\left[\abs{\sum_{k=0}^{\lfloor m/2\rfloor} \xi_{2k,i}}^p\right]\le C_p \E\left[\left(\sum_{k=0}^{\lfloor m/2\rfloor} \xi_{2k,i}^2\right)^{p/2}\right]\\
 &= C_p \E\left[\left(\left(\sum_{k=0}^{\lfloor m/2\rfloor} \xi_{2k,i}^2\right)^{1/2}\right)^p\right]\le C_p \E\left[\left(m^{1/2-1/p}\left(\sum_{k=0}^{\lfloor m/2\rfloor} \abs{\xi_{2k,i}}^p\right)^{1/p}\right)^p\right]\\
 &=C_p m^{p/2-1}\E\left[\sum_k \abs{\xi_{2k,i}}^p\right]\le C_p m^{p/2}\sup_{k}\E[\abs{\xi_{2k,i}}^p].
 \end{align*}
 Overall, we have that
 \[
 \sup_{i \in \{1,\cdots,d\}}\E[\abs{z_i}^p] \lsim m^{p/2}C_{\xi},
 \]
 and hence the claim follows.
 \end{proof}

\subsection{Auxiliary results}

Note that an immediate consequence of Minorisation condition \ref{minorisation} is that the small measure $\nu$ is absolutely continuous with respect to the stationary measure $\pi$. Furthermore, from the minorisation condition it also follows that we can bound the expectations with respect to the measures as formulated in \cite[Lemma 1]{hobert}.
\begin{lemma}[\protect{\cite[Lemma 1]{hobert}}]
\label{pi_nu_bound} Let $(X_t)_{t \in T}$ be a positive Harris recurrent Markov process with invariant distribution $\pi$. Then for any  $\pi$-integrable function $g:E^{T} \rightarrow \mathbb{R}^d$ we have the following inequality holds
\begin{equation}
\EE_\pi \abs{g} \geq \alpha \pi(C) \ \EE_\nu \abs{g}, 
\end{equation}
 where $\alpha$ and $C$ are defined in $(\ref{minorization_skeleton})$.
\end{lemma}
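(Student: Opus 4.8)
The plan is to first prove the claimed domination at the level of one-dimensional marginals, namely that $\pi\geq\alpha\pi(C)\,\nu$ as measures on $(E,\mathcal E)$, and then to propagate it along the whole trajectory via the Markov property. For the marginal statement I would combine stationarity of $\pi$ for the $m_0$-step kernel, $\pi=\pi P^{m_0}$, with the minorisation condition \eqref{minorization_skeleton}: for every non-negative measurable $h:E\to\R$,
\[
\pi(h)=\int_E\pi(dx)\,P^{m_0}h(x)\;\geq\;\int_C\pi(dx)\,\alpha\,\nu(h)\;=\;\alpha\,\pi(C)\,\nu(h),
\]
which is precisely $\pi(h)\geq\alpha\pi(C)\,\nu(h)$ for all $h\geq 0$, i.e.\ $\pi\geq\alpha\pi(C)\,\nu$ as measures, with $\alpha\pi(C)\in(0,1]$ since $\pi(C)>0$.

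Next I would lift this to path functionals. Put $c:=\alpha\pi(C)$; if $c=1$ then necessarily $\pi=\nu$ and the claim is an equality, so assume $c<1$ and write $\pi=c\,\nu+(1-c)\,\rho$ with $\rho:=(\pi-c\nu)/(1-c)$ a probability measure. Since the law $\PP_\mu$ of a Markov trajectory $(X_t)_{t\in T}$ is affine in its initial law $\mu$, we obtain the convex decomposition $\PP_\pi=c\,\PP_\nu+(1-c)\,\PP_\rho$ on path space, and integrating the non-negative functional $\abs{g}$ against both sides while discarding the last term gives
\[
\EE_\pi\abs{g}=c\,\EE_\nu\abs{g}+(1-c)\,\EE_\rho\abs{g}\;\geq\;c\,\EE_\nu\abs{g}=\alpha\,\pi(C)\,\EE_\nu\abs{g},
\]
which is the desired inequality; in particular $\pi$-integrability of $g$ then forces $\EE_\nu\abs{g}<\infty$.

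This is essentially the argument of \cite[Lemma 1]{hobert}, and the only step requiring care is the meaning of $\EE_\nu$ and $\EE_\pi$ for trajectory functionals that, as in Lemmas \ref{moments_regen_exp} and \ref{moments_regen_poly}, depend on the auxiliary randomisation and on semi-regeneration times such as $R_1$. This is handled by running the same argument on the enriched process $\{(X_t,\delta_n)\}$ of Proposition \ref{semi_reg_discrete}, on which $R_1$, and hence quantities such as $\sum_{t=0}^{R_1}\abs{f_i(X_t)}$, are measurable functionals of the path: the initial law of this enriched process is $\pi\otimes\Bern(s)$ under $\PP_\pi$ and $\nu\otimes\Bern(s)$ under $\PP_\nu$, the conditional law of the splitting variable given $X_0$ being common to both, so $\pi\geq\alpha\pi(C)\,\nu$ again yields the convex decomposition $\PP_\pi=\alpha\pi(C)\,\PP_\nu+(1-\alpha\pi(C))\,\PP_\rho$ on the enriched path space; the conclusion then follows exactly as above from non-negativity of $\abs{g}$.
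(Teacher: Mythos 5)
Your proof is correct and is essentially the argument underlying \cite[Lemma 1]{hobert}; the paper itself does not reproduce a proof but simply cites that reference and remarks that the extension to the multidimensional, $m_0$-step local minorisation case is ``mutatis mutandis''. Your reconstruction supplies exactly the missing detail: the marginal domination $\pi\geq\alpha\pi(C)\nu$ from stationarity under $P^{m_0}$ together with the minorisation, the convex decomposition $\pi=c\nu+(1-c)\rho$ with $c=\alpha\pi(C)$ and $\rho$ a bona fide probability measure, the lift to path space via affineness of $\mathbb{P}_\mu$ in $\mu$, and, importantly, the observation that the same reasoning applies on the enriched state space $E\times\{0,1\}$ so that the bound remains valid for functionals such as $\sum_{t=0}^{R_1}|f_i(X_t)|$ that depend on the splitting variables. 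One small cosmetic point: $c\le1$ is most cleanly seen by applying $\pi\ge c\nu$ to $h\equiv 1$ (both being probability measures), rather than from $\pi(C)>0$ alone; your conclusion is unaffected.
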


Note that Lemma \ref{pi_nu_bound} was stated by \cite[Lemma 1]{hobert} in a one-dimensional setting, and a one-step minorisation condition. However, the claim can mutatis mutandis be stated for the general multidimensional $m_0$--step local minorisation case.

\subsection{ Preliminary results on Gaussian approximation}

Firstly, we will consider the following weak Gaussian approximation for bounded independent random vectors.

\begin{lemma}[\protect{\cite[Theorem 1]{eldan2020clt}}]
 \label{lemma_ind_weakGA} 
Suppose $x_1,\dots,x_n\in {\R}^d$ are independent identically distributed mean-zero bounded random vectors, i.e.,  $|x_{i}|\le \tau,$ for $i=1,\dots,n$ for some $\tau > 0$. Let $\sigma_d$ denote the maximal eigenvalue of $\Cov(\frac{1}{\sqrt{n}}\sum_i x_i)$. Then one can construct independent random vectors $(x_i^c)_i$ and $(y_i)_i$ in a richer probability space such that $x_i^c\stackrel{D}{=}x_i$ and $Y_n\sim \mathcal{N}_d(0,\Cov(\sum x_i))$ such that for all $\theta_0> 0$ we have that,
\[
\lim_{n\rightarrow \infty} \mathbb{P}\left(  \abs{\sum_{i=1}^n x_i - Y_n} \leq  \tau d^{1/2}(32+\log(n))\right)=1,    
\]

\end{lemma}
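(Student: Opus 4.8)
The plan is to follow the martingale-embedding strategy of \cite{eldan2020clt}, of which this statement is essentially Theorem~1; I will only indicate the skeleton of the argument. The idea is to realise the partial sums of a copy of $(x_i)$ as a single $d$-dimensional Brownian motion run along a random, matrix-valued ``clock'', and then to compare that clock with its deterministic mean.

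First I would invoke the martingale embedding: on an enlarged probability space one can construct a copy $(x_i^c)_i \stackrel{D}{=} (x_i)_i$, a standard $d$-dimensional Brownian motion $(B_t)_{t\ge 0}$, and positive semi-definite random matrices $(\Gamma_i)_i$ with $\Gamma_i$ measurable with respect to $\sigma(x_1^c,\dots,x_{i-1}^c)$, such that the partial-sum process $S_k:=\sum_{i=1}^k x_i^c$ coincides with $B$ evaluated at the increasing sequence of matrix-valued stopping times $U_k:=\sum_{i=1}^k \Gamma_i$; in particular $S_n = B_{U_n}$, $\E[U_n]=\Cov(S_n)=n\Sigma$ with $\Sigma:=\Cov(x_1)$, and $\abs{\Gamma_i}_* \lesssim \tau^2$ because $|x_i^c|\le\tau$. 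Setting $Y_n:=B_{n\Sigma}$ (equivalently $Y_n=\sum_{i=1}^n y_i$ with $y_i:=B_{i\Sigma}-B_{(i-1)\Sigma}$ i.i.d.\ $\mathcal N_d(0,\Sigma)$) produces a random vector with law exactly $\mathcal N_d(0,n\Sigma)=\mathcal N_d(0,\Cov(\sum x_i))$, and $S_n-Y_n = B_{U_n}-B_{n\Sigma}$ is a Brownian increment over the (random) ``time interval'' between $U_n$ and $n\Sigma$.

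It then remains to bound this increment. Conditionally on the two matrix endpoints, $B_{U_n}-B_{n\Sigma}$ is a centred Gaussian vector whose covariance is dominated, in trace, by a quantity $\Delta_n$ measuring the gap between $U_n$ and $n\Sigma$ (of the order of $\tr(|U_n-n\Sigma|)$); Burkholder--Davis--Gundy together with a standard Gaussian tail bound therefore gives $|S_n-Y_n| \lesssim \Delta_n^{1/2}(\sqrt d + \sqrt{\log n})$ on an event of probability tending to $1$. The substantive step is the clock-concentration estimate $\Delta_n \lesssim \tau^2 d \log n$ with probability $\to 1$: since the increments $\Gamma_i - \E[\Gamma_i \mid x_1^c,\dots,x_{i-1}^c]$ form a matrix martingale-difference sequence with operator norm $\lesssim \tau^2$, a matrix Freedman/Bernstein-type inequality controls their partial sums, the surviving logarithmic factor being the familiar Koml\'os--Major--Tusn\'ady-type loss. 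Combining these two bounds and keeping track of the numerical constants yields $|S_n-Y_n| \le \tau\sqrt d\,(32+\log n)$ on an event whose probability converges to $1$, which is the assertion. The hard part is exactly this clock-concentration estimate with the correct dependence on $d$ and $n$ under only the $\ell_\infty$ bound $|x_i|\le\tau$ --- this is where the work of \cite{eldan2020clt} is concentrated; the embedding itself, the BDG/Gaussian control of the Brownian increment, and the constant bookkeeping are comparatively routine.
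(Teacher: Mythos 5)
The paper does not re-derive this statement at all: it treats it as a black box, citing Eldan–Mikulincer–Zhai, Theorem~1, which is a Wasserstein-2 bound of the form $W_2\!\left(n^{-1/2}\sum_i x_i,\, \mathcal N(0,\Sigma)\right) \lesssim \tau\sqrt d\,(32+2\log n)/\sqrt n$, and then passing from the $W_2$-optimal coupling to the stated high-probability event by Markov's inequality (the paper signals this with the remark about Wasserstein convergence implying moment convergence). Your proposal instead tries to reconstruct the \emph{interior} of the EMZ proof — martingale embedding, Brownian motion run along a matrix-valued clock, comparison of the clock $U_n$ with $n\Sigma$, conditional Gaussian control of $B_{U_n}-B_{n\Sigma}$, and concentration of the clock. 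That outline is a broadly faithful caricature of what EMZ do, but it is a genuinely different route from the paper's, which never opens the black box. Two things are worth flagging. First, you are conflating the $L^2$ and high-probability regimes: EMZ's argument is an $L^2$ (hence Wasserstein) estimate obtained by integrating the conditional variance of $B_{U_n}-B_{n\Sigma}$, not a pathwise high-probability clock-concentration statement via matrix Freedman, and the conversion to a probability-one-minus-o(1) coupling is a final, separate Markov step — your sketch folds that step into the embedding argument where it does not naturally live. Second, you yourself identify the clock-concentration estimate as ``where the work of \cite{eldan2020clt} is concentrated'' and provide no argument for it, so as a proof the proposal is incomplete precisely at the load-bearing point; the paper avoids that burden entirely by citing. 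If you want the short route the paper actually takes, it is: quote EMZ Theorem~1, take a coupling achieving (within a constant of) the $W_2$ infimum so that $\E\!\left|\sum_i x_i^c - Y_n\right|^2 \lesssim \tau^2 d\,(32+\log n)^2$, and apply Markov. (As an aside, the appearances of $\sigma_d$ and $\theta_0$ in the lemma's hypotheses are vestigial — they play no role in the conclusion and look inherited from the strong-approximation Lemma~\ref{lemma_ind_strongGA}; neither your argument nor the paper's uses them.)
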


Note that this lemma is an immediate consequence of \cite[Theorem 1]{eldan2020clt} and the fact that convergence in the Wasserstein metric implies convergence of moments, see for example 
\cite[Theorem 6.9]{villani2009optimal}. Furthermore, note that we also have the following weak coupling inequality, which is a corollary of either \cite[Theorem 1.1]{zaitsev1987gaussian}  or \cite[Theorem 1.1]{zaitsev1987estimates}.

	\begin{lemma}[\protect{\cite[Fact 2.2]{einmahl1997gaussian}}]
 \label{lemma_ind_GA} 
		Suppose $x_1,\dots,x_n\in {\R}^d$ are independent random vectors that $\E[x_i]=0$ and $x_i=(x_{i,1},\dots,x_{i,d})^T$, in which $|x_i|$ are bounded such that $|x_{i}|\le \tau, i=1,\dots,n$. Let $X_n=\sum_{k=1}^n x_k$, then one can construct $X_n^c$ and $Y_n^c$ in a richer probability space such that $x_n^c\stackrel{D}{=} x_n$ and for all $\epsilon\ge 0$,
		\[
		\mathbb{P}\left(|X_n^c-Y_n^c|>\epsilon\right)\le C_2 d^{2}\exp\left(-\frac{\epsilon}{C_3d^{2}\tau}\right),
		\]
		where $Y_n^c$ is sum of $n$ i.i.d Gaussian vectors with the same mean and covariance matrix as $X_n^c$ and $C_2, C_3$ are positive dimension-independent constants.
	\end{lemma}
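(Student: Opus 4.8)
## Proof Proposal for Lemma (Fact 2.2 of \cite{einmahl1997gaussian})

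\textbf{Overview of the approach.} The statement is a standard consequence of the multidimensional strong approximation theorems of Zaitsev for sums of independent bounded random vectors. The plan is to invoke \cite[Theorem 1.1]{zaitsev1987gaussian} (or equivalently \cite[Theorem 1.1]{zaitsev1987estimates}) essentially verbatim, and then translate the probability bound it produces into the exponential form stated here. I would not attempt to reprove Zaitsev's theorem; instead I would carefully check that its hypotheses are met and that the numerical constants can be absorbed into the claimed dimension-independent constants $C_2, C_3$.

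\textbf{Key steps in order.} First, recall that Zaitsev's theorem applies to a sum $X_n = \sum_{k=1}^n x_k$ of independent mean-zero $\R^d$-valued random vectors each satisfying $|x_k| \le \tau$; it asserts the existence, on a suitably enriched probability space, of a coupling $(X_n^c, Y_n^c)$ with $X_n^c \stackrel{D}{=} X_n$ and $Y_n^c$ a centered Gaussian vector with the same covariance as $X_n$, such that for all $\epsilon \ge 0$,
\[
\P\left(|X_n^c - Y_n^c| > \epsilon\right) \le C_1 d^{2} \exp\left(-\frac{c\,\epsilon}{d^{2}\tau}\right),
\]
for absolute constants $C_1, c > 0$ (the precise power of $d$ in the polynomial prefactor and in the denominator of the exponent is exactly $d^2$ in Zaitsev's formulation; this is where the $d^2$ appearing in the statement comes from). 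Second, I would note that $Y_n^c$ can be realized as a sum of $n$ i.i.d.\ Gaussian vectors each with mean zero and covariance $\tfrac{1}{n}\Cov(X_n^c)$, so that it matches the mean and covariance of $X_n^c$ as required; this is just a matter of choosing the Gaussian representation, since a centered Gaussian with covariance $\Sigma$ equals in law the sum of $n$ i.i.d.\ centered Gaussians with covariance $\Sigma/n$. Third, I would set $C_2 := C_1$ and $C_3 := 1/c$, which are positive and independent of $d$ and $n$, to recover the displayed bound
\[
\P\left(|X_n^c - Y_n^c| > \epsilon\right) \le C_2 d^{2} \exp\left(-\frac{\epsilon}{C_3 d^{2}\tau}\right).
\]

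\textbf{Main obstacle.} The only real subtlety is bookkeeping: verifying that the constants and the exponents of $d$ in Zaitsev's original statement match the form quoted here (Zaitsev's papers are written with somewhat intricate constant-tracking, and there are several closely related versions of the theorem). The probabilistic content is entirely contained in \cite{zaitsev1987gaussian,zaitsev1987estimates}, so no new estimate is needed. A secondary, purely cosmetic point is the enlargement of the probability space to host the coupling, which is standard (one works on the product of the original space with an auxiliary $[0,1]$ carrying enough randomness), and the reduction to i.i.d.\ Gaussian summands, which as noted is immediate.
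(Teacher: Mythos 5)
Your proposal is correct and is essentially the route the paper itself takes: it states the result as a citation of \cite[Fact 2.2]{einmahl1997gaussian}, and the accompanying remark explains that the bound comes from \cite[Theorem 1.1]{zaitsev1987gaussian} or \cite[Theorem 1.1]{zaitsev1987estimates} (yielding $d^{5/2}$ and $d^{2}$ dependence respectively), with Einmahl--Mason observing that both apply to bounded vectors. Your bookkeeping of the constants and the representation of $Y_n^c$ as a sum of $n$ i.i.d.\ centered Gaussians with covariance $\Cov(X_n^c)/n$ are both exactly what is needed and are consistent with the paper.
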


Note that \cite[Theorem 1.1]{zaitsev1987gaussian} and \cite[Theorem 1.1]{zaitsev1987estimates} give us a dimension dependence of $d^{5/2}$ and $d^2$ respectively in Lemma \ref{lemma_ind_GA}. As shown in \cite[Fact 2.2]{einmahl1997gaussian}, both results apply to bounded vectors. 
Furthermore, we will also consider the following strong Gaussian approximation for bounded independent random vectors. 
\begin{lemma}[\protect{\cite[Corollary 3]{zaitsev2013}}]
 \label{lemma_ind_strongGA} 
Suppose $x_1,\dots,x_n\in {\R}^d$ are independent mean-zero bounded random vectors, i.e.,  $|x_{i}|\le \tau,$ for $i=1,\dots,n$ for some $\tau \geq 1$. Let $\sigma_d$ denote the maximal eigenvalue of $\Cov(\frac{1}{\sqrt{n}}\sum_i x_i)$. Then one can construct independent random vectors $(x_i^c)_i$ and $(y_i)_i$ in a richer probability space such that $x_i^c\stackrel{D}{=}x_i$ and $y_i\sim \mathcal{N}_d(0,\Cov(x_i))$ such that for all $\theta_0> 0$ we have that,
\[
\mathbb{P}\left(\limsup_{n\rightarrow \infty}  \frac{1}{\log(n)}\abs{\sum_{i=1}^n x^c_i - \sum_{i=1}^n y_i} \leq  K_{\theta_0} \tau d^{23/4+\theta_0}\sqrt{\sigma_d}\log^*(d)  \right)=1,    
\]
where $K_{\theta_0}$ depends only on $\theta_0$ and $\log^*(d):=\max(1, \log(d)).$
\end{lemma}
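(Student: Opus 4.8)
The lemma is the specialization of Zaitsev's general almost-sure Gaussian approximation theorem for sums of independent random vectors to the case of uniformly bounded summands, so the plan is to invoke that theorem and verify that in the bounded regime the moment/Lyapunov quantities entering it collapse to the powers of $\tau$ and $\sigma_d$ displayed in the statement; for completeness I sketch the structure of the argument underneath it.

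The backbone is the Koml\'os--Major--Tusn\'ady--Strassen dyadic construction. After padding so that $n=2^m$, one organizes $\{1,\dots,2^m\}$ into a binary tree whose level-$j$ nodes are the dyadic blocks of length $2^{m-j}$, and builds the coupling from the root down: first couple $\sum_{i\le 2^m}x_i$ with a Gaussian carrying covariance $\sum_i\Cov(x_i)$, then, given the coupling of a parent block, split its Gaussian into two independent Gaussians with the correct covariances and couple the corresponding split of the $x_i^c$'s to them, recursing to the leaves (which pins down the independent marginals $y_i\sim\mathcal N_d(0,\Cov(x_i))$). The resulting $\limsup_n|\sum_{i\le n}x_i^c-\sum_{i\le n}y_i|$ is bounded by the sum over $j\le m$ of the maximal conditional-coupling error committed at the $2^j$ splits of level $j$. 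The engine at each split is a quantitative multidimensional coupling inequality for independent bounded vectors — Zaitsev's sharpening of Lemma~\ref{lemma_ind_GA}, schematically $\P(|\text{block sum}-\text{its Gaussian}|>C\,d^{a}\tau\sqrt{\sigma_d}\,\log^*(d)\,x)\le C'd^{b}e^{-x}$ — whose exponents ultimately produce the $d^{23/4+\theta_0}$ factor. Taking the threshold at level $j$ of order $\log(2^j m^2)$ makes the union bound over all splits at that level and over $m$ summable, so Borel--Cantelli gives that eventually every level-$j$ error is $O\!\big(d^{23/4+\theta_0}\tau\sqrt{\sigma_d}\log^*(d)(j+\log m)\big)$; combining this across levels with the telescoping two-scale bookkeeping of the KMT scheme (which is what keeps the total from merely adding up to order $m^2$) yields the claimed bound, with rate $\log n$ and dimensional constant $K_{\theta_0}\tau d^{23/4+\theta_0}\sqrt{\sigma_d}\log^*(d)$, and a monotonicity/de-padding step extends it from $n=2^m$ to general $n$.

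The genuinely hard steps are the two I have glossed over. First, extracting the $O(\log n)$ rate rather than $O((\log n)^2)$ requires the delicate KMT accounting in which the error introduced at a split is controlled by an ``excess'' quantity that telescopes across scales instead of by the crude per-block bound. Second, the near-optimal dimension exponent $23/4$ rests on Zaitsev's sharpest multidimensional coupling inequality, whose proof uses smoothing, careful characteristic-function estimates, and interpolation between scales; this is the real analytic content, and it is precisely what one imports wholesale from \cite{zaitsev2013}. What then remains is bookkeeping: the normalization $\tau\ge 1$ and the $\log^*(d)$ factor are only there to keep constants uniform, and in the bounded case the truncation terms appearing in Zaitsev's general statement are dominated by $\tau$ while the Gaussian fluctuation scale contributes the $\sqrt{\sigma_d}$, giving exactly the stated bound.
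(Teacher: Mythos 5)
The paper gives no proof of this lemma: it is imported verbatim from Zaitsev (2013, Corollary 3), exactly as the bracketed citation in the statement indicates. Your proposal correctly identifies this, and your KMT-style sketch of the dyadic coupling, Borel--Cantelli, and two-scale telescoping is a fair account of the machinery behind Zaitsev's result while honestly deferring the hard parts (the sharp multidimensional coupling inequality giving the $d^{23/4+\theta_0}$ exponent, and the accounting that extracts $\log n$ instead of $(\log n)^2$) to the cited reference — so your approach coincides with the paper's.
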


 \begin{lemma} [\protect{\citet[Lemma 2.4]{merlevede2015}}] Let $B$ be a standard Brownian motion and $L$ be a Poisson process with intensity $\lambda$, independent of $B.$ Then there exists a standard Brownian motion $W$ that is also independent of $L$ such that 
\label{mer_lemma}
$$\abs{B(n)-\frac{1}{\sqrt{\lambda}} W(L(n))}=\bigO_{a.s.}(\log(n))$$
\end{lemma}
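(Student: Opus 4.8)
The plan is to transfer the question to the renewal epochs of $L$ and then invoke the Komlós--Major--Tusnády strong approximation. Write $0<\tau_1<\tau_2<\cdots$ for the points of $L$, so the spacings $E_j:=\tau_j-\tau_{j-1}$ (with $\tau_0:=0$) are i.i.d.\ $\mathrm{Exp}(\lambda)$ and $L(t)=\max\{j:\tau_j\le t\}$. Two elementary a.s.\ facts will be used: first, $L(n)=\lambda n+\bigO_{a.s.}(\sqrt{n\log\log n})$ by the law of the iterated logarithm for the Poisson process, so in particular $\log L(n)=\bigO_{a.s.}(\log n)$; second, since each $E_j$ has an exponential tail, $\max_{1\le j\le \ceil{2\lambda n}+1}E_j=\bigO_{a.s.}(\log n)$, whence $\sup_{m\le n}\abs{m-\tau_{L(m)}}\le\max_{1\le j\le \ceil{2\lambda n}+1}E_j=\bigO_{a.s.}(\log n)$ for all large $n$ (using $\abs{m-\tau_{L(m)}}\le\tau_{L(m)+1}-\tau_{L(m)}=E_{L(m)+1}$ and $L(m)\le\ceil{2\lambda n}$ eventually).

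\textbf{Step 1: passing to the last renewal epoch before $n$.} Combining $\abs{n-\tau_{L(n)}}=\bigO_{a.s.}(\log n)$ with the Csörgő--Révész modulus of continuity of Brownian motion, namely $\sup\{\abs{B(s)-B(t)}:0\le s,t\le n,\ \abs{s-t}\le c\log n\}=\bigO_{a.s.}(\log n)$, gives $\abs{B(n)-B(\tau_{L(n)})}=\bigO_{a.s.}(\log n)$. It therefore suffices to produce a standard Brownian motion $W$, independent of $L$, with $\abs{\sqrt\lambda\,B(\tau_k)-W(k)}=\bigO_{a.s.}(\log k)$ for every integer $k\ge1$: evaluating at $k=L(n)$ and using $\log L(n)=\bigO_{a.s.}(\log n)$ yields $\abs{\sqrt\lambda\,B(\tau_{L(n)})-W(L(n))}=\bigO_{a.s.}(\log n)$, so that $\abs{\sqrt\lambda\,B(n)-W(L(n))}\le\sqrt\lambda\,\abs{B(n)-B(\tau_{L(n)})}+\abs{\sqrt\lambda\,B(\tau_{L(n)})-W(L(n))}=\bigO_{a.s.}(\log n)$, and dividing by $\sqrt\lambda$ finishes the proof.

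\textbf{Step 2: the KMT step at the renewal epochs.} Put $\xi_k:=\sqrt\lambda\bigl(B(\tau_k)-B(\tau_{k-1})\bigr)$, so that $\sqrt\lambda\,B(\tau_k)=\sum_{j=1}^k\xi_j$. Conditionally on $L$ the $\xi_k$ are independent with $\xi_k\sim\mathcal N(0,\lambda E_k)$; unconditionally $(\xi_k)_{k\ge1}$ is an i.i.d.\ sequence with $\E\xi_k=0$, $\E\xi_k^2=\lambda\,\E E_1=1$, and $\E e^{t\xi_k}=\E e^{t^2\lambda E_1/2}<\infty$ for $\abs t<\sqrt2$, so $(\xi_k)$ has a finite moment generating function near the origin. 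The KMT strong approximation for i.i.d.\ summands with exponential moments then produces, on an enlarged probability space, a standard Brownian motion $W$ with $\bigl|\sum_{j=1}^k\xi_j-W(k)\bigr|=\bigO_{a.s.}(\log k)$, which is precisely the bound needed in Step~1. The point that needs care is that $W$ must be \emph{independent of $L$}: this is arranged by carrying out the coupling conditionally on $L$ --- the conditionally independent Gaussian increments $\xi_k\mid L$ being coupled, via $B$ and additional independent randomness adjoined to the space, to the partial sums of a process whose conditional law given $L$ is that of a standard Brownian motion --- and since this conditional law does not depend on the realization of $L$, the resulting $W$ is a standard Brownian motion that is independent of $L$.

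The real obstacle is Step~2, for two reasons. First, the $\log k$ rate genuinely requires the KMT construction: a Skorokhod-embedding argument for the i.i.d.\ $(\xi_k)$ would only give an error of order $k^{1/4}(\log k)^{1/2+\varepsilon}$, which is too crude for the stated bound (and is the source of the suboptimal $T^{1/4}$-type rates in the earlier literature). Second, the bookkeeping that simultaneously delivers the optimal KMT rate and keeps $W$ independent of $L$ is delicate, since the natural KMT coupling of the $\xi_k$ is built from $B$ \emph{and} $L$; the conditional-on-$L$ construction sketched above is what makes both properties hold at once.
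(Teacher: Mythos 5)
The paper's own proof of this lemma consists of a one-line citation of Merlev\`ede and Rio's Lemma~2.4 together with a Borel--Cantelli step to upgrade a finite-$n$ tail bound to an almost sure statement. You reprove the lemma from scratch instead, and your Step~1 (passing from $n$ to the last renewal epoch $\tau_{L(n)}$, using the exponential tail of the spacings and the Cs\"org\H{o}--R\'ev\'esz modulus of continuity) is correct and clean; the computation that $\xi_k$ has an exponential moment for $\abs{t}<\sqrt2$ is also correct.

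The gap is in Step~2, exactly at the point you flag as ``the real obstacle.'' Your argument leans on two observations that do not fit together. On one hand, the \emph{unconditional} law of $\xi_k=\sqrt\lambda\bigl(B(\tau_k)-B(\tau_{k-1})\bigr)$ is i.i.d.\ with exponential moments, so the classical i.i.d.\ KMT theorem yields a $W$ with $\bigl\lvert\sum_{j\le k}\xi_j-W(k)\bigr\rvert=\bigO_{a.s.}(\log k)$ --- but this $W$ is a measurable function of $(\xi_k)_k$ and auxiliary randomisation, and since $\xi_k$ is \emph{not} independent of $E_k$ (its conditional variance given $E_k$ is $\lambda E_k$), there is no reason for this $W$ to be independent of $L$. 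On the other hand, you propose to build $W$ conditionally on $L$ so that its conditional law given $L$ is a fixed standard Brownian law, whence $W\perp L$; that deduction is fine, but the $\bigO_{a.s.}(\log k)$ rate no longer follows from i.i.d.\ KMT, because conditionally on $L$ the $\xi_k$ are independent Gaussians with \emph{random and unbounded} variances $\lambda E_k$: they are not identically distributed, and since $E_k$ is exponential their exponential-moment bounds are not uniform, so neither the i.i.d.\ KMT theorem nor a Sakhanenko-type non-i.i.d.\ version applies off the shelf. The naive conditional quantile coupling $\xi_k=\sqrt{\lambda E_k}\,\zeta_k$ with $\zeta_k\sim\mathcal N(0,1)$ i.i.d.\ gives $\sum_{j\le k}(\xi_j-\zeta_j)\sim\mathcal N\bigl(0,\sum_{j\le k}(\sqrt{\lambda E_j}-1)^2\bigr)$ with $\sum_{j\le k}(\sqrt{\lambda E_j}-1)^2\sim(2-\sqrt\pi)k$, i.e.\ a discrepancy of order $\sqrt{k\log\log k}$, not $\log k$. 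Establishing that the conditional-on-$L$ coupling can be constructed with error $\bigO_{a.s.}(\log k)$, uniformly over the realisation of $L$, is precisely the technical content of Merlev\`ede--Rio's Lemma~2.4; it requires a dedicated dyadic argument and cannot be obtained by invoking the i.i.d.\ KMT theorem for the rate and then separately invoking conditioning for independence. As written, the two halves of your Step~2 pull in opposite directions and do not assemble into a complete proof.
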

\begin{proof}
The claim immediately follows from \citet[Lemma 2.4]{merlevede2015} and a Borel--Cantelli argument.
\end{proof}

 \begin{lemma} [\protect{Ando--van Hemmen's inequality; \cite[Corollary 4.2]{van1980inequality}}]
 \label{Ando_inequality}
 Let $\Sigma_1,\Sigma_2$ be two positive definite matrices with the smallest eigenvalue bounded below by $\sigma_0>0$. Then for every $0<r \leq 1$ we have that
   \begin{equation*}  
   \displaystyle
    \abs{\Sigma_1^{r}-\Sigma_2^{r}} \leq \left(\frac{1}{\sigma_0}\right)^{1-r} \abs{\Sigma_1-\Sigma_2}.        \end{equation*}
 \end{lemma}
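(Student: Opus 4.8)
The plan is to give a self-contained argument via an integral representation of the fractional power; alternatively one may simply invoke \cite[Corollary 4.2]{van1980inequality}. For $r=1$ the inequality is trivial (it reads $\abs{\Sigma_1-\Sigma_2}\le\abs{\Sigma_1-\Sigma_2}$), so assume $0<r<1$. The starting point is the scalar identity $x^r=\frac{\sin(\pi r)}{\pi}\int_0^\infty t^{r-1}\frac{x}{x+t}\,dt$, valid for every $x>0$, which follows from the Beta integral $\int_0^\infty\frac{t^{r-1}}{1+t}\,dt=\frac{\pi}{\sin(\pi r)}$ after the substitution $t\mapsto xt$. Diagonalising a positive definite matrix $\Sigma$ and applying this coordinate-wise to its eigenvalues yields the matrix version
\[
\Sigma^r=\frac{\sin(\pi r)}{\pi}\int_0^\infty t^{r-1}\,\Sigma(\Sigma+tI)^{-1}\,dt ,
\]
the integral converging absolutely in Frobenius norm since $\abs{t^{r-1}\Sigma(\Sigma+tI)^{-1}}$ is $\bigO(t^{r-1})$ near $0$ and $\bigO(t^{r-2})$ near $\infty$.

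Next I would subtract the representations for $\Sigma_1$ and $\Sigma_2$. Writing $\Sigma_i(\Sigma_i+tI)^{-1}=I-t(\Sigma_i+tI)^{-1}$ and then using the resolvent identity $(\Sigma_1+tI)^{-1}-(\Sigma_2+tI)^{-1}=(\Sigma_1+tI)^{-1}(\Sigma_2-\Sigma_1)(\Sigma_2+tI)^{-1}$ gives
\[
\Sigma_1^r-\Sigma_2^r=\frac{\sin(\pi r)}{\pi}\int_0^\infty t^{r}\,(\Sigma_1+tI)^{-1}(\Sigma_1-\Sigma_2)(\Sigma_2+tI)^{-1}\,dt .
\]

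Finally, taking Frobenius norms under the integral sign, using $\abs{XYZ}\le\abs{X}_*\,\abs{Y}\,\abs{Z}_*$ together with $\abs{(\Sigma_i+tI)^{-1}}_*\le(\sigma_0+t)^{-1}$ (valid because the smallest eigenvalue of $\Sigma_i$ is at least $\sigma_0$) factors $\abs{\Sigma_1-\Sigma_2}$ out and leaves $\abs{\Sigma_1^r-\Sigma_2^r}\le\abs{\Sigma_1-\Sigma_2}\cdot\frac{\sin(\pi r)}{\pi}\int_0^\infty\frac{t^{r}}{(\sigma_0+t)^2}\,dt$. The remaining scalar integral is evaluated by $t=\sigma_0 s$, giving $\sigma_0^{r-1}\int_0^\infty\frac{s^r}{(1+s)^2}\,ds=\sigma_0^{r-1}B(r+1,1-r)=\sigma_0^{r-1}\,\frac{\pi r}{\sin(\pi r)}$, hence $\abs{\Sigma_1^r-\Sigma_2^r}\le r\,\sigma_0^{r-1}\abs{\Sigma_1-\Sigma_2}$; since $r\le1$ this is at most $\sigma_0^{r-1}\abs{\Sigma_1-\Sigma_2}=(1/\sigma_0)^{1-r}\abs{\Sigma_1-\Sigma_2}$, as claimed. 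There is no genuine obstacle here: the only points deserving a word are the absolute convergence of the matrix-valued integral (so that the norm may be moved inside and the subtraction is legitimate) and the submultiplicativity $\abs{AB}\le\abs{A}_*\abs{B}$ and $\abs{AB}\le\abs{A}\abs{B}_*$ of the Frobenius norm against the operator norm. It is worth noting that the argument actually yields the sharper constant $r\,\sigma_0^{r-1}$, of which the stated bound is a mild simplification.
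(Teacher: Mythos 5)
Your proof is correct. The paper does not prove this lemma---it simply cites \cite[Corollary 4.2]{van1980inequality}---so there is no in-paper argument to compare against. The integral-representation route you take is the classical one (essentially Ando's own method) and each step is sound: the Beta-integral identity $x^r=\frac{\sin(\pi r)}{\pi}\int_0^\infty t^{r-1}x(x+t)^{-1}\,dt$, absolute convergence of the matrix-valued integral near $0$ and $\infty$ (using $0<r<1$), the resolvent identity to expose the factor $\Sigma_1-\Sigma_2$, the mixed submultiplicativity $\abs{XYZ}\le\abs{X}_*\abs{Y}\abs{Z}_*$ of the Frobenius norm against the spectral norm, the spectral bound $\abs{(\Sigma_i+tI)^{-1}}_*\le(\sigma_0+t)^{-1}$, and the evaluation $\int_0^\infty s^r(1+s)^{-2}\,ds=B(r+1,1-r)=\pi r/\sin(\pi r)$. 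The sharper constant $r\,\sigma_0^{r-1}$ you obtain is genuine and is tight, as the commuting scalar case $\Sigma_1=\sigma_0 I$, $\Sigma_2=(\sigma_0+\epsilon)I$ with $\epsilon\to 0$ shows; the lemma as stated simply discards the harmless factor $r\le 1$.
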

\bigskip

We will show that the strong approximation for bounded vectors given in Lemma \ref{lemma_ind_strongGA} can be extended to independent vectors with only $p$ finite moments. Furthermore, we obtain as an immediate consequence also the corresponding weak Gaussian approximation. The weak Gaussian approximation result for independent random vectors is comparable to the result of \cite{mies2023sequential}, which they obtain through martingale embeddings.

\begin{proposition}
\label{GA_iid}
Let $\{\xi_k\}$ be a sequence of independent and identically distributed mean zero random vectors in $\R^d$ such that  $\sup_{k \in \mathbb{N}}\sup_{i\in \{1,\cdots,d\}}\E[|\xi_{k,i}|^p]\leq C_{\xi}$, for some finite $C_{\xi}$, and for some $p>2$  and such that the smallest eigenvalue of $\Cov(\frac{1}{\sqrt{n}}\sum_{k=1}^n \xi_k)$ is larger than some constant $\sigma_0>0$.  Then
$$\left|\sum_{k=1}^n \xi_k- \sum_{k=1}^nY_k\right|=   \left\{
            \begin{array}{lll}                                          
                 \displaystyle \bigO_{P} \left(     d C_{\xi}^{1/p}  n^{1/p}\log(n)\right)  \\
                                   \textcolor{white}{.}   \\ 
              \displaystyle   \bigO_{a.s.} \left(  \left(   d^{25/4+\theta_0}\log^* (d)C_{\xi}^{1/p}\vee   d^{3/2} C_{\xi}^{2/p} \right) \left(\frac{\sigma_d}{\sigma_0}\right)^{1/2}n^{1/p}\log(n)\right)
                 
                \end{array},   
              \right.   
$$

where $\sum_{k=1}^nY_k$ has a Gaussian distribution with the same mean and covariance matrix as $\sum_{k=1}^n \xi_k.$     
\end{proposition}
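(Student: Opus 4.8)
The plan is a truncation-and-blocking argument that reduces the claim to the bounded-vector Gaussian approximations already available --- Lemma~\ref{lemma_ind_weakGA} and Lemma~\ref{lemma_ind_GA} for the weak and non-asymptotic pieces and Lemma~\ref{lemma_ind_strongGA} for the strong piece --- together with a covariance-matching step via Ando--van Hemmen's inequality (Lemma~\ref{Ando_inequality}). First I would truncate at a level of order $C_\xi^{1/p}d^{1/2}k^{1/p}$: a single level $\tau_n\asymp C_\xi^{1/p}d^{1/2}n^{1/p}$ for the weak statement, and a dyadic level $\tau_j\asymp C_\xi^{1/p}d^{1/2}2^{j/p}$ on the block $I_j=\{2^{j-1}<k\le 2^j\}$ for the strong statement. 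The exponent $1/p$ is forced by the target rate, and the factor $d^{1/2}$ comes from the crude moment bound $\E[\abs{\xi_k}^p]\le d^{p/2}\sup_i\E[\abs{\xi_{k,i}}^p]\le d^{p/2}C_\xi$; it is precisely this $d^{1/2}$ that, multiplied against the exponent $23/4+\theta_0$ of Lemma~\ref{lemma_ind_strongGA}, produces the $25/4+\theta_0$ in the statement. Writing $\xi_k=\bar\xi_k+\tilde\xi_k$ with $\bar\xi_k:=\xi_k\mathbbm{1}_{\{\abs{\xi_k}\le\tau_k\}}-\E[\xi_k\mathbbm{1}_{\{\abs{\xi_k}\le\tau_k\}}]$ (bounded by $2\tau_k$) and $\tilde\xi_k$ its centred complement, so that $\xi_k=\bar\xi_k+\tilde\xi_k$ exactly since $\E\xi_k=0$, I would bound the discarded part $\sum_{k\le m}\tilde\xi_k$ by a second-moment estimate $\E\abs{\sum_{k\le m}\tilde\xi_k}^2=\sum_{k\le m}\E\abs{\tilde\xi_k}^2\lesssim d\,C_\xi^{2/p}m^{2/p}$ (using $\E\abs{\tilde\xi_k}^2\le\E[\abs{\xi_k}^p]^{2/p}\P(\abs{\xi_k}>\tau_k)^{1-2/p}$ and $\P(\abs{\xi_k}>\tau_k)\lesssim k^{-1}$), which gives $\bigO_P(d^{1/2}C_\xi^{1/p}m^{1/p})$ in the weak case and, after a maximal-inequality/dyadic refinement plus Borel--Cantelli, $\bigO_{a.s.}(d^{1/2}C_\xi^{1/p}m^{1/p}\sqrt{\log m})$ in the strong case --- in both cases absorbed into the target rate.

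Next I would apply the bounded-vector theorems to $(\bar\xi_k)$ on an enriched probability space. For the weak statement, Lemma~\ref{lemma_ind_weakGA} directly yields independent Gaussians $\bar Y_k\sim\mathcal N_d(0,\Cov(\bar\xi_k))$ with $\abs{\sum_{k\le n}\bar\xi_k-\sum_{k\le n}\bar Y_k}\le 2\tau_n d^{1/2}(32+\log n)=\bigO_P(d\,C_\xi^{1/p}n^{1/p}\log n)$. For the strong statement the coupling must be built blockwise on a common space, applying Lemma~\ref{lemma_ind_strongGA} and, to control partial sums inside a block, a Koml\'os--Major--Tusn\'ady-type dyadic subdivision using the non-asymptotic bound of Lemma~\ref{lemma_ind_GA}; the block scale $\tau_j$ then enters multiplicatively against the $d^{23/4+\theta_0}\sqrt{\sigma_d}\log^*(d)\log(\cdot)$ factor of Lemma~\ref{lemma_ind_strongGA}, and summing the block contributions (which together make up the partial sums) gives the bound $\bigO_{a.s.}(d^{25/4+\theta_0}\log^*(d)C_\xi^{1/p}(\sigma_d/\sigma_0)^{1/2}n^{1/p}\log n)$. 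Along the way one checks that $\Cov(\tfrac1{\sqrt n}\sum_{k\le n}\bar\xi_k)$ has smallest eigenvalue bounded below for large $n$ (it converges to $\Cov(\xi_1)$ as the truncation level grows, and the finitely many early blocks contribute only an a.s.\ finite additive constant) and largest eigenvalue at most $\sigma_d+o(1)$, so that Lemmas~\ref{lemma_ind_strongGA} and \ref{Ando_inequality} apply with the constant $\sigma_0$.

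The last step replaces $\bar Y_k$, which has covariance $\bar\Sigma:=\Cov(\bar\xi_1)$, by $Y_k\sim\mathcal N_d(0,\Sigma)$ with $\Sigma:=\Cov(\xi_1)$, coupling both through a single i.i.d.\ standard Gaussian sequence $(g_k)$ via $\bar Y_k=\bar\Sigma^{1/2}g_k$, $Y_k=\Sigma^{1/2}g_k$, so that $\abs{\sum_{k\le n}(Y_k-\bar Y_k)}\le\abs{\Sigma^{1/2}-\bar\Sigma^{1/2}}\,\abs{\sum_{k\le n}g_k}$. By Lemma~\ref{Ando_inequality} the matrix factor is at most $\sigma_0^{-1/2}\abs{\Sigma-\bar\Sigma}$; bounding each entry of $\Sigma-\bar\Sigma$ through H\"older's inequality by $\lesssim C_\xi^{2/p}\P(\abs{\xi_1}>\tau_n)^{1-2/p}\lesssim C_\xi^{2/p}n^{2/p-1}$ and using $\abs{\sum_{k\le n}g_k}\lesssim\sqrt{nd\log n}$ a.s., this contributes a term of order $d^{3/2}C_\xi^{2/p}n^{2/p-1/2}\sqrt{\log n}$, which (since $2/p-\tfrac12<\tfrac1p$) is of smaller order than $d\,C_\xi^{1/p}n^{1/p}\log n$ in the weak case and, in the strong case, is crudely bounded by the stated $d^{3/2}C_\xi^{2/p}(\sigma_d/\sigma_0)^{1/2}n^{1/p}\log n$; taking the maximum over the three error sources --- bounded-part coupling, discarded tail, covariance matching --- yields the assertion. \textbf{The main obstacle} I expect is the strong-approximation assembly in the second step: organising the blockwise (and within-block) couplings into a single construction on one probability space that controls \emph{all} partial sums $\sum_{k\le m}(\bar\xi_k-\bar Y_k)$ uniformly --- a naive estimate of within-block partial-sum fluctuations would be of order $\sqrt m$, which swamps the target $m^{1/p}$ --- while simultaneously keeping the dimension-dependent constants from Lemma~\ref{lemma_ind_strongGA}, the growing truncation levels, and the covariance matching all correctly tracked; the weak statement, by contrast, is routine once the truncation is in place.
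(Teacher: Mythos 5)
Your proposal follows the same overall strategy as the paper: truncate at scale $d^{1/2}C_\xi^{1/p}n^{1/p}$, invoke a bounded-vector Gaussian coupling, control the discarded tail separately, and reconcile the covariance mismatch through Ando--van Hemmen (Lemma~\ref{Ando_inequality}) applied to a common i.i.d.\ standard Gaussian sequence. The main methodological difference is in the tail control, where you use a second-moment martingale estimate together with Doob's maximal inequality and dyadic Borel--Cantelli; the paper instead bounds the first-moment sum $\sum_j\abs{\xi_j}\mathbbm{1}_{\{\abs{\xi_j}>\cdot\}}$ directly through a subsequence plus Borel--Cantelli SLLN-type argument. Both give the $n^{1/p}$ scaling for the discarded part.

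The genuine difficulty is exactly where you flag it, the strong-case assembly, but your proposed mechanism there does not cohere. You suggest applying Lemma~\ref{lemma_ind_strongGA} block by block, yet that lemma is a $\limsup_{n\to\infty}$ statement about a single sequence with a single uniform bound $\tau$; it supplies no finite-block tail probability that could be summed across dyadic blocks. To carry out a genuinely dyadic construction you would instead need the non-asymptotic coupling of Lemma~\ref{lemma_ind_GA} applied per block, together with a nested within-block subdivision to control partial sums at non-block endpoints, assembled via Borel--Cantelli --- but then the dimension prefactor is driven by the $d^2$ of Lemma~\ref{lemma_ind_GA} (times $d^{1/2}$ from the truncation level), not by $d^{23/4+\theta_0}$, and would not reproduce the stated $d^{25/4+\theta_0}$. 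What the paper actually does is bypass the block structure: it truncates at the single level $\tau_n=n^{1/p}$ depending on the final time horizon, applies Lemma~\ref{lemma_ind_strongGA} once to the resulting i.i.d.\ bounded sequence, and obtains $d^{25/4+\theta_0}=d^{23/4+\theta_0}\cdot d^{1/2}$ directly; the covariance mismatch between $\Cov(\tilde\xi_1)$ and $\Cov(\xi_1)$ is then handled by Ando--van Hemmen and a law-of-iterated-logarithm bound on $\abs{\sum_k z_k}$, exactly as in your final step. One further point: for the weak case the paper controls $\E\abs{\Sigma_n^{1/2}Z-\tilde\Sigma_n^{1/2}Z}$ through the trace inequality of \cite[Lemma~1]{eldan2020clt} rather than through Ando--van Hemmen plus a norm bound on the Gaussian partial sum; this yields a cleaner $d$-dependence than the $d^{3/2}$ your route produces, though in $n$ both are of the target order.
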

\begin{proof}
We will first prove the almost sure bound. Let $\tau_n:=n^{1/p}, \delta=1/2$ and define the truncated sequence $\tilde{\xi}_k$ as follows
        \[
        \tilde{\xi}_k:= \xi_k \mathbbm{1}_{\{|\xi_k|\le d^{\delta}C_{\xi}^{1/p}\tau_n\}}-\E[\xi_k \mathbbm{1}_{\{|\xi_k|\le d^{\delta}C_{\xi}^{1/p}\tau_n\}}].
        \]
        Clearly, we have $|\tilde{\xi}_k|\le 2d^{1/2}\tau_n$ and $\E[\tilde{\xi}_k]=0$. 
 Then by triangle inequality, we have
        \begin{align*}
            \left|\sum_{j=1}^{n} \xi_j-\sum_{j=1}^{n} \tilde{\xi}_j\right|
        &\le  \left|\sum_{j=1}^{n} \xi_j\mathbbm{1}_{\{|\xi_j|>d^{\delta}C_{\xi}^{1/p}\tau_n)\}}\right| +\left|\sum_{j=1}^{n}\E[\xi_j\mathbbm{1}_{\{|\xi_j|>d^{\delta}C_{\xi}^{1/p}\tau_n\}}]\right|.
        \end{align*}

 Since $\sup_k\sup_i\E[|\xi_{k,i}|^p]]\leq C_{\xi}$ it follows from the equivalence of vector norms  that
            \begin{equation}
     \label{moment_norm}
\sup_k \EE [ \abs{\xi_k}^p]  \lsim d^{p/2} \sup_k\sup_i\E[|\xi_{k,i}|^p]  \lsim d^{p/2}C_{\xi}.   
    \end{equation}

 For $r\in [1,p)$ we have that
 \begin{align}
 \label{exact_xi_tail_bound}
\E \abs{\xi_j}^r\mathbbm{1}_{\abs{\xi_j}>d^{\delta}C_{\xi}^{1/p}\tau_n}&=r\int_{d^{\delta}C_{\xi}^{1/p}\tau_n}^\infty \hspace{-0.2 cm}s^{r-1}\P(\abs{\xi_j}>s)ds+d^{\delta r}C_{\xi}^{r/p}\tau^r_n \P(\abs{\xi_j}>d^{\delta}C_{\xi}^{1/p}\tau_n) \nonumber\\
&\le r \int_{d^{\delta}C_{\xi}^{1/p}\tau_n}^\infty \frac{\E\abs{\xi_j}^p}{s^{p-r+1}}ds+ d^{\delta r}C_{\xi}^{r/p}\tau^r_n  \frac{\E\abs{\xi_j}^p}{d^{p\delta }\tau_n^pC_{\xi}}\nonumber\\ 
&= \frac{r}{(p-r)} \frac{\E \abs{\xi_j}^p}{(d^{\delta}C_{\xi}^{1/p}\tau_n)^{(p-r)}} + d^{\delta r}C_{\xi}^{r/p}\tau^r_n  \frac{\E\abs{\xi_j}^p}{d^{p\delta }\tau_n^pC_{\xi}} \nonumber\\
&\lsim  d^{\delta r}\tau_n^{r-p} C_{\xi}^{r/p}.
 \end{align} 
Taking $r=1$ in (\ref{exact_xi_tail_bound})  it follows that $\left(\abs{\xi_j}\mathbbm{1}_{\{\abs{\xi_j}>d^{\delta}C_{\xi}^{1/p}\tau_n\}}\right)_{j \in \mathbbm{N}}$ is an i.i.d.~sequence with mean $\mu_{n,d}> 0$ such that $\mu_{n,d} \lsim d^{\delta} \tau_n^{1-p}C_{\xi}^{r/p}.$ We claim that 
\begin{align}
\label{slln_claim1}
\sum_{j=1}^{n} \abs{\xi_j}\mathbbm{1}_{\{|\xi_j|>d^{\delta}C_{\xi}^{1/p}\tau_n)\}}&=\bigO_{a.s.}(d^{\delta}C_{\xi}^{1/p} n\tau^{1-p}_n)   
\end{align}
Note that since $d^{-\delta} C_{\xi}^{-1/p}{\mu_{n,d}}$ tends to zero for any $\tau_n$ that tends to infinity, the claim does not follow by a standard law of large numbers for triangular arrays. We will prove \eqref{slln_claim1} by following a truncation and sub-sequence argument. Firstly, note that for any $\theta>1/p$ we have that 
\begin{equation}
\label{claim_decomp}
\sum_{j=1}^{n} \abs{\xi_j} \mathbbm{1}_{\{|\xi_j|>d^{\delta}C_{\xi}^{1/p}\tau_n)\}}=\sum_{j=1}^{n} \abs{\xi_j}\mathbbm{1}_{\{ d^{\delta}C_{\xi}^{1/p}\tau_n \leq|\xi_j|\leq d^{\delta}C_{\xi}^{1/p}n^\theta)\}}+\sum_{j=1}^{n} \abs{\xi_j}\mathbbm{1}_{\{|\xi_j|>d^{\delta}C_{\xi}^{1/p}n ^\theta)\}}.
\end{equation}

For the second term in \eqref{claim_decomp}, we see that if we choose
\begin{equation}
\label{theta_condition}
\theta> \frac{2p-1}{p(p-1)},    
\end{equation}
then for every $\varepsilon>0$ we have that
\begin{align*}
\sum_{n=1}^\infty \P \left( \abs{\sum_{j=1}^{n} \abs{\xi_j}\mathbbm{1}_{\{|\xi_j|>d^{\delta}C_{\xi}^{1/p}n^\theta\} } } > \varepsilon d^{\delta}C_{\xi}^{1/p} n^{1/p} \right) &\leq \sum_{n=1}^\infty  \frac{ n\E [\abs{\xi_j}\mathbbm{1}_{\{|\xi_j|>d^{\delta}C_{\xi}^{1/p}n^\theta \}}]} {\varepsilon d^{\delta}C_{\xi}^{1/p} n^{1/p}}\\
 &\lsim \frac{1}{\varepsilon} \sum_{n=1}^\infty {n^{\theta(1-p)+1-1/p}} < \infty,
\end{align*}

since we have mutatis mutandis to \eqref{exact_xi_tail_bound} that $ \E [\abs{\xi_j}\mathbbm{1}_{\{|\xi_j|>d^{\delta}C_{\xi}^{1/p}n^\theta\}}]\lsim d^{\delta}C_{\xi}^{1/p}n^{\theta(1-p)}$ for $p>2$. Hence by the Borel--Cantelli Lemma we have that 
\begin{equation}
\label{decomp_term2}
    \sum_{j=1}^{n} \abs{\xi_j}\mathbbm{1}_{\{|\xi_j|>d^{\delta}C_{\xi}^{1/p}n^\theta)\}}=\bigO_{a.s.}(d^{\delta}C_{\xi}^{1/p}n^{1/p}).
\end{equation}

Note that the first term in the RHS of \eqref{claim_decomp} is bounded above by 
\begin{align}
\label{claim_decomp2}
  &\abs{ \sum_{j=1}^{n} \left( \abs{\xi_j}\mathbbm{1}_{\{ d^{\delta}C_{\xi}^{1/p}\tau_n \leq|\xi_j|\leq d^{\delta}C_{\xi}^{1/p}n^\theta)\}}-\E \left[\abs{\xi_j}\mathbbm{1}_{\{ d^{\delta}C_{\xi}^{1/p}\tau_n \leq|\xi_j|\leq d^{\delta}C_{\xi}^{1/p}n^\theta)\}} \right] \right) } \nonumber \\
  &+  \sum_{j=1}^{n} \E \left[\abs{\xi_j}\mathbbm{1}_{\{ d^{\delta}C_{\xi}^{1/p}\tau_n \leq|\xi_j|\leq d^{\delta}C_{\xi}^{1/p}n^\theta)\}} \right]
\end{align}

From \eqref{exact_xi_tail_bound} it also follows that 
\begin{align}
\label{bounded_twosided_truncated}
    \sum_{j=1}^{n} \E \left[\abs{\xi_j}\mathbbm{1}_{\{ d^{\delta}C_{\xi}^{1/p}\tau_n \leq|\xi_j|\leq d^{\delta}C_{\xi}^{1/p}n^\theta)\}} \right]&=\bigO(d^{\delta}C_{\xi}^{1/p}n\tau_n^{1-p})
   % &=\bigO(d^{\delta}n^{1/p})
\end{align}

Now we will show that 
\begin{equation}
\label{truncated_slln}
    \abs{ \sum_{j=1}^{n} \left( \abs{\xi_j}\mathbbm{1}_{\{ d^{\delta}C_{\xi}^{1/p}\tau_n \leq|\xi_j|\leq d^{\delta}C_{\xi}^{1/p}n^\theta)\}}-\E \left[\abs{\xi_j}\mathbbm{1}_{\{ d^{\delta}C_{\xi}^{1/p}\tau_n \leq|\xi_j|\leq d^{\delta}C_{\xi}^{1/p}n^\theta)\}} \right] \right) }=o_{a.s.}(d^{\delta}C_{\xi}^{1/p}n^{1/p}).
\end{equation}
Which is equivalent to showing
\begin{equation}
\label{truncated_slln2}
    \abs{ \frac{1}{n^{1/p}C_{\xi}^{1/p}  d^{\delta}}\sum_{j=1}^{n} \left( \abs{\xi_j}\mathbbm{1}_{\{ d^{\delta}C_{\xi}^{1/p}\tau_n \leq|\xi_j|\leq d^{\delta}C_{\xi}^{1/p}n^\theta)\}}-\E \left[\abs{\xi_j}\mathbbm{1}_{\{ d^{\delta}C_{\xi}^{1/p}\tau_n \leq|\xi_j|\leq d^{\delta}C_{\xi}^{1/p}n^\theta)\}} \right] \right) }\xrightarrow[]{a.s.}0 \ as \  n\rightarrow \infty.
\end{equation}

We will first show that \eqref{truncated_slln2} holds for a subsequence. Fix any $\lambda>1$ and define $n(\ell)=\floor{\lambda^\ell}$. Then, we claim that as $\ell \rightarrow \infty$, we have
\begin{equation}
\label{truncated_slln_subseq}
    \abs{ \frac{1}{{n(\ell)^{1/p}}C_{\xi}^{1/p}  d^{\delta}}\sum_{j=1}^{{n(\ell)}} \left( \abs{\xi_j}\mathbbm{1}_{\{ d^{\delta}C_{\xi}^{1/p}\tau_{n(\ell)} \leq|\xi_j|\leq d^{\delta}C_{\xi}^{1/p}{n(\ell)}^\theta)\}}-\E \left[\abs{\xi_j}\mathbbm{1}_{\{ d^{\delta}C_{\xi}^{1/p}\tau_{n(\ell)} \leq|\xi_j|\leq d^{\delta}C_{\xi}^{1/p}{n(\ell)}^\theta)\}} \right] \right) }\xrightarrow[]{a.s.}0.
\end{equation}

For any $\varepsilon>0$ we have that
\begin{align*}
    &\sum_{\ell=1}^\infty \P \left( \abs{ \frac{1}{{n(\ell)^{1/p}} C_{\xi}^{1/p} d^{\delta}}\sum_{j=1}^{{n(\ell)}}  \left( \abs{\xi_j}\mathbbm{1}_{\{ d^{\delta}C_{\xi}^{1/p}\tau_{n(\ell)} \leq|\xi_j|\leq d^{\delta}C_{\xi}^{1/p}{n(\ell)}^\theta)\}}-\E \left[\abs{\xi_j}\mathbbm{1}_{\{ d^{\delta}C_{\xi}^{1/p}\tau_{n(\ell)} \leq|\xi_j|\leq d^{\delta}C_{\xi}^{1/p}{n(\ell)}^\theta)\}} \right] \right) } > \varepsilon \right)  \\
     &\leq\frac{1}{\varepsilon^2}\sum_{\ell=1}^\infty      \frac{1}{{n(\ell)^{2/p}}C_{\xi}^{2/p}  d^{2\delta } }\sum_{j=1}^{{n(\ell)}} \Var \left(  \abs{\xi_j}\mathbbm{1}_{\{ d^{\delta}C_{\xi}^{1/p}\tau_{n(\ell)} \leq|\xi_j|\leq d^{\delta}C_{\xi}^{1/p}{n(\ell)}^\theta)\}}  \right)     \\
     %&\leq\frac{1}{\varepsilon^2}\sum_{k=1}^\infty      \frac{\tau_{n(\ell)}^{2(p-1)}}{{n^2(k)}  d }\sum_{j=1}^{{n(\ell)}} \E \left(  \abs{\xi_j}^2\mathbbm{1}_{\{ d^{\delta}C_{\xi}^{1/p}\tau_{n(\ell)} \leq|\xi_j|\leq d^{\delta}C_{\xi}^{1/p}{n(\ell)}^\theta)\}}  \right)  \\
      &\leq\frac{1}{\varepsilon^2}\sum_{\ell=1}^\infty      \frac{n(\ell)^{1-2/p}}{ d^{2 \delta} C_{\xi}^{2/p} }  \E \left[  \abs{\xi_1}^2\mathbbm{1}_{\{ d^{\delta}C_{\xi}^{1/p}\tau_{n(\ell)} \leq|\xi_1|\leq d^{\delta}C_{\xi}^{1/p}{n(\ell)}^\theta)\}}  \right]. 
\end{align*}

By Tonelli's theorem, we have that
\begin{align*}
    \sum_{\ell=1}^\infty     \frac{n(\ell)^{1-2/p}}{  d^{2 \delta} C_{\xi}^{2/p}}  \E \left[  \abs{\xi_1}^2\mathbbm{1}_{\{ d^{\delta}C_{\xi}^{1/p}\tau_{n(\ell)} \leq|\xi_1|\leq d^{\delta}C_{\xi}^{1/p}{n(\ell)}^\theta)\}}  \right] = \E \left[  \abs{\xi_1}^2  \sum_{\ell=1}^\infty     \frac{n(\ell)^{1-2/p}}{   d^{2 \delta } C_{\xi}^{2/p}}   \mathbbm{1}_{\{ d^{\delta}C_{\xi}^{1/p}\tau_{n(\ell)} \leq|\xi_1|\leq d^{\delta}C_{\xi}^{1/p}{n(\ell)}^\theta)\}}   \right] 
\end{align*}

Since we have that
\begin{align*}    
    \sum_{k=1}^\infty     {n(\ell)^{1-2/p}}  \mathbbm{1}_{\{ d^{\delta} C_{\xi}^{1/p}{n(\ell)}^{1/p}\leq|\xi_1|\leq d^{\delta}C_{\xi}^{1/p}{n(\ell)}^\theta)\}}
   &= 
    \sum_{k=1}^\infty     {n(\ell)^{1-2/p}}  \mathbbm{1}_{\{ d^{ - {\delta}{ /\theta}} C_{\xi}^{-1/p\theta} \abs{\xi_1}^{1/\theta}   \leq n(\ell) \leq\ d^{- \delta p} C_{\xi}^{-1}\abs{\xi_1}^{p} \} }\\ 
       &\leq 
    \sum_{k=1}^\infty     {\lambda^{(1-2/p)\ell}}  \mathbbm{1}_{\{  d^{ -\delta/\theta} C_{\xi}^{-1/p\theta}\abs{\xi_1}^{\frac{1}{\theta}}   \leq \lambda^\ell \leq\ d^{- \delta p} C_{\xi}^{-1} \abs{\xi_1}^{p} +1 \} }.
    \end{align*}
    This gives
    \begin{align*}
        &\sum_{\ell=1}^\infty     {\lambda^{(1-2/p)\ell}}  \mathbbm{1}_{\{  d^{ - {\delta}/ {\theta} } C_{\xi}^{{-1}/{p\theta}}\abs{\xi_1}^{\frac{1}{\theta}}   \leq \lambda^\ell \leq\ d^{- \delta p} C_{\xi}^{-1}\abs{\xi_1}^{p} +1 \} }\\   &\leq 
    \sum_{\ell=1}^\infty     {\lambda^{(1-2/p)\ell}}  \mathbbm{1}_{\{ \floor{ \ln (  d^{ - \frac{\delta}{\theta} }C_{\xi}^{{-1}/{p\theta}}\abs{\xi_1}^{\frac{1}{\theta}}) / \ln(\lambda) }  \leq \ell \leq\  \ceil{\ln ( d^{- \frac{p}{2}} C_{\xi}^{-1} \abs{\xi_1}^{p} +1 ) / \ln(\lambda) }\}}\\ 
      &= 
    \sum_{\ell= \floor{\ln (  d^{ -\delta/\theta} C_{\xi}^{{-1}/{p\theta}}\abs{\xi_1}^{1/\theta}) / \ln(\lambda) }}^{\ceil{\ln ( d^{- \delta p} C_{\xi}^{-1}\abs{\xi_1}^{p} +1 ) / \ln(\lambda)  }}     {e^{(1-2/p)\ell)\ln(\lambda)}}\\
       &\leq   
    \int_{  \ln (  d^{ -\delta/\theta}C_{\xi}^{{-1}/{p\theta}} \abs{\xi_1}^{1/\theta}) / \ln(\lambda) }^{{\ln ( d^{- \delta p} C_{\xi}^{-1} \abs{\xi_1}^{p} +1 ) / \ln(\lambda) +2  }}     {e^{(1-2/p)x)\ln(\lambda)}dx}=  \frac{1}{\ln(\lambda)}  
    \int_{  \ln (  d^{ -\delta/\theta} C_{\xi}^{{-1}/{p\theta}} \abs{\xi_1}^{1/\theta})  }^{{\ln ( d^{- \delta p} C_{\xi}^{{-1} } \abs{\xi_1}^{p}    +1)+2\ln(\lambda)  }}     {e^{(1-2/p)x}dx}\\
    &\leq  C_1 
    (d^{- \delta p} C_{\xi}^{-1}\abs{\xi_1}^{p}+1)^{1-2/p} -  C_2( d^{ -\delta/\theta} C_{\xi}^{{-1}/{p\theta}}\abs{\xi_1}^{1/\theta})^{(1-2/p)}\\  
     &\leq  2C_1 
    d^{\delta(2-p) } C_{\xi}^{2/p-1} \abs{\xi_1}^{p-2 } -  C_2d^{ -(1-2/p) \delta /\theta} C_{\xi}^{-(p-2)/p^2\theta} \abs{\xi_1}^{(1-2/p)/\theta},
\end{align*}
with $C_1=\frac{p\lambda^{2(1-2/p)}}{(p-2)\ln(\lambda)}$ and $C_2=(\ln \lambda)^{-1}p/(p-2) $. Here the last inequality follows since on the event $\{\abs{\xi_1}\geq d^{\delta}C_{\xi}^{1/p}\tau_{n(\ell)}\}$, we have that $\abs{\xi_1}^{p} C_{\xi}^{-1} d^{-\delta p}\geq 1$ and hence $( d^{-\delta p} C_{\xi}^{-1}  \abs{\xi_1}^{p} +1)^{\alpha}\leq 2 d^{-\alpha \delta p}C_{\xi}^{-\alpha}  \abs{\xi_1}^{\alpha p} $ for any $\alpha>0$. 
From \eqref{exact_xi_tail_bound} it follows that for $p>2$ and $\theta $ satisfying \eqref{theta_condition} we have that
\begin{align*}
    \sum_{\ell=1}^\infty     \frac{n(\ell)^{1-2/p}}{  C_{\xi}^{2/p} d^{2\delta} }  \E \left[  \abs{\xi_1}^2\mathbbm{1}_{\{ d^{\delta}C_{\xi}^{1/p}\tau_{n(\ell)} \leq|\xi_1|\leq d^{\delta}C_{\xi}^{1/p}{n(\ell)}^\theta)\}}  \right] &\leq    2C_1 
    d^{- \delta p  } C_{\xi}^{-1}\E \abs{\xi_1}^{p }\\
    &\hspace{-1.7cm}+  C_2d^{ -(1-2/p)\delta/\theta-2\delta} C_{\xi}^{-(p-2)/p^2\theta-2/p} \E \abs{\xi_1}^{2+(1-2/p)/\theta} < \bar{C}, 
\end{align*}
where $\bar{C}$ is some universal constant, since by \eqref{moment_norm} the first term can be bounded by a universal constant and since for $\theta $ satisfying \eqref{theta_condition} and $p>2$ we have that $1<2+(1-2/p)/\theta<p$, and thus by Jensen's inequality we can also bound the second term with a universal constant. Consequently, the claim \eqref{truncated_slln_subseq} follows from the Borel--Cantelli lemma. Moreover, from \eqref{bounded_twosided_truncated} we also have that
\begin{align*}
        \abs{ \sum_{j=1}^{n(\ell)} \left( \abs{\xi_j}\mathbbm{1}_{\{ d^{\delta}C_{\xi}^{1/p}\tau_{n(\ell)} \leq|\xi_j|\leq d^{\delta}C_{\xi}^{1/p}{n(\ell)}^\theta)\}} \right) }&=o_{a.s.}(d^{\delta}C_{\xi}^{1/p}n(\ell)^{1/p})+ \sum_{j=1}^{n(\ell)}\E \left[\abs{\xi_j}\mathbbm{1}_{\{ d^{\delta}C_{\xi}^{1/p}\tau_{n(\ell)} \leq|\xi_j|\leq d^{\delta}C_{\xi}^{1/p}{n(\ell)}^\theta)\}} 
        \right]\\
        &=o_{a.s.}(d^{\delta}C_{\xi}^{1/p}n(\ell)^{1/p})+  
        \bigO_{a.s.}(d^{\delta}C_{\xi}^{1/p}n(\ell)\tau_{n(\ell)}^{1-p})\\
        &=         \bigO_{a.s.}(d^{\delta}C_{\xi}^{1/p}n(\ell)^{1/p}) = \bigO_{a.s.}(d^{\delta}C_{\xi}^{1/p}\lambda^{\ell/p}).
\end{align*}
Since for any $n$ there exists an $\ell$ such that $\lambda^\ell< n \leq \lambda^{\ell+1},$ we have that 
$$n^{-1/p} \sum_{j=1}^{n}  \abs{\xi_j}\mathbbm{1}_{\{ d^{\delta}C_{\xi}^{1/p}n^{1/p} \leq|\xi_j|\leq d^{\delta}C_{\xi}^{1/p}n^\theta)\}}\leq \lambda^{-\ell/p}\sum_{j=1}^{\lambda^{\ell+1}} \abs{\xi_j}\mathbbm{1}_{\{ d^{\delta}C_{\xi}^{1/p}\lambda^{(\ell+1)/p} \leq|\xi_j|\leq d^{\delta}C_{\xi}^{1/p}\lambda^{(\ell+1)\theta}\}}$$

\begin{align*}
\limsup_{n \rightarrow \infty} n^{-1/p} \sum_{j=1}^{n}  \abs{\xi_j}\mathbbm{1}_{\{ d^{\delta}C_{\xi}^{1/p}n^{1/p} \leq|\xi_j|\leq d^{\delta}C_{\xi}^{1/p}n^\theta)\}} &\leq \frac{\lambda^{1/p}}{\lambda^{(\ell+1)/p}}\sum_{j=1}^{\lambda^{\ell+1}} \abs{\xi_j}\mathbbm{1}_{\{ d^{\delta}C_{\xi}^{1/p}\lambda^{(\ell+1)/p} \leq|\xi_j|\leq d^{\delta}C_{\xi}^{1/p}\lambda^{\ell\theta}\}}\\
&\lsim_{a.s.}d^{\delta}C_{\xi}^{1/p}.
\end{align*}

Thus we have proven the claim \eqref{truncated_slln} and consequently also \eqref{slln_claim1} immediately follows.
Putting all the obtained results together, we have that 
        \begin{align}
        \label{truncate_error1}
         \left|\sum_{j=1}^{n} \xi_j-\sum_{j=1}^{n} \tilde{\xi}_j\right| &= \bigO_{a.s.}(d^{\delta} C_{\xi}^{1/p}n\tau^{1-p}_n)  \nonumber \\
         &=\bigO_{a.s.}(d^{1/2}C_{\xi}^{1/p} n^{1/p}).
         \end{align}

Next, since $\E[\tilde{\xi}_j]=0$ and $\abs{\tilde{\xi}_j}\le  2d^{1/2}C_{\xi}^{1/p}\tau_n$, by Lemma \ref{lemma_ind_strongGA} there exists a sequence of Gaussian random variables $\tilde{y}_1,\dots,\tilde{y}_n$ with $\tilde{y}_j\sim \mathcal{N}_d(0,\Cov(\tilde{\xi}_i))$ such that for all $\theta_0>0$ we have that

\begin{align}
 \label{truncate_error2}
    \left| \sum_{j=1}^{n} \tilde{\xi}_j - \sum_{j=1}^n \tilde{y}_j \right|&= \bigO_{a.s.}( (d^{23/4+\delta+\theta_0}\log^* d) \sqrt{\sigma_d} C^{1/p}_{\xi} \tau_n {\log n})) \nonumber  \\
    &= \bigO_{a.s.}( (d^{25/4 +\theta_0 }\log^* d) \sqrt{\sigma_d}  C^{1/p}_{\xi}  n^{1/p} {\log n})),    
\end{align}
for any $\theta_0>0$.
Hence by (\ref{truncate_error1}) and (\ref{truncate_error2}) it follows that 
\begin{align}
\label{truncate_error3}
\left|\sum_{k=1}^n \xi_k  - \tilde{Y}_n\right| &\le \left|\sum_{k=1}^n \xi_k - \sum_{j=1}^{n} \tilde{\xi}_j \right| + \left| \sum_{j=1}^{n} \tilde{\xi}_j - \tilde{Y}_n \right| \nonumber\\ &=\bigO_{a.s.}(d^{1/2}C^{1/p}_{\xi,d}n^{1/p}) +\bigO_{a.s.}( (d^{25/4 +\theta_0}\log^* d) \sqrt{\sigma_d} C^{1/p}_{\xi}  n^{1/p} {\log n})),
\end{align}

where $\tilde{Y}_n:= \sum_{j=1}^n \tilde{y}_j$ has the same covariance matrix as $\sum_{j=1}^{n} \tilde{\xi}_j$. Note that we can write $\tilde{y}_j$ in terms of a standard multidimensional Gaussian as \[\tilde{y}_j =\left(\Cov(\tilde{\xi}_1)\right)^{1/2}z_j \ \ \textrm{for } j=1,\dots,n,\]
where $z_1,\dots,z_n\sim  \mathcal{N}_d(0,I_d)$ are all independent and identically distributed. Similarly, using the same Gaussian sequence $z_1,\dots,z_n$, we can define for $j=1,\dots,n$ the following
 \[
 y_j:=\left(\Cov({\xi}_1)\right)^{1/2}z_j.
 \]

Now it follows that
\begin{align*}
\abs{\sum_{j=1}^n\tilde{y}_j-\sum_{j=1}^n{y}_j}&= \abs{ \Cov\left( \xi_1\right)^{1/2}-\Cov( \tilde{\xi}_1)^{1/2} } \abs{\sum_{j=1}^n z_j} \\
 & \le \frac{1}{\sqrt{\sigma_0}}\abs{ \Cov\left( \xi_1\right)-\Cov( \tilde{\xi}_1) } \abs{\sum_{j=1}^n z_j}\\
&\le \frac{ \abs{\sum_{j=1}^n z_j}}{\sqrt{\sigma_0}} \left[2\sqrt{\E\left[\left| \xi_1\right|^2\right]}\sqrt{\E\left[\left| \xi_1-\tilde{\xi}_1\right|^2\right]}+\E\left[\left| \xi_1-\tilde{\xi}_1\right|^2\right]\right],    
\end{align*}
where the first inequality follows by Lemma \ref{Ando_inequality}; Ando--{van Hemmen}'s inequality. \\
By a component-wise application of the law of iterated logarithm we have that \[\abs{\sum_{j=1}^n z_j}=\bigO_{a.s.}\left( \sqrt{d n \log \log n}\right).\]

Note that $\mathbb{E}[\abs{\xi_1}^2]\leq \tr(\Sigma)\leq d\sigma_d.$ Furthermore, from \eqref{exact_xi_tail_bound} we see that

\[\E\left[\left| \xi_1-\tilde{\xi}_1\right|^2 \right]\le \E \left[\abs{\xi_1}^2\mathbbm{1}_{\{|\xi_1|>d^{\delta}C_{\xi}^{1/p}\tau_n\}}\right]\lsim d^{2\delta} C_{\xi}^{2/p} \tau_n^{2-p}.\]

We see that
\begin{align}
\label{truncate_y_variance}
    \abs{\sum_{j=1}^n\tilde{y}_j-\sum_{j=1}^n{y}_j}&=\bigO_{a.s.}\left( {d^{(1+\delta )\wedge (2\delta+1/2)}}\left( \frac{\sigma_d}{\sigma_0} \right)^{1/2}   C_{\xi}^{2/p} \tau_n^{2-p} \sqrt{n \log \log n}\right) \nonumber \\
    &=\bigO_{a.s.}\left( {d^{3/2}}\left( \frac{\sigma_d}{\sigma_0} \right)^{1/2}C_{\xi}^{2/p} n^{2/p-1} \sqrt{ \log \log n}\right),
\end{align}
where the last equality follows from the choice $\tau_n=n^{1/p}$ and $\delta=1/2$. From (\ref{truncate_y_variance})  and (\ref{gaus_approx_error})  it follows that
\begin{align*}
\left|\sum_{j=1}^n \xi_j  - \sum_{j=1}^n{y}_j\right|&\le \left|\sum_{j=1}^n \xi_j  - \sum_{j=1}^n\tilde{y}_j\right| +\left|\sum_{j=1}^n{y}_j-Y_n\right|\\
&=\bigO_{a.s.} \left(  \left(   d^{25/4+\theta_0}\log^* (d)C_{\xi}^{1/p}\vee   d^{3/2} C_{\xi}^{2/p} \right) \left(\frac{\sigma_d}{\sigma_0}\right)^{1/2}n^{1/p}\log(n)\right).
   \end{align*}

This finishes the proof for the strong Gaussian approximation case.\\

For the weak Gaussian approximation case the proof follows completely analogously up to some minor differences. Firstly, we consider an application of Lemma \ref{lemma_ind_weakGA} instead of Lemma \ref{lemma_ind_strongGA} for approximating the truncated sequence $(\tilde{\xi_k})_{k \in \mathbb{N}}$. This gives us
\begin{align}
 \label{weak_truncate_error2}
    \left| \sum_{j=1}^{n} \tilde{\xi}_j - \tilde{Y}_n\right|&= \bigO_{P}( d^{\delta+1/2} C^{1/p}_{\xi} \tau_n {\log n}),  
 \end{align}

where $\tilde{Y}_n$  has a mean-zero Gaussian distribution with covariance matrix $\Cov(\sum_{j=1}^{n} \tilde{\xi}_j)$.

By the same argument used to obtain (\ref{truncate_error3}), we see that

\begin{align}
\label{weak_truncate_error3}
\left|\sum_{k=1}^n \xi_k  - \tilde{Y}_n\right| &\le \left|\sum_{k=1}^n \xi_k - \sum_{j=1}^{n} \tilde{\xi}_j \right| + \left| \sum_{j=1}^{n} \tilde{\xi}_j - \tilde{Y}_n \right| \nonumber\\ &=\bigO_{P}(d^{\delta} C_{\xi}^{1/p}n\tau^{1-p}_n)+\bigO_{P}( d^{\delta+1/2} C^{1/p}_{\xi} \tau_n {\log n}).
\end{align}

Furthermore, the difference between the Gaussian approximation for the truncated sums and the Gaussian approximation for the true sums can now be obtained with the following argument.
$Z\sim\mathcal{N}(0, I_d)$ as follows:
\[
\tilde{Y}_n:=\tilde{\Sigma}_n^{1/2}Z \ \textrm{with} \ \tilde{\Sigma}_n=\Cov\left(\sum_{j=1}^{n} \tilde{\xi}_j\right).
\]
Similarly, using the same Gaussian $Z$, we define 
\[
Y_n:=\Sigma_n^{1/2}Z \ \textrm{with} \ {\Sigma}_n=\Cov\left(\sum_{k=1}^n \xi_k\right)
\]

Firstly, note that we have
\[
\E[|\Sigma_n^{1/2}Z-\tilde{\Sigma}_nZ|^2]=\E[\tr((\Sigma_n^{1/2}Z-\tilde{\Sigma}_nZ)(\Sigma_n^{1/2}Z-\tilde{\Sigma}_nZ)^T)]=\E[\tr((\Sigma_n^{1/2}-\tilde{\Sigma}_n^{1/2})ZZ^T(\Sigma_n^{1/2}-\tilde{\Sigma}_n^{1/2}))]
\]
By linearity of the trace it immediately follows that
\[
\E\left[\abs{\Sigma_n^{1/2}Z-\tilde{\Sigma}_nZ}^2\right]=\tr((\Sigma_n^{1/2}-\tilde{\Sigma}_n^{1/2})\E[ZZ^T](\Sigma_n^{1/2}-\tilde{\Sigma}_n^{1/2})),
\]
 where $\E[ZZ^T]=I_d$. Therefore, by an application of \cite[Lemma 1]{eldan2020clt} we have that
\[
\tr((\sqrt{\Sigma_n}-\sqrt{\tilde{\Sigma}_n})^2)\le \tr((\Sigma_n-\tilde{\Sigma}_n)^2\Sigma_n^{-1}).
\]
By the cyclic invariance of the trace we have that
\[
\tr((\Sigma_n-\tilde{\Sigma}_n)^2\Sigma_n^{-1}) = \tr(\Sigma_n^{-1/2}(\Sigma_n-\tilde{\Sigma}_n)^2\Sigma_n^{-1/2})=\abs{(\Sigma_n-\tilde{\Sigma}_n)\Sigma_n^{-1/2}}^2.
\]
Since $|\Sigma_n\tilde{\Sigma}_n|\le |\tilde{\Sigma}_n||\Sigma_n|_{*}$ it follows that
\begin{align}
\label{diff_cov_1}
|(\Sigma_n-\tilde{\Sigma}_n)\Sigma_n^{-1/2}|^2 \le |\Sigma_n-\tilde{\Sigma}_n|^2 |\Sigma_n^{-1/2}|_{*}^2\le \frac{1}{n\sigma_0}|\Sigma_n-\tilde{\Sigma}_n|^2.
\end{align}

\begin{align*}
\abs{\Sigma_n-\tilde{\Sigma}_n}\le \left[\sqrt{\E\left[\left|\sum_{k=1}^n \xi_k\right|^2\right]}\sqrt{\E\left[\left|\sum_{k=1}^n \xi_k-\sum_{k=1}^{n} \tilde{\xi}_k\right|^2\right]}+\E\left[\left|\sum_{k=1}^n \xi_k-\sum_{k=1}^{n} \tilde{\xi}_k\right|^2\right]\right]
\end{align*}

 By following the argument of Lemma \ref{order_norm_power} and an application of Jensen's inequality it follows that 

\[
\sqrt{\E\left[\left|\sum_{k=1}^n \xi_k\right|^2\right]}\lsim d^{1/2}C_{\xi}^{1/p}n^{1/2}.
\]

Furthermore, by \eqref{exact_xi_tail_bound} we also have that
\begin{align*}
\E\left[\left|\sum_{k=1}^n \xi_k-\sum_{k=1}^{n} \tilde{\xi}_k\right|^2\right]&\le \sum_{k=1}^{n}\E[\abs{\xi_k}^2\mathbbm{1}_{\{|\xi_k|>d^{\delta}C_{\xi}^{1/p}\tau_n\}}]\\
%&=  \sum_{j=1}^{n}\E\left[d\tau^2\frac{\abs{\xi_j}^2}{d\tau^2}\mathbbm{1}_{\{|\xi_j|>d^{1/2}\tau\}}\right]\\
%&\le d\tau^2\sum_{j=1}^{n}\E\left[\frac{\abs{\xi_j}^p}{d^{p/2}\tau^p}\mathbbm{1}_{\{|\xi_j|>d^{1/2}\tau\}}\right]\\
&\lsim d^{2\delta} C_{\xi}^{2/p} n\tau_n^{2-p}
\end{align*}

This gives 
\begin{align}
\label{diff_cov_2}
&|\Sigma_n-\tilde{\Sigma}_n| = \bigO(d^{1/2}C_{\xi}^{1/p}n^{1/2})\sqrt{\bigO(d^{2\delta} C_{\xi}^{2/p} n\tau_n^{2-p})}+\bigO(d^{2\delta} C_{\xi}^{2/p} n\tau_n^{2-p})
\end{align}

Now we choose $\tau_n=n^{1/p}$ and $\delta=1/2$ we see that

\[
\E\left[\left|\sum_{k=1}^n( \xi_k- \tilde{\xi}_k)\right|^2\right]= \bigO(dC_{\xi}^{2/p}n^{2/p}),
\]

and consequently from \eqref{diff_cov_1} and \eqref{diff_cov_2} we see that
\begin{align}
\label{weak_gaus_approx_error}
    \E\left[\abs{\Sigma_n^{1/2}Z-\tilde{\Sigma}_nZ}\right]=\bigO_{P}(d C_{\xi}^{1/p} n^{1/p}) + \bigO_{P}(d C_{\xi}^{2/p}n^{2/p-1/2})=\bigO_{P}(dC_{\xi}^{1/p}n^{1/p}),
\end{align}
where the last equality holds since $p>2.$ Note that by Markov's inequality it immediately follows that
\begin{align}
\label{weak_gaus_approx_error4}
\abs{Y_n-\tilde{Y}_n}=\bigO_{P}(dC_{\xi}^{1/p}n^{1/p})
\end{align}

From (\ref{weak_truncate_error3})  and (\ref{weak_gaus_approx_error4})  it follows that
\[
\left|\sum_{k=1}^n \xi_k  -Y_n\right|\le \left|\sum_{k=1}^n \xi_k  - \tilde{Y}_n\right| +|Y_n-\tilde{Y}_n| =\bigO_{P} \left(     dC_{\xi}^{1/p}  n^{1/p}\log(n)\right).
\]
Finally, writing $Y_n$ as $\sum_{k=1}^n y_k$ completes the proof. Note that we can always, on a possibly extended probability space, do this since all conditional distributions of $(y_1,\dots,y_n,Y_n)$ are known.\\
%\cite{mies2023sequential}
\end{proof}

\begin{theorem}
\label{GA_one_dep}
Let $\{\xi_k\}$ be a one-dependent identically distributed zero-mean sequence such that  finite $p$-th moment for each element $ \sup_k\sup_i\E[|\xi_{k,i}|^p]\leq C_{\xi}$ for some $p>2$  and that the smallest eigenvalue of $\Sigma_\xi:=\Var(\xi)+ \Cov(\xi_1,\xi_2)+ \Cov(\xi_1,\xi_2)^T$ is larger than some constant $\sigma_0>0$  and $\sup_{i,j} \abs{\Sigma_{\xi_{ij}}}<\infty$ then there exists a Brownian motion $B$ such that
$$\left|\sum_{k=1}^n \xi_k- \Sigma_\xi^{1/2}W(n)\right|=    \left\{
            \begin{array}{lll}                                          
                 \displaystyle \bigO_{P} \left(    d C_{\xi}^{2/p}  n^{\frac{1}{4}+\frac{1}{4(p-1)}}\log(n)\right)  \\
                                   \textcolor{white}{.}   \\ 
              \displaystyle   \bigO_{a.s.} \left(  \left(   d^{25/4+\theta_0}\log^*(d)C_{\xi}^{2/p} \right) \left(\frac{\sigma_d}{\sigma_0}\right)^{1/2}n^{\frac{1}{4}+\frac{1}{4(p-1)}}\log(n)\right)
                 
                \end{array}.
              \right.   $$

\end{theorem}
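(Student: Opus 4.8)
The plan is to reduce the statement to the i.i.d.\ Gaussian approximation of Proposition~\ref{GA_iid} by a blocking argument with a separating buffer, and then to handle the remaining pieces by direct moment estimates and covariance comparisons. Fix a block length $L=L_n$ (optimised below) and partition $\{1,\dots,n\}$ into $M:=\floor{n/(L+2)}$ consecutive \emph{segments}, each consisting of a \emph{block} of $L$ indices followed by a \emph{buffer} of $2$ indices, together with a final tail of fewer than $L+2$ indices. Writing $A_j$, $R_j$ for the sum of $\xi$ over the $j$-th block and buffer, and $\mathcal T$ for the tail sum, we have $\sum_{k=1}^n\xi_k=\sum_{j=1}^M A_j+\sum_{j=1}^M R_j+\mathcal T$. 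Since $(\xi_k)$ is one-dependent and stationary, distinct blocks are separated by at least $3$ indices, so $(A_j)_{j\le M}$ is i.i.d., and likewise $(R_j)_{j\le M}$ is i.i.d. Splitting each block sum into its odd- and even-indexed contributions --- each a sum of about $L/2$ genuinely independent terms --- and applying the Marcinkiewicz--Zygmund inequality (Lemma~\ref{MZ_inequality}) together with the norm-equivalence estimate from the proof of Lemma~\ref{order_norm_power}, one gets $\sup_j\sup_i\E[|A_{j,i}|^p]\lesssim C_\xi L^{p/2}=:\tilde C$ and $\sup_j\sup_i\E[|R_{j,i}|^p]\lesssim C_\xi$. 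Moreover $\Cov(A_1)=L\Sigma_\xi-(\Gamma_1+\Gamma_1^\top)$ with $\Gamma_1:=\Cov(\xi_1,\xi_2)$, whose entries are controlled by $\sup_{i,j}|\Sigma_{\xi_{ij}}|<\infty$; hence for $n$ large the smallest eigenvalue of $\Cov(A_1)$ is $\gtrsim L\sigma_0$, its largest eigenvalue is $\lesssim L\sigma_d$, and $|\Cov(A_1)-L\Sigma_\xi|_{*}$ is bounded independently of $L$.

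Next I would apply Proposition~\ref{GA_iid} to the i.i.d.\ sequence $(A_j)_{j\le M}$, with $M$, $\tilde C$, $L\sigma_0$, $L\sigma_d$ playing the roles of $n$, $C_\xi$, $\sigma_0$, $\sigma_d$. This produces, on an enriched space, a sequence $(A_j^c)_{j\le M}$ equal in law to $(A_j)_{j\le M}$ and i.i.d.\ Gaussians $Y_j\sim\mathcal N_d(0,\Cov(A_1))$ with
\[
\Bigl|\,\textstyle\sum_{j\le M}A_j^c-\sum_{j\le M}Y_j\Bigr|=
\begin{cases}
\bigO_P\!\bigl(d\,C_\xi^{1/p}L^{1/2-1/p}n^{1/p}\log n\bigr),\\[4pt]
\bigO_{a.s.}\!\bigl((d^{25/4+\theta_0}\log^*(d))(\sigma_d/\sigma_0)^{1/2}C_\xi^{1/p}L^{1/2-1/p}n^{1/p}\log n\bigr)+\bigO_{a.s.}\!\bigl(d^{3/2}C_\xi^{2/p}(\sigma_d/\sigma_0)^{1/2}\sqrt{\log\log n}\bigr),
\end{cases}
\]
where the last term reflects that, by the variance-matching step in the proof of Proposition~\ref{GA_iid}, the $d^{3/2}\tilde C^{2/p}$ contribution actually enters with $M^{2/p-1}$ rather than $M^{1/p}$, and is therefore bounded in $n$ for the block length chosen below. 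A routine transfer (noise-outsourcing) argument then lets me adjoin, on a further enrichment, buffer sums $(R_j^c)$ and a tail $\mathcal T^c$ with the correct joint law given $(A_j^c)$, yielding a process $S_n^c:=\sum_j A_j^c+\sum_j R_j^c+\mathcal T^c$ equal in law to $\sum_{k\le n}\xi_k$. Since $(R_j^c)$ is i.i.d., zero-mean, with componentwise $p$-th moments $\lesssim C_\xi$, a componentwise law of the iterated logarithm (or Marcinkiewicz--Zygmund plus Borel--Cantelli, as in the proof of Proposition~\ref{GA_iid}) gives $|\sum_j R_j^c|=\bigO_{a.s.}(\sqrt{d(n/L)\log\log n}\,C_\xi^{1/p})$ and $|\mathcal T^c|=\bigO_{a.s.}(\sqrt{dL\log\log n}\,C_\xi^{1/p})$, with the obvious weak versions holding without the $\log\log$ factor.

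It remains to replace $\sum_j Y_j$ by $\Sigma_\xi^{1/2}W(n)$. I would construct, jointly with $(Y_j)$, a standard $d$-dimensional Brownian motion $W$ and i.i.d.\ standard Gaussians $(Z_j)$ so that the increment of $W$ over the $j$-th segment equals $\sqrt{L+2}\,Z_j$ and $Y_j=\Cov(A_1)^{1/2}Z_j$. Writing $M(L+2)$ for the last segment endpoint,
\begin{align*}
S_n^c-\Sigma_\xi^{1/2}W(n)&=\Bigl(\textstyle\sum_j A_j^c-\sum_j Y_j\Bigr)+\sum_j\bigl(\Cov(A_1)^{1/2}-\sqrt{L+2}\,\Sigma_\xi^{1/2}\bigr)Z_j\\
&\quad+\textstyle\sum_j R_j^c+\mathcal T^c-\Sigma_\xi^{1/2}\bigl(W(n)-W(M(L+2))\bigr).
\end{align*}
By Ando--van Hemmen's inequality (Lemma~\ref{Ando_inequality}, with $r=1/2$ and common lower eigenvalue bound $\gtrsim L\sigma_0$), $|\Cov(A_1)^{1/2}-\sqrt{L+2}\,\Sigma_\xi^{1/2}|_{*}\lesssim(L\sigma_0)^{-1/2}|\Cov(A_1)-(L+2)\Sigma_\xi|_{*}\lesssim(L\sigma_0)^{-1/2}d\sigma_d$, so the middle sum is $\bigO_{a.s.}(d^{3/2}\sigma_d\sigma_0^{-1/2}\sqrt{(n/L^2)\log\log n})$, while the residual Brownian increment is $\bigO_{a.s.}(\sigma_d^{1/2}\sqrt{dL\log\log n})$. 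Finally, choosing $L=L_n\asymp C_\xi^{2/(p-2)}\,n^{(p-2)/(2(p-1))}$ makes the leading term $|\sum_j A_j^c-\sum_j Y_j|$ equal to $\bigO_P(d\,C_\xi^{2/p}n^{1/4+1/4(p-1)}\log n)$ in the weak case and $\bigO_{a.s.}((d^{25/4+\theta_0}\log^*(d))C_\xi^{2/p}(\sigma_d/\sigma_0)^{1/2}n^{1/4+1/4(p-1)}\log n)$ in the strong case; with this $L$ one has $n/L=n^{p/(2(p-1))}$ and $n/L^2=n^{1/(p-1)}$, so the buffer and tail sums, the covariance-mismatch term, the residual increment and the bounded $d^{3/2}C_\xi^{2/p}$ term are all of order at most $n^{1/4+1/4(p-1)}$ up to the dimension and eigenvalue factors already present in the claimed bound. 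Relabelling $S_n^c$ as $\sum_{k\le n}\xi_k$ finishes the argument. The main obstacle is precisely this bookkeeping: the block length must be tuned so that both the exponent $\tfrac14+\tfrac1{4(p-1)}$ of $n$ and the power $C_\xi^{2/p}$ emerge while every secondary term stays strictly subdominant; in particular one must invoke the sharp $M^{2/p-1}$ form of the variance-matching error of Proposition~\ref{GA_iid}, since the crude $M^{1/p}$ bound would contribute a spurious $n^{1/2}$. A secondary, purely measure-theoretic point is the gluing of the copy of the sequence, the Gaussians and the Brownian motion onto one probability space with the correct joint laws, which is handled by standard transfer arguments.
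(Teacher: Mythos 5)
Your blocking strategy is sound and genuinely different in how it deploys the intermediate results. The paper splits $\{1,\dots,n\}$ into blocks of length $m-1$ separated by \emph{single} omitted indices $\xi_{jm}$, then re-runs the truncation argument for the block sums from scratch (using Lemma~\ref{lemma_ind_GA} on the truncated blocks and a subsequence Borel--Cantelli estimate for the truncation error), whereas you separate blocks by buffers and invoke Proposition~\ref{GA_iid} as a black box on the i.i.d.\ block sums. Your route is cleaner and more modular, and your block-length choice $L\asymp C_\xi^{2/(p-2)}n^{(p-2)/(2(p-1))}$ is precisely what makes the two exponents $\tfrac14+\tfrac1{4(p-1)}$ of $n$ and $\tfrac2p$ of $C_\xi$ drop out simultaneously (whereas the paper gets $C_\xi^{2/p}$ from the remainder term and $C_\xi^{1/p}$ from the block coupling, and absorbs the latter). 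You correctly identify the one point where a black-box invocation would fail: the $d^{3/2}C_\xi^{2/p}$ variance-matching error in Proposition~\ref{GA_iid} only enters the block application with $M^{2/p-1}$ --- the sharper form visible in \eqref{truncate_y_variance} --- not with the crude $M^{1/p}$ stated in the proposition, which would produce a spurious $n^{1/2}$. Re-opening that proof is therefore unavoidable, so the modularity is partly illusory, but your plan handles it explicitly.

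Two small points worth tightening. First, your one-dependence needs only buffers of length $1$ (the paper omits a single index per block), so the $2$ is harmless over-insurance. Second, a few of your subdominant terms carry slightly worse $C_\xi$-exponents than the claimed $C_\xi^{2/p}$ --- the tail gives $C_\xi^{1/p+1/(p-2)}$ and the variance-matching term gives $C_\xi^{2/p+4(p-1)/(p(p-2))}$ --- but since they come with strictly smaller $n$-exponents (namely $n^{(p-2)/(4(p-1))}$ and $n^{0}\sqrt{\log\log n}$ respectively versus the leading $n^{p/(4(p-1))}$), they are absorbed for fixed $C_\xi$ as $n\to\infty$, which matches the asymptotic reading of the theorem. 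The paper faces the same housekeeping and handles it by the choice $m=\lfloor n^\alpha\rfloor$ without $C_\xi$ in the block length; either convention works. The measure-theoretic gluing of blocks, buffers and the Brownian motion onto one space, and the fact that your block length depends on $n$ (so that a diagonal/dyadic construction is implicitly required for the $\bigO_{a.s.}$ statement), are both elided in your plan, but the paper elides them equally, so this does not distinguish the two proofs.
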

\begin{proof}

Let $m$ be an integer such that $m>1$ and $m<n$. We define
\[
x_j:=\sum_{i=1+(j-1)m}^{jm-1}\xi_i,\quad j=1,\dots, \lfloor n/m\rfloor.
\]
We further define
\[
 x_{\lfloor n/m\rfloor+1}:=\mathbbm{1}_{\{m\lfloor n/m\rfloor <n\}}\left(\sum_{i=1+m\lfloor n/m\rfloor}^n\xi_i\right).
 \]
 Clearly, if $m\lfloor n/m\rfloor =n$ we have $x_{\lfloor n/m\rfloor+1}=0$. Note that $\{x_j\}$ is an independent sequence.

Then it is easy to verify that
\[
\sum_{k=1}^n \xi_k = \sum_{j=1}^{\lfloor n/m\rfloor} x_j + x_{\lfloor n/m\rfloor +1} + \sum_{j=1}^{\lfloor n/m \rfloor}\xi_{jm}.
\]
By triangle inequality
\[
\left|\sum_{k=1}^n \xi_k - \sum_{j=1}^{\lfloor n/m\rfloor} x_j \right| \le \mathbbm{1}_{\{m\lfloor n/m\rfloor <n\}}\left|\sum_{i=1+m\lfloor n/m\rfloor}^n\xi_i\right| + \left|\sum_{j=1}^{\lfloor n/m \rfloor}\xi_{jm}\right|
\]

Note that $(\xi_{jm})_{j \in \mathbb{N}}$ are a zero mean independent sequence with  $\sup_k\sup_i\E[|\xi_{k,i}|^p]< C_{\xi}$, hence by Proposition \ref{GA_iid} there exists a Brownian motion $B$  such that
$$\left|\sum_{j=1}^{\floor{n/m}} \xi_{jm}- \Sigma_1^{1/2} B(\floor{n/m})\right|= \bigO_{a.s.}\left(d^{\theta} \log^*(d) C_{\xi}^{2/p}\left(\frac{\sigma_d}{\sigma_0}\right)^{1/2}\floor{n/m})^{1/p}\log ( \floor{n/m})\right),$$
where $\theta=25/4+\theta_0$ for some $\theta_0>0$ in the strong approximation case, $\Sigma_1$ is the covariance matrix of $\xi_{m}$. By applying the law of iterated logarithm to every coordinate of $B$, it follows that 
 \begin{align*}
\abs{\sum_{j=1}^{\floor{n/m}} \xi_{jm}} &\le \abs{\Sigma_1^{1/2} B(\floor{n/m}))} +\bigO_{a.s.}(d^{\theta} C_{\xi}^{2/p}(\log d) (\floor{n/m})^{1/p}(\log \floor{n/m}))\\
&\le \abs{\Sigma_1^{1/2}} \left(\sum_{i=1}^d B^2_{i}(\floor{n/m})) \right)^{1/2}+\bigO_{a.s.}(d^{\theta} C_{\xi}^{2/p}(\log d) (\floor{n/m})^{1/p}(\log \floor{n/m}))    \\
&= \bigO_{a.s.}(d^{1/2} \sqrt{\tr(\Sigma_1)\floor{n/m}} \log \log \floor{n/m}) +\bigO_{a.s.}(d^{\theta} C_{\xi}^{2/p}(\log d) (\floor{n/m})^{1/p}(\log \floor{n/m})).
 \end{align*} 
Note that since $\sqrt{\tr(\Var(\xi))}=\sqrt{\E\abs{\xi}^2} \leq \left(\E\abs{\xi}^p \right)^{1/p}\lsim d^{1/2}C^{1/p}_\xi$, we have that
\begin{align}
\label{rem_asymp}
    \left|\sum_{j=1}^{\lfloor n/m \rfloor}\xi_{jm}\right| = \bigO_{a.s.} (d^{\theta}\log^* d C_{\xi}^{2/p} (n/m)^{1/2} \log(n/m)).
\end{align} 

Similarly, we can use the same argument for the term $\mathbbm{1}_{\{m\lfloor n/m\rfloor <n\}}\left|\sum_{i=1+m\lfloor n/m\rfloor}^n\xi_i\right|$. The only additional step is to split the sequence of $\xi_i$ by its odd and even indices and use triangle inequality to divide the sum into two sums of independent sequences, each having $\bigO(n/m)$ terms. Then by applying the same argument, we obtain
\[
\left|\mathbbm{1}_{\{m\lfloor n/m\rfloor <n\}}\left(\sum_{i=1+m\lfloor n/m\rfloor}^n\xi_i\right)\right|= \bigO_{a.s.} (d^{\theta} C_{\xi}^{2/p}(\log d)(n/m)^{1/2} \log(n/m)).
\]
Therefore, we have shown
\[
\left|\sum_{k=1}^n \xi_k - \sum_{j=1}^{\lfloor n/m\rfloor} x_j \right| =\bigO_{a.s.} (d^{\theta} C_{\xi}^{2/p}(\log d)(n/m)^{1/2} \log(n/m)).
\]
Similarly, by an application of the weak Gaussian approximation, we can show 

\begin{align}
\label{WE1}
\left|\sum_{k=1}^n \xi_k - \sum_{j=1}^{\lfloor n/m\rfloor} x_j \right| =\bigO_{P} (d^{1/2} C_{\xi}^{1/p}(n/m)^{1/2} \log(n/m)).
\end{align}

Let $\tau_n=n^\gamma \log(n)$ and $m:=m_n=\floor{n^\alpha}$, for some $\alpha,\gamma>0$, that will be determined later in the proof. Define
        \[
        \tilde{x}_j:= x_j \mathbbm{1}_{\{|x_j|\le d^{1/2}C_{\xi}^{1/p}\tau_n \}}-\E[x_j \mathbbm{1}_{\{|x_j|\le d^{1/2}C_{\xi}^{1/p}\tau_n\}}].
        \]
        Clearly, we have $|\tilde{x}_j|\le 2d^{1/2}C_{\xi}^{2/p}\tau_n$ and $\E[\tilde{x}_j]=0$. 
 Then by triangle inequality, we have
        \begin{align*}
            \left|\sum_{j=1}^{\lfloor n/m\rfloor} x_j-\sum_{j=1}^{\lfloor n/m\rfloor} \tilde{x}_j\right|
        &\le  \left|\sum_{j=1}^{\lfloor n/m\rfloor} x_j\mathbbm{1}_{\{|x_j|>d^{1/2}C_{\xi}^{1/p}\tau_n)\}}\right| +\left|\sum_{j=1}^{\lfloor n/m\rfloor}\E[x_j\mathbbm{1}_{\{|x_j|>d^{1/2}C_{\xi}^{1/p}\tau_n\}}]\right|.
        \end{align*}

       Since we can decompose $x_j$ into sums of sub-sequences of even and odd indices, this implies that by Lemma \ref{order_norm_power} we have for all $j$ that $\E[|x_j|^p] \lsim d^{p/2}C_{\xi}m^{p/2}$.
       Then we have
        
         \begin{align*}
        \left|\sum_{j=1}^{\lfloor n/m\rfloor}\E[x_j\mathbbm{1}_{\{|x_j|>d^{1/2}C_{\xi}^{1/p}\tau_n\}}]\right|&\le d^{1/2}C_{\xi}^{1/p}\tau_n  \sum_{j=1}^{\lfloor n/m\rfloor}\E\left[\frac{|x_j|}{d^{1/2}C_{\xi}^{1/p}\tau_n}\mathbbm{1}_{\{|x_j|>d^{1/2}C_{\xi}^{1/p}\tau_n\}}\right]\\
        &\le d^{1/2}C_{\xi}^{1/p}\tau_n  \sum_{j=1}^{\lfloor n/m\rfloor}\E\left[\frac{|x_j|^p}{d^{p/2}C_{\xi}\tau_n^p}\mathbbm{1}_{\{|x_j|>d^{1/2}C_{\xi}^{1/p}\tau\}}\right]\\
        &\lsim d^{1/2}C_{\xi}^{1/p}nm^{p/2-1}\tau_n^{1-p}.
        \end{align*}

Note that $(\abs{x_j}\mathbbm{1}_{\abs{x_j}>d^{1/2}C_{\xi}^{1/p}\tau_n})_{j \in \mathbbm{N}}$ is an i.i.d sequence with mean $\mu_{n,d}$ with $0<\mu_{n,d}\lsim d^{1/2}C_{\xi}^{1/p}m^{p/2}\tau_n^{1-p}$. Let $s_n=\floor{n / m}$, we claim that 
\begin{equation}
\label{block_slln_claim1}
\sum_{j=1}^{s_n} \abs{x_j}\mathbbm{1}_{\{|x_j|>d^{1/2}C_{\xi}^{1/p}\tau_n)\}}=\bigO_{a.s.}(d^{1/2}C_{\xi}^{1/p} n m^{p/2-1}n^{\gamma(1-p)}).
\end{equation}

This is equivalent to showing that 

\begin{equation}
\label{blocked_truncated_slln2}
    \abs{ \frac{n^{\gamma(p-1)} m^{1-p/2}}{n  d^{1/2}C_{\xi}^{1/p}}\sum_{j=1}^{s_n}  \abs{x_j}\mathbbm{1}_{\{  |x_j|\geq d^{1/2}C_{\xi}^{1/p}\tau_n)\}}}\xrightarrow[]{a.s.}0 \ as \  n\rightarrow \infty.
\end{equation}

We will first show that \eqref{blocked_truncated_slln2} holds for a subsequence. Fix any $\lambda>1$ and define $n(\ell)=\floor{\lambda^\ell}$. Then, we claim that

\begin{equation}
\label{blocked_truncated_slln_sub}
     \abs{ \frac{n(\ell)^{\gamma(p-1)} m_{n(\ell)}^{1-p/2}}{n(\ell) C_{\xi}^{1/p} d^{1/2}}\sum_{j=1}^{\lfloor s_{n(\ell)}\rfloor}  \abs{x_j}\mathbbm{1}_{\{  |x_j|\geq d^{1/2}C_{\xi}^{1/p}\tau_{n(\ell)})\}}}\xrightarrow[]{a.s.}0 \ as \  \ell \rightarrow \infty.
\end{equation}
%Provided that
%\begin{align}
    %\label{condition_gamma}
    %\gamma > \frac{1}{2(p-1)},
%\end{align}
For every $\varepsilon>0$ we have that
\begin{align*}
\sum_{\ell=1}^\infty \P \left( \abs{\sum_{j=1}^{\lfloor s_{n(\ell)}\rfloor} \abs{x_j}\mathbbm{1}_{\{|x_j|>d^{1/2}C_{\xi}^{1/p}\tau_{n(\ell)}\} } } > \varepsilon d^{1/2} C_{\xi}^{1/p} {n(\ell)} m_{n(\ell)}^{p/2-1}{n(\ell)}^{\gamma(1-p)} \right) &\leq \sum_{\ell=1}^\infty  \frac{ \E [\abs{x_1}\mathbbm{1}_{\{|x_1|>d^{1/2}C_{\xi}^{1/p}\tau_{n(\ell)} \}}]} {\varepsilon d^{1/2}C_{\xi}^{1/p}  m_{n(\ell)}^{p/2}n(\ell)^{\gamma(1-p)}}\\
 &\lsim \frac{1}{\varepsilon} \sum_{\ell=1}^\infty \frac{\tau_{n(\ell)}^{(1-p)}}{{n(\ell)}^{\gamma(1-p)}} \\
  &\lsim \frac{1}{\varepsilon} \sum_{\ell=1}^\infty \frac{1}{\ell^{(p-1)}}<\infty,
\end{align*}

since $ \E [\abs{x_1}\mathbbm{1}_{\{|x_1|>d^{1/2}\tau_n\}}]\lsim d^{1/2}m^{p/2}\tau_n^{(1-p)}$ and $p>2$. 
Consequently, the claim \eqref{blocked_truncated_slln_sub} follows from the Borel--Cantelli lemma. We will choose $\alpha$ and $\gamma$ such that 
\begin{equation}
\label{positive_condition}
1+\alpha(p/2-1)+\gamma(1-p)>0.
\end{equation}Since for any $n$ there exists a $\ell$ such that $\lambda^\ell< n \leq \lambda^{\ell+1},$ we have that 
\begin{align*}
&n^{-1+\alpha(1-p/2)+\gamma(p-1)} \sum_{j=1}^{\floor{n^{1-\alpha}}}  \abs{x_j}\mathbbm{1}_{\{ |x_j|\geq d^{1/2}C_{\xi}^{1/p}\tau_n)\}}\\
&\leq \lambda^{\ell(-1+\alpha(1-p/2)+\gamma(p-1)}\hspace{-0.2cm}\sum_{j=1}^{\floor{\lambda^{(\ell+1)(1-\alpha)}}} \hspace{-0.3cm}\abs{x_j}\mathbbm{1}_{\{ |x_j|\geq d^{1/2}C_{\xi}^{1/p}\lambda^{\gamma(\ell+1)}\ell\ln(\lambda)\}}    
\end{align*}

Therefore it immediately follows that
\begin{align*}
&\limsup_{n \rightarrow \infty} n^{-1+\alpha(1-p/2)+\gamma(p-1)} \sum_{j=1}^{n}  \abs{x_j}\mathbbm{1}_{\{ |x_j|\geq d^{1/2}C_{\xi}^{1/p}\tau_n)\}}\\ &\leq \frac{\lambda^{-1+\alpha(1-p/2)+\gamma(p-1)}}{\lambda^{(\ell+1)(-1+\alpha(1-p/2)+\gamma(p-1))}}\sum_{j=1}^{\lambda^{\ell+1}} \abs{x_j}\mathbbm{1}_{\{  |x_j|\geq d^{1/2}C_{\xi}^{1/p}\lambda^{\ell\gamma}\}}\lsim_{a.s.}d^{1/2}C_{\xi}^{1/p}.
\end{align*}

Thus we have proven the claim  \eqref{block_slln_claim1}.        Consequently, we have
        \[
         \left|\sum_{j=1}^{s_n} x_j-\sum_{j=1}^{s_n} \tilde{x}_j\right|=\bigO_{a.s.}(d^{1/2} C_{\xi}^{1/p}\lambda^{\ell\gamma}nm^{p/2-1}\tau_n^{1-p})
        \]
        Overall, we have shown that if $\tau_n$ and $m$ are chosen such that $m^{p/2}\tau_n^{1-p}\to 0$ then
        \[
\left|\sum_{j=1}^n \xi_j - \sum_{j=1}^{s_n} \tilde{x}_j \right| =\bigO_{a.s.} (d^{\theta} C_{\xi}^{2/p}(\log d)(n/m)^{1/2} \log(n/m))+\bigO_{a.s.}(d^{1/2}C_{\xi}^{1/p}nm^{p/2-1}\tau_n^{1-p}).
\]
Similarly, for the weak approximation we have that 
\begin{align}
\label{WE2}
\left|\sum_{j=1}^n \xi_j - \sum_{j=1}^{s_n} \tilde{x}_j \right| =\bigO_{P} (d^{1/2} C_{\xi}^{1/p}(n/m)^{1/2} \log(n/m))+\bigO_{P}(d^{1/2}C_{\xi}^{1/p}nm^{p/2-1}\tau_n^{1-p}).
\end{align}
Next, since $\E[\tilde{x}_j]=0$ and $\abs{\tilde{x}_j}\le  2d^{1/2}C_{\xi}^{1/p}\tau_n$, by Lemma \ref{lemma_ind_GA}  there exists a sequence of independent Gaussians $(\tilde{y}_j)_j$ such that $\tilde{y}_j\sim \mathcal{N}_d(0,\Cov(\tilde{x}_j)) $ for $j=1,\dots,{s_n}$ and
\begin{align*}
    \left| \sum_{j=1}^{s_n} \tilde{x}_j - \sum_{j=1}^{s_n} \tilde{y}_j \right| &=\bigO_{a.s.}(d^{\theta}\log^*(d) C_{\xi}^{1/p}\tau_n {\log n}).
\end{align*}
Furthermore, since a Brownian motion at integer times coincides with a sum of i.i.d. Gaussian
random variables,  there exists a standard Brownian motion $B$ such that
\[ \left| \sum_{j=1}^{s_n} \tilde{x}_j - s_n^{-1/2}\Cov \left(\sum_{j=1}^{s_n} \tilde{x}_j\right)^{1/2} B({s_n}) \right|=\bigO_{a.s.}(d^{\theta}\log^*(d) C_{\xi}^{1/p}\tau_n {\log n}),
\]
where we also used the fact that $(\tilde{x}_j)$ is an i.i.d. sequence. Note that by scale invariance of Brownian motion, $W:=(\sqrt{m} \ B(n/m))$ is also a Brownian motion. Now let $\tilde{Y}_n$ and $Y_n$ be defined as follows:
 \[
\tilde{Y}_n:=\frac{1}{\sqrt{n}}\Cov \left(\sum_{j=1}^{s_n} \tilde{x}_j\right)^{1/2} W(n) .
\]
Using the same Brownian motion we also define
\[
Y_n:=\frac{1}{\sqrt{n}}\Cov\left(\sum_{k=1}^n \xi_k\right)^{1/2}W(n),
\]

 Thus we have that 
\begin{align}
\label{one_dep_block_approx1}
\left|\sum_{j=1}^n \xi_j  - \tilde{Y}_n \right| 
 &\le \left|\sum_{j=1}^n \xi_j - \sum_{j=1}^{s_n} \tilde{x}_j \right| + \left| \sum_{j=1}^{s_n} \tilde{x}_j - \tilde{Y}_n \right|\\ &= \bigO_{a.s.} (d^{\theta} C_{\xi}^{2/p}(\log d)\left(\frac{n}{m}\right)^{1/2} \log(n/m))\label{one_dep_block_approx11}\\
&+\bigO_{a.s.}(d^{1/2}C_{\xi}^{1/p}nm^{p/2-1}\tau_n^{1-p})\\ &+\bigO_{a.s.}(d^{\theta}\log d   C_{\xi}^{1/p}\tau_n {\log n}) \label{one_dep_block_approx2}.
\end{align}

%where $\tau_n$ and $m$ are chosen such that $m^{p/2}\tau_n^{1-p}\to 0$. \\

Therefore we have that
\begin{align*}
|Y_n-\tilde{Y}_n|&=  \abs{ \left(\Cov\left(\sum_{k=1}^n \xi_k\right)^{1/2}-\Cov\left(\sum_{j=1}^{s_n} \tilde{x}_j\right)^{1/2}\right) \frac{1}{\sqrt{n}}W(n)} \\\
\le & \left|\Cov\left(\sum_{k=1}^n \xi_k\right)^{1/2}-\Cov\left(\sum_{j=1}^{s_n} \tilde{x}_j\right)^{1/2}\right| \frac{1}{\sqrt{n}}\abs{W(n)}\\
&\le \frac{1}{(\sigma_0 n)^{1/2}}\left|\Cov\left(\sum_{k=1}^n \xi_k\right)-\Cov\left(\sum_{j=1}^{s_n} \tilde{x}_j\right)\right|\frac{1}{\sqrt{n}}\abs{W(n)},
%&\le \frac{\abs{Z}}{(\sigma_0 n)^{1/2}} \left[2\sqrt{\E\left[\left|\sum_{k=1}^n \xi_k\right|^2\right]}\sqrt{\E\left[\left|\sum_{k=1}^n \xi_k-\sum_{j=1}^{s_n} \tilde{x}_j\right|^2\right]}+\E\left[\left|\sum_{k=1}^n \xi_k-\sum_{j=1}^{s_n} \tilde{x}_j\right|^2\right]\right],
\end{align*}
where the second inequality follows  by Ando--{van Hemmen}'s inequality, Lemma \ref{Ando_inequality}, using the assumption that the smallest eigenvalue of $\frac{1}{\sqrt{n}}\sum_{k=1}^n \xi_k$ is bounded below by $\sigma_0$. Following the proof of Proposition \ref{GA_iid}, we see that
\begin{flalign*}
\left|\Cov\left(\sum_{k=1}^n \xi_k\right)-\Cov\left(\sum_{j=1}^{s_n} \tilde{x}_j\right)\right| \le & \frac{1}{(\sigma_0 n)^{1/2}} 2\sqrt{\E\left[\left|\sum_{k=1}^n \xi_k\right|^2\right]}\sqrt{\E\left[\left|\sum_{k=1}^n \xi_k-\sum_{j=1}^{\lfloor n/m\rfloor} \tilde{x}_j\right|^2\right]}\\
& +\frac{1}{(\sigma_0 n)^{1/2}} \E\left[\left|\sum_{k=1}^n \xi_k-\sum_{j=1}^{\lfloor n/m\rfloor} \tilde{x}_j\right|^2\right].
\end{flalign*}

Note that by a coordinate-wise application of the law of iterated logarithm we obtain
\begin{align}
\label{lil_one}
\abs{W(n)}=\left(\sum_{i=1}^dW^2_{i}(n)\right)^{1/2}=\bigO_{a.s.}(\sqrt{dn}\log \log(n)).
\end{align}

By Lemma \ref{order_norm_power} and Jensen's inequality we obtain
\[
\sqrt{\E\left[\left|\sum_{k=1}^n \xi_k\right|^2\right]} \lsim {d^{1/2}C_{\xi}^{1/p}n^{1/2}}.
\]

Furthermore, we have
\begin{align*}
\E\left[\left|\sum_{j=1}^{\lfloor n/m\rfloor} x_j-\sum_{j=1}^{\lfloor n/m\rfloor} \tilde{x}_j\right|^2\right]
        %& = \E\left[\left|\sum_{j=1}^{s_n} x_j\mathbbm{1}_{\{|x_j|>d^{1/2}\tau_n)\}}- \sum_{j=1}^{\lfloor n/m\rfloor}\E[x_j\mathbbm{1}_{\{|x_j|>d^{1/2}\tau_n\}}]\right|^2\right]\\
        &=\E\left[\left|\sum_{j=1}^{\lfloor n/m\rfloor} \left(x_j\mathbbm{1}_{\{|x_j|>d^{1/2}C_{\xi}^{1/p}\tau_n)\}}- \E[x_j\mathbbm{1}_{\{|x_j|>d^{1/2}C_{\xi}^{1/p}\tau_n\}}]\right)\right|^2\right]\\
        &=\sum_{j=1}^{\lfloor n/m\rfloor}\E\left[ \left|x_j\mathbbm{1}_{\{|x_j|>d^{1/2}C_{\xi}^{1/p}\tau_n)\}}- \E[x_j\mathbbm{1}_{\{|x_j|>d^{1/2}C_{\xi}^{1/p}\tau_n\}}]\right|^2\right]\\
        %&\le \sum_{j=1}^{\lfloor n/m\rfloor}\E\left[ \left|x_j\mathbbm{1}_{\{|x_j|>d^{1/2}\tau_n)\}}\right|^2\right]\\
        &\le \sum_{j=1}^{\lfloor n/m\rfloor}\E\left[ \left|x_j\right|^2\mathbbm{1}_{\{|x_j|>d^{1/2}C_{\xi}^{1/p}\tau_n)\}}\right].
\end{align*}
        
Hence it follows that 

\begin{align*}
\E\left[\left|\sum_{k=1}^n \xi_k-\sum_{j=1}^{\lfloor n/m\rfloor} \tilde{x}_j\right|^2\right]&\le 2\sum_{j=1}^{\lfloor n/m\rfloor} \E[|\xi_{jm}|^2] + 2\sum_{j=1}^{\lfloor n/m\rfloor}\E[\abs{x_j}^2\mathbbm{1}_{\{|x_j|>d^{1/2}C_{\xi}^{1/p}\tau_n\}}]\\
%&= 2\sum_{j=1}^{\lfloor n/m\rfloor} \E[|\xi_{jm}|^2] + 2\sum_{j=1}^{\lfloor n/m\rfloor}\E\left[d\tau_n^2\frac{\abs{x_j}^2}{d\tau_n^2}\mathbbm{1}_{\{|x_j|>d^{1/2}\tau_n\}}\right]\\
&\le 2\sum_{j=1}^{\lfloor n/m\rfloor} \E[|\xi_{jm}|^2] + 2C_{\xi}^{2/p}d\tau_n^2\sum_{j=1}^{\lfloor n/m\rfloor}\E\left[\frac{\abs{x_j}^p}{d^{p/2}C_{\xi} \tau_n^p}\mathbbm{1}_{\{|x_j|>d^{1/2}C_{\xi}^{1/p}\tau_n\}}\right]\\
&=\bigO(d C_{\xi}^{2/p}n m^{-1})+\bigO(dC_{\xi}^{2/p}nm^{p/2-1}\tau_n^{2-p}).
\end{align*}

This gives 
\begin{align}
&|Y_n-\tilde{Y}_n| \le \frac{\abs{W(n)}}{n(\sigma_0)^{1/2}} \left[2\sqrt{\E\left[\left|\sum_{k=1}^n \xi_k\right|^2\right]}\sqrt{\E\left[\left|\sum_{k=1}^n \xi_k-\sum_{j=1}^{\lfloor n/m\rfloor} \tilde{x}_j\right|^2\right]}+\E\left[\left|\sum_{k=1}^n \xi_k-\sum_{j=1}^{\lfloor n/m\rfloor} \tilde{x}_j\right|^2\right]\right] \nonumber \\
& = \frac{\bigO_{a.s.}(\sqrt{d} \log(n))}{(\sigma_0 n)^{1/2}}\Big[\bigO(d^{1/2}C_{\xi}^{1/p}n^{1/2})\sqrt{\bigO(d C_{\xi}^{2/p}n m^{-1})+\bigO(dC_{\xi}^{2/p}nm^{p/2-1}\tau_n^{2-p})}+\bigO(d C_{\xi}^{2/p}n m^{-1}) \Big] \nonumber\\
&+\frac{\bigO_{a.s.}(\sqrt{d} \log(n))}{(\sigma_0 n)^{1/2}}\Big[\bigO(dC_{\xi}^{2/p}nm^{p/2-1}\tau_n^{2-p})\Big]\nonumber \\
& =\bigO_{a.s.}(d^{3/2}C_{\xi}^{2/p}n^{1/2}m^{-1/2}\log(n))+\bigO_{a.s.}(d^{3/2}C_{\xi}^{2/p}n^{1/2}m^{p/4-1/2}\tau_n^{\frac{2-p}{2}}\log(n))\\
&+ \bigO_{a.s.}(d^{3/2}C_{\xi}^{2/p}n^{1/2}m^{p/2-1}\tau_n^{2-p}\log(n)) \nonumber\\& =\bigO_{a.s.}(d^{3/2}C_{\xi}^{2/p}n^{1/2}m^{-1/2}\log(n))+\bigO_{a.s.}(d^{3/2}C_{\xi}^{2/p}n^{1/2}m^{p/4-1/2}\tau_n^{\frac{2-p}{2}}\log(n)). \nonumber
\end{align}
Here the first inequality follows from \eqref{lil_one} and the last equality follows since $\tau_n$ and $m$ should be chosen such that $m^{p/2-1}\tau_n^{2-p}$ tends to zero and hence $m^{p/2-1}\tau_n^{2-p} \ll m^{p/4-1/2}\tau_n^{1-p/2}$. \\

Now choosing the block size $m$ and the truncation level $\tau_n$ as follows
\[
m=\lfloor n^{\alpha}\rfloor, \quad \alpha:=\frac{p-2}{2(p-1)},\quad  \tau_n:=n^{\alpha/2-\alpha/p+1/p},
\]

we have
\[
\E\left[\left|\sum_{k=1}^n \xi_k-\sum_{j=1}^{\lfloor n/m\rfloor} \tilde{x}_j\right|^2\right]=\bigO(d C_{\xi}^{2/p}n^{\frac{p}{2(p-1)}})
\]
and consequently
\begin{align}
    \label{gaus_approx_one_dep}
    |Y_n-\tilde{Y}_n|&=\bigO_{a.s.}(d^{3/2}C_{\xi}^{2/p}n^{1/2}m^{-1/2}\log(n))+\bigO_{a.s.}(d^{3/2}C_{\xi}^{2/p}n^{1/2}m^{p/4-1/2}\tau_n^{\frac{2-p}{2}}\log(n)) \nonumber \\
    &= \bigO_{a.s.}\left(d^{3/2}C_{\xi}^{2/p}n^{\frac{1}{4}+\frac{1}{4(p-1)}}\log(n)\right)+\bigO_{a.s.}\left(d^{3/2}C_{\xi}^{2/p}n^{\frac{1}{2(p-1)}})\log(n)\right) \nonumber\\
     &= \bigO_{a.s.}\left(d^{3/2}C_{\xi}^{2/p}\ n^{\frac{1}{4}+\frac{1}{4(p-1)}}\log(n)\right).
\end{align}
Note that since $p>2$ also we have that for the specified choice of $\tau_n$ and $m$ that \eqref{positive_condition} is satisfied. Furthermore, from \eqref{one_dep_block_approx11}-\eqref{one_dep_block_approx2} we see that 
\begin{align}
\label{one_dep_block_approx_2}
\left|\sum_{k=1}^n \xi_k  - \tilde{Y}_n\right| = \bigO_{a.s.}( C_{\xi}^{1/p} d^{3/2}  n^{\frac{1}{4}+\frac{1}{4(p-1)}}\log n)
\end{align}

Note that since $\{\xi_k\}$
is a one dependent identically distributed sequence, we have that
$\Cov\left( \sum_{k=1}^n \xi_k\right)= n\Var(\xi_1)+ (n-1)\left(\Cov(\xi_1,\xi_2)+ \Cov(\xi_1,\xi_2)^T)\right)$.  
Consequently,

\[
Y_n= \left(\Var(\xi_1)+  \frac{n-1}{n}\left(\Cov(\xi_1,\xi_2)+ \Cov(\xi_1,\xi_2)^T\right)\right)^{1/2}  W(n).
\]

Furthermore, we have that the asymptotic covariance matrix is given by $\Sigma_\xi=\Var(\xi_1)+ \Cov(\xi_1,\xi_2)+ \Cov(\xi_1,\xi_2)^T.$ This gives us
\begin{align}
    \label{asymp_variance_diff}
    \abs{ Y_n -\Sigma^{1/2}_\xi W(n)} &\leq \frac{1}{n(\sigma_0 )^{1/2}} \abs{\Cov(\xi_1,\xi_2)+ \Cov(\xi_1,\xi_2)^T } \abs{W(n)} \nonumber\\
    & = \frac{\bigO_{a.s.}(\sqrt{d}\log(n))}{(\sigma_0 n)^{1/2}}  \abs{\Cov(\xi_1,\xi_2)}, \end{align}
where the first inequality follows by Ando--{van Hemmen}'s inequality, Lemma \ref{Ando_inequality}. We note that since \eqref{asymp_variance_diff} is of smaller asymptotic magnitude that the other approximation terms.

Overall, from (\ref{gaus_approx_one_dep}), (\ref{one_dep_block_approx_2}), \eqref{asymp_variance_diff}  we have
\begin{align*}
\left|\sum_{k=1}^n \xi_k  -\Sigma^{1/2}_\xi W(n)\right| 
&= \bigO_{a.s.} \left(  \left(   d^{25/4+\theta_0}C_{\xi}^{2/p} \right) \left(\frac{\sigma_d}{\sigma_0}\right)^{1/2}n^{\frac{1}{4}+\frac{1}{4(p-1)}}\log(n)\right).
\end{align*}

This finishes the proof for the strong Gaussian approximation case. The weak Gaussian approximation case follows from the same argument with minor differences. Following the proof of Proposition \ref{GA_iid}, we will use \cite[Lemma 1]{eldan2020clt} to control the difference between the Gaussian approximation of the truncated and true blocks. Now note that

 \begin{align*}
\tilde{Y}_n:&= \frac{1}{\sqrt{n}}\Cov \left(\sum_{j=1}^{s_n} \tilde{x}_j\right)^{1/2} W(n)  \overset{d}{=} \Cov \left(\sum_{j=1}^{s_n} \tilde{x}_j\right)^{1/2}Z =: \tilde{Y}_n'.
\end{align*}
with $Z$ denoting a standard $d$--dimensional Gaussian random variable.
Using the same Gaussian random variable $Z$ we  define
\[
{Y}_n:=\frac{1}{\sqrt{n}}\Cov\left(\sum_{k=1}^n \xi_k\right)^{1/2}W(n)\overset{d}{=}\Cov\left(\sum_{k=1}^n \xi_k\right)^{1/2}Z=:Y_n'.
\]

In \eqref{diff_cov_1}, we have shown that
\[
|Y'_n-\tilde{Y}'_n|= \bigO_P \left(\frac{1}{\sqrt{n}}\abs{ {\Sigma}_n-\tilde{\Sigma}_n} \right)
\]
with
\[ \Sigma_n=\Cov\left(\sum_{k=1}^n \xi_k\right) \ \textrm{and} \ \tilde{\Sigma}_n=\Cov\left(\sum_{j=1}^{\lfloor n/m\rfloor} \tilde{x}_j\right).
\]
With the same choice of block size and truncation level, this gives us

\begin{align*}
|Y_n-\tilde{Y}_n|
= \bigO_{P}\left(|Y'_n-\tilde{Y}'_n|\right)=\bigO_{P}\left(dC_{\xi}^{2/p}\ n^{\frac{1}{4}+\frac{1}{4(p-1)}}\right).
\end{align*}
Note that the calculations are exactly the same as in the strong approximation case; \eqref{gaus_approx_one_dep}, without the $\sqrt{d}\log\log n$ factor from the Brownian motion. Similarly, we obtain
\begin{align*}
    \abs{Y_n -\Sigma^{1/2}_\xi W(n)}= \bigO_P \left(\frac{1}{\sqrt{n}}\abs{ {\Sigma}_n-n \Sigma_\xi} \right)= \bigO_P \left(\frac{1}{\sqrt{n}}\abs{  \Sigma_\xi} \right)
\end{align*}
Combining these results with \eqref{WE1} and \eqref{WE2} we see that
\begin{align}
\label{weak_one_dep_block_approx_2}
\left|\sum_{k=1}^n \xi_k  - \Sigma^{1/2}_\xi W(n)\right| = \bigO_{P}( dC_{\xi}^{2/p}   n^{\frac{1}{4}+\frac{1}{4(p-1)}}\log n)
\end{align}
\end{proof}

\begin{remark}
Note that from the proof of Theorem \ref{GA_one_dep} we can also obtain
$$\left|\sum_{k=1}^n \xi_k- \Sigma_\xi^{1/2}W(n)\right|=    \left\{
            \begin{array}{lll}                                          
                 \displaystyle \bigO_{P} \left(  \left(\sqrt{d}\ (\E \abs{\xi_1}^p)^{1/p} \vee (\E \abs{\xi_1}^p)^{2/p} \right)  n^{\frac{1}{4}+\frac{1}{4(p-1)}}\log(n)\right)  \\
                                   \textcolor{white}{.}   \\ 
              \displaystyle   \bigO_{a.s.} \left(  \left(   d^{25/4+\theta_0}\log^*(d)C_{\xi}^{1/p} \vee   d^{3/2} C_{\xi}^{2/p} \right) \left(\frac{\sigma_d}{\sigma_0}\right)^{1/2}n^{\frac{1}{4}+\frac{1}{4(p-1)}}\log(n)\right)
                 
                \end{array}.
              \right.   $$

Note that since $\E \abs{\xi_1^p} \leq d^{p/2} \sup_{i \in \{1,\cdots,d\}} \E\abs{\xi_{1i}}^p$, we obtain the same weak approximation error.
\end{remark}

\subsection{Proofs of Section \ref{section:3}}
\subsubsection{Proof of Theorem \ref{theorem_approx_discrete_multi}}
\begin{proof}
Since the minorisation condition holds for some $m_0>1$, we have by Proposition \ref{semi_reg_discrete} there exists a sequence of randomized stopping times $\{R_k\}$ such that we can define 
 $$ \xi_k := \sum_{t=R_{k-1}}^{R_k-1}(f(X_t)-\pi(f)),\ \ k \geq 1,$$ 
\noindent
such that $(\xi_k)_{k \in \mathbbm{N}}$ 
is a stationary one-dependent sequence under $\mathbb{P}_\nu.$  Furthermore, by \cite[Theorem 3.2]{asmussen},  we have   that 
  %$\EE_\nu \xi_1=0 $.
  %
$$\EE_\nu \xi_1=\EE_\nu \sum_{t=0}^{R_1} \{f(X_t)-\pi(f) \}\ =\mu_\varrho \cdot \pi(f-\pi(f))=0. $$
 Moreover, under Assumption A\ref{assumption_moment_condition}.1,  we have that 
    \begin{equation}
    \label{exp_block_moment_proof}
\sup_{i\in \{1,\dots,d\}}\EE_\nu \left[\left(\sum_{t=0}^{R_1} \abs{f_i(X_t)}\right)^{p_0} \right]  \   \lsim
 \alpha^{-1} \left( \mathbb{E}_\nu [h(R_1]\right)^{\varepsilon/p_0}\sup_{i\in \{1,\dots,d\}} \pi(\abs{f_i}^{p_0+\varepsilon}),         \end{equation}
where $h(x)=x^{q}$ with $q=\eta/(1-\eta)$ and $p_0$ is defined in \eqref{p_0definition} in the case of a polynomial drift condition and $h(x)=e^{tx}$ with $\abs{t}< -\ln(\lambda)m_0$ and $p_0=p$ in the case of an exponential drift condition. By Theorem \ref{GA_one_dep}, we can redefine $(\xi_k)_k$ on a new probability space on which we can also construct a standard $d$-dimensional Brownian motion $W$ such that
\begin{equation}
\label{S1}
\abs{\sum_{k=1}^n \xi_k-n\mathbb{E}_\nu \xi_1-W(n)}=    \left\{
            \begin{array}{lll}                              %            
                 \displaystyle         \bigO_{P}\left( d \tilde{C}_{N}^{2/p} \left(\frac{\sigma_d}{\sigma_0}\right) T^{\frac{1}{4}+\frac{1}{4(p_0-1)}} \right),   \\
                                   \textcolor{white}{.}   \\ 
              \displaystyle         \bigO_{a.s.}\left( d^{25/4+\theta_0}\log^* (d)\tilde{C}_{N}^{2/p} \left(\frac{\sigma_d}{\sigma_0}\right)^{1/2} T^{\frac{1}{4}+\frac{1}{4(p_0-1)}} \right),  
                 
                \end{array},
              \right.   
         \end{equation}

where $$\tilde{C}_{N}=\alpha^{-1} \left( \mathbb{E}_\nu [h(R_1]\right)^{\varepsilon/p_0} \sup_{i\in \{1,\dots,d\}} \pi(\abs{f_i}^{p_0+\varepsilon}).$$

In the remainder of the proof, we will consider the strong Gaussian approximation case, since the weak approximation case follows from similar and slightly easier arguments. Note, that since we assume that $X_0\sim \pi$,  we need to discard the first cycle. Similarly, in order to apply \eqref{S1}, we will need to discard the last cycle. Let $\eta(T)$  denote the number of regenerations of the $m_0$--skeleton chain up to time $T$, namely 
 $$\eta(T)=\max\{k: R_k \leq T \}.$$

It immediately follows that 
%$$\sum_{t=0}^{T}(f(X_t)-\pi(f)) \ =\sum_{k=1}^{\eta(T)}\xi_k+ \sum_{R_{\eta(T)}}^T\{f(X_k)-\pi(f) \} \ ds.$$
%Consequently, we have that 
\begin{equation}
\label{remainder2}
\abs{\sum_{t=0}^{T}(f(X_t)-\pi(f)) \ -\xi_0 -\sum_{k=1}^{\eta(T)}\xi_k}=  \abs{\xi_0}+ \abs{\sum_{t=R_{\eta(T)}}^T(f(X_t)-\pi(f) ) },
\end{equation}
where \[
\xi_0:= \sum_{t=0}^{R_0}(f(X_t)-\pi(f)).
\]
By a Borel--Cantelli argument, we will show that
\begin{equation}
\label{BorelCantONE}
 \abs{\sum_{t=R_{\eta(T)}}^T(f(X_t)-\pi(f) ) }=\bigO_{a.s.} \left(\alpha^{-1/p}d^{1/2}n^{1/p_0}h(R_1)^{\varepsilon/p_0^2}\right)
\end{equation}

and  
\begin{equation}
\label{BorelCantTWO}
\abs{\xi_0}=\bigO_{a.s.}\left( d^{1/2}n^{1/p_0}h(R_1)^{\varepsilon/p_0^2}\right).
\end{equation}
In order to show \eqref{BorelCantONE}, note that
\begin{equation}
\label{triv_bound}
\abs{\sum_{t=R_{\eta(T)}}^T(f(X_t)-\pi(f) ) } \le \abs{\sum_{t=R_{\eta(T)}}^{R_{\eta(T)+1}}g(X_t)},  \end{equation}

where $g(x)=(\abs{f_1(x)-\pi(f_1)}, \dots, \abs{f_d(x)-\pi(f_d)})^T.$ Now let $\varepsilon>0$ be given and introduce the event $$A_{n}=\left\{\abs{\sum_{t=R_{n}}^{R_{n+1}}g(X_t)}> \alpha^{-1/p}d^{1/2}n^{1/p_0}h(R_1)^{\varepsilon/p_0^2}\right\}.$$
%where $h(x)=x^{q}$ with $q=\eta/(1-\eta)$ in the case of a polynomial drift condition and $h(x)=e^{tx}$ with $\abs{t}< -\ln(\lambda)m_0$ in the case of an exponential drift condition.
By Markov's inequality, it follows that the introduced sequence of events satisfies 
\begin{align*}
\sum_{n=1}^\infty \mathbb{P}_\nu\left(A_{n}\right)&\leq \sum_{n=1}^\infty \mathbb{P}_\nu\left(\abs{\sum_{t=R_{n}}^{R_{n+1}}g(X_t)}> \alpha^{-1/p}d^{1/2}n^{1/p_0}h(R_1)^{\varepsilon/p_0^2}\right)\\
&=\sum_{n=1}^\infty \mathbb{P}_\nu\left(\abs{\sum_{t=0}^{R_{1}}g(X_t)}^{p_0}> \alpha^{-1}d^{p_0/2}n h(R_1)^{\varepsilon/p_0}\right)\\
&\leq \mathbb{E}_\nu \left[\abs{\sum_{t=0}^{R_{n1}}g(X_t)}^p\right] (\alpha^{-1}d^{p_0/2}n h(R_1)^{\varepsilon/p_0})^{-1} <C,
\end{align*}
where $C$ is some universal constant and the one for the last inequality follows by \eqref{moment_norm_exp_p} of Lemma \ref{moments_regen_exp} and \ref{moments_regen_poly} in the case of a geometric and polynomial drift condition respectively. By the Borel--Cantelli lemma it follows that $\mathbb{P}_{\nu}(\limsup A_{n})=0$.
Consequently, we have that $\mathbb{P}_{\nu}(\liminf A_{n}^c)=1$. 
Hence it follows that 
\begin{align}
\label{block_asymp}
    \abs{\sum_{t=R_{n}}^{R_{n+1}}g(X_t)}= \bigO_{a.s.} (\alpha^{-1/p}d^{1/2}n^{1/p_0}h(R_1)^{\varepsilon/p_0^2}).
\end{align} 
    Moreover, since $\eta(T)$ is almost surely increasing and $\eta(T)=\bigO_{a.s.} (T)$, it follows that
$$\abs{\sum_{t=R_{n}}^{R_{n+1}}g(X_t)} = \bigO_{a.s.} (\alpha^{-1/p}d^{1/2}{\eta(T)}^{1/p_0}h(R_1)^{\varepsilon/p_0^2})= \bigO_{a.s.} (\alpha^{-1/p}d^{1/2}n^{1/p_0}h(R_1)^{\varepsilon/p_0^2}).$$
Hence the claim formulated in (\ref{BorelCantONE}) directly follows. The claim \eqref{BorelCantTWO} follows completely analogously to \eqref{block_asymp}. 
Since we have that $\mathbb{E}_\nu R_1^{p_0}< \infty$, by \cite[Theorem 2.4]{weighted_approx} we can construct a Brownian motion $\tilde{W}$ such that
\begin{equation}
    \abs{\eta(T)-\frac{T}{\mu_{\varrho}}-\frac{\sigma_{\varrho}}{\mu_{\varrho}^{3/2}}\tilde{W}_T}=o_{a.s.}(T^{1/{p_0}}),
\end{equation}
\noindent
By the law of iterated logarithm for Brownian motion we obtain 
\begin{equation}
\label{standardarguments2}
\eta(T)=\frac{T}{\varrho}+\bigO_{a.s.} (\sqrt{T \log \log T})\quad \textrm{a.s.   } 
\end{equation}
\noindent
Furthermore, by (\ref{S1}), there exists an almost surely finite random variable $C$ such that for almost all sample paths $\omega$ we have that $\textrm{for all } n\geq N_0\equiv N_0(\omega)$ we have that 
\begin{equation}
\label{split1}
\left.\frac{1}{\left(\left(   d^{25/4+\theta_0}\log^* (d)\tilde{C}_{N}^{2/p} \left(\frac{\sigma_d}{\sigma_0}\right)^{1/2}\right)   n^{\frac{1}{4}+\frac{1}{4(p_0-1)}}\log n\right)}\abs{\sum_{k=1}^n \xi_k(\omega)-W(n)}\right. < C(\omega)
\end{equation}
 Since $\eta(T)$ is almost surely increasing and tends to infinity, we have that for almost every sample path $\omega$ there exists a  $T_0\equiv T_0(\omega)$ such that $\eta(T)(\omega)\geq N_0$ for all $T\geq T_0$. Hence we obtain from (\ref{split1}) that
 \begin{equation}
 \label{split2}
 \left. \limsup_{T\rightarrow \infty} \frac{\left\lvert \sum_{k=1}^{\eta(T)} \xi_k - \Sigma_\xi^{1/2} W(\eta(T))\right\lvert }{\left(  d^{25/4+\theta_0}\log^* (d)\tilde{C}_{N}^{2/p} \left(\frac{\sigma_d}{\sigma_0}\right)^{1/2} \right)   \eta(T)^{\frac{1}{4}+\frac{1}{4(p_0-1)}}\log \eta(T)}
\right. < C \quad \textrm{a.s. }.
\end{equation}
We see that (\ref{split2}) can be reformulated as
\begin{align}
\label{stoch_invariance}
\left\lvert \sum_{k=1}^{\eta(T)} \xi_k -(\Sigma_\xi)^{1/2} W(\eta(T))\right\lvert&=\bigO_{a.s.}\left(d^{25/4+\theta_0}\log^* (d)\tilde{C}_{N}^{2/p} \left(\frac{\sigma_d}{\sigma_0}\right)^{1/2}  \eta(T)^{\frac{1}{4}+\frac{1}{4(p_0-1)}}\log \eta(T)\right) \nonumber \\
&= \bigO_{a.s.}\left(d^{25/4+\theta_0}\log^* (d)\tilde{C}_{N}^{2/p} \left(\frac{\sigma_d}{\sigma_0}\right)^{1/2}   T^{\frac{1}{4}+\frac{1}{4(p_0-1)}}\log T\right)
\end{align}
Here the second equality follows by  (\ref{standardarguments2}). 
By a coordinate-wise application of  \cite[Theorem 1.2.1]{sip_boek}) it follows that 
\begin{align}
\label{browniandifference}
\big\lvert W({\eta(T)})-W(T/\mu_\varrho) \big\lvert &\leq  \left(\sum_{i=1}^d \lvert W_i({\eta(T)})-W_i(T/\mu_\varrho) \big\lvert^2 \right)^{1/2} \nonumber \\
%& \le d  \ \max_{i \in [d]} \abs{W_i({\eta(T)})-W_i(T/\mu_\varrho)}\\
&=\bigO_{a.s.}( \sqrt{d}  T^{1/4}\log T).
\end{align}

From \eqref{remainder2} and combining results (\ref{BorelCantONE}), \eqref{BorelCantTWO},  (\ref{stoch_invariance}), and (\ref{browniandifference}) the asserted theorem follows. The weak approximation case follows completely analogously.
\end{proof}

\subsubsection{Proof of Theorem \ref{theorem_approx_discrete_one}}
\begin{proof}
We will first assume that $X_0\sim \nu$ and consider the case where the geometric drift condition holds. Since the minorisation condition holds with $m_0=1$, we have by Proposition \ref{semi_reg_discrete} there exists a sequence of randomized stopping times $\{R_k\}$ such that we can define
 $$ \xi_k := \sum_{t=R_{k-1}}^{R_k-1}(f(X_t)-\pi(f)),\ \ k \geq 1,$$ 
\noindent
such that $(\xi_k)_{k \in \mathbbm{N}}$ 
is a mean-zero independent and identically distributed sequence under $\mathbb{P}_\nu.$ 

By Lemma \ref{moments_regen_exp} we have that for
any $t$ with $\abs{t}\leq \ln(1/\lambda) /m_0$ we have that 
\begin{equation}
\label{exp_regen_moment_proof}
\E_\nu [e^{t R_1}]  \lsim \frac{b}{\alpha \lambda (1-\lambda)}
\end{equation}
and 
    \begin{equation}
    \label{exp_block_moment_proof2}
\sup_{i\in \{1,\dots,d\}}\EE_\nu \left[\left(\sum_{t=0}^{R_1} \abs{f_i(X_t)}\right)^{p_0} \right]  \   \lsim
 \alpha^{-1} \left( \mathbb{E}_\nu [h(R_1]\right)^{\varepsilon/p_0} \sup_{i\in \{1,\dots,d\}} \pi(\abs{f_i}^{p+\varepsilon}),         \end{equation}

Note that from \eqref{exp_regen_moment_proof} we also have that
\begin{equation}
\label{p_moment_time}
\E_\nu [  R_1^{p}]  \leq \left(\frac{p}{\ln(1/\lambda)e}\right)^p\left(\frac{b}{\alpha \lambda (1-\lambda)}\right). 
\end{equation}

\noindent
Define $(\varrho_k)_{k \in \mathbb{N}}$ as $\varrho_k=R_k-R_{k-1}$. Note that this is an i.i.d sequence and let $\mu_\varrho$ and $\sigma_\varrho^2$ denote the respective mean and variance. The sequence of $(d+1)$-random vectors $(\xi_k,\varrho_k)$ are independent dependent and identically distributed.  Introduce the sequence $(\tilde{\xi}_k)_{k \in \mathbb{N}}$ as $\tilde{\xi}_k=\xi_k-\beta (\varrho_k- \mu_\varrho)$. Note that the sequence of random vectors $(\tilde{\xi}_k,\varrho_k)$ are also independent and identically distributed.  If we choose $\beta=\Covnu(\xi_1,\varrho_1)/\sigma^2_\varrho$, then it immediately follows that $\rho_k$ and every component of $\tilde{\xi}_k$ are uncorrelated. Let $\tilde{\Sigma}$ denote the limiting covariance of $\frac{1}{n}\sum_k (\tilde{\xi}_k,\varrho_k).$ We see that $\tilde{\Sigma}$ is given by 
\begin{equation}
\label{v1}\Varnu
\left(\begin{array}{ c} 
   \tilde{\xi}_1 \\ 
      \varrho_1  
    \end{array}  
\right)= \Varnu\left( \begin{array}{c} 
      \xi_1-\beta (\varrho_1- \mu_\varrho)  \\ 
      \varrho_1  
    \end{array}  
\right)= \left( 
    \begin{array}{c c} 
      {V}_\xi & \underline{0} \\ 
       
      \underline{0}^T & \sigma_\varrho^2 
    \end{array}\right), 
\end{equation}

\noindent
where by definition of $\beta$ we see that the off-diagonal entries of block matrix (\ref{v1}) are $\underline{0}$, which denotes a $d$-dimensional vector of zeros, and ${V}_\xi$ is given by
\begin{equation}
\Varnu(\xi_1-\beta(\varrho_1-\mu_\varrho))=\Varnu (\xi_1)+ \beta \beta^T \sigma^2_\varrho-2 \beta \beta^T \sigma^2_\varrho =\Varnu (\xi_1)- \beta \beta^T \sigma^2_\varrho.
\end{equation}
%%By \cite[Theorem 9]{sigman}, we have that 
%%\begin{equation}
 %\frac{\Varnu(\xi_1)}{\mu_\varrho}=\Sigma_f.
%\end{equation}
\noindent
From \eqref{exp_block_moment_proof2},
\eqref{p_moment_time}, and a coordinate-wise application of Cauchy-Schwarz, we see that the conditions of Proposition \ref{GA_iid} are satisfied with 
\begin{align*}
\sup_{i\in \{1,\dots,d\}}\E_\nu \tilde{\xi}^p_{1i}&\leq\left(1 +  \left(\frac{p}{\ln(1/\lambda)e}\right)^p\left(\frac{b}{\alpha \lambda (1-\lambda)}\right) \right) \alpha^{-1}\left(\frac{b}{\alpha(1-\lambda)} \right)^{\varepsilon/p} \sup_{i\in \{1,\dots,d\}} \pi(\abs{f_i}^{p+\varepsilon})\\ &\leq 2 \alpha^{-1}\left(\frac{b}{\alpha(1-\lambda)} \right)^{1+\varepsilon/p} \left(\frac{p}{\ln(1/\lambda)e}\right)^p\sup_{i\in \{1,\dots,d\}}\pi(\abs{f_i}^{p+\varepsilon}) =:C_{\xi}
\end{align*}

Applying the multivariate Gaussian  approximation given in Proposition \ref{GA_iid}, we have that

$$\abs{\sum_{k=1}^n\left( 
    \begin{array}{c } 
      \tilde{\xi}_k \\ 
      \varrho_k 
    \end{array}\right)-\sum_{k=1}^n \left( 
    \begin{array}{c } 
      Y^1_k \\ 
      Y^2_k 
    \end{array}\right)}=\bigO_{a.s.}\left(   \tilde{C}_{\xi,d} \Psi_n \right),$$
where $$\Psi_n= n^{1/p}\log n,$$
and
$$\tilde{C}_{\xi,d}:=\left(d^{25/4+\theta_0}\log^* (d)C_{\xi}^{1/p}\vee   d^{3/2} C_{\xi}^{2/p} \right) \left(\frac{\sigma_d}{\sigma_0}\right)^{1/2},$$
and $\sum_{k=1}^n\begin{psmallmatrix} Y^1_k\\ Y^2_k\end{psmallmatrix}$ has a Gaussian distribution with the same mean and covariance matrix as $\sum_{k=1}^n\begin{psmallmatrix} \tilde{\xi}_k\\\varrho_k\end{psmallmatrix}$. Given the block structure of  $\tilde{\Sigma}$ given in (\ref{v1}), we see that $Y_2^k$ is independent of all components of $Y^1_k.$ By the Skorohod embedding theorem, two independent Brownian motions $B_1$  and $B_2$ can be constructed, where $B_1$ is a $d$-dimensional Brownian motion and $B_2$ is one-dimensional, such that they coincide with the Gaussian sequences at all integer time points. Therefore we have that

\begin{equation}
\abs{\sum_{k=1}^n\xi_k-\beta(\sum_{k=1}^n \varrho_k- \mu_\varrho)- {V}_\xi^{\frac{1}{2}}B_1}=\bigO_{a.s.}\left(  \tilde{C}_{\xi,d}\Psi_n  \right)     \label{s1}
\end{equation}

and 
\begin{equation}
\abs{R_n-n\mu_\varrho-\sigma_\varrho B_2(n)}=\bigO_{a.s.}\left( \tilde{C}_{\xi,d} \Psi_n \right).
\label{s2}
\end{equation}

\noindent
%Note that in (\ref{s1}) we have that $\EE_\nu \xi_1=0$ by Theorem \ref{one_dep_stat}, and that ${V}_\xi^{\frac{1}{2}}$ denotes the matrix square root. 
 By \citet[Theorem 1(ii)]{kmt_1}, a Poisson process $L$ with intensity $\lambda=\mu_\varrho^2/\sigma^2_\varrho$ can be constructed from the one-dimensional Brownian motion $B_2$ such that
\begin{align}
    \abs{L(n)-\frac{\mu_\varrho}{\gamma}n-\frac{\sigma_\rho}{\gamma}B_2(n)}=\bigO_{a.s.}(\log n), \label{PP_sip}
\end{align}
\noindent
where $\gamma=\sigma^2_\varrho/\mu_\varrho$ and $L$ is constructed increment-wise from $B_2$ in a determinstic way and is therefore also independent of $B_1$. From (\ref{s2}) and (\ref{PP_sip}) it follows that \begin{align}
\abs{R_n-\gamma L(n)}=\bigO_{a.s.}(\tilde{C}_{\xi,d} \Psi_n).
\label{claim1}
\end{align} We claim that it therefore follows that
\begin{align}
    \abs{\sum_{k=1}^n \xi_k- \sum_{k=1}^{\gamma L(n)} \xi_n}=\abs{\sum_{k=0}^{R_n} (f(X_k)-\pi(f)) - \sum_{k=0}^{\gamma L(n)} (f(X_k)-\pi(f)) }=\bigO_{a.s.}( \tilde{C}_{\xi,d} \Psi_n) \label{claim2}
\end{align}
We see that
\begin{align}
    \abs{\sum_{k=0}^{R_n} (f(X_k)-\pi(f)) - \sum_{k=0}^{\gamma L(n)} (f(X_k)-\pi(f)) }&= \abs{\sum_{b_n}^{c_n} (f(X_k)-\pi(f))}
    %&\leq \sum_{T_n}^{T_n +  \psi_n} \abs{f(X_k)-\pi(f)}ds \label{RHS2}
\end{align}
where $b_n:=\min\{R_n, \gamma L(n)\}$ and $c_n:=\max\{R_n, \gamma L(n)\}$. Therefore we can introduce the positive sequence $\kappa_n$ as follows  $$\kappa_n:=c_n-b_n=\abs{R_n-\gamma L(n)}.$$ From (\ref{claim1}) it follows that $\kappa_n=\bigO_{a.s.}(C_{\xi,N,d} \Psi_n),$ hence it follows that there exists some universal almost surely finite $C$ such that 
for almost every $\omega$ it holds that   there exists an $N_1:=N_1(\omega)$ such that for all $n\geq N_1$ we have that $\kappa_n< C(\omega) C_{\xi,N,d} \Psi_n$ and hence $c_n= b_n+\kappa_n \leq b_n+\bar{\kappa}_n$, with $\bar{\kappa}_n= \ceil{C(\omega) C_{\xi,N,d} \Psi_n}$.
Note that the stopping times $(R_k)_{k\geq 0}$ are regeneration epochs of the process, and hence the corresponding cycles $\mathcal{C}_k:=(X_s: R_k\leq s<R_{k+1})$ are independent and identically distributed.
Let $\eta(T):=\max\{k: R_k \leq T \}$ denote the amount of regenerative cycles up to time $T$ and let $(Y_k)_{k \in \mathbb{N}}$ be defined as
\begin{equation}
\label{triv_bound}
Y_k= \abs{\sum_{t=R_{k}}^{R_{k+1}-1}g(X_t)},  \end{equation}
where $g(x)=(\abs{f_1(x)-\pi(f_1)}, \dots, \abs{f_d(x)-\pi(f_d)})^T.$ 
Then we see that for $n>N_1(\omega)$  we have that
%Otherwise we flip integration limits
\begin{align}
\abs{\Psi_n^{-1}\sum_{b_n}^{c_n}    (f(X_k)-\pi(f))} &= \Psi_n^{-1} \abs{\sum_{k=0}^{c_n-b_n} ( f(X_{b_n+u})-\pi(f))} \nonumber\\
\hspace{-0.85cm}&\leq \Psi_n^{-1}\sum_{k=0}^{\kappa_n}  g(X_{b_n+u}) \nonumber\\
\hspace{-1.5cm}&\leq \Psi_n^{-1}\sum_{k=0}^{\bar{\kappa}_n} g(X_{b_n+k}) \nonumber\\
\hspace{-1.5cm}&= \Psi_n^{-1} \sum_{j=\eta(b_n)}^{\eta( b_n+\bar{\kappa}_n)}Y_j + \Psi_n^{-1}   \sum_{t=R_{\eta( b_n+\bar{\kappa}_n)}}^{b_n+\bar{\kappa}_n}\abs{f(X_{t})-\pi(f)} \label{tp_slln}
%\hspace{-1.5cm}&\leq \frac{\eta(\varepsilon_1n^{1/p})}{n^{\frac{1}{p}}} \frac{1}{\eta(\varepsilon_1n^{1/p})}\sum_{j=n}^{\eta(\varepsilon_1n^{1/p})}Y_j + \frac{1}{n^{\frac{1}{p}}} \sum_{\eta(\varepsilon_1n^{1/p})}^{\varepsilon_1 n^{1/p}} \hspace{-0.2cm}\abs{f(X_{R_n+s})-\pi(f)}ds\\
\end{align}
From (\ref{standardarguments2}) we see that $\eta(T)$ tends to infinity as $T \rightarrow \infty$ and $\lim_{T \rightarrow \infty} \eta(T)/T= 1/\mu_\varrho$ almost surely. Also for every positive sequence $m_T$ that tends to infinity as $T \rightarrow \infty$, we have that $\lim_{T \rightarrow \infty} \eta(m_T)/m_T= 1/\mu_\varrho$ almost surely. By an application of the law of iterated logarithm to \eqref{s2} and \eqref{PP_sip}, we see that 
\[
R_n=n/\mu_\varrho +\bigO_{a.s.}(\sqrt{n \log \log n})
\]
and
\[
L_n=n/\lambda +\bigO_{a.s.}(\sqrt{n \log \log n}).
\]
Consequently, we have that 
have that $b_n=\bigO_{a.s.}(n) $ and  $\eta(b_n)=\bigO_{a.s.}(n)$. Note that $\eta(b_n+\bar{\kappa}_n)$, the amount of regenerations until time $b_n+\bar{\kappa}_n$ is equal to the amount of generation until time $b_n$ and the amount of regenerations in the time interval $(b_n,b_n+\bar{\kappa}_n)$, i.e., 
$\eta(b_n+\bar{\kappa}_n)=\eta(b_n)+\eta(b_n+\bar{\kappa}_n)-\eta(b_n).$  Since $\eta(T)$ is a renewal process it is clear, that the amount of events should be proportional to the time interval and the intensity, i.e., that we should have $\eta(b_n+\bar{\kappa}_n)-\eta(b_n)=O(\bar{\kappa}_n/ \mu_\varrho)$ almost surely. We will now prove this claim. Since we have that $\mathbb{E}_\nu R_1^p< \infty$, by \cite[Theorem 2.4]{weighted_approx} we can construct a Brownian motion $\tilde{B}_2$ such that
\begin{equation}
    \abs{\eta(T)-\frac{T}{\mu_\eta}-\sigma_\eta \tilde{B}_2(T)}=o_{a.s.}(T^{1/p}),
\end{equation}\\
for some constants $\mu_\eta$ and $\sigma_\eta$. Hence for almost all sample paths $\omega$ there exists a $T_1(\omega)$ such that for all $T\geq T_1(\omega)$ we have that
\begin{equation}
    \frac{1}{T^{1/p}} \abs{\eta(T)-\frac{T}{\mu_\eta}-\sigma_\eta \tilde{B}_2(T)} < \varepsilon.
\end{equation}
Since $b_n$ is non-decreasing and tends to infinity almost surely, it follows that for all sample paths $\omega$ there exists a $N_2(\omega)$ such that $\eta(b_n)(\omega)\geq T_1(\omega)$ for all $n \geq N_2(\omega)$ and hence
\begin{equation}
    \frac{1}{b_n^{1/p}} \abs{\eta(b_n)-\frac{b_n}{\mu_\eta}-\sigma_\eta \tilde{B}_2(b_n)} < \varepsilon.
\end{equation}
Since $b_n=O(n)$ almost surely, it follows that
\begin{equation}
\label{b_sip}
\abs{\eta(b_n)-\frac{b_n}{\mu_\eta}-\sigma_\eta \tilde{B}_2(b_n)}=o_{a.s.}(b_n^{1/p})=o_{a.s.}(n^{1/p}).  
\end{equation} 
Then by the triangle inequality, we obtain
\begin{align}    \eta(b_n+\bar{\kappa}_n)-\eta(b_n)
    & \leq \abs{\eta(b_n+\bar{\kappa}_n) - (b_n+\bar{\kappa}_n)/\mu_\eta -\sigma_\eta  \tilde{B}_2(\eta(b_n)+\bar{\kappa}_n)}
    \label{bb11}\\
   &  + \bar{\kappa}_n/\mu_\eta+\abs{-\eta(b_n) + b_n /\mu_\eta  +\sigma_\eta  \tilde{B}_2(b_n)} \label{bb22}\\
   & + \sigma_\eta \abs{\tilde{B}_2(b_n+\bar{\kappa}_n)-\tilde{B}_2(b_n)}\\
    %\label{bb33}
    &\leq \bar{\kappa}_n/ \mu_\eta + o_{a.s.}(n^{1/p}).
    \label{bb44}
\end{align}
The last inequality follows, since by (\ref{b_sip})  the  term in (\ref{bb11}) and the second term in (\ref{bb22}) are $o(n^{1/p})$. Furthermore, by \cite[Theorem 2]{biga_csorgo} we have that for any $a_n \ll n$ that
\begin{align}
    \limsup_{n\rightarrow \infty} \sup_{0\leq s\leq a_n} \frac{\abs{\tilde{B}_2(n+s)-\tilde{B}_2(n)}}{\left[a_n (\log(n/a_n)+\log \log n)\right]^{1/2}}&= 1\quad \textrm{a.s.}\end{align}
    Since we have $\bar{\kappa}_n = \bigO
    (n^{1/p}),$ it follows that
\begin{align}
     \sup_{0\leq s\leq\bar{\kappa}_n} \abs{\tilde{B}_2(n+s)-\tilde{B}_2(n)}= \bigO_{a.s.}\left( n^{1/2p} \log (n) \right). \quad 
\end{align}
    Moreover, since $\eta(b_n)=\bigO_{a.s.}(n)$ and almost surely non-decreasing we also have that 
\begin{align}
\label{incr_bb2}
     \sup_{0\leq s\leq \bar{\kappa}_n} \abs{\tilde{B}_2(\eta(b_n)+s)-\tilde{B}_2(\eta(b_n))}= o_{a.s.}\left( \eta(b_n)^{1/2p} \right) = o_{a.s.}\left( n^{1/2p} \right).
\end{align}
Hence, the inequality in (\ref{bb44}) follows and  we have shown that $\eta(b_n+\bar{\kappa}_n)-\eta(b_n)\leq \bar{\kappa}_n\mu_\eta+o\left( n^{1/p} \right)$ almost surely. Therefore there exists a $K>0$ such that for almost all sample paths there exits an $N_3(\omega)$ sufficiently large such that $\eta(b_n+\bar{\kappa}_n)-\eta(b_n) < K n^{1/p}$ almost surely. For notational convenience let $\tilde{a}_n$ be defined as $K n^{1/p}$. 
%\begin{align}
%\hspace{-0.25cm}\abs{\frac{1}{n^{\frac{1}{p}}}\sum_{R_n}^{\gamma L(n)}   \hspace{-0.35cm}  f(X_k)-\pi(f)ds} \leq \frac{1}{n^{\frac{1}{p}}} \hspace{-0.1cm}\sum_{k=n}^{n+a_n}Y_k + \frac{1}{n^{\frac{1}{p}}} \sum_{R_{\eta(\varepsilon_1n^{1/p})}}^{\varepsilon_1 n^{1/p}} \hspace{-0.35cm}\abs{f(X_{R_n+s})-\pi(f)}ds
%\label{claim_bound}
%\end{align}
 Since $(Y_k)_{k\geq0}$ form an i.i.d sequence we have by Lemma \ref{moments_regen_exp} and Proposition \ref{GA_iid}  that there exists a Brownian motion $B_3$ such that 
\begin{align}
\label{bap3}
    \abs{\sum_{k=0}^n Y_k - n\mu_Y-\Sigma_Y^{1/2} B_3(n)}= \bigO_{a.s.}(\tilde{C}_{\xi,d}\Psi_n).
\end{align}
where $\mu_Y$ and $\Sigma_Y^{1/2}$ denote the mean and square root of the covariance matrix of $Y_1$ respectively.
It immediately follows that  we also have
\begin{align}
\label{bap5}
    \abs{\sum_{k=0}^{\eta(b_n)} Y_k - \eta(b_n)\mu_Y-\Sigma_Y^{1/2}B_3(\eta(b_n))}= \bigO_{a.s.}(\tilde{C}_{\xi,d} \Psi_{\eta(b_n)}) = \bigO_{a.s.}(\tilde{C}_{\xi,d} \Psi_n).
\end{align}
By the triangle inequality, we obtain
\begin{align}    
\abs{\sum_{k=\eta(b_n)}^{\eta(b_n)+\tilde{a}_n}Y_k}
%&=\abs{\sum_{k=0}^{n+a_n}Y_k - \sum_{k=0}^{n}Y_k\pm (n+a_n)\mu_Y \pm \Sigma_Y^{1/2}\left(B_3(n+a_n)-B_3(n)\right)} \nonumber \\
    & \leq \abs{\sum_{k=0}^{\eta(b_n)+\tilde{a}_n}Y_k - (\eta(b_n)+\tilde{a}_n)\mu_Y -\Sigma_Y^{1/2}  B_3(\eta(b_n)+\tilde{a}_n))}
    \label{bb1}\\
    &+ \tilde{a}_n \mu_Y+\abs{-\sum_{k=0}^{\eta(b_n)}Y_k 
    + \eta(b_n)\mu_Y +\Sigma_Y^{1/2}  B_3(\eta(b_n))}\\
    &+ \abs{\Sigma_Y^{1/2} }\abs{B_3(\eta(b_n)+\tilde{a}_n)-B_3(\eta(b_n))}
    \label{bb2}\\
    &\leq \tilde{a}_n \mu_Y + \bigO_{a.s.}(\tilde{C}_{\xi,d} \Psi_n).
\end{align}
The last inequality follows, since by (\ref{bap5})  both the term in (\ref{bb1}) and the second term in (\ref{bb2}) are $o(C_{\xi,N,d}n^{1/p})$ almost surely.
By again applying \cite[Theorem 2]{biga_csorgo} to every coordinate of $B_3$ we see that 
\begin{align}
\label{browniandifference2}
\big\lvert B_3(\eta(b_n)+\tilde{a}_n)-B_3(\eta(b_n)) \big\lvert &=   \left( \sum_{i=1}^d \left( B_{3i}(\eta(b_n)+\tilde{a}_n)-B_{3i}(\eta(b_n))\right)^2\right)^{1/2}\\
&=\bigO_{a.s.} ( {d}^{1/2}  n^{1/2p}\log n).
\end{align}
Furthermore, by (\ref{incr_bb2}) the last inequality also follows.
Hence it follows that \begin{equation}
\label{ll_c1}
\mathbb{P}_\nu  \left( \limsup_{n \rightarrow \infty} \frac{1}{C_{\xi,N,d}  n^{1/p}} \abs{\sum_{k=\eta(b_n)}^{\eta(b_n+a_n)}Y_k} \leq K \mu_Y\right) =1 .     \end{equation}
Hence the first term in the upper bound (\ref{tp_slln}) is  $O(1)$ almost surely. For the second term, we see that by a Borel--Cantelli argument that is the same as the one given to obtain (\ref{block_asymp}) that
\begin{align}
    Y_n=\sum_{R_n}^{R_{n+1}}\abs{f(X_k)-\pi(f)}=\bigO_{a.s.} (\alpha^{-1/p}d^{1/2}C_\xi^{1/p}n^{1/p}). 
\end{align} 
Therefore
\begin{align}\sum_{R_{\eta(b_n+\bar{\kappa}_n)}}^{b_n+\bar{\kappa}_n} \hspace{-0.5cm}\abs{f(X_{R_n+s})-\pi(f)}ds &\leq \sum_{R_{\eta(b_n+\bar{\kappa}_n})}^{R_{\eta(b_n+\bar{\kappa}_n)+1}} \hspace{-0.1cm}\abs{f(X_{R_n+s})-\pi(f)}ds\\&=Y_{\eta(b_n+\bar{\kappa}_n)}\\
&=\bigO_{a.s.}\left( \alpha^{-1/p}d^{1/2}C_{\xi}^{1/p}
(\eta(b_n+\bar{\kappa}_n))^{1/p} \right)   \\
&=\bigO_{a.s.}\left( d^{1/2}C_{\xi}^{1/p} (n+C(\omega) C_{\xi,N,d} \Psi_n)^{1/p}\right)\\
&=\bigO_{a.s.}\left(d^{1/2} C_{\xi}^{1/p}n^{1/p} \right)
\end{align}
 
Hence our claim (\ref{tp_slln}) follows, and consequently we have also shown (\ref{claim2}).
\noindent
 Combining (\ref{s1}), (\ref{claim1}), and (\ref{claim2}) it follows that

\begin{equation}
\abs{\sum_{k=1}^{\gamma L(n)}\xi_k-\beta \gamma L(n)+ \beta \varrho n - {V}_\xi^{\frac{1}{2}}B_1(n)}=\bigO_{a.s.}\left(\tilde{C}_{\xi,d}n^{1/p} \right)    \label{s3}
\end{equation}
\noindent
Let $(\Gamma_s)_{s \geq 0}$ be defined as
$\Gamma_0:=0$ and $\Gamma_s:=L^{-1}(s)$, the generalised inverse of the Poisson process. Taking $n'= \Gamma_n$ in (\ref{s3}) and subsequently making the substitution $n=n'/\gamma$, it follows that 
\begin{equation}
\abs{\sum_{k=1}^{n}\xi_k-\beta  n+ {\beta \varrho } \Gamma_{n/ \gamma} - {V}_\xi^{\frac{1}{2}}B_1(\Gamma_{n/ \gamma})}=\bigO_{a.s.}\left(\tilde{C}_{\xi,d}{\Gamma_n}^{{1} / {p}}\right)=\bigO_{a.s.}\left(\tilde{C}_{\xi,d} n^{{1} /{p}} \right)     \label{s4}
\end{equation}
\noindent
Since $\Gamma_n$ has a Gamma distribution, it follows from the Koml\'os--Major--Tusn\'ady approximation \citep[Theorem 1] {kmt_1} that there exists a Brownian motion $B_4$ such that
\begin{equation}
    \abs{\Gamma_n-\frac{n}{\lambda}-\frac{1}{\lambda} B_4(n)}=\bigO_{a.s.} (\log n).
    \label{s5}
\end{equation}

\noindent
Since the Poisson process $N$ is constructed deterministically from $B_2$ we have that $N$ and its corresponding event time process $\Gamma$ are independent of $B_1.$ Moreover, the components of a standard  $d$-dimensional Brownian motion are all independent. Therefore by a componentwise application of Lemma \ref{mer_lemma} it follows that there exists a standard $d$-dimensional Brownian motion $B_5$ independent of $N$ and $\Gamma$ such that 
\begin{equation}
\label{r1}
\abs{B_1(n)-\frac{1}{\sqrt{\lambda}} B_5(L(n))}=    \left(\sum_{i=1}^d \lvert B_{1i}(n)-\frac{1}{\sqrt{\lambda}} B_{5i}(L(n))\rvert^2 \right)^{1/2}=\bigO_{a.s.} (\sqrt{d} \log n).
\end{equation}
\noindent
Furthermore, by \eqref{r1}, there exists an almost surely finite random variable $C$ such that for almost all $\omega$ we have that $\textrm{for all } n\geq N_0\equiv N_0(\omega)$ we have that 

\begin{equation}
\label{split1}
\left. \limsup_{n \rightarrow \infty} \frac{1}{d \log n }\abs{B_1(n)-\frac{1}{\sqrt{\lambda}} B_5(L(n))}\right. < C(\omega)
\end{equation}
\noindent
 Since $\Gamma_n$ is an increasing process and tends to infinity, we have that for almost every  $\omega$ there exists a  $N'_0\equiv N'_0(\omega)$ such that $\Gamma_{n}(\omega)\geq N_0$ for all $n\geq N'_0$. Hence we obtain from (\ref{split1}) that
 \begin{equation}
 \label{split2}
 \limsup_{n \rightarrow \infty} \frac{1}{d  \log  \Gamma_n }\abs{B_1(\Gamma_n)-\frac{1}{\sqrt{\lambda}} B_4(n)}< C(\omega)   \quad \textrm{a.s. },
\end{equation}
\noindent
where we used that $L(\Gamma(n))=n.$ Therefore we see that
\begin{equation}
\abs{B_1(\Gamma_n)-\frac{1}{\sqrt{\lambda}} B_5(n)}=\bigO_{a.s.} (d \log \Gamma_n)= \bigO_{a.s.} (d \log n).
\label{s6}
\end{equation}
\noindent
Here the last equality follows since we can apply the law of iterated logarithm for Brownian motion to (\ref{s5}) which gives
$$\Gamma_n=\frac{n}{\lambda}+\bigO_{a.s.} (\sqrt{n \log \log n}).$$
Applying the obtained approximations given in (\ref{s5}) and (\ref{s6}) to (\ref{s4}) we see that
\begin{equation}
\abs{\sum_{k=1}^n \xi_k-\left(\frac{ {V}_\xi^{\frac{1}{2}}}{\sqrt{\lambda \gamma}}B_5(t)- \frac{\beta \varrho}{\lambda \sqrt{\gamma}}B_4(t)\right)} = \bigO_{a.s.}\left(\tilde{C}_{\xi,d}\Psi_n \right).
\end{equation}
\noindent
Note that since $B_4$ is independent of all components of $B_5$  we have that 
\begin{equation}
W_t=\Sigma_f^{-1}\left(\frac{ {V}_\xi^{\frac{1}{2}}}{\sqrt{\lambda \gamma}}B_5(t)- \frac{\beta \varrho}{\lambda \sqrt{\gamma}}B_4(t)\right)
\end{equation}
\noindent
is a standard $d$-dimensional Brownian motion since
\begin{equation}
\frac{ \tilde{V}_\xi}{\gamma \lambda}+  \frac{\beta \beta^T \varrho^2}{\gamma \lambda^2}=\frac{\Varnu(\xi_1)}{\mu_\varrho}=\Sigma_f,
\end{equation}
\noindent
where the last equality follows from \cite[Theorem 9]{sigman}. 
Note that by the same argument as given in Theorem \ref{theorem_approx_discrete_multi}, if we assume that the initial distribution is $\pi$, it can shown that the initial cycle is asymptotically negligible. Furthermore, in the case where the polynomial drift condition holds, Lemma \ref{moments_regen_poly} gives us the required moment conditions and the proof of the Theorem follows analogously. Similarly, the weak approximation case follows by the same argument.

%This concludes the proof for initial distribution $\nu.$  By a standard argument this can be extended to every initial distribution. By following the argument of \citet[Proposition 17.1.6]{meyn_tweedie_2012} it follows that the strong invariance principle holds for every initial distribution.
\end{proof}

\begin{remark}
\label{initial_remark_proof}
Note that we can extend Theorem \ref{theorem_approx_discrete_one} and Theorem \ref{theorem_approx_discrete_multi} to arbitrary initial distributions, provided that the first cycle, until the first draw from the small measure $\nu$, is asymptotically negligible. For $g:\mathbb{E}\rightarrow \mathbb{R}$  we can define the norm $\abs{g}_V:=\sup_{x\in E}\frac{g(x)}{V(x)}$. Then if we assume that a geometric drift condition holds, we have for any $f$ such that $\sup_{i=1,\cdots,d}\abs{f_i}_V<\infty$  by an application of the Comparison theorem; \cite[Theorem 14.2.2]{meyn_tweedie_2012} that
\begin{align}
\abs{\sum_{t=0}^{R_1} f(X_k)} &\leq d \sup_{i=1,\cdots,d}\sum_{t=0}^{R_1} \abs{f_i(X_k)}\\
&\leq d\frac{\sup_{i=1,\cdots,d}\abs{f_i}_V}{1-\lambda}\left(V(x)+b \mathbb{E}_xR_1 \right),
\end{align}
where we used the fact that if $\abs{g}_V<\infty$ then $g(x)\leq V(x) \abs{g}_V$. By Lemma \ref{moment_bound_exp} and an application of Jensen's inequality, we have that

\begin{align}
\label{exp_initial}
\abs{\sum_{t=0}^{R_1} f(X_k)} \leq d\ \frac{\sup_{i=1,\cdots,d}\abs{f_i}_V}{1-\lambda}\left(V(x)+b \log_r\left( \frac{\alpha G(r,x)}{1-(1-\alpha)r^{a}}\right)\right),
\end{align}

where

\begin{equation*}
    G(r,x) =V(x) \mathbbm{1}_C(x)+ r(\lambda \upsilon_C+b)\mathbbm{1}_{C^c}(x)
\end{equation*}
and
$$a=1+\left(\ln\frac{\lambda \upsilon_V+b-\alpha}{1-\alpha}\right)/(\ln(\lambda^{-1})).$$

Similarly, if we assume that a polynomial drift condition holds, from Lemma \ref{moment_bound_poly} we see that,
$$\mathbb{E}_x[  {\tau_C}] \leq \frac{1}{(1-\eta) c} \left(V^{1-\eta}(x)+(b^\eta+b_0)\mathbbm{1}_C(x)\right).$$
   
A similar argument with the comparison theorem gives us
\begin{align}
\label{poly_initial}
\abs{\sum_{t=0}^{R_1} f(X_k)} \leq d\ \frac{\sup_{i=1,\cdots,d}\abs{f_i}_{V^{\eta}}}{c}\left(V(x)+ \frac{1}{(1-\eta) c} \left(V^{1-\eta}(x)+(b^\eta+b_0)\right)\right),
\end{align}
provided that $\sup_{i=1,\cdots,d}\abs{f_i}_{V^{\eta}}<\infty$.
\end{remark}

\subsection{Proofs of Section \ref{section:4}}

\subsubsection{Proof of Theorem \ref{multi_bm_theorem}}

\begin{proof}
    In \cite[Theorem 2]{multivariate_output} it is shown that for every $i,j$ we have 
    \[    \abs{\widehat{\Sigma}^{BM}_{T_{ij}}- \Sigma_f}=\bigO_{a.s.}\left(\left(\frac{\ell_{T,d}}{T}\right)^{1/2}\right)+ \bigO_{a.s.} \left(\bar{\psi}_N \psi_d\Psi_T\log(T)\ell_{T,d}^{-1/2}
    \right) + \bigO_{a.s.} \left(\frac{\bar{\psi}^2_N \psi^2_d\Psi^2_T\log(T)}{T}
    \right). 
    \]
Since we have that 
\begin{align*}
\abs{\widehat{\Sigma}^{BM}_{T}- \Sigma_f} &=\sqrt{ 
\sum_{i=1}^d\sum_{j=1}^d\abs{\widehat{\Sigma}^{BM}_{T_{ij}}- \Sigma_{f_{ij}}}^2}\\
&=\bigO_{a.s.}\left(d\left(\frac{\ell_{T,d}}{T}\right)^{1/2}\right)+ \bigO_{a.s.} \left(\bar{\psi}_N d\psi_d\Psi_T\log(T)\ell_{T,d}^{-1/2}
    \right) + \bigO_{a.s.} \left(\frac{\bar{\psi}^2_N d\psi^2_d\Psi^2_T\log(T)}{T}
    \right)\\
    &=o_{a.s.}(1).    
\end{align*}
Firstly, note that if $$T = \left({\bar{\psi}_N d\psi_d }\right)^{\frac{2p_0}{(p_0-2)}(1+\bar{\delta})},$$

 we have that for $T\rightarrow \infty$  that

$$\bigO_{a.s.} \left(\frac{\bar{\psi}^2_N d\psi^2_d\Psi^2_T\log^2(T)}{T}\right)=\bigO_{a.s.} \left(d^{-1}\left({\bar{\psi}_N d\psi_d }\right)^{-2\bar{\delta}} \log^3 (\bar{\psi}_Nd\psi_d)\right)=o_{a.s.}(1).$$

Note that in order to find the optimal batch size $\ell_T=\floor{T^\alpha}$ such that $\abs{\widehat{\Sigma}^{BM}_{T}- \Sigma_f}$ tends to zero at the fastest rate, we equate the error terms
$\bigO_{a.s.}\left(d\left(\frac{\ell_{T,d}}{T}\right)^{1/2}\right)$ and $ \bigO_{a.s.} \left(\bar{\psi}_N d\psi_d\Psi_T\log(T)\ell_{T,d}^{-1/2}
    \right)$. This gives us that the optimal batch size, up to 
 a logarithmic factor, should be of an asymptotic magnitude
    \begin{align*}
        \ell_T &\asymp {\bar{\psi}_N \psi_d\Psi_T T^{1/2}}\\
         & \asymp d^{-(p_0-2)/(2p_0(1+\bar{\delta}))} T^{1/2+1/p_0+(p_0-2)/(2p_0(1+\bar{\delta}))} \\
         & \asymp d^{-(p_0-2)/(2p_0(1+\bar{\delta}))} \left(\bar{\psi}_N {d}\psi_d
    \right)^{1+\frac{p_0+2}{p_0-2}(1+\bar{\delta})},  
    \end{align*}
which gives us 

\[
\abs{\widehat{\Sigma}^{BM}_{T}- \Sigma_f}= \bigO_{a.s.}\left(  \sqrt{\frac{\bar{\psi}_Nd^2\psi_d}{T^{1/2-1/p_0}}}\right).
\]
With our choice for the simulation time $T = \left({\bar{\psi}_N d\psi_d }\right)^{\frac{2p_0}{(p_0-2)}(1+\bar{\delta})}$ we see that

\begin{align*}
\abs{\widehat{\Sigma}^{BM}_{T}- \Sigma_f}&= \bigO_{a.s.}\left( \sqrt{d}  T^\frac{-\bar{\delta}(p-2)}{4p_0(1+\bar{\delta})}\right)=\bigO_{a.s.}\left(\sqrt{d} \left({\bar{\psi}_N d\psi_d }\right)^{-\bar{\delta}/2}\right)\\
&=\bigO_{a.s.}\left( {\bar{\psi}_N }^{-\bar{\delta}/2} d^{1/2-(a+1)\bar{\delta}/2}\right)=o_{a.s.}(1),  
\end{align*}
where the last equality follows since $\bar{\delta}> 1/(1+a).$
\end{proof}

\begin{remark}
In the case that $\Psi_T=T^{1/4+1/4(p_0-1)}$ and $\psi_d=d^a$ for some given $a>0$ we require 
\[
T=\left(\psi_N d^{1/4} \psi_d \right)^{\frac{p_0-1}{p_0-2}4(1+\bar{\delta})}, 
\]
with $\delta>1/(1+a)$ and consequently
 \begin{align*}
        \ell_T & \asymp {\bar{\psi}_N \psi_d\Psi_T T^{1/2}}\\
         & \asymp T^{\frac{3}{4}+ \frac{1}{4(p_0-1)}+\frac{(p_0-2)}{4(p_0-1)(1+\bar{\delta})}}
    \end{align*}
\end{remark}

\subsubsection{Proof of Theorem \ref{FVSR_optimal_rate}}

\begin{proof}
To prove the first claim, we first show that the difference between termination rules based on the volumes of the ellipsoids based on the estimated and asymptotic covariance matrix tends to zero at an appropriate rate. 
Note that the volume of the confidence ellipsoid is given by
\begin{equation}\label{volume_ellips}
\Vollie(C_T(\alpha))= T^{-d/2}q^{d/2}_{\alpha,d}\frac{2\pi^{d/2}}{d \Gamma(d/2)} 
 \det(\widehat{\Sigma}^{1/2}_T).
\end{equation}
Furthermore under Assumption \ref{assumption_covariance}, we have that $\sigma_0^{1/2d}I_d\preccurlyeq \Sigma_f^{1/2d}\preccurlyeq \sigma_d^{1/2d}I_d,$ where the matrix inequalities hold in the positive semi-definite sense. Furthermore, by Jacobi's formula, we have that \begin{equation}
\label{det_derivative}
\frac{\partial \det(\Sigma)}{\partial \Sigma}=\det(\Sigma)\Sigma^{-1}.    
\end{equation}
Note that the choice $\Sigma=\sigma_d^{1/2d}I_d$ maximises the function $\det(\Sigma)\abs{\Sigma^{-1}}$ subject to the constraint $\sigma_0^{1/2d}I_d\preccurlyeq \Sigma \preccurlyeq \sigma_d^{1/2d}I_d$. Hence
 we have the following Lipschitz property for the determinant on this domain
\begin{equation}\label{det_lips}
\abs{\det(\widehat{\Sigma}^{1/2d}_T)-\det( {\Sigma}_f^{1/2d})} \leq \sigma_d^{1/2-1/2d} \sqrt{d}\abs{\widehat{\Sigma}^{1/2d}_T- {\Sigma}_f^{1/2d}}.   
\end{equation}

 By Ando--van Hemmen's inequality given in Lemma \ref{Ando_inequality}, we have under Assumption \ref{assumption_covariance}, that
    \begin{equation}
    \label{square_root_lips}        
  \abs{\widehat{\Sigma}^{1/2d}_T- {\Sigma}_f^{1/2d}} \leq \left(\frac{1}{\sigma_0}\right)^{1-1/2d} \abs{\widehat{\Sigma}_T-{\Sigma}_f}.
  \end{equation}
By Theorem \ref{multi_bm_theorem} it follows that 
$\abs{\widehat{\Sigma}_T- {\Sigma}_f}=R_T$ with
\begin{align}
\label{conv_rate_variance}
R_T=\bigO_{a.s.}\left(d\left(\frac{\ell_{T,d}}{T}\right)^{1/2}\right)+ \bigO_{a.s.} \left(\bar{\psi}_N d\psi_d\Psi_T\log(T)\ell_{T,d}^{-1/2}
    \right) + \bigO_{a.s.} \left(\frac{\bar{\psi}^2_N d\psi^2_d\Psi^2_T\log(T)}{T}
    \right).
\end{align}
Combining \eqref{det_lips} and \eqref{square_root_lips}, and since $\det(\Sigma_f^{1/2d})\geq \sigma_0^{1/2}$ we see that 
\begin{equation}
    \label{ratio_sigma}     \abs{\frac{\det(\widehat{\Sigma}^{1/2d}_T)}{\det(\Sigma^{1/2d}_f)}-1}
    \le \left(\frac{\sigma_d}{\sigma_0}\right)^{1/2-1/2d} \frac{\sqrt{d}}{\sigma_0} R_T\leq \left(\frac{\sigma_d}{\sigma_0}\right)^{1/2} \frac{\sqrt{d}}{\sigma_0} R_T.
\end{equation}

Under Assumption \ref{assumption_covariance}, we have by the Gershgorin circle theorem that $\sigma_d$ is bounded by
\begin{equation}
\label{eigenvalue_bound}
\sigma_d \leq \sup_{i\in \{1,\dots,d\}}\Sigma_{f_{ii}}+ \sup_{i\in \{1,\dots,d\}}\sum_{j \neq i}\abs{\Sigma_{f_{ij}}} \lsim d^2.    
\end{equation}
Let $c_{\alpha,d}:=q^{d/2}_{\alpha,d}\frac{2\pi^{d/2}}{d \Gamma(d/2)}$, then from  \eqref{volume_ellips}, \eqref{ratio_sigma}, and \eqref{eigenvalue_bound} we see that 
\begin{align}  
\sqrt{T}\frac{\Vollie(C_T(\alpha))^{1/d}}{c_{\alpha,d}^{1/d}\det(\Sigma_f)^{1/2d}}&= 
 \frac{\det(\widehat{\Sigma}^{1/2d}_T)}{\det(\Sigma^{1/2d}_f)} \nonumber \\
 &= 1+\left(\frac{\sigma_d}{\sigma_0}\right)^{1/2} \frac{\sqrt{d}}{\sigma_0}\bigO_{a.s.}(R_T) \nonumber \\
 &= 1+ \bigO_{a.s.}(d^{3/2} R_T). \nonumber 
 \end{align}

Let \[\tilde{T}^* (\varepsilon,d,N):=\left({\bar{\psi}_N d^{3}\psi_d }\right)^{\frac{2p_0}{(p_0-2)}(1+\bar{\delta}_1)} \left(\frac{1}{\varepsilon}\right)^{\frac{4p_0}{(p_0-2)}(1+\bar{\delta}_2)}\wedge e^{\frac{8p_0}{p_0-2}}\],

then for  $T\geq  \tilde{T}^* (\varepsilon,d,N)
                       $ we have that 
      % %
      %
      %
      %
%
%                                     
                       \begin{align}         
                       \label{error_term_RT1}
                       d^{3/2}R_T&=\bigO_{a.s.}\left(\bar{\psi}_N^{-\bar{\delta}_1/2}\psi_d^{-\bar{\delta}_1/2}d^{1-3/2\bar{\delta}_1}\varepsilon^{1+\bar{\delta}_2}\log^2(\tilde{T}^* (\varepsilon,d,N))\right) \nonumber \\
                       &=\bigO_{a.s.}\left(\bar{\psi}_N^{-\bar{\delta}_1/2}d^{-1/2}\varepsilon^{1+\bar{\delta}_2}\log^2(\tilde{T}^* (\varepsilon,d,N))\right) \nonumber\\
                        &=\bigO_{a.s.}\left(\bar{\psi}_N^{-\bar{\delta}_1/2}d^{-1/2}\varepsilon^{1+\bar{\delta}_2}\left(\log^2({\bar{\psi}_N d^{3}\psi_d }) +\log^2(1/\varepsilon)\right)\right)\nonumber \\
                        &=o_{a.s.}\left( \log^2(\bar{\psi}_N d^3 \psi_d )\bar{\psi}_N^{-\bar{\delta}_1/2}d^{-1/2} \varepsilon\right),
                       \end{align}  
   where the  second equality follows since $\bar{\delta}_1> 3/(3+a)$ and subsequent inequalities follow by definition of $\tilde{T}^* (\varepsilon,d,N)$, the fact that $\log^a(x)/x^b$ is decreasing on $x\geq e^{a/b}$ for any $a,b>0$, and some basic computations. Therefore we see that
\begin{equation}
    \label{volume_asymptotics}   
    \sqrt{T}\frac{\Vollie(C_T(\alpha))^{1/d}}{c_{\alpha,d}^{1/d}\det(\Sigma_f)^{1/2d}}=1+o_{a.s.}\left( \log^2(\bar{\psi}_N d^3 \psi_d )\bar{\psi}_N^{-\bar{\delta}_1/2}d^{-1/2} \varepsilon\right)   
\end{equation}
and consequently 
\begin{equation}
\label{volume_asymptotics2}    
\sqrt{T}\frac{\Vollie(C_T(\alpha))^{1/d}}{c_{\alpha,d}^{1/d}\det(\Sigma_f)^{1/2d}} \xrightarrow{a.s.} 1 \ \textrm{as}\ \varepsilon \downarrow 0.
 \end{equation}

 The remainder of the first part of the proof now follows by the argument of \cite[Theorem 1]{glynn}.
Let $V(T)=\Vollie(C_T(\alpha))^{1/d}+a(T)$, then by definition of $T_1(\varepsilon)$ we have that $V(T_1(\varepsilon)-1)>\varepsilon$ and that there exists a random variable $Z(\varepsilon) \in [0,1]$ such that $V(T_1(\varepsilon)+Z(\varepsilon))\leq \varepsilon$. This gives us
\[
\limsup_{\varepsilon \downarrow 0} \varepsilon T_1^{1/2}(\varepsilon) \leq \limsup_{\varepsilon \downarrow 0}    V(T_1(\varepsilon)-1) T^{1/2}_1(\varepsilon).
\]
Since $T_1(\varepsilon) \rightarrow \infty$ almost surely as $\varepsilon$ tends to zero it follows from  \eqref{volume_asymptotics} that 
\[
\limsup_{\varepsilon \downarrow 0}\frac{\varepsilon T^{1/2}_1(\varepsilon) }{\left( c_{\alpha,d} 
 \det({\Sigma}^{1/2}_f)\right)^{1/d}} \leq
 \limsup_{\varepsilon \downarrow 0}\frac{V(T_1(\varepsilon)-1) T^{1/2}_1(\varepsilon) }{\left( c_{\alpha,d} 
 \det({\Sigma}^{1/2}_f)\right)^{1/d}}=1+ \  o_{a.s.}\left( \log^2(\bar{\psi}_N d^3 \psi_d )\bar{\psi}_N^{-\bar{\delta}_1/2}d^{-1/2} \varepsilon\right).
\]

By a similar argument, we also have that 
\[
\liminf_{\varepsilon \downarrow 0} \frac{\varepsilon T_1^{1/2}(\varepsilon)}{(c^2_{\alpha,d} \det({\Sigma_f}))^{1/2d}} \geq \liminf_{\varepsilon \downarrow 0}    \frac{V(T_1(\varepsilon)+Z(\varepsilon)) T_1^{1/2}(\varepsilon)}{(c^2_{\alpha,d} \det({\Sigma_f}))^{1/2d}} = 1+ \  o_{a.s.}\left( \log^2(\bar{\psi}_N d^3 \psi_d )\bar{\psi}_N^{-\bar{\delta}_1/2}d^{-1/2} \varepsilon\right).
\]
This proves the first part of the Theorem. To prove the second claim,
 we first bound the difference between the confidence ellipsoids based on the estimated and asymptotic covariance matrix. Recall that $\hat{\pi}_T(f)=T^{-1}\sum_{t=1}^Tf(X_t)$ and let\\
\begin{align*}
&E_T:=\abs{T(\hat{\pi}_T(f)-\pi(f))^T\Sigma_f^{-1}(\hat{\pi}_T(f)-\pi(f))-T(\hat{\pi}_T(f)-\pi(f))^T\widehat{\Sigma}_T^{-1}(\hat{\pi}_T(f)-\pi(f))}
\end{align*}
Then by Cauchy--Schwarz's inequality we have that
\begin{equation}
E_T\leq T\abs{\widehat{\Sigma}_T^{-1}-\Sigma_f^{-1}} \abs{\hat{\pi}_T(f)-\pi(f)}^2.    
\end{equation}

Since $\Sigma^{-1}_f(\widehat{\Sigma}_T-\Sigma_f)\widehat{\Sigma}^{-1}_T=\Sigma_f^{-1}-\widehat{\Sigma}_T^{-1}$, we have by sub-multiplicativity of the Frobenius norm that 
\[
T\abs{\widehat{\Sigma}_T^{-1}-\Sigma_f^{-1}} \abs{\hat{\pi}_T(f)-\pi(f)}^2\leq T \abs{\widehat{\Sigma}_T-\Sigma_f}\abs{\widehat{\Sigma}_T^{-1}} \abs{\Sigma_f^{-1}}\abs{\hat{\pi}_T(f)-\pi(f)}^2.
\]
From our assumed Gaussian approximation result and the law of iterated logarithm, we see that
\begin{align*}
T\abs{\hat{\pi}_T(f)-\pi(f)}^2&\leq 2T\abs{\hat{\pi}_T(f)-\pi(f)-T^{-1}\Sigma_f^{1/2}W_T}^2+ 2T^{-1}\abs{\Sigma_f^{1/2}W_T}^2.
\end{align*}
%By the spectral theorem we have that $\Sigma_f=PDP^T$, with $P$ an orthogonal matrix consisting of the eigenvectors of $\Sigma_f$ and $D=\textrm{diag}(\sigma_1,\cdots,\sigma_d)$ the corresponding matrix of eigenvalues. Note that since $\tilde{W}_T=P^TW_T$ is also a standard Brownian motion, 
By a coordinate-wise application of the law of iterated logarithm, we have that 
\begin{equation}
    \label{LITL}
\abs{\Sigma_f^{1/2}W_T}^2 \leq \tr(\Sigma_f) \abs{W_T}^2= \bigO_{a.s} ( \tr(\Sigma_f) d{T \log \log T}).
\end{equation}

This gives us
\begin{align*}
T\abs{\hat{\pi}_T(f)-\pi(f)}^2& = \bigO_{a.s.}\left( \frac{\bar{\psi}^2_N \psi^2_d \Psi_T^2}{T}\right) + \bigO_{a.s.}\left(d \tr(\Sigma_f) {\log \log T}\right)
\end{align*}

Let $\hat{\sigma}_1$ and $\sigma_1$ denote the smallest eigenvalues of $\widehat{\Sigma}_T$ and $\Sigma_f$ respectively. Then by the equivalence of the Frobenius and spectral norm that
\begin{align*}
    \abs{\hat{\Sigma}_T^{-1}} \leq \sqrt{d} \abs{\hat{\Sigma}_T^{-1}}_*\leq \frac{\sqrt{d}}{ \hat{\sigma}_1}.
\end{align*}

Therefore we have that 
\begin{align*}
    \abs{\hat{\Sigma}_T^{-1}}&\leq \frac{\sqrt{d}}{ {\sigma}_1}+\sqrt{d}\abs{\frac{1}{\hat{\sigma}_1}-\frac{1}{{\sigma}_1}}\\
    &\leq \frac{\sqrt{d}}{ {\sigma}_1}+\frac{\sqrt{d}}{\sigma_0^2}\abs{{\hat{\sigma}_1}-{{\sigma}_1}},
\end{align*}
where the last inequality follows since by Assumption \ref{assumption_covariance} we have that $\sigma_1,\hat{\sigma}_1\geq \sigma_0>0.$ By Weyl's Perturbation Theorem, see \cite[Corollary III.2.6]{bhatia2013matrix}, we have that for all Hermitian matrices $\Sigma_1,\Sigma_2$  that
%\[
 %\abs{\Sigma_1-\Sigma_2}_*\leq \abs{\textrm{Eig}(\Sigma_1)-\textrm{Eig}(\Sigma_2)}_*
%\]
%and
\[
\max_{i} \abs{\sigma_i(\Sigma_1)-\sigma_i(\Sigma_2)}\leq \abs{\Sigma_1-\Sigma_2}_*,
\]
where $\sigma_i(\Sigma)$ denotes the $i$-th eigenvalue of $\Sigma$ for $i=1,\cdots,d.$ Hence the eigenvalues of $\hat{\Sigma}_T$ converge at the same rate as the matrix itself, see also \cite[Theorem 3]{multivariate_consistency}. This gives us $$
    \abs{\hat{\sigma}_1-\sigma_1} \leq \abs{\widehat{\Sigma}_T-\Sigma_f}_*\leq R_T.$$   

Furthermore, we also have that $\abs{\Sigma^{-1}_f}\leq \sqrt{d}/\sigma_1$. Therefore it follows that
\begin{align}
\label{convergence_ellipsoid}
\abs{\widehat{\Sigma}_T^{-1}-\Sigma_f^{-1}} \abs{\hat{\pi}_T(f)-\pi(f)}^2&\leq \abs{\widehat{\Sigma}_T-\Sigma_f}\bigO_{a.s.}\left( d^{2} \frac{\trace(\Sigma_f)}{\sigma_0^2} \log \log T \right) \nonumber \\
&\leq R_T \bigO_{a.s.}\left( d^{2} \frac{\trace(\Sigma_f)}{\sigma_0^2}  \log^{1/2} T \right).
\end{align}

Given the specification of $R_T$  given in \eqref{conv_rate_variance} we have that
\[
E_T\leq E_{1T}+ E_{2T}+ E_{3T},
\]

where

\begin{align}
  E_{1T}&= \bigO_{a.s.}\left(d^3 \frac{\trace(\Sigma_f)}{\sigma_0^2}\left(\frac{\ell_{T,d}}{T}\right)^{1/2} \hspace{-0.3cm}\log^{1/2} (T)\right)\\
  E_{2T}&=\bigO_{a.s.} \left( \frac{\bar{\psi}_N d^3 \trace(\Sigma_f)\psi_d\Psi_T\log^{3/2}(T)}{\sigma^2_0\ell_{T,d}^{1/2}}
    \right) \\
    E_{3T} &=  \bigO_{a.s.} \left(\frac{\bar{\psi}^2_N d^3\trace(\Sigma_f)\psi^2_d\Psi^2_T\log^{3/2}(T)}{\sigma_0^2T}
    \right)
\end{align}
Note that for $T\geq T^* (\varepsilon,d,N) = 
                       \left({\bar{\psi}_N \frac{\tr(\Sigma_f)}{\sigma^2_0}d^{3}\psi_d }\right)^{\frac{2p_0}{(p_0-2)}(1+\bar{\delta}_1)} \left(\frac{1}{\varepsilon}\right)^{\frac{4p_0}{(p_0-2)}(1+\bar{\delta}_2)}\vee e^{\frac{10p_0}{p_0-2}}$ we have that
$E_{3T}=o_{a.s.}(\bar{\psi}_N^{-2\bar{\delta}_1}\psi_d^{-2\bar{\delta}_1}d^{-3-6\bar{\delta}_1}\varepsilon^{4(1+\bar{\delta}_2)}\log^{3/2}(T^*(\varepsilon,d,N)))$.
Note that by the choice of $\ell_T$ it follows that $E_{1T}$ and $E_{2T}$ are almost surely of the same asymptotic magnitude up to a log factor. By the same argument as given in \eqref{error_term_RT1} it follows that both  $E_{1T}$ and $E_{2T}$ are of order $o_{a.s.}(\log^{5/2}(\bar{\psi}_N d^3 \psi_d )\bar{\psi}_N^{-\bar{\delta}_1/2}d^{3/2-(3+a)\bar{\delta}_1/2}\varepsilon)$ 
%$o_{a.s.}(\log^{5/2}(\bar{\psi}_N d^3 \psi_d )\bar{\psi}_N^{-\bar{\delta}_1/2}d^{3/2-(3\bar{\delta}_1-a)/2}\varepsilon)$ 
and consequently, it follows that  \begin{equation}
\label{error_CI}
 E_T= o_{a.s.}(\log^{5/2}(\bar{\psi}_N d^3 \psi_d )\bar{\psi}_N^{-\bar{\delta}_1/2}d^{3/2-(3+a)\bar{\delta}_1/2}\varepsilon).    
\end{equation}
 Now we show that the confidence ellipsoid based on $\Sigma_f$ has the asymptotically correct coverage. Note that 
\begin{align*}
&\abs{T(\hat{\pi}_T(f)-\pi(f))^T\Sigma_f^{-1}(\hat{\pi}_T(f)-\pi(f))-T^{-1}\langle W_T,W_T\rangle}\\
&=\abs{T^{-1/2}\left(\sum_{t=1}^Tf(X_t)-T\pi(f)-\Sigma_f^{1/2}W_T\right)^T\Sigma_f^{-1}T^{-1/2}\left(\sum_{t=1}^Tf(X_t)-T\pi(f)+\Sigma_f^{1/2}W_T\right)}\\
&\leq \frac{\sqrt{d}}{\sigma_0 T}\abs{\sum_{t=1}^Tf(X_t)-T\pi(f)-\Sigma_f^{1/2}W_T}^2+\frac{2\abs{\Sigma^{-1/2}_f}}{T}\abs{\sum_{t=1}^Tf(X_t)-T\pi(f)-\Sigma_f^{1/2}W_T}\abs{W_T},
\end{align*}
where the last inequality follows from  Cauchy--Schwarz and since we have by the equivalence of the Frobenius and spectral norm that $\abs{\Sigma_f^{-1}}\leq \sqrt{d}/\sigma_0$. Moreover, by the assumed weak Gaussian approximation and \eqref{LITL} we obtain 
\begin{align}
\label{convergence_probability}
\abs{T(\hat{\pi}_T(f)-\pi(f))^T\Sigma_f^{-1}(\hat{\pi}_T(f)-\pi(f))-\frac{\langle W_T,W_T\rangle}{T}}&=\bigO_{a.s.}\left( \frac{\bar{\psi_N}^2 \sqrt{d} \psi_d^2 \Psi^2_T}{T}
\right) +\bigO_{p}\left( \frac{{\bar{\psi}_N d^{3/2}\psi_d \Psi'_T}}{\sqrt{T}}
\right) \nonumber \\
&=\bigO_{a.s.}\left( \frac{{\bar{\psi}_N d^{3/2}\psi_d \Psi_T\log^{1/4}(T)}}{\sqrt{T}}
\right) \nonumber \\
&=o_{a.s.}\left( \frac{\log (\bar{\psi}_N d^{3/2}\psi_d) }{(\bar{\psi}_N d^{3/2}\psi_d)^{\bar{\delta}_3} }\varepsilon \right),
\end{align}

with $\Psi'_T=\Psi_T(\log \log T)^{1/2}$ and for all $T\geq \left(\frac{\bar{\psi}_N d^{3/2} \psi_d}{\varepsilon}\right)^{\frac{2p_0}{(p_0-2)}(1+\bar{\delta}_3)}\vee e^{\frac{5p_0}{2(p_0-2)}}$ for any $\bar{\delta}_3>0$. 
Finally,  given  \eqref{error_CI} and \eqref{convergence_probability}, we can use the argument of \cite[Theorem 1]{glynn} to show that at termination time $T_1(\varepsilon)$ the empirical confidence interval also has the correct coverage as $\varepsilon \downarrow 0$.
\end{proof}

\subsubsection{Proof of Theorem \ref{FVSR_one_dep_sip}}
\begin{proof}
The proof follows completely analogous to the proof of Theorem \ref{FVSR_optimal_rate}. Note that we now obtain
\[\abs{T(\hat{\pi}_T(f)-\pi(f))^T\hat{\Sigma}_f^{-1}(\hat{\pi}_T(f)-\pi(f))-T^{-1}\langle W_T,W_T\rangle}=\bigO_{a.s.} \left( \frac{\bar{\psi}_N d^3 \tr(\Sigma_f)\psi_d\log^{2}(T)}{\sigma_1 T^{\frac{p_0-2}{8(p_0-1)}}}
    \right).\]
\end{proof}

\subsubsection{Proof of Corollary \ref{ESS_corr}}
\begin{proof}
By the same argument as the first part of Theorem \ref{FVSR_optimal_rate} it follows that
      \[
    \frac{\textrm{ESS}}{T}= \left(\frac{\lvert{\Gamma}_f\rvert}{\lvert{\Sigma}_f\rvert}\right)^{1/d} + \bigO_{a.s.}\left(\frac{\sigma_d}{\sigma_0}\sqrt{d}\abs{\hat{\Sigma}_T-\Sigma_f}\right).
    \]
The claim now follows from Theorem \ref{multi_bm_theorem}.
\end{proof}

\section*{Acknowledgements}
We would like to thank Joris Bierkens and Fabian Mies for helpful discussions on the results. This work is part of the research programme ‘Zigzagging through computational barriers’ with project number 016.Vidi.189.043, which is financed by the Dutch Research Council (NWO).

\bibliography{References,main}

\begin{thebibliography}{74}
\providecommand{\natexlab}[1]{#1}
\providecommand{\url}[1]{\texttt{#1}}
\expandafter\ifx\csname urlstyle\endcsname\relax
  \providecommand{\doi}[1]{doi: #1}\else
  \providecommand{\doi}{doi: \begingroup \urlstyle{rm}\Url}\fi

\bibitem[Altmeyer(2022)]{altmeyer2022polynomial}
R.~Altmeyer.
\newblock Polynomial time guarantees for sampling based posterior inference in high-dimensional generalised linear models.
\newblock \emph{arXiv preprint arXiv:2208.13296}, 2022.

\bibitem[Andrieu et~al.(2015)Andrieu, Fort, and Vihola]{andrieu2015quantitative}
C.~Andrieu, G.~Fort, and M.~Vihola.
\newblock Quantitative convergence rates for subgeometric {M}arkov chains.
\newblock \emph{Journal of Applied Probability}, 52\penalty0 (2):\penalty0 391--404, 2015.

\bibitem[Asmussen(2003)]{asmussen}
S.~Asmussen.
\newblock \emph{Applied probability and queues}, volume~2.
\newblock Springer, 2003.

\bibitem[Bandeira et~al.(2022)Bandeira, Maillard, Nickl, and Wang]{bandeira2022free}
A.~S. Bandeira, A.~Maillard, R.~Nickl, and S.~Wang.
\newblock On free energy barriers in {G}aussian priors and failure of {MCMC} for high-dimensional unimodal distributions.
\newblock \emph{arXiv preprint arXiv:2209.02001}, 2022.

\bibitem[Banerjee and Vats(2024)]{banerjee2022}
A.~Banerjee and D.~Vats.
\newblock Multivariate strong invariance principles in markov chain monte carlo.
\newblock \emph{Electronic Journal of Statistics}, 18\penalty0 (1):\penalty0 2450--2476, 2024.

\bibitem[Baxendale(2005)]{baxendale2005renewal}
P.~H. Baxendale.
\newblock Renewal theory and computable convergence rates for geometrically ergodic {M}arkov chains.
\newblock \emph{Annals of Applied Probability}, pages 700--738, 2005.

\bibitem[Bednorz and {\L}atuszy{\'n}ski(2007)]{remarks_fixed}
W.~Bednorz and K.~{\L}atuszy{\'n}ski.
\newblock A few remarks on “fixed-width output analysis for {M}arkov chain {Monte Carlo}” by {J}ones et al.
\newblock \emph{Journal of the American Statistical Association}, 102\penalty0 (480):\penalty0 1485--1486, 2007.

\bibitem[Bertail and Cio{\l}ek(2018)]{bertail2018new}
P.~Bertail and G.~Cio{\l}ek.
\newblock New {B}ernstein and {H}oeffding type inequalities for regenerative {M}arkov chains.
\newblock 2018.

\bibitem[Bhatia(2013)]{bhatia2013matrix}
R.~Bhatia.
\newblock \emph{Matrix analysis}, volume 169.
\newblock Springer Science \& Business Media, 2013.

\bibitem[Bou-Rabee and Eberle(2023)]{bou2023mixing}
N.~Bou-Rabee and A.~Eberle.
\newblock Mixing time guarantees for unadjusted {H}amiltonian {M}onte {C}arlo.
\newblock \emph{Bernoulli}, 29\penalty0 (1):\penalty0 75--104, 2023.

\bibitem[Brooks et~al.(2011)Brooks, Gelman, Jones, and Meng]{brooks2011handbook}
S.~Brooks, A.~Gelman, G.~Jones, and X.-L. Meng.
\newblock \emph{Handbook of {M}arkov chain {Monte Carlo}}.
\newblock CRC press, 2011.

\bibitem[Chien et~al.(1997)Chien, Goldsman, and Melamed]{chien1997large}
C.~Chien, D.~Goldsman, and B.~Melamed.
\newblock Large-sample results for batch means.
\newblock \emph{Management Science}, 43\penalty0 (9):\penalty0 1288--1295, 1997.

\bibitem[Cs{\'a}ki and Cs{\"o}rg{\H{o}}(1995)]{sip_regenerative}
E.~Cs{\'a}ki and M.~Cs{\"o}rg{\H{o}}.
\newblock On additive functionals of {M}arkov chains.
\newblock \emph{Journal of Theoretical Probability}, 8\penalty0 (4):\penalty0 905--919, 1995.

\bibitem[Cs{\"o}rg{\"o} and Horv{\'a}th(1993)]{weighted_approx}
M.~Cs{\"o}rg{\"o} and L.~Horv{\'a}th.
\newblock \emph{Weighted approximations in probability and statistics}.
\newblock J. Wiley \& Sons, 1993.

\bibitem[Cs{\"o}rg{\"o} and R{\'e}v{\'e}sz(1979)]{biga_csorgo}
M.~Cs{\"o}rg{\"o} and P.~R{\'e}v{\'e}sz.
\newblock How big are the increments of a {W}iener process?
\newblock \emph{The Annals of Probability}, pages 731--737, 1979.

\bibitem[Cs{\"o}rg{\"o} and R{\'e}v{\'e}sz(2014)]{sip_boek}
M.~Cs{\"o}rg{\"o} and P.~R{\'e}v{\'e}sz.
\newblock \emph{Strong approximations in probability and statistics}.
\newblock Academic press, 2014.

\bibitem[Cs{\"o}rg{\"o} and Hall(1984)]{applications_sip}
S.~Cs{\"o}rg{\"o} and P.~Hall.
\newblock The {K}oml{\'o}s--{M}ajor--{T}usn{\'a}dy approximations and their applications.
\newblock \emph{Australian Journal of Statistics}, 26\penalty0 (2):\penalty0 189--218, 1984.

\bibitem[Dalalyan(2017)]{dalalyan2017theoretical}
A.~S. Dalalyan.
\newblock Theoretical guarantees for approximate sampling from smooth and log-concave densities.
\newblock \emph{Journal of the Royal Statistical Society Series B: Statistical Methodology}, 79\penalty0 (3):\penalty0 651--676, 2017.

\bibitem[Damerdji(1991)]{damerdji1991}
H.~Damerdji.
\newblock Strong consistency and other properties of the spectral variance estimator.
\newblock \emph{Management Science}, 37\penalty0 (11):\penalty0 1424--1440, 1991.

\bibitem[Damerdji(1994)]{damerdji1994}
H.~Damerdji.
\newblock Strong consistency of the variance estimator in steady-state simulation output analysis.
\newblock \emph{Mathematics of Operations Research}, 19\penalty0 (2):\penalty0 494--512, 1994.

\bibitem[Damerdji(1995)]{damerdji1995}
H.~Damerdji.
\newblock Mean-square consistency of the variance estimator in steady-state simulation output analysis.
\newblock \emph{Operations Research}, 43\penalty0 (2):\penalty0 282--291, 1995.

\bibitem[De~la Pena and Gin{\'e}(2012)]{de2012decoupling}
V.~De~la Pena and E.~Gin{\'e}.
\newblock \emph{Decoupling: from dependence to independence}.
\newblock Springer Science \& Business Media, 2012.

\bibitem[Douc et~al.(2004)Douc, Fort, Moulines, and Soulier]{douc2004practical}
R.~Douc, G.~Fort, E.~Moulines, and P.~Soulier.
\newblock Practical drift conditions for subgeometric rates of convergence.
\newblock 2004.

\bibitem[Douc et~al.(2018)Douc, Moulines, Priouret, and Soulier]{douc2018markov}
R.~Douc, E.~Moulines, P.~Priouret, and P.~Soulier.
\newblock \emph{{M}arkov chains}.
\newblock Springer, 2018.

\bibitem[Durmus and Moulines(2017)]{durmus2017nonasymptotic}
A.~Durmus and {\'E}.~Moulines.
\newblock Nonasymptotic convergence analysis for the unadjusted {L}angevin algorithm.
\newblock \emph{Annals of Applied Probability}, 27\penalty0 (3):\penalty0 1551--1587, 2017.

\bibitem[Durmus and Moulines(2019)]{durmus2019high}
A.~Durmus and E.~Moulines.
\newblock High-dimensional {B}ayesian inference via the unadjusted {L}angevin algorithm.
\newblock \emph{Bernoulli}, 25\penalty0 (4A):\penalty0 2854--2882, 2019.

\bibitem[Einmahl and Mason(1997)]{einmahl1997gaussian}
U.~Einmahl and D.~M. Mason.
\newblock {G}aussian approximation of local empirical processes indexed by functions.
\newblock \emph{Probability Theory and Related Fields}, 107:\penalty0 283--311, 1997.

\bibitem[Eldan et~al.(2020)Eldan, Mikulincer, and Zhai]{eldan2020clt}
R.~Eldan, D.~Mikulincer, and A.~Zhai.
\newblock The {CLT} in high dimensions.
\newblock \emph{The Annals of Probability}, 48\penalty0 (5):\penalty0 2494--2524, 2020.

\bibitem[Fathi(2019)]{fathi2019stein}
M.~Fathi.
\newblock Stein kernels and moment maps.
\newblock \emph{The Annals of Probability}, 47\penalty0 (4):\penalty0 2172--2185, 2019.

\bibitem[Flegal and Jones(2010)]{flegal_bm}
J.~M. Flegal and G.~L. Jones.
\newblock Batch means and spectral variance estimators in {M}arkov chain {M}onte {C}arlo.
\newblock \emph{The Annals of Statistics}, 38\penalty0 (2):\penalty0 1034--1070, 2010.

\bibitem[Gawlikowski et~al.(2023)Gawlikowski, Tassi, Ali, Lee, Humt, Feng, Kruspe, Triebel, Jung, Roscher, et~al.]{gawlikowski2023survey}
J.~Gawlikowski, C.~R.~N. Tassi, M.~Ali, J.~Lee, M.~Humt, J.~Feng, A.~Kruspe, R.~Triebel, P.~Jung, R.~Roscher, et~al.
\newblock A survey of uncertainty in deep neural networks.
\newblock \emph{Artificial Intelligence Review}, 56\penalty0 (Suppl 1):\penalty0 1513--1589, 2023.

\bibitem[Glynn and Whitt(1992)]{glynn}
P.~W. Glynn and W.~Whitt.
\newblock The asymptotic validity of sequential stopping rules for stochastic simulations.
\newblock \emph{The Annals of Applied Probability}, 2\penalty0 (1):\penalty0 180--198, 1992.

\bibitem[Goldsman et~al.(1990)Goldsman, Meketon, and Schruben]{goldsman1990}
D.~Goldsman, M.~Meketon, and L.~Schruben.
\newblock Properties of standardized time series weighted area variance estimators.
\newblock \emph{Management Science}, 36\penalty0 (5):\penalty0 602--612, 1990.

\bibitem[Gong and Flegal(2016)]{gong2016}
L.~Gong and J.~M. Flegal.
\newblock A practical sequential stopping rule for high-dimensional {M}arkov chain {M}onte {C}arlo.
\newblock \emph{Journal of Computational and Graphical Statistics}, 25\penalty0 (3):\penalty0 684--700, 2016.

\bibitem[Gou{\"e}zel(2010)]{gouezel2010almost}
S.~Gou{\"e}zel.
\newblock Almost sure invariance principle for dynamical systems by spectral methods.
\newblock \emph{The Annals of Probability}, 38\penalty0 (4):\penalty0 1639--1671, 2010.

\bibitem[Hairer et~al.(2014)Hairer, Stuart, and Vollmer]{hairer2014spectral}
M.~Hairer, A.~M. Stuart, and S.~J. Vollmer.
\newblock Spectral gaps for a {M}etropolis--{H}astings algorithm in infinite dimensions.
\newblock \emph{The Annals of Applied Probability}, 24\penalty0 (6), 2014.

\bibitem[Hobert et~al.(2002)Hobert, Jones, Presnell, and Rosenthal]{hobert}
J.~P. Hobert, G.~L. Jones, B.~Presnell, and J.~S. Rosenthal.
\newblock On the applicability of regenerative simulation in {M}arkov chain {M}onte {C}arlo.
\newblock \emph{Biometrika}, 89\penalty0 (4):\penalty0 731--743, 2002.

\bibitem[Jarner and Roberts(2002)]{jarner2002polynomial}
S.~F. Jarner and G.~O. Roberts.
\newblock Polynomial convergence rates of {M}arkov chains.
\newblock \emph{The Annals of Applied Probability}, 12\penalty0 (1):\penalty0 224--247, 2002.

\bibitem[Jones and Hobert(2001)]{honestjones}
G.~L. Jones and J.~P. Hobert.
\newblock Honest exploration of intractable probability distributions via {M}arkov chain {M}onte {C}arlo.
\newblock \emph{Statistical Science}, pages 312--334, 2001.

\bibitem[Jones et~al.(2006)Jones, Haran, Caffo, and Neath]{jones_fixed}
G.~L. Jones, M.~Haran, B.~S. Caffo, and R.~Neath.
\newblock Fixed-width output analysis for {M}arkov chain {M}onte {C}arlo.
\newblock \emph{Journal of the American Statistical Association}, 101\penalty0 (476):\penalty0 1537--1547, 2006.

\bibitem[Komlos()]{kmt_1}
M.~Komlos.
\newblock Tusnady (1975) {A}n approximation of partial sums of {RV}’s, and the sample {DF}, {I}, {Z}.
\newblock \emph{Wahrscheinlichkeitstheorie und Verw. Gebiete}, 32:\penalty0 111--131.

\bibitem[Li and Qin(2024)]{li2024multivariate}
H.~Li and Q.~Qin.
\newblock Multivariate strong invariance principle and uncertainty assessment for time in-homogeneous cyclic {MCMC} samplers.
\newblock \emph{arXiv preprint arXiv:2405.10194}, 2024.

\bibitem[Lu et~al.(2022)Lu, Wu, Xiao, and Xu]{lu2022}
J.~Lu, W.~B. Wu, Z.~Xiao, and L.~Xu.
\newblock Almost sure invariance principle of beta mixing time series in {H}ilbert space.
\newblock \emph{arXiv preprint arXiv:2209.12535}, 2022.

\bibitem[Merlev{\`e}de et~al.(2015)Merlev{\`e}de, Rio, et~al.]{merlevede2015}
F.~Merlev{\`e}de, E.~Rio, et~al.
\newblock Strong approximation for additive functionals of geometrically ergodic {M}arkov chains.
\newblock \emph{Electronic Journal of Probability}, 20, 2015.

\bibitem[Meyn and Tweedie(1994)]{meyn1994computable}
S.~P. Meyn and R.~L. Tweedie.
\newblock Computable bounds for geometric convergence rates of {M}arkov chains.
\newblock \emph{The Annals of Applied Probability}, pages 981--1011, 1994.

\bibitem[Meyn and Tweedie(2012)]{meyn_tweedie_2012}
S.~P. Meyn and R.~L. Tweedie.
\newblock \emph{{M}arkov chains and stochastic stability}.
\newblock Springer Science \& Business Media, 2012.

\bibitem[Mies and Steland(2023)]{mies2023sequential}
F.~Mies and A.~Steland.
\newblock Sequential {G}aussian approximation for nonstationary time series in high dimensions.
\newblock \emph{Bernoulli}, 29\penalty0 (4):\penalty0 3114--3140, 2023.

\bibitem[Nickl and Wang(2022)]{nickl2022polynomial}
R.~Nickl and S.~Wang.
\newblock On polynomial-time computation of high-dimensional posterior measures by {L}angevin-type algorithms.
\newblock \emph{Journal of the European Mathematical Society}, 2022.

\bibitem[Orey(1971)]{orey1971limit}
S.~Orey.
\newblock \emph{Limit theorems for {M}arkov chain {M}onte {C}arlo transition probabilities}.
\newblock van Nostrand London, 1971.

\bibitem[Pengel and Bierkens(2024)]{pengel2024strong}
A.~Pengel and J.~Bierkens.
\newblock Strong invariance principles for ergodic {M}arkov processes.
\newblock \emph{Electronic Journal of Statistics}, 18\penalty0 (1):\penalty0 191--246, 2024.

\bibitem[Philipp and Stout(1975)]{philippSIP}
W.~Philipp and W.~Stout.
\newblock \emph{Almost sure invariance principles for partial sums of weakly dependent random variables}, volume 161.
\newblock American Mathematical Soc., 1975.

\bibitem[Qin and Hobert(2019)]{qin2019convergence}
Q.~Qin and J.~P. Hobert.
\newblock Convergence complexity analysis of {A}lbert and {C}hib’s algorithm for {B}ayesian probit regression.
\newblock \emph{The Annals of Statistics}, 47\penalty0 (4):\penalty0 2320--2347, 2019.

\bibitem[Qin and Hobert(2021)]{qin2021limitations}
Q.~Qin and J.~P. Hobert.
\newblock On the limitations of single-step drift and minorization in {M}arkov chain convergence analysis.
\newblock \emph{The Annals of Applied Probability}, 31\penalty0 (4):\penalty0 1633--1659, 2021.

\bibitem[Qin and Hobert(2022)]{qin2022wasserstein}
Q.~Qin and J.~P. Hobert.
\newblock {W}asserstein-based methods for convergence complexity analysis of {MCMC} with applications.
\newblock \emph{The Annals of Applied Probability}, 32\penalty0 (1):\penalty0 124--166, 2022.

\bibitem[Rajaratnam and Sparks(2015)]{rajaratnam2015mcmc}
B.~Rajaratnam and D.~Sparks.
\newblock {MCMC}-based inference in the era of big data: A fundamental analysis of the convergence complexity of high-dimensional chains.
\newblock \emph{arXiv preprint arXiv:1508.00947}, 2015.

\bibitem[Rio(1993)]{Rio}
E.~Rio.
\newblock Covariance inequalities for strongly mixing processes.
\newblock In \emph{Annales de l'IHP Probabilit{\'e}s et statistiques}, volume~29, pages 587--597, 1993.

\bibitem[Roberts and Rosenthal(2001)]{roberts2001optimal}
G.~O. Roberts and J.~S. Rosenthal.
\newblock Optimal scaling for various {M}etropolis--{H}astings algorithms.
\newblock \emph{Statistical science}, 16\penalty0 (4):\penalty0 351--367, 2001.

\bibitem[Shorack and Wellner(2009)]{shorack_empirical}
G.~R. Shorack and J.~A. Wellner.
\newblock \emph{Empirical processes with applications to statistics}.
\newblock SIAM, 2009.

\bibitem[Sigman(1990)]{sigman}
K.~Sigman.
\newblock One-dependent regenerative processes and queues in continuous time.
\newblock \emph{Mathematics of Operations Research}, 15\penalty0 (1):\penalty0 175--189, 1990.

\bibitem[Sigman and Wolff(1993)]{sigman_review}
K.~Sigman and R.~W. Wolff.
\newblock A review of regenerative processes.
\newblock \emph{SIAM review}, 35\penalty0 (2):\penalty0 269--288, 1993.

\bibitem[Song and Schmeiser(1995)]{song1995optimal}
W.~T. Song and B.~W. Schmeiser.
\newblock Optimal mean-squared-error batch sizes.
\newblock \emph{Management Science}, 41\penalty0 (1):\penalty0 110--123, 1995.

\bibitem[Springenberg et~al.(2016)Springenberg, Klein, Falkner, and Hutter]{springenberg2016bayesian}
J.~T. Springenberg, A.~Klein, S.~Falkner, and F.~Hutter.
\newblock {B}ayesian optimization with robust {B}ayesian neural networks.
\newblock \emph{Advances in neural information processing systems}, 29, 2016.

\bibitem[van Hemmen and Ando(1980)]{van1980inequality}
J.~L. van Hemmen and T.~Ando.
\newblock An inequality for trace ideals.
\newblock \emph{Communications in Mathematical Physics}, 76:\penalty0 143--148, 1980.

\bibitem[Vats et~al.(2018)Vats, Flegal, and Jones]{multivariate_consistency}
D.~Vats, J.~M. Flegal, and G.~L. Jones.
\newblock Strong consistency of multivariate spectral variance estimators in {M}arkov chain {M}onte {C}arlo.
\newblock \emph{Bernoulli}, 24\penalty0 (3):\penalty0 1860--1909, 2018.

\bibitem[Vats et~al.(2019)Vats, Flegal, and Jones]{multivariate_output}
D.~Vats, J.~M. Flegal, and G.~L. Jones.
\newblock Multivariate output analysis for {M}arkov chain {M}onte {C}arlo.
\newblock \emph{Biometrika}, 106\penalty0 (2):\penalty0 321--337, 2019.

\bibitem[Villani et~al.(2009)]{villani2009optimal}
C.~Villani et~al.
\newblock \emph{Optimal transport: old and new}, volume 338.
\newblock Springer, 2009.

\bibitem[Welling and Teh(2011)]{welling2011bayesian}
M.~Welling and Y.~W. Teh.
\newblock {B}ayesian learning via stochastic gradient {L}angevin dynamics.
\newblock In \emph{Proceedings of the 28th international conference on machine learning (ICML-11)}, pages 681--688, 2011.

\bibitem[Yang and Rosenthal(2023)]{yang2023complexity}
J.~Yang and J.~S. Rosenthal.
\newblock Complexity results for {MCMC} derived from quantitative bounds.
\newblock \emph{The Annals of Applied Probability}, 33\penalty0 (2):\penalty0 1459--1500, 2023.

\bibitem[Zaitsev(1987{\natexlab{a}})]{zaitsev1987estimates}
A.~Y. Zaitsev.
\newblock Estimates of the {L}{\'e}vy--{P}rokhorov distance in the multivariate central limit theorem for random variables with finite exponential moments.
\newblock \emph{Theory of Probability \& Its Applications}, 31\penalty0 (2):\penalty0 203--220, 1987{\natexlab{a}}.

\bibitem[Zaitsev(1987{\natexlab{b}})]{zaitsev1987gaussian}
A.~Y. Zaitsev.
\newblock On the {G}aussian approximation of convolutions under multidimensional analogues of {SN} {B}ernstein's inequality conditions.
\newblock \emph{Probability theory and related fields}, 74\penalty0 (4):\penalty0 535--566, 1987{\natexlab{b}}.

\bibitem[Zaitsev(1998)]{zaitsev1998multidimensional}
A.~Y. Zaitsev.
\newblock Multidimensional version of the results of koml{\'o}s, major and tusn{\'a}dy for vectors with finite exponential moments.
\newblock \emph{ESAIM: Probability and Statistics}, 2:\penalty0 41--108, 1998.

\bibitem[Zaitsev(2013)]{zaitsev2013}
A.~Y. Zaitsev.
\newblock The accuracy of strong {G}aussian approximation for sums of independent random vectors.
\newblock \emph{Russian Mathematical Surveys}, 68\penalty0 (4):\penalty0 721, 2013.

\bibitem[Zhai(2018)]{zhai2018high}
A.~Zhai.
\newblock A high-dimensional clt in w \_2 w 2 distance with near optimal convergence rate.
\newblock \emph{Probability Theory and Related Fields}, 170:\penalty0 821--845, 2018.

\bibitem[Zhou et~al.(2022)Zhou, Yang, Vats, Roberts, and Rosenthal]{zhou2022dimension}
Q.~Zhou, J.~Yang, D.~Vats, G.~O. Roberts, and J.~S. Rosenthal.
\newblock Dimension-free mixing for high-dimensional {B}ayesian variable selection.
\newblock \emph{Journal of the Royal Statistical Society Series B: Statistical Methodology}, 84\penalty0 (5):\penalty0 1751--1784, 2022.

\end{thebibliography}
%\setcitestyle{semicolon}

\end{document}